\documentclass{article}
\usepackage[margin=1in]{geometry}

\usepackage[utf8]{inputenc} 
\usepackage[T1]{fontenc}    
\usepackage[hidelinks]{hyperref}       
\usepackage{url}            
\usepackage{booktabs}       
\usepackage{amsfonts}       
\usepackage{nicefrac}       
\usepackage{microtype}      
\usepackage{xcolor}         
\usepackage{graphicx}
\usepackage{mathrsfs}
\usepackage{amsfonts, amsmath,dsfont,amssymb,amsthm,stmaryrd,bbm}
\usepackage{mathtools}
\usepackage{caption}
\usepackage{subcaption}
\usepackage{enumitem}
\usepackage{wrapfig}

\usepackage{algorithm}
\usepackage{algorithmic}
\usepackage{float}

\usepackage[textsize=tiny]{todonotes}

\newtheorem{conj}{Conjecture}
\newtheorem{thm}{Theorem}

\newtheorem{rmk}{Remark}

\newtheorem{lemma}{Lemma}
\newtheorem*{lemmau}{Lemma}
\newtheorem{claim}{Claim}

\newtheorem{defn}[conj]{Definition}
\newtheorem{coro}{Corollary}

\newtheorem{assumption}{Assumption}

\newcommand{\f}[1]{\boldsymbol{#1}}
\newcommand{\bb}[1]{\mathbb{#1}}
\newcommand{\fl}[1]{\mathbf{#1}}
\newcommand{\ca}[1]{\mathcal{#1}}

\newcommand{\s}[1]{\mathsf{#1}}

\newcommand{\lr}[1]{\left|\left|#1\right|\right|}

\newcommand{\remove}[1]{}

\title{Blocked Collaborative Bandits: Online Collaborative Filtering with Per-Item Budget Constraints}

\author{
~~~Soumyabrata~Pal\footnote{S. Pal is with Google Research at Bangalore, INDIA (email: \texttt{soumyabrata@google.com}).}
~~~Arun~Sai~Suggala\footnote{A. Suggala is with Google Research at Bangalore, INDIA (email: \texttt{arunss@google.com}).}
~~~Karthikeyan~Shanmugam\footnote{K. Shanmugam is with Google Research at Bangalore, INDIA (email: \texttt{karthikeyanvs@google.com}).}
~~~Prateek Jain\footnote{P.Jain is with Google Research at Bangalore, INDIA (email: \texttt{prajain@google.com}).} }

\begin{document}

\maketitle

\begin{abstract}
We consider the problem of \emph{blocked} collaborative bandits where there are multiple users, each with an associated multi-armed bandit problem. These users are grouped into \emph{latent} clusters such that the mean reward vectors of users within the same cluster are identical. Our goal is to design algorithms that maximize the cumulative reward accrued by all the users over time, under the \emph{constraint} that no arm of a user is pulled more than $\s{B}$ times. 
This problem has been originally considered by \cite{Bresler:2014}, and designing regret-optimal algorithms for it has since remained an open problem.
In this work, we propose an algorithm called \texttt{B-LATTICE} (Blocked Latent bAndiTs via maTrIx ComplEtion) that collaborates across users, while simultaneously satisfying the budget constraints, to maximize their cumulative rewards. Theoretically, under certain reasonable assumptions on the latent structure, with 
$\s{M}$ users, $\s{N}$ arms, $\s{T}$ rounds per user, and $\s{C}=O(1)$ latent clusters, \texttt{B-LATTICE} achieves a per-user regret of  $\widetilde{O}(\sqrt{\s{T}(1 + \s{N}\s{M}^{-1})}$ under a budget constraint of $\s{B}=\Theta(\log 
\s{T})$. These are the first sub-linear regret bounds for this problem, and match the minimax regret bounds when $\s{B}=\s{T}$. Empirically, we demonstrate that our algorithm has superior performance over baselines even when $\s{B}=1$. \texttt{B-LATTICE} runs in phases where in each phase it clusters users into groups and collaborates across users within a group to quickly learn their reward models.  


\end{abstract}

\section{Introduction}

Modern recommendation systems cater to millions of users and items \cite{aggarwal2016recommender} on a daily basis, typically in an online fashion. 
A critical feature of such systems is to quickly learn tastes of individual users from sequential actions and feedback, and suggest personalized products for each user. 
Furthermore, in practice, an item that has already been consumed by a user is recommended very few times to the same user (or \emph{not} recommended at all). This is because, in applications such as movie/book recommendations, a typical user will find little interest in consuming the same item multiple times.  

This problem -- that we rename as {\em Blocked Collaborative Bandits} --  was abstracted out by \cite{Bresler:2014} with a modeling assumption that users have a latent clustering structure. That is, each user can belong to an unknown cluster and the expected reward for an item/movie is same for all users in a cluster. Formally, consider $\s{M}$ users, $\s{N}$ items and $\s{T}$ rounds with $\s{M},\s{N}\gg \s{T}$ ($\s{M},\s{N}\approx 10^6$ in recommendation systems such as YouTube) and in each round, every user is recommended some item (potentially different for each user). On consuming the item, a noisy reward is assigned. As mentioned above, it is assumed that the users can be clustered in $\s{C}$ clusters, where $\s{C}\ll \s{M}, \s{N}$, and users in the same cluster have identical reward distributions over items. Furthermore,  any item can be recommended to a particular user at most $\s{B}$ times, after which the item is \emph{blocked} for the user. 



\cite{Bresler:2014} considered this problem for $\s{B}=1$ in the setting where a user in cluster $c$ on being recommended item $j$ provides a like ($+1$) with probability $p_{cj}$ and a dislike ($-1$) with probability $1-p_{cj}$.
The authors studied a notion of pseudo-regret corresponding to minimizing the number of \textit{un-likeable} items for a user, \emph{i.e.,} items for which the probability of user giving a like ($+1$) is less than $1/2$. To this end, the authors proposed the \texttt{Collaborative-Greedy} algorithm, an $\epsilon$-greedy style algorithm that performs random exploration in each round with certain probability. During exploitation it provides recommendations based on a neighborhood of similar users. However, maximizing the number of \textit{likeable} items is limiting as it does not prioritize items with large rewards and completely disregards the item ordering.
Despite these theoretical limitations, variants of the collaborative algorithm designed in \cite{Bresler:2014} have found applications in predicting Bitcoin price, \cite{shah2014bayesian}, information retrieval \cite{glowacka2019bandit} among others. 

\begin{table}[t]
	\centering
	\resizebox{\textwidth}{!}{
		\begin{tabular}{c|| c| c | c}
			\hline
			Paper & Setting  & Metric & Guarantees (Worst-Case) 
			\\ [0.5ex]
			\hline\hline
			Bresler et al., 2014 \cite{Bresler:2014} & \begin{tabular}{c}$\s{B} = 1$, \\ user clusters  \end{tabular} &  \begin{tabular}{c} pseudo regret (maximize \\ likeable items)\end{tabular} & $O(T)$
			\\\hline
			Bresler et al., 2018 \cite{bresler2018regret}& \begin{tabular}{c}$\s{B} = 1$, \\ user, item clusters;\\ noiseless rewards  \end{tabular} &  regret & $\widetilde{O}(1)$
			\\\hline
			Ariu et al., 2020 \cite{ariu2020regret}& \begin{tabular}{c}$\s{B} = 1$, \\ user, item clusters  \end{tabular} &  regret & \begin{tabular}{c}$\widetilde{O}(\s{T}^{2/3}(1+\s{N}\s{M}^{-1})^{1/3})$\\   sub-optimal in $\s{N}, \s{T}$\end{tabular}
			\\\hline
			Pal et al., 2023 \cite{pal2023optimal} & \begin{tabular}{c}$\s{B} = \s{T}$, \\ user clusters  \end{tabular} & regret&\begin{tabular}{c} $\widetilde{O}(\sqrt{\s{T}(1 + \s{N}\s{M}^{-1})}$ \\ minimax optimal in $\s{N}, \s{M}, \s{T}$\end{tabular}  
			\\\hline
			\textbf{This Work} & \begin{tabular}{c}$\s{B} = \Theta(\log{\s{T}})$, \\ user clusters  \end{tabular} &  regret & $\widetilde{O}(\sqrt{\s{T}(1 + \s{N}\s{M}^{-1})}$  
			\\\hline
		\end{tabular}
	}
	\caption{\small Comparison of various approaches for blocked collaborative bandits. All the regret bounds stated here are worst-case bounds and assume $\s{C} = O(1)$. Moreover, the regret is averaged across users.  The worst-case pseudo-regret in \cite{bresler2018regret} is linear when the items have rewards close to $0.5$. In \cite{ariu2020regret}, the authors' proposed a greedy algorithm whose worst-case has $\s{T}^{2/3}$ dependence.}\vspace*{-26pt}
	\label{table:comparison}
\end{table}

Recent works have studied this problem under a more practical notion of regret which involves maximizing the cumulative rewards accrued over time~\cite{ariu2020regret, bresler2018regret}. However, these works  assume a cluster structure among both users and items. This assumption entails that there are many copies of the highest rewarding item for any user;  this voids the blocking constraint and makes the problem significantly easier. Moreover, the algorithms designed in \cite{ariu2020regret} are greedy (they perform exploration first to cluster users, items, and then perform exploitation) and achieve significantly sub-optimal regret. These bounds,  translated to the original blocked bandit problem where there are no item clusters, have sub-optimal dependence on the number of items $\s{N}$.
The algorithms designed in \cite{bresler2018regret} assumed a  \emph{noiseless} binary feedback model which allowed for constant regret. However, these algorithms are not easily extendable to the more general noisy reward setting we consider in this work. 
 
Another line of work has studied this problem when $\s{B} = \s{T}$ (i.e., no budget constraints on arms)~\cite{maillard2014latent, gentile2014online, pal2023optimal}. While some of these works have designed regret optimal algorithms~\cite{pal2023optimal}, the $\s{B}=\s{T}$ budget constraint is too lenient in many modern  recommendation systems. To summarize, existing works have either attempted to solve the much harder setting of $\s{B}=1$ by imposing additional restrictions on the problem or the simpler $\s{B} = \s{T}$ setting which is not very relevant in practice. 

\textbf{This Work.} In this work, we make progress on this problem by considering the intermediate setting of $\s{B} = \Theta(\log{\s{T}})$. We do not impose any cluster structure among items (as in \cite{Bresler:2014}), and consider a general notion of regret which involves maximizing the cumulative rewards under the budget constraints. We 
propose \texttt{B-LATTICE}(Alg. \ref{algo:phased_elim}), a phased algorithm which carefully balances exploration and exploitation. Our algorithm also performs on-the-fly clustering of users and collaborates across users within a cluster to quickly learn their preferences, while simultaneously satisfying the budget constraints.
The key contribution of our work is that under certain standard incoherence assumptions (also used recently in \cite{pal2023optimal}) and a budget constraint of $\s{B}=\Theta(\log\s{T})$, we show that \texttt{B-LATTICE} achieves a per-user regret of $\widetilde{O}(\sqrt{\s{T}(1\bigvee \s{N}\s{M}^{-1})}$)
\footnote{$\widetilde{O}(\cdot),\widetilde{\Omega}(\cdot)$ hides logarithmic factors in $\s{M,N,T}$}.
Empirically, our algorithm can even handle a budget constraint of $\s{B}=1$ and perform better than baselines (Alg. \ref{algo:phased_elim_simple}). However, bounding its regret in this setting is much more challenging which we aim to address in a future work.  That being said, under an additional cluster structure on the items, we show that our theoretical guarantees hold even for $\s{B}=1$ (Appendix \ref{app:item_blocking}). 

We also provide a non-trivial regret lower bound of $\widetilde{\Omega}(\sqrt{\s{N}\s{M}^{-1}}\bigvee \s{T}\s{M}^{-1/2})$. Our proof involves a novel reduction to multi-hypothesis testing and relies on Fano's inequality for approximate recovery \cite{scarlett2019introductory} (see Appendix \ref{app:lb_proof}). Our techniques could be of independent interest even outside the context of this work. Our lower bound is tight in two regimes - when the number of rounds $\s{T}$ is very small or very large. However, we conjecture that our upper bounds are actually tight since they recover the same rates as in the $\s{B}=\s{T}$ setting \cite{pal2023optimal}; tightening the lower bound is left as a future work. Finally, we verify our theoretical results by simulations on synthetic data (see Appendix \ref{sec:experiments}). Here, we compare a more practical version of \texttt{B-LATTICE}(Algorithm \ref{algo:phased_elim_simple})  with \texttt{Collaborative-Greedy} \cite{Bresler:2014}. We demonstrate that \texttt{B-LATTICE} not only has good regret but also other practically desirable properties such as a small cold-start period, and repeated high quality recommendations.

\noindent \textbf{Techniques.} Our algorithm is built on  recent works that have studied this problem without the budget constraint~\cite{jain2022online,pal2023optimal}. \cite{jain2022online} developed a regret optimal algorithm called LATTICE that runs in phases. At any phase, the algorithm maintains a grouping of users which it refines over time (these groups are nothing but our current estimate of user clusters). Within each group, the algorithm relies on collaboration to quickly estimate the reward matrix. To be precise, the algorithm performs random exploration followed by low-rank matrix completion~\cite{chen2019noisy} to get a good estimate of the user-item reward matrix. Next, the algorithm uses the estimated reward matrix to eliminate sub-optimal arms for each user. Finally, it refines the user clusters by placing users with similar reward structure in the same group.

Extending the above algorithmic recipe to our setting poses a couple of challenges.  First, observe that the \emph{oracle} optimal strategy in the budget constrained setting is to recommend each of the top $\s{T}/\s{B}$ items $\s{B}$ times; we refer to these top times as \emph{golden} items in the sequel. To compete against such a policy, our algorithm needs to quickly identify the golden items for each user. The LATTICE algorithm described above doesn't perform this, as it aims to only identify the top item for every user. So, one of the key novelties in our algorithm is to design a test to quickly identify the golden items and recommend them to the users. The second challenge arises from the usage of low-rank matrix completion oracles in LATTICE. To be precise, LATTICE requires more accurate estimates of the user-item reward matrix in the later phases of the algorithm. To this end, it repeatedly recommends an item to a user to reduce the noise in its  estimates. However, this is infeasible in our setting due to the budget constraints. So, the second novelty in our algorithm is to avoid repeated recommendations and design an alternate exploration strategy that adheres to the budget constraints.

\noindent \textbf{Other Related Work:}

\textit{Item-Item Collaborative Filtering (CF).} A complementary theoretical line of work proposes to exploit a clustering structure across the items instead of users \cite{Bresler:2016,linden2001collaborative,sarwar2001item,linden2003amazon}. Under a similar blocking constraint, the authors have provided a sub-linear regret guarantee based on a certain measure of complexity called the doubling dimension. Since then, there have been several works in the literature that have attempted to exploit a cluster structure on both users and items \cite{bresler2021regret, ariu2020regret}.

\noindent \textit{Variants of User-User CF.} \cite{huleihel2021learning} has looked into the problem of non-stationary user-user collaborative filtering where the preferences of users change over time. 
\cite{heckel2017sample} studies a variant where the user only provides positive ratings i.e when the user has liked an item. \cite{dabeer2013adaptive} and \cite{barman2012analysis} studies probabilistic models for user user CF in an online and offline model respectively. 

\noindent \textit{Cluster Structure across users.} In several problems related to multi-armed bandits, cluster structure across users have been explored. In particular, in \cite{gentile2014online,li2019improved,gentile2017context,li2016collaborative,qi2022neural}, the authors have imposed a cluster-structure across users while each item has a $d$-dimensional context chosen uniformly at random at each round. The preferences of each user is a linear function of the context vector and the cluster id. Under these settings, the authors prove a strong regret bound. However, note that in our setting, the item contexts are hidden; hence, these techniques are not applicable to our setting. 

\section{Problem Setting and Background}\label{subsec:prob_defn}

\noindent \textbf{Notation.} We write $[m]$ to denote the set $\{1,2,\dots,m\}$. For a vector $\fl{v}\in \bb{R}^m$, $\fl{v}_i$ denotes the $i^{\s{th}}$ element; for any set $\ca{U}\subseteq [m]$, let $\fl{v}_\ca{U}$ denote the vector $\fl{v}$ restricted to the indices in $\ca{U}$. Similarly, for $\fl{A}\in \bb{R}^{m\times n}$, $\fl{A}_{ij},\fl{A}_i$  denotes the $(i,j)$-th element and the $i^{\s{th}}$ row of  $\fl{A}$ respectively. For any set $\ca{U}\subseteq [m],\ca{V}\subseteq [n]$,  $\fl{A}_{\ca{U},\ca{V}}$ denotes $\fl{A}$ restricted to the rows in $\ca{U}$ and columns in $\ca{V}$. Let $\|\fl{A}\|_{\infty}$ denote absolute value of the largest entry in matrix $\fl{A}$. $\ca{N}(0,\sigma^2)$ denotes the Gaussian distribution with $0$ mean and variance $\sigma^2$. We write $\bb{E}X$ to denote the expectation of a random variable $X$. $\left|\left|\mathbf{U}\right|\right|_{2,\infty} $ corresponds to the maximum euclidean norm of a row of the matrix $\mathbf{U}$. More precisely, for a matrix $\mathbf{U}\in \mathbb{R}^{\s{M}\times r}$, the norm $\left|\left|\mathbf{U}\right|\right|_{2,\infty} =\max_{i \in [\s{M}]} \left|\left|\mathbf{U}_i\right|\right|_2 $. Thus, if $\left|\left|\mathbf{U}\right|\right|_{2,\infty}$  is small as in Lemma 1, then all rows of $\mathbf{U}$ have a small $\ell_2$ norm.

\textbf{Problem Setting.} Consider an online recommendation system with $\s{M}$ users, $\s{N}$ items and $\s{T}$ rounds. 
Let $\fl{P}\in \bb{R}^{\s{M}\times \s{N}}$ be the reward matrix which is unknown. Here, we assume the set of $\s{M}$ users can be partitioned into $\s{C}$ disjoint but unknown clusters $\ca{C}^{(1)},\ca{C}^{(2)},\dots,\ca{C}^{(\s{C})}$ such that any two users $u,v\in [\s{M}]$ belonging to the same cluster have identical reward vectors i.e. $\fl{P}_u=\fl{P}_v$. 
 In each round $t\in [\s{T}]$, every user is recommended an item (can be different for each user) by the system. In turn, the system receives feedback in the form of reward from each user. Let, $\fl{R}^{(t)}_{u\rho_u(t)}$ be the observed reward for recommending item $\rho_u(t)\in [\s{N}]$ to user $u$ at round $t$ such that: \vspace*{-3pt}
\begin{align}\label{eq:obs}
    \fl{R}^{(t)}_{u\rho_u(t)} = \fl{P}_{u\rho_u(t)}+\fl{E}^{(t)}_{u\rho_u(t)}\vspace*{-3pt}
\end{align}
where $\fl{E}^{(t)}_{u\rho_u(t)}$ denotes the unbiased additive noise. 
\footnote{This corresponds to the following analogue of multi-agent multi-armed bandits (MAB) - we have $\s{M}$ users each involved in a separate MAB problem with the same set of $\s{N}$ arms (corresponding to the $\s{N}$ items) and $\s{T}$ rounds. The mean reward of each arm is different for each user captured by the $\s{M}\times \s{N}$ reward matrix $\mathbf{P}$. In each round $t$, every agent $u\in [\s{M}]$ simultaneously pulls an unblocked arm $\rho_u(t)$ of their choice based on the feedback history  of all users (including $u$) from previous rounds. Subsequently the noisy feedback for all $\s{M}$ users and their corresponding arm pulls at round $t$ is revealed to everyone.} 
We assume that elements of  $\{\fl{E}^{(t)}_{u\rho_u(t)}\}_{\substack{u\in [\s{M}]\\ t \in [\s{T}]}}$ are i.i.d. zero mean sub-gaussian random variables with variance proxy $\sigma^2$ i.e. for all $u,t\in [\s{M}]\times[\s{T}]$, we have $\bb{E}\fl{E}^{(t)}_{u\rho_u(t)}=0$ and for all $s\in \bb{R}$, we have $\bb{E}\exp(s\fl{E}^{(t)}_{u\rho_u(t)}) \le \exp(\sigma^2s^2/2)$. As in practical recommendation systems, we impose an additional constraint that the same item cannot be recommended more than $\s{B}$ times to a user, for some small $\s{B}=\Theta(\log\s{T})$. For simplicity of presentation, we assume $\s{T}$ is a multiple of $\s{B}$. However, we would like to note that our results hold for general $\s{T}$. Without loss of generality, we assume $\s{N} \geq \s{T}/\s{B}$, since otherwise the budget constraints cannot be satisfied. 

Our goal is to design a method that maximizes cumulative reward. Let 
 $\pi_u:[\s{N}]\rightarrow [\s{N}]$ denote the function that sorts the rewards of user $u \in [\s{M}]$ in descending order, i.e., for any $i<j$, $\fl{P}_{u\pi_u(i)} \ge \fl{P}_{u\pi_u(j)}$. 
 The \emph{oracle} optimal strategy for this problem is to recommend $\{\pi_u(s)\}_{s=1 \dots \s{T}/\s{B}}$, the top $\s{T}/\s{B}$ items  with the highest reward, $\s{B}$ times each. This leads us to the following notion of regret
\begin{equation}\label{eqn:diffi}
\mathsf{Reg}(\mathsf{T})\triangleq 
\sum_{s \in [\mathsf{T}/\s{B}], u \in [\mathsf{M}]} \frac{\s{B}\fl{P}_{u\pi_{u}(s)}}{\mathsf{M}} - \bb{E}\sum_{t \in [\mathsf{T}], u \in [\mathsf{M}]}\frac{\mathbf{P}_{u\rho_u(t)}}{\mathsf{M}},
\end{equation}
where the expectation is over the randomness in the policy and the rewards, and $\rho_u(t)$ is any policy that satisfies the budget constraints. 

\textbf{Importance of collaboration.} Suppose we treat each user independently and try to minimize their regret. In this case, when  $\s{B}$ is as small as $O(\log \s{T})$, we will incur a regret that is almost linear in $\s{T}$. This is because we will not have enough data for any item to know if it has a high or a low reward. This shows the importance of collaboration across users to achieve optimal regret. The latent cluster structure in our problem allows for collaboration and sharing information about items across users. 

For ease of exposition of our ideas, we introduce a couple of definitions.
\begin{defn}\label{defn:nice}
A subset of users $\ca{S}\subseteq [\s{M}]$ is  called ``nice" if $\ca{S} \equiv \bigcup_{j \in \ca{A}} \ca{C}^{(j)}$ for some $\ca{A}\subseteq [\s{C}]$. In other words, $\ca{S}$ can be represented as the union of some subset of clusters.  
\end{defn}
\begin{defn}\label{defn:golden_items}
For any user $u\in [\s{M}]$, we call the set of items $\{\pi_u(t)\}_{t=1}^{\s{T}/\s{B}}$ (i.e., the set of best $\s{T}/\s{B}$ items) to be the \textit{golden} items for user $u$.
\end{defn}


\subsection{Low-Rank Matrix Completion}\label{sec:prelims}\label{subsec:mc}

\paragraph{Additional Notation.}  For an estimate $\widehat{\fl{P}}$ of reward matrix $\fl{P}$, we will use $\widetilde{\pi}_{u}:[\s{N}]\rightarrow [\s{N}]$ to denote the items ordered according to their estimated reward for the user $u$ i.e., $\widehat{\fl{P}}_{u\widetilde{\pi}_{u}(i)} \ge \widehat{\fl{P}}_{u\widetilde{\pi}_{u}(j)}$ when $i<j$.  
For any user $u\in [\s{M}]$ and any subset of items $\ca{A}\subseteq [\s{N}]$, we will use $\pi_{u}\mid \ca{A}:[|\ca{A}|]\rightarrow [\s{N}]$ to denote the permutation $\pi_u$ restricted to items in $\ca{A}\subseteq [\s{N}]$ i.e. for any user $u\in [\s{M}]$ and any $i,j \in [|\ca{A}|]$ satisfying $i<j$, we will have  $\fl{P}_{u\pi_u(i)\mid\ca{A}}\ge \fl{P}_{u\pi_u(j)\mid\ca{A}}$. Here, $\pi_{u}(i)\mid \ca{A}$ corresponds to the $i^{\s{th}}$ item among items in $\ca{A}$ sorted in descending order of expected reward for user $u$. 

An important workhorse of our algorithm is low-rank matrix completion, which is a well studied problem in the literature~\cite{jain2013low, koltchinskii2011nuclear, chen2019noisy, jain2022online}. Given a small set of entries that are randomly sampled from a matrix, these algorithms infer the missing values of the matrix. 
More precisely, consider a low-rank matrix $\fl{Q} \in \mathbb{R}^{\s{M} \times \s{N}}$. Let $\Omega\subseteq [\s{M}]\times[\s{N}]$ be a random set of indices obtained by picking each index in $[\s{M}]\times[\s{N}]$ independently with probability $p>0$. Let $\{\fl{Z}_{ij}\}_{(i,j)\in \Omega}$ be the corresponding noisy entries (sub-gaussian random variables with variance proxy $\sigma^2>0$) that we observe which satisfy $\bb{E}\fl{Z}_{ij}=\fl{Q}_{ij}\forall(i,j)\in \Omega$. In this work we rely on  nuclear norm minimization to estimate  $\fl{Q}$ (see Algorithm \ref{algo:estimate_1_trimmed}). 
Recent works have provided tight element-wise recovery guarantees for this algorithm~\cite{chen2019noisy,jain2022online}. We state these guarantees in the following Lemma. 
\begin{lemma}[Lemma 2 in \cite{jain2022online}]\label{lem:min_acc}
Consider matrix $\fl{Q}\in \bb{R}^{\s{M} \times \s{N}}$ with rank $r$ and SVD decomposition $\fl{\bar{U}}\f{\Sigma}\fl{\bar{V}}^{\s{T}}$ satisfying $\|\fl{\bar{U}}\|_{2,\infty}\le \sqrt{\mu r/\s{M}}, \|\fl{\bar{V}}\|_{2,\infty}\le \sqrt{\mu r/\s{N}}$ and condition number $\kappa = O(1)$. Let $d_1=\max(\s{M},\s{N})$, $d_2=\min(\s{M},\s{N})$, noise variance $\sigma^2 >0$, and let sampling probability $p$ be such that
 $ p = \widetilde{\Omega} \Big(\frac{\mu^2}{d_2}\max\Big(1,\frac{\sigma^2\mu}{\lr{\fl{P}}_{\infty}^2}\Big)\Big)$ (for some constant $c>0$). 
Suppose we sample a subset of indices $\Omega \subseteq [\s{M}]\times [\s{N}]$ such that each tuple of indices $(i,j)\in [\s{M}]\times [\s{N}]$ is included in $\Omega$ independently with probability $p$. Let $\{\fl{Z}_{ij}\}_{(i,j)\in \Omega}$ be the noisy observations corresponding to all indices in $\Omega$. Then Algorithm \ref{algo:estimate_1_trimmed}, when run with inputs ($\s{M},\s{N},\sigma^2,r,\Omega,\{\fl{Z}_{ij}\}_{(i,j)\in \Omega}$), returns an estimate $\widetilde{\fl{Q}}$ of $\fl{Q}$ which satisfies the following error bounds with probability at least $1-O(\delta+d_2^{-12})$ 
\begin{align}\label{eq:estimation_guarantee}
\tiny
   \| \fl{Q} - \widetilde{\fl{Q}}\|_{\infty} = O\left(\frac{\sigma  \sqrt{\mu^3\log d_1}}{\sqrt{pd_2}}\right).
\end{align}
\end{lemma}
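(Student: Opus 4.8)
The statement is quoted verbatim from \cite{jain2022online} (their Lemma~2), which itself rests on the entrywise analysis of noisy matrix completion via convex relaxation in \cite{chen2019noisy}; the plan is to reconstruct that argument. The first step is to pass to the regularized form of the convex program solved by Algorithm~\ref{algo:estimate_1_trimmed}: let $\widehat{\fl{Q}}$ be a minimizer of $\tfrac{1}{2p}\|\ca{P}_\Omega(\fl{X}-\fl{Z})\|_F^2 + \lambda\|\fl{X}\|_*$ over $\fl{X}\in\bb{R}^{\s{M}\times\s{N}}$ with $\lambda \asymp \sigma\sqrt{d_1 p}$, and to couple it with the \emph{nonconvex} factored objective $g(\fl{U},\fl{V}) = \tfrac{1}{2p}\|\ca{P}_\Omega(\fl{U}\fl{V}^{\s{T}}-\fl{Z})\|_F^2 + \tfrac{\lambda}{2}\p{\|\fl{U}\|_F^2+\|\fl{V}\|_F^2}$ over $\fl{U}\in\bb{R}^{\s{M}\times r}, \fl{V}\in\bb{R}^{\s{N}\times r}$, which one runs by gradient descent from the spectral initialization (the rank-$r$ truncated SVD of $p^{-1}\ca{P}_\Omega(\fl{Z})$). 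A key structural fact from \cite{chen2019noisy} to invoke is that the two solutions nearly coincide: if $(\fl{U}_\infty,\fl{V}_\infty)$ denotes the limit of the nonconvex iterates, then $\|\widehat{\fl{Q}} - \fl{U}_\infty\fl{V}_\infty^{\s{T}}\|_F$ is of strictly smaller order than the error being targeted. It therefore suffices to bound the entrywise error of $\fl{U}_\infty\fl{V}_\infty^{\s{T}}$, and the crucial quantity will be the row-wise norm $\|\cdot\|_{2,\infty}$ of $\fl{U}_\infty$ and $\fl{V}_\infty$.

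For the nonconvex problem one carries out the by-now-standard three-part induction over the iteration index $\tau$, using that $p$ exceeds the stated threshold (so matrix Bernstein yields restricted strong convexity and smoothness of $g$ on the relevant incoherent region) and that $\kappa=O(1)$ (so the spectral initialization already lands inside that region, measured both in $\|\cdot\|_F$ and in $\|\cdot\|_{2,\infty}$). The three invariants to propagate along the trajectory are: (i) a Frobenius/operator-norm error bound $\|\fl{U}^{(\tau)}\fl{V}^{(\tau)\s{T}} - \fl{Q}\|_F \lesssim \sigma\sqrt{\s{M}r/p}$; (ii) an incoherence bound $\|\fl{U}^{(\tau)}\|_{2,\infty},\|\fl{V}^{(\tau)}\|_{2,\infty} \lesssim \sqrt{\mu r/d_2}$; and (iii) leave-one-out control: for each row index $l$, introducing the auxiliary iterates $(\fl{U}^{(\tau),(l)},\fl{V}^{(\tau),(l)})$ generated by the same recursion but with the sampled entries in the $l$-th row replaced by their noiseless values $\fl{Q}_{l,\cdot}$ (and symmetrically for columns), one shows $\|\fl{U}^{(\tau)} - \fl{U}^{(\tau),(l)}\fl{R}\|_F$ stays small for the optimal alignment rotation $\fl{R}$. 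The role of (iii) is decoupling: it shows the $l$-th row of $\fl{U}^{(\tau)}$ is, up to a negligible leave-one-out discrepancy, a deterministic function of randomness independent of the $\Omega$- and $\fl{E}$-entries lying in row $l$, so its fluctuation can be controlled by an elementary scalar sub-Gaussian/Bernstein estimate, one row at a time.

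Combining (i)--(iii) then yields the entrywise bound: writing out the $l$-th row of $\fl{U}_\infty\fl{V}_\infty^{\s{T}}-\fl{Q}$ and substituting the leave-one-out surrogate, the dominant term is the projection of the rescaled noise $p^{-1}\ca{P}_\Omega(\fl{E})$ onto the incoherent column and row subspaces of $\fl{Q}$; a single entry of this projection is a weighted sum of independent $\sigma$-sub-Gaussian variables, and the incoherence of the two subspaces contributes factors of $\mu$, the union bound over all $\s{M}\s{N}$ entries contributes $\log d_1$, the sampling contributes $1/(p d_2)$, and the remaining $\mu$ together with the $r=O(1),\kappa=O(1)$ dependence get absorbed into constants, for a total of $O\p{\sigma\sqrt{\mu^3\log d_1/(p d_2)}}$. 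This matches \eqref{eq:estimation_guarantee}, and the failure probability $O(\delta+d_2^{-12})$ is the cost of the matrix-Bernstein events together with the union bound over the $d_1+d_2$ leave-one-out copies and all $\Theta(\log(\s{M}\s{N}))$ iterations.

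The step I expect to be the main obstacle is invariant (iii): the induction must carry (i), (ii), (iii) \emph{and} the auxiliary pairwise distances between $\fl{U}^{(\tau)}$ and each of its leave-one-out copies jointly through all iterations and all $d_1+d_2$ copies under a single union bound, while keeping careful track of the alignment rotations between non-unique factorizations and accounting for the mild imbalance $\fl{U}^{(\tau)\s{T}}\fl{U}^{(\tau)} \neq \fl{V}^{(\tau)\s{T}}\fl{V}^{(\tau)}$ introduced by the ridge regularizer. This is essentially bookkeeping, but it is where all the care is needed. Since $r$ and $\kappa$ are $O(1)$ and the lemma is used only as a black box in this paper, an acceptable alternative is to simply cite \cite[Lemma~2]{jain2022online} (equivalently the entrywise guarantees of \cite{chen2019noisy}) and forego the reconstruction.
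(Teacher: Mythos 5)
The paper does not prove this lemma at all: it is imported verbatim as Lemma~2 of \cite{jain2022online}, which in turn packages the entrywise guarantees of \cite{chen2019noisy}, so your fallback option of citing it as a black box is exactly what the authors do. Your reconstruction sketch is a faithful outline of how the cited result is actually established — convex/nonconvex equivalence for the nuclear-norm-regularized least squares, spectral initialization, and the three-part leave-one-out induction propagating the Frobenius error, the $\|\cdot\|_{2,\infty}$ incoherence, and the leave-one-out proximity, with the entrywise bound emerging from decoupling row $l$ from the randomness in row $l$. Two caveats. First, what you have written is a plan, not a proof: you explicitly defer invariant (iii), which is where essentially all of the work in \cite{chen2019noisy} lives, so as a standalone argument it is incomplete (though acceptable here given the lemma's role as an external tool). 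Second, your sketch analyzes the rectangular program directly, whereas Algorithm~\ref{algo:estimate_1_trimmed} first randomly partitions the long dimension into $\lceil d_1/d_2\rceil$ square $d_2\times d_2$ blocks and solves a separate nuclear-norm program on each; the square-matrix entrywise bound is applied per block and union-bounded, and this partitioning step is precisely how $d_2=\min(\s{M},\s{N})$ (rather than $d_1$) ends up in the denominator of \eqref{eq:estimation_guarantee}. A literal proof of the lemma as stated needs that reduction spelled out, including the observation that each block inherits the incoherence and condition-number hypotheses of $\fl{Q}$ up to constants.
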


Lemma \ref{lem:min_acc} provides guarantees on the estimate $\widetilde{\fl{Q}}$ of $\fl{Q}$ given the set of noisy observations corresponding to the entries in $\Omega$. Equation \ref{eq:estimation_guarantee} says that the quality of the estimate becomes better with the increase in sampling probability $p$ that determines the number of entries in $\Omega$. Note the detailed Algorithm \textsc{Estimate} (Alg. \ref{algo:estimate_1_trimmed}) designed in \cite{jain2022online,chen2019noisy} is provided in Appendix \ref{app:lrmc}.



\begin{rmk}[Warm-up (Greedy Algorithm)]
Using Lemma \ref{lem:min_acc}, it is easy to design a greedy algorithm (Alg. \ref{algo:p1} in Appendix \ref{sec:etc}) for $\s{B}=1$; for a fixed number of rounds $m$, we recommend random unblocked items to every user. Since the reward matrix $\fl{P}$ is a low-rank matrix, we use Algorithm~\ref{algo:estimate_1_trimmed} to estimate the entire matrix. Subsequently, we recommend the best estimated unblocked items for each user for the remaining rounds. We can prove that the regret suffered by such an algorithm is $\widetilde{O}(\s{T}^{2/3}(1\bigvee (\s{N}/\s{M}))^{1/3})$. The dependence on number of rounds $\s{T}$ is sub-optimal, but Alg. \ref{algo:p1} does not require the knowledge of any gaps corresponding to the reward matrix $\fl{P}$.  
See Appendix \ref{sec:etc} for a detailed proof. We would like to note that the result can be generalized easily to $\s{B}=\Theta(\log \s{T})$.
\end{rmk}

\section{Main Results}\label{subsec:main_results}

Let $\fl{X}\in \bb{R}^{\s{C}\times \s{N}}$ be the sub-matrix of the expected reward matrix $\fl{P}$ comprising $\s{C}$ distinct rows of $\fl{P}$ corresponding to each of the $\s{C}$ true clusters. Also, let
$\tau:=\max_{i,j\in [r]} |\ca{C}^{(i)}|/ |\ca{C}^{(j)}|$ denote the ratio of the maximum and minimum cluster size.
To obtain regret bounds, we make the following assumptions on the matrix $\fl{X}$: 

\begin{assumption}[Assumptions on $\fl{X}$]\label{assum:matrix}
Let  $\fl{X}=\fl{U}\f{\Sigma}\fl{V}^{\s{T}}$ be the SVD of $\fl{X}$. Also, let $\fl{X}$ satisfy the following: 1) Condition number: $\fl{X}$ is full-rank and has non zero singular values $\lambda_1> \dots > \lambda_{\s{C}}$ with condition number $\lambda_1/\lambda_{\s{C}}=O(1)$, 2) $\mu$-incoherence:  $\lr{\fl{V}}_{2,\infty}\le \sqrt{\mu \s{C}/\s{N}}$, 3) Subset Strong Convexity: For some $\alpha$ satisfying $\alpha \log \s{M} = \Omega(1)$,
$\gamma = \widetilde{O}(1)$  for all subset of indices $\ca{S}\subseteq [\s{M}], \left|\ca{S}\right| \ge \gamma \s{C}$, the minimum non zero singular value of $\fl{V}_{\ca{S}}$ must be at least $\sqrt{\alpha \left|\ca{S}\right|/\s{M}}$.
\end{assumption}

\begin{rmk}[Discussion on Assumption \ref{assum:matrix}]
We need Assumption \ref{assum:matrix} only to bound the incoherence factors and condition numbers of sub-matrices whose rows correspond to a \textit{nice subset} of users and the columns comprise of at least $\s{T}^{1/3}$ \textit{golden items} for each user in that \textit{nice subset}. This implies that information is \textit{well spread-out} across the entries instead of being concentrated in a few. Thus, we invoke Lemma \ref{lem:min_acc} for those sub-matrices and utilize existing guarantees for low rank matrix completion. Our theorem statement can hold under significantly weaker assumptions but for clarity, we used the theoretically clean and simple to state Assumption \ref{assum:matrix}. 
Note that the assumption is not required to run our algorithm or any offline matrix completion oracle - the only purpose of the assumption is to invoke existing theoretical guarantees of offline low rank matrix completion algorithms with $\ell_{\infty}$ error guarantees.  
\end{rmk}

\begin{assumption}\label{assum:cluster_ratio}
We will assume that $\mu,\sigma,\tau,\lr{\fl{P}}_{\infty},\s{C}$ are positive constants and do not scale with the number of users $\s{M}$, items $\s{N}$ or the number of rounds $\s{T}$.
\end{assumption}
Assumption \ref{assum:cluster_ratio} is only for simplicity of exposition/analysis and is standard in the literature. As in \cite{pal2023optimal}, we can easily generalize our guarantees if any of the parameters in Assumption \ref{assum:cluster_ratio} scale with 
In that case, the regret has additional polynomial factors of $\mu,\sigma,\tau,\lr{\fl{P}}_{\infty},\s{C}$ - moreover these quantities (or a loose upper bound) are assumed to be known to the algorithm. 
Now, we present our main theorem:


\begin{thm}\label{thm:main_LBM}
Consider the blocked collaborative bandits problem with $\s{M}$ users, $\s{N}$ items, $\s{C}$ clusters, $\s{T}$ rounds with blocking constraint $\s{B}=\Theta(\log \s{T})$ \footnote{It suffices for us if $\s{B}=c\log \s{T}$ for any constant $c>c'$ where $c'$ is a constant independent of all other model parameters. This is because, we simply want $B$ to be at least the number of phases. Since our phase lengths increase exponentially, the number of phases is $O(\log \s{T})$.} such that at every round $t\in [\s{T}]$, we observe reward $\fl{R}^{(t)}$ as defined in eq. (\ref{eq:obs}) with noise variance proxy $\sigma^2>0$. Let $\fl{P}\in \bb{R}^{\s{M}\times \s{N}}$ be the expected reward matrix and $\fl{X}\in \bb{R}^{\s{C}\times \s{M}}$ be the sub-matrix of $\fl{P}$ with distinct rows. Suppose  Assumption \ref{assum:matrix} is satisfied by $\fl{X}$ and Assumption \ref{assum:cluster_ratio} is true. Then Alg. \texttt{B-LATTICE} initialized with phase index $\ell=1$, $\ca{M}^{(1)}= [[\s{M}]],\ca{N}^{(1)}= [[\s{N}]],\s{C},\s{T}, t_0=1,t_{\s{exploit}}=0,\widetilde{\fl{P}}=\f{0}$ and $\epsilon_{1}=c(\log \s{M})^{-1}$ for some  constant $c>0$ guarantees the regret $\s{Reg}(\s{T})$ defined under the blocking constraint (eq. \ref{eqn:diffi}) to be:
\begin{align}\label{eq:main_1}
    \mathsf{Reg}(\mathsf{T})  = \widetilde{O}\Big(\sqrt{\s{T}\max\Big(1,\frac{\s{N}}{\s{M}}\Big)}\Big) 
\end{align}
\end{thm}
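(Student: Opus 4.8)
The plan is to analyze \texttt{B-LATTICE} phase by phase, showing that the per-phase regret forms a geometric-type series that sums to the claimed bound, while the budget constraint $\s{B}=\Theta(\log\s{T})$ is never violated because there are only $O(\log\s{T})$ phases. In phase $\ell$, the algorithm maintains a partition $\ca{M}^{(\ell)}$ of users into groups (current cluster estimates, each of which I will argue is \emph{nice} in the sense of Definition \ref{defn:nice}) and, for each group, an active item set $\ca{N}^{(\ell)}$ that is guaranteed to still contain all golden items (Definition \ref{defn:golden_items}) of the users in that group. Fix an accuracy target $\epsilon_\ell$, doubling (or halving) geometrically across phases so that $\epsilon_\ell \approx 2^{-\ell}/\log\s{M}$. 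The first step is to show that within a phase we can produce an estimate $\widetilde{\fl{P}}$ of the relevant sub-matrix of $\fl{P}$ with $\ell_\infty$-error $O(\epsilon_\ell)$ by (i) performing random exploration that samples each active (user, item) pair with probability $p_\ell = \widetilde\Omega(\epsilon_\ell^{-2}/d_2)$ \emph{once} (no repeats, to respect blocking), and (ii) invoking the matrix-completion guarantee of Lemma \ref{lem:min_acc} on each group's sub-matrix. Here Assumption \ref{assum:matrix} is exactly what is needed to certify that the restricted sub-matrices (rows = a nice subset of users, columns $\supseteq$ at least $\s{T}^{1/3}$ golden items per user) are incoherent and well-conditioned, so Lemma \ref{lem:min_acc} applies with the stated $\mu$ and $\kappa = O(1)$.

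The second step is the elimination/golden-item test. Using $\widetilde{\fl{P}}$ with $\ell_\infty$-error $O(\epsilon_\ell)$, for each user $u$ we (a) discard items whose estimated reward is more than, say, $4\epsilon_\ell$ below the estimated $(\s{T}/\s{B})$-th best item — such items are provably not golden, so the active set stays valid; and (b) identify items whose estimated reward is more than $4\epsilon_\ell$ \emph{above} the estimated $(\s{T}/\s{B})$-th best — such items are provably golden, and we commit to recommending them $\s{B}$ times each in a dedicated exploit block, accruing essentially zero regret on those rounds. This is the key structural novelty: unlike LATTICE, which chases only the single top arm, the test must simultaneously certify a whole prefix of the sorted order, and the threshold $4\epsilon_\ell$ must be chosen so that any surviving ambiguous item has true reward within $O(\epsilon_\ell)$ of the golden cutoff — hence recommending it costs only $O(\epsilon_\ell)$ per pull. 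The third step is cluster refinement: split any group in $\ca{M}^{(\ell)}$ along users whose estimated reward vectors (restricted to active items) differ by more than $c\epsilon_\ell$ in $\ell_\infty$; standard arguments show the refined groups remain nice and that after $O(\log\s{T})$ phases every group is a single true cluster, so the effective number of users per group grows and the sampling budget per phase stays affordable.

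For the regret accounting, in phase $\ell$ every recommended item to user $u$ (outside the exploit blocks, which contribute $0$) has true reward within $O(\epsilon_\ell)$ of the corresponding golden item, because the item survived all previous eliminations; the exploration rounds in phase $\ell$ number at most $p_\ell \cdot |\ca{N}^{(\ell)}| = \widetilde{O}(\epsilon_\ell^{-2}(1 \vee \s{N}/\s{M}))$ per user (using that groups have size $\widetilde\Omega(\s{M}/\s{C})$), and the exploit rounds cost $O(\epsilon_\ell)$ each. Summing the per-phase cost $\widetilde{O}\big(\epsilon_\ell^{-2}(1\vee \s{N}/\s{M}) + \epsilon_\ell \cdot (\text{rounds in phase }\ell)\big)$ over $\ell$, and balancing phase lengths so that phase $\ell$ lasts $\Theta(\epsilon_\ell^{-2}(1\vee\s{N}/\s{M}))$ rounds, the total is dominated by the last phase with $\epsilon_\ell \approx \sqrt{(1\vee\s{N}/\s{M})/\s{T}}$, giving $\mathsf{Reg}(\s{T}) = \widetilde{O}\big(\sqrt{\s{T}(1\vee\s{N}/\s{M})}\big)$. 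One must also union-bound the failure probabilities from Lemma \ref{lem:min_acc} over all $O(\log\s{T})$ phases and all $\le\s{C}$ groups, which is harmless since each is $O(\delta + d_2^{-12})$.

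The main obstacle I anticipate is reconciling the blocking constraint with the need for progressively more accurate estimates. In LATTICE one simply re-samples an entry many times to drive down noise, but here each (user,item) pair yields \emph{one} observation ever. The fix — and the delicate part of the proof — is that accuracy $\epsilon_\ell$ must instead be bought by \emph{widening} the column set sampled (exploring more fresh items per phase) and by the \emph{shrinking} row set per group (more users collaborating, boosting effective $d_2$ in Lemma \ref{lem:min_acc}); one has to verify that the active item set $\ca{N}^{(\ell)}$ never shrinks below the $\widetilde\Omega(\epsilon_\ell^{-2})$ columns needed for the next phase's estimate, and that these columns still include enough golden items for Assumption \ref{assum:matrix} to bite. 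A secondary subtlety is ensuring the committed golden items, once exhausted after $\s{B}$ pulls, leave enough non-golden-but-near-golden items to fill the remaining rounds without incurring more than $O(\epsilon_\ell)$ regret each — this is where $\s{N}\ge\s{T}/\s{B}$ and the $\s{B}=\Theta(\log\s{T})$ scaling are used jointly.
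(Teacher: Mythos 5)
Your proposal reproduces the paper's overall architecture faithfully: phased elimination with geometrically decreasing accuracy targets $\epsilon_\ell \approx 2^{-\ell}$, fresh (non-repeated) random sampling feeding the matrix-completion oracle of Lemma \ref{lem:min_acc}, per-phase identification of provably-golden items that are recommended $\s{B}$ times at zero nominal cost, refinement of user groups into nice subsets via $\ell_\infty$-closeness of estimated rows, and the final balancing of $\sum_\ell \epsilon_\ell^{-2}(1\vee\s{N}/\s{M})$ exploration rounds against $\epsilon_\ell$-per-round exploitation cost, which is exactly how the paper obtains $\widetilde{O}(\sqrt{\s{T}(1\vee\s{N}/\s{M})})$. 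The $\s{B}=\Theta(\log\s{T})$ budget being consumed at a rate of one fresh pull per item per phase is also the right mechanism.

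However, there is a genuine gap in your treatment of the exploit rounds. You assert that the committed golden items accrue ``essentially zero regret on those rounds,'' but a golden item identified in phase $\ell$ may already be blocked (or become blocked before its $\s{B}$ pulls are exhausted) for a particular user $u$, because it was recommended to $u$ during the random exploration of an earlier phase, or as a filler item in an earlier exploit block. The algorithm then substitutes an arbitrary unblocked active item, and one cannot charge that substitute to ``zero regret'' nor directly to the $O(\epsilon_\ell)$ gap of the current phase. The paper resolves this with a dedicated swapping argument (Section \ref{subsec:swapping}, Definition \ref{defn:swap}, Lemma \ref{lem:modified_sequence}): it constructs a permutation $\theta_u$ of the actually-recommended sequence by tracing chains of substitutions backward through phases, showing that every chosen golden item can be reassigned to its intended exploit round while every displaced substitute lands in an explore round of a phase whose active set still contains it, so its suboptimality is bounded by that phase's $\epsilon$. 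Your closing paragraph gestures at a related issue but frames it as golden items being ``exhausted after $\s{B}$ pulls'' within the exploit block, which is not the actual difficulty, and supplies no accounting mechanism. Two smaller inaccuracies: the active column sets only shrink across phases (accuracy is bought by raising the per-phase sampling probability on a set whose size is lower-bounded, not by ``widening'' it), and refined user groups do not grow --- the argument instead relies on every nice subset having size at least $\s{M}/(\tau\s{C})$ so that $d_2$ in Lemma \ref{lem:min_acc} stays large. You also omit the chaining over connected components (the source of the $O(\s{C})$ factors in the thresholds) needed so that an item certified golden for one user in a group is golden for every user sharing that group's recommendations.
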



Note that the regret guarantee in Theorem \ref{thm:main_LBM} is small as long as $\s{N}\approx \s{M}$ (but $\s{M},\s{N}$ can be extremely large). 
The dependence of the regret on number of rounds scales as $\sqrt{\s{T}}$. This result also matches the $\sqrt{\s{T}}$ dependence on the number of rounds without the blocking constraint \cite{jain2022online,pal2023optimal}. Furthermore, Theorem \ref{thm:main_LBM} also highlights the importance of collaboration across users - compared with standard multi-armed bandits, the effective number of arms is now the number of items per user.


\begin{rmk}
Note that in our model, we have assumed that users in the same latent cluster have identical true rewards across all items. Although we have made this assumption for simplicity as in other works in the literature \cite{pal2023optimal,ariu2020regret,barman2012analysis,bresler2018regret}, the assumption of each user having exactly 1 of $\s{C}$ different mean vectors can be relaxed significantly. In fact, in \cite{pal2023optimal}, a similar cluster relaxation was done in the following way for a known value of $\nu>0$ - there are $\s{C}$ clusters such that 1) users in the same cluster have same best item and mean reward vectors  that differ entry-wise by at most $\nu$  2) any 2 users in different clusters have mean reward vectors that differ entry-wise by at least $20\nu$. Our analysis can be extended to the relaxed cluster setting with an addition cost in regret in terms of $\nu$.
\end{rmk}

We now move on to a rigorous lower bound in this setting.
\begin{thm}\label{thm:lb}
Consider the blocked collaborative bandits problem with $\s{M}$ users, $\s{N}$ items, $\s{C}=1$ cluster, reward matrix $\fl{P}\in [0,1]^{\s{M}\times \s{N}}$, $\s{T}$ rounds with blocking constraint $\s{B}$ such that at every round $t\in [\s{T}]$, we observe reward $\fl{R}^{(t)}$ as defined in eq. (\ref{eq:obs}) with noise random variables $\{\fl{E}^{(t)}_{u\rho_u(t)}\}_{\substack{u\in [\s{M}]\\ t \in [\s{T}]}}$ generated i.i.d according to a zero mean Gaussian with variance $1$. In that case, any algorithm must suffer a regret of at least 
\begin{align}
  \s{Reg}(\s{T})=\Omega\Big(\max\Big(\sqrt{\frac{\s{NB}}{\s{M}}},\frac{\s{T}\sqrt{\log (\s{N}/\s{T})}}{\s{B}\sqrt{\s{M}}}\Big)\Big).  
\end{align}
\end{thm}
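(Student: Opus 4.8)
The plan is to prove the two terms in the lower bound separately, each via a reduction to a multi-hypothesis testing problem and an application of Fano-type inequalities, exactly as the authors announce in the introduction. Throughout I work with $\s{C}=1$ so the reward matrix $\fl{P}$ is a single row $\fl{p}\in[0,1]^{\s{N}}$ shared by all $\s{M}$ users; this lets me pool observations across users, which is where the $\s{M}^{-1/2}$ savings in regret come from. For the first term $\sqrt{\s{N}\s{B}/\s{M}}$, I would build a family of instances in which a small block of items has reward $1/2+\Delta$ and the rest have reward $1/2$, but the identity of the high-reward block is hidden. Concretely, partition the $\s{N}$ items into groups and let each hypothesis designate one group (of size $\approx \s{T}/\s{B}$, the number of golden items) as the good one; since every round contributes one noisy observation per user, after $\s{T}$ rounds there are $\s{M}\s{T}$ observations total, but the blocking constraint caps the information about any single item at $\s{B}\s{M}$ samples. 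A standard computation shows the KL divergence between the uniform mixture and any fixed hypothesis scales like $\s{M}\s{B}\Delta^2$ times the number of candidate good-blocks; choosing $\Delta \asymp \sqrt{\s{M}/(\s{N}\s{B})}$ (so that the number of candidate blocks is $\asymp \s{N}/(\s{T}/\s{B})$ and the testing error is bounded below by a constant via Fano) forces any algorithm to mis-identify a constant fraction of the golden items, and each such mistake costs $\Theta(\s{B}\Delta)$ in regret; summing gives $\Omega(\sqrt{\s{N}\s{B}/\s{M}})$ per user after dividing by $\s{M}$. The subtlety here — and the reason the introduction stresses ``Fano's inequality for approximate recovery'' — is that the algorithm does not need to recover the good block exactly; it only needs to play golden items, so I must use the approximate-recovery version of Fano (as in \cite{scarlett2019introductory}), where the loss function is the overlap between the played set and the true golden set, and argue that even $(1-\epsilon)$-approximate identification is information-theoretically impossible at the chosen $\Delta$.

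For the second term $\s{T}\sqrt{\log(\s{N}/\s{T})}/(\s{B}\sqrt{\s{M}})$, I would instead exploit the regime where $\s{T}$ is large relative to the budget: because each item can be pulled only $\s{B}$ times per user, the algorithm is forced to spend rounds on progressively worse items, and the relevant quantity is the gap between the $(\s{T}/\s{B})$-th best item and nearby items. Here I would construct instances where the top $\approx \s{T}/\s{B}$ items all have reward close to $1/2$, differing by a gap $\Delta' \asymp \sqrt{\log(\s{N}/\s{T})\,/\,(\s{M}\s{B})}$ that is just small enough to be unlearnable from $\s{B}\s{M}$ samples per item (the $\sqrt{\log(\s{N}/\s{T})}$ is the usual price for a union bound / best-arm identification over $\approx \s{N}$ near-tied items). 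An algorithm that cannot tell these items apart will, over its $\s{T}$ rounds, play items whose cumulative reward falls short of the oracle's top-$\s{T}/\s{B}$-items-$\s{B}$-times benchmark by roughly $\s{T}\cdot\Delta'$; dividing by $\s{M}$ and substituting $\Delta'$ yields the claimed bound. Again the reduction is to multi-hypothesis testing over which subset of $\s{T}/\s{B}$ items is golden, with a carefully tuned gap so that the KL between hypotheses is $O(1)$.

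The main obstacle I anticipate is handling the adaptivity and the blocking constraint together inside the information-theoretic argument: the set of observations an algorithm collects is not a fixed design but depends on past rewards, and the blocking constraint couples the rounds in a way that makes the ``number of samples per item'' a random, policy-dependent quantity. The clean way around this is to bound the total KL divergence accumulated along the interaction using the chain rule (a change-of-measure / divergence-decomposition argument in the style of the standard bandit lower-bound toolkit), where the per-round KL contribution is controlled by $\Delta^2$ times the indicator that the pulled item lies in the relevant block, and then use the deterministic budget constraint $\sum_t \mathbbm{1}[\rho_u(t)=j]\le \s{B}$ to cap the total contribution of any single item by $\s{B}$ — regardless of the policy. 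Combining this uniform-over-policies KL bound with the approximate-recovery Fano inequality, and then converting the testing error into a regret lower bound via the standard ``error probability $\Rightarrow$ bad items played $\Rightarrow$ regret'' chain, should complete both halves; taking the maximum of the two gives the stated result.
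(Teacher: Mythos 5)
Your overall strategy matches the paper's: both terms are obtained by pooling the $\s{M}$ users into a single-agent problem with $\s{MT}$ rounds and a hard cap of $\s{MB}$ pulls per item, and the second term is proved exactly as you describe --- hypotheses indexed by the $\binom{\s{N}}{\s{T}/\s{B}}$ candidate golden sets, per-item KL contributions bounded deterministically by the budget, and Fano's inequality for approximate recovery (overlap at most $\s{T}/4\s{B}$ counts as an error, which then forces $\Omega(\s{MT})$ pulls of zero-reward arms). The one place you diverge is the first term: the paper does not run a Fano argument there at all, but instead observes that if the items are grouped into $\s{NB}/\s{T}$ known blocks of $\s{T}/\s{B}$ identical copies each, the blocking constraint becomes vacuous and the problem reduces to a vanilla MAB with $\s{NB}/\s{T}$ arms and $\s{MT}$ rounds, so the standard $\Omega(\sqrt{\s{K}n})$ bound gives $\Omega(\sqrt{\s{NBM}})/\s{M}$ directly; your re-derivation via hypothesis testing would also work, but your stated gap $\Delta\asymp\sqrt{\s{M}/(\s{NB})}$ is dimensionally off --- with $\s{NB}/\s{T}$ candidate blocks and $\s{MT}$ total pulls the indistinguishability threshold is $\Delta\asymp\sqrt{\s{NB}/(\s{MT}^2)}$, which is what yields per-user regret $\s{T}\Delta=\sqrt{\s{NB}/\s{M}}$; as written your choice does not satisfy the KL condition and does not produce the claimed rate. (Your second-term gap $\Delta'\asymp\sqrt{\log(\s{N}/\s{T})/(\s{MB})}$ also differs from the paper's $\sqrt{\log(\s{N}/\s{T})}/(\s{B}\sqrt{\s{M}})$ by a $\sqrt{\s{B}}$ factor, but that is a tuning constant that does not affect the validity of the argument.)
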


Note that the main difficulty in proving tight lower bounds in the blocked collaborative bandits setting is that, due to the blocking constraint of $\s{B}$, a single item having different rewards can only cause a difference in regret of at most $\s{B}$ in two separate instances constructed in standard measure change arguments \cite{lattimore2020bandit}. However, we prove two regret lower bounds on this problem, out of which the latter is the technically more interesting one.


The first term in the lower bound in Thm. \ref{thm:lb} follows from a simple reduction from standard multi-armed bandits. Intuitively, if we have $\s{T}/\s{B}$ known identical copies of each item, then the blocking constraint is void - but with $\s{C}=1$, this is (almost) equivalent to a standard multi-armed bandit problem with $\s{MT}$ rounds, $\s{NB}\s{T}^{-1}$ distinct arms (up to normalization with $\s{M}$). The lower bound follows from invoking standard results in MAB literature.
For $\s{B}=\s{T}$, we recover the matching regret lower bound in \cite{pal2023optimal} without the blocking constraint. Furthermore, for $\s{B}=\Theta(\log \s{T})$, the first term is tight up to log factors when number of rounds is small for example when $\s{T}=O(1)$. (Appendix \ref{app:lb_proof})

The second term in the lower bound is quite non-trivial. Note that the second term is tight up to log factors when the number of rounds $\s{T}=\Theta(\s{N})$ is very large (close to the number of items) and $\s{B}=O(\log\s{T})$ is small. We obtain this bound via reduction of the regret problem to a multiple hypothesis testing problem with exponentially many instances and applying Fano's inequality \cite{scarlett2019introductory} (commonly used in proving statistical lower bounds in parameter estimation) for approximate recovery. Our approach might be of independent interest in other online problems as well. (Appendix \ref{app:lb_proof})

\begin{rmk}
We leave the problem of extending our guarantees for $\s{B}=1$ as future work. However we make progress by assuming a cluster structure over items as well. More precisely, suppose both items and users can be grouped into a constant number of disjoint clusters as such that (user,item) pairs in same cluster have same expected reward. Then, we can sidestep the statistical dependency issues in analysis of Alg. \ref{algo:phased_elim} for $\s{B}=1$ and provide similar guarantees as in Thm. \ref{thm:main_LBM} (see Appendix \ref{app:item_blocking}).
\end{rmk}

\section{\texttt{B-LATTICE} Algorithm}

\begin{algorithm*}[t]
\small
\caption{\texttt{B-LATTICE} (Blocked Latent bAndiTs via maTrIx ComplEtion )
\label{algo:phased_elim}}
\begin{algorithmic}[1]
\REQUIRE Phase index $\ell$, List of disjoint nice subsets of users $\ca{M}^{(\ell)}$, list of corresponding subsets of active items $\ca{N}^{(\ell)}$, clusters $\s{C}$, rounds $\s{T}$, round index $t_0$, exploit rounds $t_{\s{exploit}}$, estimate $\widetilde{\fl{P}}$ of $\fl{P}$, entry-wise error guarantee $\epsilon_{\ell}$ of $\widetilde{\fl{P}}$ restricted to users in $\ca{M}^{(\ell)}$ and all items in $\ca{N}^{(\ell)}$, count matrices $\fl{K},\fl{L}\in \bb{N}^{\s{M}\times \s{N}}$.
\textcolor{blue}{\texttt{// 1) $\fl{K},\fl{L}$ are global variables 2) $\ca{M}^{(\ell)}$ is a list of disjoint subset of users, each of which are proved to be nice w.h.p.}} \label{line:beginning}

\FOR{$i^{\s{th}}$ nice subset of users $\ca{M}^{(\ell,i)}\in \ca{M}^{(\ell)}$ with active items $\ca{N}^{(\ell,i)}$ ($i^{\s{th}}$ set in list $\ca{N}^{(\ell)}$)}

\STATE Set $t=t_0$. Set $\epsilon_{\ell+1}=\epsilon_{\ell}/2$, $\Delta_{\ell}=\epsilon_{\ell}/88\s{C}$ if $\ell>1$ and $\Delta_{\ell}=\lr{\fl{P}}_{\infty}$ otherwise. Set $\Delta_{\ell+1}=\epsilon_{\ell+1}/88\s{C}$.  \textcolor{blue}{\texttt{// Beginning of a phase}} \label{line:beginning}

\STATE  Run \textit{exploit} component for users in $\ca{M}^{(\ell,i)}$ with active items $\ca{N}^{(\ell,i)}$. Obtain updated active set of items, round index and exploit rounds  $\ca{N}^{(\ell,i)},t,t_{\s{exploit}} \leftarrow \s{Exploit}(\ca{M}^{(\ell,i)},\ca{N}^{(\ell,i)},t,t_{\s{exploit}},\widetilde{\fl{P}}_{\ca{M}^{(\ell,i)},\ca{N}^{(\ell,i)}},\Delta_{\ell})$.
\textcolor{blue}{\texttt{\\ //Recommend common set of identified golden items}} \label{line:golden}

\STATE Set $d_1=\max(|\ca{M}^{(\ell,i)}|,|\ca{N}^{(\ell,i)}|)$, $d_2=\min(|\ca{M}^{(\ell,i)}|,|\ca{N}^{(\ell,i)}|)$ and $p=c\Big(\frac{\log d_1}{\Delta_{\ell+1}^2d_2}\Big)$ for some appropriate fixed constant $c>0$.

\IF{$| \ca{N}^{(\ell,i)} | \ge \s{T}^{1/3}$ and $p<1$}

\STATE  Run \textit{explore} component for users in $\ca{M}^{(\ell,i)}$ with active items $\ca{N}^{(\ell,i)}$. Update estimate and round index $\widetilde{\fl{P}},t \leftarrow \s{Explore}(\ca{M}^{(\ell,i)},\ca{N}^{(\ell,i)},t,p)$ such that $\lr{\widetilde{\fl{P}}_{\ca{M}^{(\ell,i)},\ca{N}^{(\ell,i)}}-\fl{P}_{\ca{M}^{(\ell,i)},\ca{N}^{(\ell,i)}}}_{\infty} \le \Delta_{\ell+1}$ w.h.p.
\textcolor{blue}{\texttt{\\ //Random Exploration and estimation of relevant reward sub-matrix}} \label{line:exploration}

\STATE For every user $u\in \ca{M}^{(\ell,i)}$, compute $\ca{T}^{(\ell)}_u \equiv \{j \in \ca{N}^{(\ell,i)} \mid \widetilde{\fl{P}}_{u\pi_u(\s{T}\s{B}^{-1}-t_{\s{exploit}}\s{B}^{-1})}-\widetilde{\fl{P}}_{uj} \le 2\Delta_{\ell+1}\}$.

\STATE Construct graph $\ca{G}^{(\ell,i)}$ whose nodes are users in $\ca{M}^{(\ell,i)}$ and an edge exists between two users $u,v \in \ca{M}^{(\ell,i)}$ if $\left|\widetilde{\fl{P}}^{(\ell)}_{ux}-\widetilde{\fl{P}}^{(\ell)}_{vx}\right|\le 2\Delta_{\ell+1}$ for all items $x\in \ca{N}^{(\ell,i)}$.
\textcolor{blue}{\\ \texttt{// Construct graph encoding user similarity}} \label{line:graph}
\STATE Intitialize lists $\ca{M}^{(\ell+1)}_i = []$ and $\ca{N}^{(\ell+1)}_i = []$.

\STATE For each connected component $\ca{M}^{(\ell,i,j)}$ ($\cup_j \ca{M}^{(\ell,i,j)}\equiv \ca{M}^{(\ell,i)}$), compute $\ca{N}^{(\ell,i,j)}\equiv \cup_{u \in \ca{M}^{(\ell,i,j)}}\ca{T}_u^{(\ell)}$. Append $\ca{M}^{(\ell,i,j)}$ into $\ca{M}^{(\ell+1)}_i$ and $\ca{N}^{(\ell,i,j)}$ into $\ca{N}^{(\ell+1)}_i$.
\textcolor{blue}{\\ \texttt{//Construct finer clusters and identify joint good items}} \label{line:finer}
\STATE  Invoke \texttt{B-LATTICE}($\ell+1,\ca{M}^{(\ell+1)}_i,\ca{N}^{(\ell+1)}_i,\s{C},\s{T},\sigma^2$, $\lr{\fl{P}}_{\infty},\mu, t,t_{\s{exploit}},\widetilde{\fl{P}}, \epsilon_{\ell+1},\fl{K},\fl{L}$). \textcolor{blue}{\\ \texttt{// Recurse for new list of user groups and corresponding surviving items}} \label{line:recurse}
\ELSE 
 \STATE For each user $u\in \ca{M}^{(\ell,i)}$, recommend unblocked items in $\ca{N}^{(\ell,i)}$ until end of rounds.
\ENDIF
\ENDFOR
\end{algorithmic}
\end{algorithm*}

\texttt{B-LATTICE} is a recursive algorithm and runs in $O(\log{\s{T}})$ phases of exponentially increasing length. \texttt{B-LATTICE} assume the knowledge of the following quantities - users $\s{M}$, items $\s{N}$, clusters $\s{C}$, rounds $\s{T}$, blocking constraint $\s{B}$, maximum true reward $\lr{\fl{P}}_{\infty}$, incoherence factor $\mu$, cluster size ratio $\tau$, noise variance $\sigma^2>0$ and other hyper-parameters in Assumption \ref{assum:matrix}. At any phase, the algorithm maintains a partitioning of users into crude clusters, and a set of active items for each such crude cluster. Let $\ca{M}^{(\ell)}$ be the partitioning of users in the $\ell^{th}$ phase, and $\ca{N}^{(\ell)}$ be the list containing the set of active items for each group of users in $\ca{M}^{(\ell)}$. In the first phase, we place all users into a single group and keep all items active for every user; that is, $\ca{M}^{(1)}=[[\s{M}]]$, $\ca{N}^{(1)}=[[\s{N}]].$  There are three key components in each phase: (a) exploration, (b) exploitation, and (c) user clustering. In what follows, we explain these components in detail. For simplicity of exposition, suppose Assumption \ref{assum:cluster_ratio} is true namely the parameters $\mu,\sigma,\tau,\lr{\fl{P}}_{\infty},\s{C}$ are constants.

\noindent \textbf{Exploration (Alg.~\ref{algo:explore}).} The goal of the  \texttt{Explore} sub-routine is to gather enough data to obtain a better estimate of the reward matrix. To be precise, let $\ca{M}^{(\ell,i)}$ be the users in the $i^{th}$ group at phase $\ell$, and $\ca{N}^{(\ell,i)}$ be the set of active items for this group. Let $\fl{P}_{\ca{M}^{(\ell,i)},\ca{N}^{(\ell,i)}}$ be the  sub-matrix of $\fl{P}$ corresponding to  rows $\ca{M}^{(\ell,i)}$ and columns $\ca{N}^{(\ell,i)}$ (recall, this sub-matrix has rank at most $\s{C}$). Our goal is to get an estimate $\widetilde{\fl{P}}_{\ca{M}^{(\ell,i)},\ca{N}^{(\ell,i)}}$ of this matrix that is entry-wise $O(2^{-\ell})$ close to the true matrix. To do this, for each user in $\ca{M}^{(\ell,i)}$, we randomly recommend items from  $\ca{N}^{(\ell,i)}$ with probability $p=O\Big(\frac{2^{2\ell}\log d_1}{d_2}\Big)$ (Line 6). Here,  $d_1=\max(|\ca{M}^{(\ell,i)}|,|\ca{N}^{(\ell,i)}|)$, $d_2=\min(|\ca{M}^{(\ell,i)}|,|\ca{N}^{(\ell,i)}|)$. We then rely on low-rank matrix completion algorithm (Algorithm ~\ref{algo:estimate_1_trimmed}) to estimate the low rank sub-matrix. By relying on Lemma~\ref{lem:min_acc}, we show that our estimate is entry-wise $O(2^{-\ell})$ accurate. This also shows that our algorithm gets more accurate estimate of $\fl{P}_{\ca{M}^{(\ell,i)},\ca{N}^{(\ell,i)}}$ as we go to the later phases. 


\noindent \textbf{User Clustering (lines 7-8 of Alg.~\ref{algo:phased_elim}).} After the \textit{exploration} phase, we refine the user partition to make it more fine-grained. 
An important property we always want to ensure in our algorithm is that  $\ca{M}^{(\ell,i)}$, the $i^{th}$ group/crude cluster in phase $\ell$, is a \textit{nice} subset for all $(\ell, i)$ (see Definition ~\ref{defn:nice} for a definition of nice subset). To this end, we build a user-similarity graph whose nodes are users in $\ca{M}^{(\ell,i)}$, and draw an edge between two users if they have similar rewards for all the arms in $\ca{N}^{(\ell,i)}$ (Line 8 in Alg. \ref{algo:phased_elim}). Next, we partition $\ca{M}^{(\ell,i)}$ based on the connected components of the graph. That is, we group all the users in a connected component into a single cluster. In our analysis, we show that each connected component of the graph is a \textit{nice subset}. 


\noindent \textbf{Exploitation (Alg.~\ref{algo:exploit}).} 
The main goal in the \texttt{Exploit} sub-routine of our algorithm is to identify common golden items for a group of users jointly.
Consider a group of users $\ca{M}$ with active items $\ca{N}$ at the beginning of \textit{exploit} sub-routine invoked in the $\ell^{\s{th}}$ phase of Algorithm \ref{algo:phased_elim}. We perform the following test in Line 1 of the \texttt{Exploit} sub-routine: for every user $u\in \ca{M}$, we check if $\widetilde{\fl{P}}_{u\widetilde{\pi}_u(1)\mid \ca{N}}-\widetilde{\fl{P}}_{u\widetilde{\pi}_u(\left|\ca{N}\right|)\mid \ca{N}} \ge c'2^{-\ell}$ for some constant $c'>0$ - that is if the estimated highest rewarding and lowest rewarding items of the user $u$ have a significant gap. For all the users that satisfy the above property, we take a union of the items close to the estimated highest rewarding item for each of them (Line \ref{line:golden_find}). We can show that these identified items are actually golden items for all users in the nice subset $\ca{M}$. Hence, we immediately recommend these identified golden items to every user in the nice subset $\s{B}$ times. In case a golden item is blocked for a user, we recommend an unblocked active item (Line 5).
Subsequently we remove the golden items from the active set of items (Line 8) and prune it.
We go on repeating the process with the pruned set of items until we can find no user that satisfy the above gap property between highest and lowest estimated rewarding items in the active set.    

To summarize, in each phase of Algorithm \ref{algo:phased_elim}, we perform the exploration, exploitation and user clustering steps described above. As the algorithm proceeds, we get more fine grained clustering of users, and more accurate estimates of the rewards of active items. Using this information, the algorithm tries to identify golden items and recommends the identified golden items to users.   

\begin{algorithm}[!htbp]
\small
\caption{\texttt{Explore} (Explore Component of a phase)
\label{algo:explore}}
\begin{algorithmic}[1]
\REQUIRE Phase index $\ell$, nice subset of users $\ca{M}$, active items $\ca{N}$, round index $t_0$, sampling probability $p$.
\textcolor{blue}{\texttt{//Takes a particular set of users (nice w.h.p.), their corresponding set of active items and returns an estimate of corresponding reward sub-matrix.
Unblocked items are (almost) randomly recommended to obtain data - recommendations are kept track of in global variables $\fl{K},\fl{L}$. Data from previous phases are not used to maintain independence.}} 
\STATE For each $(i,j)\in \ca{M}\times \ca{N}$, independently set $\delta_{ij}=1$ with probability $p$ and $\delta_{ij}=0$ with probability $1-p$.

\STATE  Denote $\Omega=\{(i,j)\in \ca{M}\times \ca{N} \mid \delta_{ij}=1\}$ and
 $m=\max_{i \in \ca{M}} |j \in \ca{N}\mid (i,j) \in \Omega|$ to be the maximum number of index tuples in a particular row. Initialize observations corresponding to indices in $\Omega$ to be $\ca{A}=\phi$. Set $m_u=\left|j\in \ca{N}\mid (u,j)\in \Omega\right|$
 \textcolor{blue}{\\\texttt{// Create Bernoulli Mask $\Omega$ with the idea of recommending all items in $\Omega$}} \label{line:explore_ber}

\FOR{each user $u\in \ca{M}$}

\FOR{rounds $t=t_0+1,t_0+2,\dots,t_0+m_u$}

 \STATE Find an item $z$ in $\{j \in \ca{N}\mid (u,j)\in \Omega, \delta_{uj}=1\}$. Set $\delta_{uj}=0$.
 \textcolor{blue}{\\ \texttt{// Find item in $\Omega$ for the user that has not been recommended yet in this function instantiation}} \label{line:explore_find}
 \STATE If $\fl{K}_{uz}+\fl{L}_{uz}<\s{B}$ ($z$ is unblocked), set $\rho_u(t)=z$ and recommend $z$ to user $u$. Observe $\fl{R}^{(t)}_{u\rho_u(t)}$ and update $\ca{A}=\ca{A}\cup\{\fl{R}^{(t)}_{u\rho_u(t)}\}$, $\fl{K}_{uz}\leftarrow \fl{K}_{uz}+1$. 
 \textcolor{blue}{\texttt{// Recommend unblocked item in $\Omega$}} \label{line:recommend_explore}
 \STATE If $\fl{K}_{uz}+\fl{L}_{uz} = \s{B}$ ($z$ is blocked), recommend any unblocked item $\rho_u(t)$ in $\ca{N}$ s.t. $(u,\rho_u(t))\not\in\Omega$. Update $\fl{L}_{u\rho_u(t)}\leftarrow \fl{L}_{u\rho_u(t)}+1$. Set $\ca{A}=\ca{A}\cup\{\fl{R}^{(t')}_{u\rho_u(t')}\}$ where $t'<t$ is the last round when $\rho_u(t')=z$ was recommended to user $u$ but the observation $\fl{R}^{(t')}_{uz}$ has not been used in an invocation of the function $\texttt{Estimate}$. Update $\fl{K}_{uz}\leftarrow \fl{K}_{uz}+1$ and $\fl{L}_{uz}\leftarrow \fl{L}_{uz}-1$.
 \textcolor{blue}{\\ \texttt{//Found item is blocked but some observation can be re-used}} \label{line:explore_blocked}
 
 

\ENDFOR
\FOR{rounds $t=t_0+m_u+1,\dots,t_0+m$}
\STATE Recommend any unblocked item $\rho_u(t)$ in $\ca{N}$ s.t. $(u,\rho_u(t))\not\in\Omega$. Update $\fl{L}_{u\rho_u(t)}\leftarrow \fl{L}_{u\rho_u(t)}+1$.
\textcolor{blue}{\\ \texttt{// No available items in $\Omega$ to recommend. Recommend some other item}} \label{line:recommend_random}
\ENDFOR
\ENDFOR

\STATE Compute the estimate  $\widetilde{\fl{P}}=\texttt{Estimate}(\ca{M},\ca{N},\sigma^2,\s{C},\Omega,\ca{A}$) and return $\widetilde{\fl{P}},t_0+m$.
\textcolor{blue}{\\ \texttt{// Low Rank Matrix Completion of Reward Sub-matrix $\fl{P}_{\ca{M},\ca{N}}$ using observations in $\Omega$}} \label{line:lrmc}

\end{algorithmic}
\end{algorithm}

\begin{algorithm}[t]
\small
\caption{\texttt{Exploit} (Exploit Component of a phase)
\label{algo:exploit}}
\begin{algorithmic}[1]
\REQUIRE Phase index $\ell$, nice subset of users $\ca{M}$, active items $\ca{N}$, round index $t_0$, exploit rounds $t_{\s{exploit}}$, estimate $\widetilde{\fl{P}}$ of $\fl{P}$ and error guarantee $\Delta_{\ell}$ such that $\lr{\widetilde{\fl{P}}_{\ca{M},\ca{N}}-\fl{P}_{\ca{M},\ca{N}}}_{\infty}\le 88\s{C}\Delta_{\ell}$ with high probability.
\textcolor{blue}{\texttt{//Takes a particular set of users (nice w.h.p.), their corresponding set of active items and an estimate of corresponding reward sub-matrix as input. Identifies common golden items for all aforementioned users in this module and recommends them jointly until exhausted - recommendations are kept track of in global variables $\fl{K},\fl{L}$ and active items are pruned.}} 

\WHILE{there exists $u\in \ca{M}$ such that $ \widetilde{\fl{P}}_{u\widetilde{\pi}_{u}(1)\mid \ca{N}}-\widetilde{\fl{P}}_{u\widetilde{\pi}_{u}(\left|\ca{N}\right|)\mid \ca{N}} \ge 64\s{C} \Delta_{\ell}\}$}
\STATE Compute $\ca{R}_u = \{j \in \ca{N}\mid \widetilde{\fl{P}}_{uj} \ge \widetilde{\fl{P}}_{u\widetilde{\pi}_u(1)\mid \ca{N}}-2\Delta_{\ell+1}\}$ for every user $u\in \ca{M}$. Compute $\ca{S}=\cup_{u\in \ca{M}}\ca{R}_u$. \label{line:golden_find} \textcolor{blue}{\texttt{// Find common set of golden items for all users in $\ca{M}$}} 
\FOR{rounds $t=t_0+1,t_0+2,\dots,t_0+|\ca{S}|\s{B}$}
\FOR{each user $u\in \ca{M}$}
\STATE Denote by $x$ the $\lceil(t-t_0/\s{B})\rceil$ item in $\ca{S}$. If $\fl{K}_{ux}+\fl{L}_{ux} < \s{B}$ ($x$ is unblocked), then recommend $x$ to user $u$ and update $\fl{L}_{ux} \leftarrow \fl{L}_{ux}+1$. If $\fl{K}_{ux}+\fl{L}_{ux} = \s{B}$ ($x$ is blocked), recommend any unblocked item $y$ in $\ca{N}$ (i.e $\fl{K}_{uy}+\fl{L}_{uy} < \s{B}$) for the user $u$ and update $\fl{L}_{uy}\leftarrow \fl{L}_{uy}+1$. \label{line:golden_recommend}
\textcolor{blue}{\\\texttt{// Recommend all identified golden items }} 

\ENDFOR
\ENDFOR
\STATE Update $\ca{N}\leftarrow \ca{N}\setminus \ca{S}\quad$. \label{line:golden_prune}  \textcolor{blue}{\texttt{// Remove golden items and prune active items $\ca{N}$}} 
\STATE Update $t_0\leftarrow t_0+\left|\ca{S}\right|\s{B}$ and $t_{\s{exploit}}\leftarrow t_{\s{exploit}}+\left|\ca{S}\right|\s{B}$.
\ENDWHILE

\STATE Return $\ca{N},t_0,t_{\s{exploit}}$.
\end{algorithmic}
\end{algorithm}

\noindent \textbf{Implementation Details:}
The actual implementation of the algorithm described above is a bit more involved due to the fact that every user has to be recommended an item at every time step (see problem setting in Section~\ref{subsec:prob_defn}). To see this, consider the following scenario. Suppose, we identified item $j$ as a  golden item for users in the nice subset $\ca{M}$ (crude cluster) in the \texttt{Exploit} sub-routine invoked in some phase of Alg. \ref{algo:phased_elim}. Moreover, suppose there are two users $u_1, u_2$ in the cluster for whom the item has been recommended $n_1, n_2$ times respectively, for some $n_1 < n_2$. So, for the final $n_2-n_1$ iterations during which the algorithm recommends item $j$ to $u_1$, it has to recommend some other item to $u_2$. In our algorithm, we randomly recommend an item from the remaining active set of items for $u_2$ during those $n_2-n_1$ rounds. We store the rewards from these recommendations and use them in the exploration component of future phases.
A similar phenomenon also occurs in the \textit{Explore} sub-routine where we sometimes need to recommend items outside the sub-sampled set of entries in $\Omega$ (Line 7 and 10 in Alg. \ref{algo:explore})
To this end, we introduce matrices $\fl{K},\fl{L} \in \mathbb{R}^{\s{M}\times\s{N}}$ which perform the necessary book-keeping for us. 

$\fl{K}$ tracks number of times an item has been recommended to a user and the corresponding observation has been already used in computing an estimate of some reward sub-matrix (Line \ref{line:lrmc} of Alg. \ref{algo:explore}). Since these estimates are used to cluster users and eliminate items (Lines 7-10 in Alg. \ref{algo:phased_elim}), these observations are not reused in subsequent phases to avoid statistical dependencies. $\fl{L}$ tracks the number of times an item has been recommended to a user and the corresponding observation has not been used in computing an estimate of reward sub-matrix so far. These observations can still be used once in Line \ref{line:lrmc} of Alg. \ref{algo:explore}. Observe that $\fl{K}_{ij}+\fl{L}_{ij}$ is the total times user $i$ has been recommended item $j$. In practice, eliminating observations is unnecessary and we can reuse  observations whenever required. Hence Alg. \ref{algo:phased_elim} can work even when $\s{B}=1$ (see Alg. \ref{algo:phased_elim_simple} for a simplified and practical version).

\noindent \textbf{Handling very few active items (Line 5 in Alg. \ref{algo:phased_elim}).} Recall, in the exploration step of phase $\ell$, we randomly recommend active items with probability $p=O\Big(\frac{2^{2\ell}\log d_1}{d_2}\Big)$. If $p>1$, then we simply recommend all the remaining unblocked items for each user until the end. In our analysis, we can show that this happens only if the size of surviving items is too small, and when the number of remaining rounds is very small.  Hence the regret is small if we follow this approach (Lemma \ref{lem:interesting20}). Similarly, when remaining rounds become smaller than $\s{T}^{1/3}$, we follow the
same approach.

\noindent \textbf{Running Time:} Computationally speaking, the main bottleneck of our algorithm is the matrix completion function  \textit{Estimate} invoked in Line 13 of the Explore Component (Algorithm 2). All the remaining steps have lower order run-times. Note that the  \textit{Estimate} function is invoked at most $O(\s{C}\log \s{T})$ times since there are can be at most $\s{C}$ disjoint nice subsets of users at a time and the number of phases is $\log \s{T}$. Moreover, note that the  \textit{Estimate} function (Algorithm 4) solves a convex objective in Line 3 - this has a time complexity of $O(\s{M}^2\s{N}+\s{N}^2\s{M})$ which is slightly limiting because of the quadratic dependence. However, a number of highly efficient techniques have been proposed for optimizing the aforementioned objective even when $\s{M}$, $\s{N}$ are in the order of millions (see \cite{hsieh2014nuclear}). 

 \noindent \textbf{Proof Sketch of Theorem \ref{thm:main_LBM}} 
 We condition on the high probability event that the low rank matrix completion estimation step is always successful (Lemma \ref{lem:all_good}). 
 Note that for a fixed user, the items chosen for recommendation in the \textit{exploit} component of any phase are \textit{golden items} and costs zero regret if they are unblocked and recommended. Even if it is blocked, we show a swapping argument to a similar effect. That is, with an appropriate permutation of the recommended items, we can ignore the regret incurred from golden-items altogether. Moreover, in the \textit{explore} component of the $\ell^{\s{th}}$ phase, we can bound the sub-optimality gap of the surviving active items by some pre-determined $\epsilon_{\ell}$. 
 We prove that this holds even with the (chosen) permutation of the recommended items (Lemma \ref{lem:modified_sequence}). 
 We choose $\epsilon_{\ell}$ to be exponentially decreasing in $\ell$ and the number of rounds in the \textit{explore} component of phase $\ell$ is roughly $1/\epsilon^2_{\ell}$. Putting these together, we obtain the regret guarantee in Theorem \ref{thm:main_LBM}.

\section{Conclusion and Future Work}\vspace{-6pt}\label{sec:conc}

We study the problem of Collaborative Bandits in the setting where each item can be recommended to a user a small number of times. Under some standard assumptions and a blocking constraint of $\Theta(\log \s{T})$, we show a phased algorithm \texttt{B-LATTICE} with regret guarantees that match the tight results with no blocking constraint \cite{pal2023optimal}. To the best of our knowledge, this is the first regret guarantee for such a general problem with no assumption of item clusters. We also provide novel regret lower bounds that match the upper bound in several regimes. Relaxing the assumptions, extending guarantees to a blocking constraint of $\s{B}=1$ and tightening the gap between the regret upper and lower bounds are very interesting directions for future work.

\section*{Acknowledgements}
We would like to thank Sandeep Juneja for helpful discussions on understanding the information theoretic limits of the problem.

\bibliographystyle{plain}

\newpage
\appendix
\onecolumn

\noindent \textbf{Limitations:} The main contributions of our works are theoretical. From a theoretical point of view, the limitations of our paper are discussed in Sections \ref{subsec:main_results} and \ref{sec:conc}. In particular, we believe that tightening the gap between the upper and lower bounds in regret will require novel and non-trivial algorithmic ideas - we leave this as an important direction of future work.

\noindent \textbf{Broader Impact:} Due to the theoretical nature of the work, we do not foresee any adverse societal impact.

\section{Low rank matrix completion}\label{app:lrmc}

Below, we describe the \texttt{Estimate} sub-routine to estimate a $\s{M}\times \s{N}$ low rank matrix $\fl{Q}$ of rank $r$ given a partially observed set of noisy entries $\{\fl{Z}_{ij}\}_{(i,j)\in \Omega}$ corresponding to a subset $\Omega \subseteq [\s{M}]\times [\s{N}]$. Here $\bb{E}\fl{Z}_{ij}=\fl{Q}_{ij}$ for all $(i,j)\in \Omega$ and furthermore, $\{\fl{Z}_{ij}\}_{(i,j)\in \Omega}$ are independent sub-gaussian random variables with variance proxy at most $\sigma^2$. 

\begin{algorithm}[!htbp]
\caption{\texttt{Estimate} (Low-rank matrix completion ) \cite{jain2022online}
\label{algo:estimate_1_trimmed}}
\begin{algorithmic}[1]
\small
\REQUIRE Matrix dimensions $(\s{M}, \s{N})$, noise variance $\sigma^2$, rank $r$, subset of indices that are observed $\Omega \subseteq [\s{M}]\times [\s{N}]$ and noisy observations $\{\fl{Z}_{ij}\}_{(i,j)\in \Omega}$.

\STATE Partition the rectangular matrix into square matrices. Without loss of generality, assume $\s{M} \le \s{N}$. For each $i\in [\s{N}]$, randomly set $\zeta_i$ to be a value in the set $[\lceil\s{N}/\s{M}\rceil]$ uniformly at random. Partition indices in $[\s{N}]$ into $[\s{N}]^{(1)},[\s{N}]^{(2)},\dots,[\s{N}]^{(k)}$ where $k=\lceil \s{N}/\s{M}\rceil$ and $[\s{N}]^{(q)}=\{i\in [\s{N}]\mid \zeta_i =q\}$ for each $q\in [k]$. Set $\Omega^{(q)}\leftarrow \Omega \cap ([\s{M}]\times [\s{N}]^{(q)})$ for all $q\in [k]$.  \hfill\COMMENT{\textit{If $\s{M} \ge \s{N}$, we partition the indices in $[\s{M}]$}.} 
\FOR{$q\in [k]$}
\STATE Solve the following convex program with $\lambda=C_{\lambda}\sigma\sqrt{\left|\Omega\right|/\max(\s{M},\s{N})}$, for some constant $C_{\lambda}>0$ 
\begin{align*}
\tiny
    \min_{\widetilde{\fl{Q}}^{(q)}\in \bb{R}^{\s{M}\times|[\s{N}]^{(q)}|}} \sum_{(i,j)\in \Omega^{(q)}}\frac{(\widetilde{\fl{Q}}^{(q)}_{i\gamma_u(j)}-\fl{Z}_{ij}\Big)^2}{2}+\lambda\|\widetilde{\fl{Q}}^{(q)}\|_{\star}
\end{align*}
where $\|\widetilde{\fl{Q}}^{(q)}\|_{\star}$ denotes nuclear norm of matrix $\widetilde{\fl{Q}}^{(q)}$ and $\gamma_u(j)$ is index of $j$ in set $[\s{N}]^{(q)}$. 
\ENDFOR
\STATE Return  $\widetilde{\fl{Q}}\in \bb{R}^{\s{M}\times \s{N}}$ such that $\widetilde{\fl{Q}}_{[\s{M}],[\s{N}]^{(q)}}=\widetilde{\fl{Q}}^{(q)}$ for all $q\in [k]$.
\end{algorithmic}
\end{algorithm}

\section{Explore-Then-Commit (ETC)}\label{sec:etc}

We first present a greedy algorithm in the blocked setting with $\s{B}=1$ (no repetition) that uses the Explore-Then-Commit (ETC) framework. Such an algorithm has two disjoint phases - \textit{exploration and exploitation}. We will first jointly explore the set of items (without repeating same item for any user) for all users for a certain number of rounds and compute an estimate $\widetilde{\fl{P}}$ of the reward matrix $\fl{P}$. Subsequently, in the exploitation phase, for each user, we recommend the best estimated distinct items (that have not been recommended in the exploration phase to that user) inferred from the estimated reward matrix $\widetilde{\fl{P}}$. Note that if we explore too less, then our estimate will be poor and hence we will suffer large regret once we commit in the exploitation phase. On the other hand, if we explore too much, then the exploration cost will be high. Our goal is to balance both the exploration length and the exploitation cost under the blocked setting. Thus, we obtain the following result:

\begin{algorithm}[!t]
\caption{\texttt{ETC} (Explore-Then-Commit Algorithm with Blocking constraint $\s{B}=1$)   \label{algo:p1}}
\begin{algorithmic}[1]
\REQUIRE users $\s{M}$, items $\s{N}$, rounds $\s{T}$, noise $\sigma^2$, rank $r$ of $\fl{P}$.

\STATE Set $p=\widetilde{O}\Big((\s{N \|\fl{P}\|_{\infty}})^{-2/3}\Big(\frac{\s{T}\sigma r}{\sqrt{d_2}}\sqrt{\mu^3 }\Big)^{2/3}\bigvee \frac{\mu^2}{d_2}\Big)$. Set $d_2=\min(\s{M},\s{N})$ and  $\lambda=C\sigma\sqrt{d_2p}$ for constant $C$.

\STATE For each tuple of indices $(i,j)\in [\s{M}]\times [\s{N}]$, independently set $\delta_{ij}=1$ with probability $p$ and $\delta_{ij}=0$ with probability $1-p$.

\STATE  Denote $\Omega=\{(i,j)\in [\s{M}]\times [\s{N}] \mid \delta_{ij}=1\}$ and
 $m=\max_{i \in [\s{M}]}\mid |j \in [\s{N}]\mid (i,j) \in \Omega|$ to be the maximum number of index tuples in a particular row. For all $(i,j)\in \Omega$, set $\s{Mask}_{ij}=0$.

\FOR{rounds $t=1,2,\dots,m$}
 \STATE For each user $u\in \ca{U}$, recommend an item $\rho_u(t)$ in $\{j \in [\s{N}]\mid (u,j)\in \Omega, \s{Mask}_{uj}=0\}$ and set $\s{Mask}_{u\rho_u(t)}=1$. 
If not possible then recommend any item $\rho_u(t)$ in $[\s{N}]$ s.t. $(u,\rho_u(t))\not\in\Omega$ and has not been recommended yet to user $u$.
Observe $\fl{R}^{(t)}_{u\rho_u(t)}$.
\ENDFOR

\STATE Compute the estimate $\widetilde{\fl{P}}\in \bb{R}^{\s{M}\times \s{N}}$ as output of $\widetilde{\fl{P}}\leftarrow\texttt{Estimate}([\s{M}],[\s{N}],\sigma^2,r,\Omega,\{\fl{R}^{(t)}_{u\rho_u(t)})\}_{t\in[m]}$). 

\FOR{ each of remaining rounds}
\STATE Set $j'_u=\s{argmax}_{j \in [\s{N}]}\widetilde{\fl{P}}_{u\widetilde{\pi}_u(j)}$ for each user $u$, s.t.  $\widetilde{\pi}_u(j'_u)$ has not been recommended before to $u$.
\STATE For each user $u\in [\s{M}]$, recommend the item $\widetilde{\pi}_u(j'_u)$ to user $u$. 
\ENDFOR
\end{algorithmic}
\end{algorithm}

\begin{thm}\label{thm:baseline}
Consider the GBB setting with $\s{M}$ users, $\s{C}=O(1)$ clusters, $\s{N}$ items, $\s{T}$ recommendation rounds and blocking constraint $\s{B}=1$. Set $d_2=\min(\s{M},\s{N})$. Let $\fl{R}^{(t)}_{u\rho_u(t)}$ be the reward in each round, defined as in \eqref{eq:obs}. Suppose $d_2=\Omega(\mu r\log(rd_2))$. Let $\fl{P}\in \bb{R}^{\s{M}\times \s{N}}$ be the expected reward matrix that satisfies the conditions stated in Lemma \ref{lem:min_acc} , and let $\sigma^2$ be the noise variance in rewards. Then, Algorithm \ref{algo:p1}, applied to the online rank-$r$ matrix completion problem under the blocked setting guarantees the  regret defined as in eq. \ref{eqn:diffi} to be: 
\begin{align}\label{eq:ub_etan2}
   \s{Reg}(\s{T}) &= \widetilde{O}\Big(\mu\s{T}^{2/3}\lr{\fl{P}}_{\infty}^{1/3}\max\Big(1,\frac{\s{N}}{\s{M}}\Big)^{1/3} +\mu^2\lr{\fl{P}}_{\infty}\max\Big(1,\frac{\s{N}}{\s{M}}\Big)+\lr{\fl{P}}_{\infty}\s{T}^{-2}\Big).
\end{align}
\end{thm}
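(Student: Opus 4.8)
The plan is to follow the Explore-Then-Commit template and optimize over the length of the exploration phase. First I would fix the exploration probability $p$ (the rate at which each $(u,j)$ pair enters the sampled set $\Omega$), which controls the number of exploration rounds $m$ per user and, through Lemma~\ref{lem:min_acc}, the entry-wise accuracy $\epsilon$ of the estimate $\widetilde{\fl{P}}$. Concentration on the Binomial gives $m = \widetilde{O}(\s{N} p)$ with high probability (here I use $\s{N} \ge \s{T}/\s{B} = \s{T}$, and I should also note that $m \le \s{T}$ must hold, which the final choice of $p$ will respect). Plugging $p$ into \eqref{eq:estimation_guarantee} and recalling $d_2 = \min(\s{M},\s{N})$, the estimate satisfies $\|\fl{P}-\widetilde{\fl{P}}\|_\infty = \widetilde{O}\big(\sigma\sqrt{\mu^3}/\sqrt{p d_2}\big) =: \epsilon$ with probability $1 - O(\delta + d_2^{-12})$. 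I would condition on this good event throughout; on the complementary event the regret is at most $\widetilde{O}(\lr{\fl{P}}_\infty \s{T})$ times a probability $O(d_2^{-12})$, contributing the negligible $\lr{\fl{P}}_\infty \s{T}^{-2}$-type term after choosing $\delta$ polynomially small.

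Next I would bound the two sources of regret separately. The exploration phase contributes at most $\widetilde{O}(\lr{\fl{P}}_\infty \cdot m) = \widetilde{O}(\lr{\fl{P}}_\infty \s{N} p)$ to $\s{Reg}(\s{T})$ (each of the $m$ rounds per user costs at most $\lr{\fl{P}}_\infty$ in per-user regret, and the sum over users divided by $\s{M}$ leaves a factor $1$; more carefully, $m \le \widetilde{O}(\s{N}p/\s{M} + 1)$ only if we sample column-wise, but with the row-wise analysis $m = \widetilde{O}(\s{N}p)$ and we use $\s{N}p \le \s{T}$). For the commit phase, a standard argument shows that recommending the top-$(\s{T}-m)$ estimated unblocked items for each user costs at most $2\epsilon$ per round in per-user regret: by an exchange/swapping argument (sort true vs. estimated top items), the $k$-th best estimated item has true reward at least $\fl{P}_{u\pi_u(k)} - 2\epsilon$, so each of the $\le \s{T}$ commit rounds loses at most $2\epsilon$ relative to the oracle benchmark $\sum_{s}\fl{P}_{u\pi_u(s)}$ in \eqref{eqn:diffi}. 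Since $\s{B}=1$ here, the oracle uses exactly the top $\s{T}$ distinct items per user and the accounting is clean. Summing, $\s{Reg}(\s{T}) = \widetilde{O}\big(\lr{\fl{P}}_\infty \s{N}p \cdot \s{M}^{-1} \cdot (\text{adjust}) + \s{T}\epsilon\big)$; writing everything in terms of $p$ gives a term increasing in $p$ and a term $\propto \s{T}/\sqrt{p d_2}$ decreasing in $p$.

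Then I would balance the two terms: setting the exploration cost $\sim \lr{\fl{P}}_\infty \s{N} p / \s{M}$ equal to the commit cost $\sim \s{T}\sigma\sqrt{\mu^3}/\sqrt{pd_2}$ and solving for $p$ yields the stated choice $p = \widetilde{O}\big((\s{N}\lr{\fl{P}}_\infty)^{-2/3}(\s{T}\sigma r d_2^{-1/2}\sqrt{\mu^3})^{2/3}\big)$, up to the lower-order floor $p \ge \mu^2/d_2$ which is the minimal sampling rate required for Lemma~\ref{lem:min_acc} to apply (this floor produces the additive $\mu^2 \lr{\fl{P}}_\infty \max(1,\s{N}/\s{M})$ term, coming from a regime where even minimal exploration already gives a sufficiently good estimate). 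Substituting this $p$ back into the balanced expression and simplifying (using $d_2 = \min(\s{M},\s{N})$ so that $\s{N}/d_2 = \max(1,\s{N}/\s{M})$, and bundling $\sigma, r$ into the constants per Assumption~\ref{assum:cluster_ratio}) gives the $\widetilde{O}(\mu \s{T}^{2/3}\lr{\fl{P}}_\infty^{1/3}\max(1,\s{N}/\s{M})^{1/3})$ leading term in \eqref{eq:ub_etan2}.

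I expect the main obstacle to be the bookkeeping around the blocking constraint inside both phases: because each user must receive a recommendation every round, when $m < \s{T}$ rounds of exploration are insufficient to fill a user's row-sample or, in the commit phase, when the estimated-best item has already been used in exploration, we must recommend an arbitrary unused item, and one has to verify via the same swapping/exchange argument that these "filler" recommendations do not break the $2\epsilon$-per-round bound — i.e., that a valid permutation of the recommended sequence dominates term-by-term the residual part of the oracle sequence. A secondary subtlety is checking that $m \le \s{T}$ for the chosen $p$ (otherwise exploration cannot even complete), which holds because $p$ scales like $\s{T}^{2/3}$-ish over $\s{N}$ and $\s{N} \ge \s{T}$, and ensuring the high-probability Binomial tail bound on $m$ is uniform over all $\s{M}$ rows via a union bound absorbed into the $\widetilde{O}$.
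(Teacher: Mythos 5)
Your proposal follows essentially the same route as the paper's proof: explore by sampling each $(u,j)$ with probability $p$, invoke Lemma~\ref{lem:min_acc} for the entry-wise error $\epsilon = \widetilde{O}(\sigma\sqrt{\mu^3/(pd_2)})$, use the exchange/swapping argument to show each commit round loses at most $2\epsilon$ against the oracle permutation, balance the exploration cost $\widetilde{O}(\lr{\fl{P}}_{\infty}\s{N}p)$ against the commit cost $\widetilde{O}(\s{T}\epsilon)$, and attribute the middle term to the sampling-probability floor $p \gtrsim \mu^2/d_2$ and the last term to the estimation-failure event — all of which matches the paper. One small internal inconsistency: in the balancing step you write the exploration cost as $\lr{\fl{P}}_{\infty}\s{N}p/\s{M}$, but the per-user exploration length is already $m=\widetilde{O}(\s{N}p)$ rounds with no further division by $\s{M}$ (as your own earlier parenthetical correctly notes), and it is the balance without the $1/\s{M}$ factor that yields your stated choice of $p$ and the final bound.
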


In order to understand the result, note that the second term in the regret bound stems from the fact that in our algorithmic framework, the low rank matrix completion module needs a certain  number of observed indices and therefore a certain number of exploration rounds  (note that $p\ge C\mu^2d_2^{-1}\log^3 d_2$ for some constant $C>0$ in Lemma \ref{lem:min_acc}). Similarly, the third term stems from the failure of the estimation module; again, the term $\s{T}^{-2}$ can be replaced by $\s{T}^{-c}$ for any constant $c>0$. The first term in the regret bound captures the dependence on the number of rounds $\s{T}$ - the scaling of $\s{T}^{2/3}$ is sub-optimal and our subsequent goal is to improve this dependence to the rate of $\sqrt{\s{T}}$.


\begin{proof}[Proof of Theorem \ref{thm:baseline}]

Suppose we explore for a period of $\s{S}$ rounds  such that the exploration period succeeds with a probability of $1-\nu$. 
Conditioned on the event that
the exploration period succeeds, we obtain an estimate $\widetilde{\fl{P}}$ of the reward matrix $\fl{P}$ satisfying $\|\fl{P}-\widetilde{\fl{P}}\|_{\infty}\le \rho$. Recall $\pi_u:[\s{N}]\rightarrow [\s{N}]$ to be the permutation on $[\s{N}]$ such that for any $i,j \in [\s{N}];i<j$, we have $\fl{P}_{u\pi_u(i)}\ge\fl{P}_{u\pi_u(j)}$. Similarly, denote $\widetilde{\pi}_u:[\s{N}]\rightarrow [\s{N}]$ such that for for any $i,j \in [\s{N}];i<j$, we have $\widetilde{\fl{P}}_{u\widetilde{\pi}_u(i)}\ge \widetilde{\fl{P}}_{u\widetilde{\pi}_u(j)}$. Now consider any index $i \in [\s{T}]$ for which we will analyze $\fl{P}_{u\widetilde{\pi}_u(i)}-\fl{P}_{u\pi_u(i)}$ which is the error if we choose the $i^{\s{th}}$ item according to the estimated matrix $\widetilde{\fl{P}}$ (instead of $\fl{P}$). There are several cases that we need to consider. First, suppose $\widetilde{\pi}_u(i)=\pi_u(j)$ where $j \le i$. In that case, we have $\fl{P}_{u\widetilde{\pi}_u(i)}-\fl{P}_{u\pi_u(i)} \ge 0$. Now, consider the other case where $j>i$ implying that the element in the $j^{\s{th}}$ position in the permutation $\pi_u$ has shifted to the left in $\widetilde{\pi}_u$. In order for this to happen, there must exist an element $i_1 \le i \le i_2$ for which $\widetilde{\pi}_u(i_2) = \pi_u(i_1)$ implying that an element $i_1$ in the permutation $\pi_u$ has shifted to the right in $\widetilde{\pi}_u$. Therefore,

\begin{align*}
\fl{P}_{u\widetilde{\pi}_u(i)}-\fl{P}_{u\pi_u(i)} &= \fl{P}_{u\widetilde{\pi}_u(i)}-\widetilde{\fl{P}}_{u\widetilde{\pi}_u(i)}+\widetilde{\fl{P}}_{u\widetilde{\pi}_u(i)}-\widetilde{\fl{P}}_{u\widetilde{\pi}_u(i_2)} \\
&+\widetilde{\fl{P}}_{u\widetilde{\pi}_u(i_2)}-\fl{P}_{u\widetilde{\pi}_u(i_2)}+\fl{P}_{u\widetilde{\pi}_u(i_2)}-
\fl{P}_{u\pi_u(i)} \ge -2\rho
\end{align*}

where we used the fact that $\|\fl{P}-\widetilde{\fl{P}}\|_{\infty}\le \rho$, $\widetilde{\fl{P}}_{u\widetilde{\pi}_u(i)}-\widetilde{\fl{P}}_{u\widetilde{\pi}_u(i_2)}\ge 0$ (since $i\le i_2$), $\fl{P}_{u\widetilde{\pi}_u(i_2)}-
\fl{P}_{u\pi_u(i)}=\fl{P}_{u\pi_u(i_1)}-
\fl{P}_{u\pi_u(i)} \ge 0$ (since $i_1 \le i$). 
Therefore, at each step of the
exploitation stage, for each user $u$, we recommend one of the top $\s{T-S}$ items (as inferred from $\widetilde{\fl{P}}$) with the highest reward that have not been recommended until that round to the user $u$; in each such step, we will suffer a regret of at most $2\rho$ if we compare with the item at the same index in $\pi_u$. 
As before, conditioned on the event that the exploration fails (and in the exploration stage as well), the regret at each step can be bounded from above by $2\|\fl{P}\|_{\infty}$.
In that case, we have 
\begin{align*}
    \s{Reg}_{\Pi}(\s{T}) &= \frac{1}{\s{M}}\sum_{u \in [\s{M}]} \sum_{t\in [\s{T}]}\fl{P}_{u\pi_u(t)}-\sum_{t\in [\s{T}]}\fl{P}_{u\rho_u(t)} \le \max_{u \in [\s{M}]} \Big(\sum_{t\in [\s{T}]}\fl{P}_{u\pi_u(t)}-\sum_{t\in [\s{T}]}\fl{P}_{u\widetilde{\pi}_u(t)}\Big)\\
    &\le 2\s{S}\|\fl{P}\|_{\infty}+2\s{(T-S)}\rho\Pr(\text{Exploration succeeds})+2\s{(T-S)}\|\fl{P}\|_{\infty}\Pr(\text{Exploration fails}) \\
    &\le 2\s{S}\|\fl{P}\|_{\infty}+2\s{T}\rho+2\s{T}\delta\|\fl{P}\|_{\infty}.
\end{align*}
 We use Lemma \ref{lem:min_acc} to set $\s{S}=O\Big(\s{N}p+\sqrt{\s{N}p\log \s{M}\delta^{-1}}\Big)$ such that $\rho=O\Big(\frac{\sigma r }{\sqrt{d_2}}\sqrt{\frac{\mu^3\log d_2}{p}}\Big)$ and $\nu=1-\delta-O(d_2^{-3})$. We have
\begin{align*}
    \s{Reg}(\s{T})  
    &\le O\Big(\s{N}p+\sqrt{\s{N}p\log \s{M}\delta^{-1}}\Big)\|\fl{P}\|_{\infty}+\s{T}O\Big(\frac{\sigma r }{\sqrt{d_2}}\sqrt{\frac{\mu^3\log d_2}{p}}\Big)+\s{T}(\delta+d_2^{-3})\|\fl{P}\|_{\infty}.
\end{align*}
For simplicity, we ignore the logarithmic and lower order terms and attempt to minimize $\s{N}p\|\fl{P}\|_{\infty}+\s{T}\Big(\frac{\sigma r }{\sqrt{d_2}}\sqrt{\frac{\mu^3\log d_2}{p}}\Big)$ (\textit{Simplified Expression}) by choosing $p=(\s{N \|\fl{P}\|_{\infty}})^{-2/3}\Big(\frac{\s{T}\sigma r}{\sqrt{d_2}}\sqrt{\mu^3 \log d_1}\Big)^{2/3}$, $\delta=\s{T}^{-4}$. If $p\ge C\mu^2d_2^{-1}\log ^3 d_2$, then notice that $\s{N}p \ge 1$ and therefore
$\s{N}p+\sqrt{\s{N}p\log \s{M}\delta^{-1}}=O(\s{N}p\sqrt{\log\s{M}\delta^{-1}})$. Subsequently, we have
\begin{align*}
   \s{Reg}(\s{T}) = O\Big(\s{T}^{2/3}(\sigma^2r^2 \|\fl{P}\|_{\infty})^{1/3}\Big(\frac{\mu^3 \s{N} \log d_2}{d_2}\Big)^{1/3}\log\sqrt{\s{MT}}+\|\fl{P}\|_{\infty}\s{T}^{-2}\Big).
\end{align*}
There exists an edge case when the value of $p$ that minimizes the \textit{simplified expression} satisfies $p \le C\mu^2d_2^{-1}\log^3 d_2$. Then we can substitute $p=C\mu^2d_2^{-1}\log^3 d_2$. In that case, the second term in the \textit{simplified expression} will still be bounded as before. On the other hand the first term in the \textit{simplified expression} will now be bounded by $O(\frac{\s{N}\mu^2}{d_2}\log^3 d_2 \log^2 (\s{MNT})\lr{\fl{P}}_{\infty})$. Hence, our regret will be bounded by 

\begin{align*}
    \s{Reg}(\s{T}) 
    &= O\Big(\s{T}^{2/3}(\sigma^2r^2 \|\fl{P}\|_{\infty})^{1/3}\Big(\frac{\mu^3 \s{N} \log d_2}{d_2}\Big)^{1/3}\log\sqrt{\s{MT}}+\frac{\s{N}\mu^2}{d_2}\log^5(\s{MNT})\lr{\fl{P}}_{\infty} +\|\fl{P}\|_{\infty}\s{T}^{-2}\Big)\\
    &=\widetilde{O}\Big(\mu\s{T}^{2/3}\lr{\fl{P}}_{\infty}^{1/3}\max\Big(1,\frac{\s{N}}{\s{M}}\Big)^{1/3}+\mu^2\lr{\fl{P}}_{\infty}\max\Big(1,\frac{\s{N}}{\s{M}}\Big)+\lr{\fl{P}}_{\infty}\s{T}^{-2}\Big).
\end{align*}

\end{proof}

\section{Experiments}\label{sec:experiments}

We conduct detailed synthetic experiments in order to validate the theoretical guarantees/properties of our algorithm. For this purpose we use a simplified version of \texttt{B-LATTICE} described in Alg. \ref{algo:phased_elim_simple} (all experiments have been performed on a Google Colab instance with 12GB RAM) :

\begin{algorithm*}[t]
\caption{\texttt{PB-LATTICE} (Practical Blocked Latent bAndiTs via maTrIx ComplEtion with blocking constraint $\s{B}=1$)
\label{algo:phased_elim_simple}}
\begin{algorithmic}[1]
\REQUIRE Phase index $\ell$, phase length $m_{\ell}$, List of disjoint nice subsets of users $\ca{M}^{(\ell)}$, list of corresponding subsets of active items $\ca{N}^{(\ell)}$, clusters $\s{C}$, rounds $\s{T}$, noise $\sigma^2>0$, round index $t_0$, gap factor $\nu_{\ell}$.

\STATE Set $\lambda=10\sigma\sqrt{(m_{\ell}/\s{M})}$.

\FOR{rounds $t_0+1,t_0+2,\dots,\min(t_0+m,\s{T})$}

\FOR{$i^{\s{th}}$ nice subset of users $\ca{M}^{(\ell,i)}\in \ca{M}^{(\ell)}$ with active items $\ca{N}^{(\ell,i)}$ ($i^{\s{th}}$ set in list $\ca{N}^{(\ell)}$)}

\STATE Initialize $\Omega^{(i)}=\Phi$

\FOR{user $u$ in  $\ca{M}^{(\ell,i)}$}

\STATE Choose a random item $j\in \ca{N}^{(\ell,i)}$ for recommendation to user $u$. If $j$ is not blocked, recommend $j$ to user $u$ and observe $\fl{Z}_{uj}$ to be the feedback from user $u$ for item $j$. If $j$ is blocked, then recommend any $j'\in \ca{N}^{(\ell,i)}$ that is unblocked for user $u\in \ca{M}^{(\ell,i)}$. Observe $\fl{Z}_{uj'}$ to be the feedback from user $u$ for item $j'$. 
\STATE Update $\Omega^{(i)}=\Omega^{(i)} \cup \{(u,j)\} $
\ENDFOR

\ENDFOR

\ENDFOR

\FOR{$i^{\s{th}}$ nice subset of users $\ca{M}^{(\ell,i)}\in \ca{M}^{(\ell)}$ with active items $\ca{N}^{(\ell,i)}$ ($i^{\s{th}}$ set in list $\ca{N}^{(\ell)}$)}

\STATE Initialize $\ca{M}^{(\ell+1)}=[]$ and $\ca{N}^{(\ell+1)}=[]$.

\STATE Compute $\fl{T}\in \bb{R}^{\s{M}\times \s{N}}$ by solving the convex program 
\begin{align}\label{eq:convex2}
    \min_{\fl{T}\in \bb{R}^{\s{N}\times \s{M}}} \frac{1}{2}\sum_{(i,j)\in \Omega^{(i)}}\Big(\fl{Z}_{ij}-\fl{T}_{ij}\Big)^2+\lambda\|\fl{T}_{\ca{M}^{(\ell,i)},\ca{N}^{(\ell,i)}}\|_{\star},
\end{align}

\STATE Solve $k$-means for users in $\ca{M}^{(\ell,i)}$ using the vector embedding formed by the rows in $\fl{T}_{\ca{M}^{(\ell,i)},\ca{N}^{(\ell,i)}}$. Choose best $k\le \s{C}$ by using ELBOW method. Denote the cluster of users by $\{\ca{M}^{(\ell,i,j)}\}_j$.
\FOR{each cluster of users $\ca{M}^{(\ell,i,j)}$} 
\STATE Compute set of active arms $\ca{N}^{(\ell,i,j)}$ as $\{s \in \ca{N}^{(\ell,i)} \mid \fl{T}_{us} \ge\fl{T}_{u\widetilde{\pi}_u(\s{T})}- \nu_{\ell} \text{ for some } u \in \ca{M}^{(\ell,i,j)}\}$. Here $\widetilde{\pi}_u$ is the permutation of the surviving items in $\ca{N}^{(\ell,i)}$ in descending order of estimated reward for user $u$.
\STATE Append $\ca{M}^{(\ell,i,j)}$ to $\ca{M}^{(\ell+1)}$  and $\ca{N}^{(\ell,i,j)}$ to $\ca{N}^{(\ell+1)}$.
\ENDFOR
\STATE Compute $m_{\ell+1},\nu_{\ell+1}$ as a function of $\ell$. Set $t=\min(t_0+m,\s{T})$.
\STATE If $t<\s{T}$, invoke Algorithm \texttt{PB-LATTICE}(phase $\ell+1$, phase length $m_{\ell+1}$, list of users $\ca{M}^{(\ell+1)}$, list of items $\ca{N}^{(\ell+1)}$, clusters $\s{C}$, rounds $\s{T}$, noise $\sigma^2$, round index $t_0+m$, gap factor $\nu_{\ell+1}$). 
\ENDFOR

\end{algorithmic}
\end{algorithm*}

In \texttt{PB-LATTICE}(Alg. \ref{algo:phased_elim_simple}), we do not use the \textit{exploit} component for simplicity. Intuitively, if the user is recommended a \textit{golden item} during the \textit{explore} component, it is a good event in any case. Furthermore, in Step 14, we use $k$-means to cluster the users. Since we obtain a significantly large embedding vector for each user in Step 13, $k-$means is quite practical. 

\begin{figure*}[!htbp]
  \begin{subfigure}[t]{0.5\textwidth}
    \centering 
    \includegraphics[scale = 0.4]{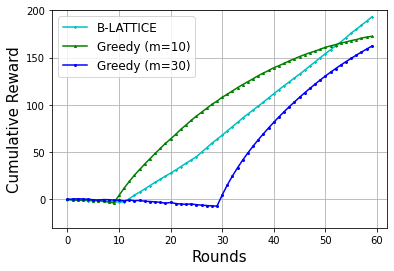}\vspace*{-5pt}
    \caption{\small $\fl{V}$ has entries distributed according to $\ca{N}(0,25)$. We observe $\fl{R}^{(t)}_{u\rho_u(t)} = \ca{N}(\fl{P}_{u\rho_u(t)}),0.25)$}
 ~\label{fig:gaussian1}
  \end{subfigure}
 \hfill
\begin{subfigure}[t]{0.5\textwidth}
\centering
  \includegraphics[scale = 0.4]{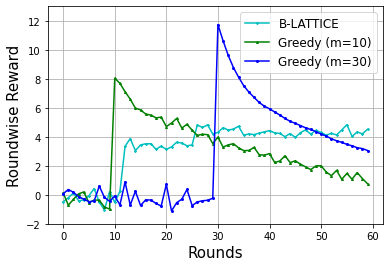}\vspace*{-5pt}
  \caption{\small $\fl{V}$ has entries generated according to $\ca{N}(0,25)$. We observe $\fl{R}^{(t)}_{u\rho_u(t)} = \ca{N}(\fl{P}_{u\rho_u(t)}),0.25)$}
      ~\label{fig:gaussian2}
 \end{subfigure}%
 \hfill
 \begin{subfigure}[t]{0.5\textwidth}
    \centering 
    \includegraphics[scale = 0.4]{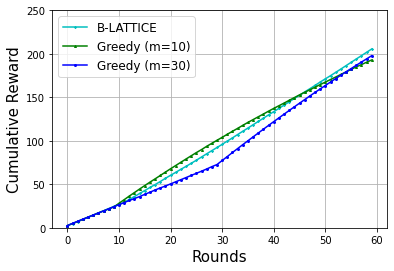}\vspace*{-5pt}
    \caption{\small $\fl{V}$ has entries distributed according to  $\ca{U}(0,5)$. We observe $\fl{R}^{(t)}_{u\rho_u(t)} = \ca{N}(\fl{P}_{u\rho_u(t)}),0.25)$}
 ~\label{fig:uniform1}
  \end{subfigure}
 \hfill
\begin{subfigure}[t]{0.5\textwidth}
\centering
  \includegraphics[scale = 0.4]{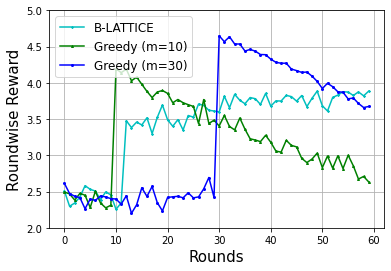}\vspace*{-5pt}
  \caption{\small $\fl{V}$ has entries distributed according to  $\ca{U}(0,5)$. We observe $\fl{R}^{(t)}_{u\rho_u(t)} = \ca{N}(\fl{P}_{u\rho_u(t)}),0.25)$}
      ~\label{fig:uniform2}
 \end{subfigure}%
 \hfill
 \begin{subfigure}[t]{0.5\textwidth}
    \centering 
    \includegraphics[scale = 0.4]{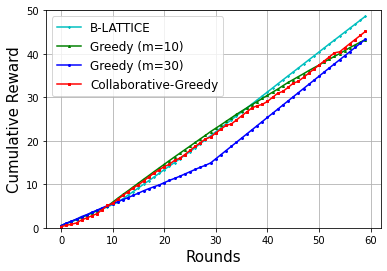}\vspace*{-5pt}
    \caption{\small $\fl{V}$ has entries distributed in $[0.05,0.95]$ with equal probability. We observe $\fl{R}^{(t)}_{u\rho_u(t)} = 2\s{Ber}(\fl{P}_{u\rho_u(t)})-1$}
 ~\label{fig:bernoulli1}
  \end{subfigure}
 \hfill
\begin{subfigure}[t]{0.5\textwidth}
\centering
  \includegraphics[scale = 0.4]{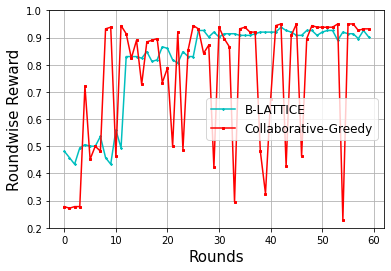}\vspace*{-5pt}
  \caption{\small $\fl{V}$ has entries distributed in $[0.05,0.95]$ with equal probability. We observe $\fl{R}^{(t)}_{u\rho_u(t)} = 2\s{Ber}(\fl{P}_{u\rho_u(t)})-1$}
      ~\label{fig:bernoulli2}
 \end{subfigure}%
 \vspace{-10pt}
 \caption{
 Cumulative Regret of the
 greedy algorithm (Alg. \ref{algo:p1}) and the Blocked LATTICE algorithm (simplified version in Alg. \ref{algo:phased_elim_simple}). In the setting where our observations are in $\{+1,-1\}$ i.e. the user likes ($+1$) an item with probability $p$ and dislikes with probability $1-p$, we also compare with the algorithm provided in \cite{Bresler:2014} named Collaborative-Greedy. In all our settings, we have $\s{M}=150$ users, $\s{N}=150$ items, $\s{C}=4$ clusters and $\s{T}=60$ rounds. The ground truth reward matrix $\fl{P}=\fl{U}\fl{V}^{\s{T}}$ is generated in the following way: each row of $\fl{U}$ is a standard basis vector while each entry of $\fl{V}$ is sampled independently from $\ca{N}(0,25)$ in (a) , each entry of $\fl{V}$ is sampled independently from $\ca{U}(0,5)$ in (b) and each entry of $\fl{U}$ is sampled independently from $\ca{U}(0,1)$ in (c).
 In (a) and (b), gaussian noise with variance $0.25$ is added to the expected observation and in (c), we observe +1 (probability is expected reward)  or -1.
 Notice that \texttt{PB-LATTICE} has 1) a small cold-start period 2) always makes good recommendations 3) better empirical rewards than other baselines.}

\end{figure*}

We run \texttt{PB-LATTICE}on several synthetic datasets. There are two main baselines for us to consider 1) Greedy Algorithm namely Alg. \ref{algo:p1} 2) In the setting where the user $u$, on being recommended item $j$ provides a like ($+1$) with probablity $\fl{P}_{uj}$ and a dislike ($-1$) with probability $1-\fl{P}_{uj}$, we also compare with \textit{Collaborative-Greedy} in \cite{Bresler:2014}. However, recall that \cite{Bresler:2014}, even in the restricted setting, only provides theoretical guarantees on the number of likeable items (items with probability of liking $> 0.5$) recommended during the course of $\s{T}$ rounds. 


We generate three synthetic datasets to validate our algorithms. For each of them, we take $\s{M}=150$ users, $\s{N}=150$ items and $\s{T}=60$ rounds (hence the total items recommended will be $9000$). The reward matrix $\fl{P}=\fl{U}\fl{V}^{\s{T}}$ ($\fl{U}\in \bb{R}^{\s{M}\times \s{C}}$ and $\fl{V}\in \bb{R}^{\s{N}\times \s{C}}$) is generated as in \cite{pal2023optimal} -  in the $i^{\s{th}}$ row of $\fl{U}$, the $(i\%\s{C})^{\s{th}}$ entry is set to be \texttt{1} while the other entries are $0$; the entries of $\fl{V}$ are sampled in the following way for the three datasets.
\begin{enumerate}
    \item (\textit{D1}:) Each entry of $\fl{V}$ is sampled from a gaussian distribution $\ca{N}(0,25)$ with mean zero and variance $25$. User $u$ on being recommended item $j$ provides a random feedback distributed according to $\ca{N}(\fl{P}_{uj},0.25)$.
    \item (\textit{D2}:) Each entry of $\fl{V}$ is sampled from a uniform distribution $\ca{U}(0,5)$ with range in $[0,5]$. User $u$ on being recommended item $j$ provides a random feedback distributed according to $\ca{N}(\fl{P}_{uj},0.25)$.
    \item (\textit{D3}:) Each entry of $\fl{V}$ is sampled from a uniform distribution $[0.05,0.95]$ with equal probability. User $u$ on being recommended item $j$ provides a like $(+1)$ with probability $\fl{P}_{uj}$ and a dislike $(-1)$ with probability $1-\fl{P}_{uj}$.
\end{enumerate}

For the \texttt{PB-LATTICE}algorithm (Alg. \ref{algo:phased_elim_simple}), we choose the hyper-parameters on the phase lengths and gap parameters as $m_{\ell}=10+2\ell$ and $\nu_{\ell}=\lr{\fl{P}}_{\infty}/(8\cdot 2^{\ell})$. Since \texttt{PB-LATTICE}is a recursive algorithm, we initialize \texttt{PB-LATTICE}with phase index $1$, phase length $12$, list of users $[[\s{M}]]$, list of items $[[\s{N}]]$, clusters $\s{C}$, rounds $\s{T}$, noise $\sigma^2$, round index $1$, gap factor $\lr{\fl{P}}_{\infty}/16$.
For the greedy algorithm (Alg. \ref{algo:p1}), we experiment with two exploration periods ($m=10$ and $m=30$).
Finally, for the Collaborative-Greedy algorithm in \cite{Bresler:2014}, we choose $\theta=0.5$ and $\alpha=0.5$.  


\paragraph{Results and Insights:} The cumulative reward until round $t$ (defined as $\s{M}^{-1}\sum_{j=1}^{t}\sum_{u=1}^{\s{M}}\fl{P}_{u\rho_u(j)}$) and the round wise reward at round $t$ (defined as $\s{M}^{-1}\sum_{u=1}^{\s{M}}\fl{P}_{u\rho_u(t)}$) is plotted for all the three datasets D1, D2, D3. in Figures \ref{fig:gaussian1}, \ref{fig:gaussian2}, \ref{fig:uniform1}, \ref{fig:uniform2}, \ref{fig:bernoulli1} and \ref{fig:bernoulli2}. Apart from obtaining good empirical rewards, \texttt{PB-LATTICE}has several practically relevant properties which are demonstrated through our experiments:

\begin{enumerate}
    \item Note from Figures \ref{fig:gaussian2} and \ref{fig:uniform2} that the greedy algorithm has a large cold-start period for good exploitation properties. If the exploration is too small, then the estimation becomes poor. On the other hand, \texttt{PB-LATTICE}improves in phases - therefore, it has a small cold-start period i.e. it starts recommending relevant items very quickly and also has good estimation guarantees
    \item Collaborative-greedy in \cite{Bresler:2014} proposes exploration rounds throughout the entire course of $\s{T}$ rounds. This is often impractical as users will demand good recommendations throughout. This is demonstrated in Fig. \ref{fig:bernoulli2} where the red curve (round-wise reward of Collaborative-Greedy) has several dips but the round-wise regret of \texttt{PB-LATTICE}stays high.
\end{enumerate}

\section{Detailed Proof of Theorem \ref{thm:main_LBM}}


 

Next, we characterize some properties namely the condition number and incoherence of sub-matrices of $\fl{P}$ restricted to a \textit{nice} subset of users. The following Lemmas \ref{lem:condition_num} and \ref{lem:incoherence} can be found in \cite{pal2023optimal} 

 \begin{lemma}\label{lem:condition_num}
Suppose Assumption \ref{assum:matrix} is true.
 Consider a sub-matrix $\fl{P}_{\s{sub}}$ of $\fl{P}$ having non-zero singular values $\lambda'_1>\dots> \lambda'_{\s{C}'}$ (for $\s{C}'\le \s{C}$). Then, if the rows of $\fl{P}_{\s{sub}}$ correspond to a nice subset of users, we have $\frac{\lambda'_1}{\lambda'_{\s{C}'}} \le \frac{\lambda_1}{\lambda_{\s{C}}}\sqrt{\tau}$.
 \end{lemma}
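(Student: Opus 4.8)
The plan is to exploit the cluster structure to factor $\fl{P}_{\s{sub}}$ through the cluster-membership matrix and a row-subset of $\fl{X}$, and then control the extreme singular values of each factor separately via submultiplicativity and Cauchy interlacing.

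First I would fix the nice subset of users $\ca{S}=\bigcup_{j\in\ca{A}}\ca{C}^{(j)}$ for some $\ca{A}\subseteq[\s{C}]$, so that (taking all columns) $\fl{P}_{\s{sub}}=\fl{P}_{\ca{S},[\s{N}]}$. Since every user in cluster $j$ shares the row $\fl{X}_j$, we can write $\fl{P}_{\ca{S},[\s{N}]}=\fl{W}\,\fl{X}_{\ca{A},[\s{N}]}$, where $\fl{W}\in\{0,1\}^{|\ca{S}|\times|\ca{A}|}$ has, as its $j$-th column, the indicator of $\ca{C}^{(j)}$ inside $\ca{S}$; these columns have pairwise disjoint supports. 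Normalizing them gives $\fl{W}=\fl{G}\fl{D}^{1/2}$ with $\fl{G}$ having orthonormal columns and $\fl{D}=\mathrm{diag}\big(|\ca{C}^{(j)}|\big)_{j\in\ca{A}}$. Because left-multiplication by a matrix with orthonormal columns preserves all singular values, the nonzero singular values of $\fl{P}_{\s{sub}}$ coincide with those of $\fl{D}^{1/2}\fl{X}_{\ca{A},[\s{N}]}$; moreover, since $\fl{X}$ is full rank, $\fl{X}_{\ca{A},[\s{N}]}$ has full row rank $|\ca{A}|$, so $\s{C}'=|\ca{A}|$ and all of $\lambda'_1,\dots,\lambda'_{\s{C}'}$ are nonzero.

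Next I would bound the extreme singular values of $\fl{D}^{1/2}\fl{X}_{\ca{A},[\s{N}]}$. For the top one, submultiplicativity gives $\lambda'_1\le\|\fl{D}^{1/2}\|_{\mathrm{op}}\,\sigma_{\max}(\fl{X}_{\ca{A},[\s{N}]})=\sqrt{\max_{j\in\ca{A}}|\ca{C}^{(j)}|}\;\sigma_{\max}(\fl{X}_{\ca{A},[\s{N}]})$, and $\sigma_{\max}(\fl{X}_{\ca{A},[\s{N}]})\le\lambda_1$ because $\fl{X}_{\ca{A},[\s{N}]}\fl{X}_{\ca{A},[\s{N}]}^{\s{T}}$ is a principal submatrix of $\fl{X}\fl{X}^{\s{T}}$ (Cauchy interlacing). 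For the smallest nonzero one: since $\fl{D}^{1/2}$ is invertible and $\fl{X}_{\ca{A},[\s{N}]}^{\s{T}}$ has full column rank, writing $z=\fl{D}^{1/2}w$ gives $\lambda'_{\s{C}'}=\min_{\|w\|=1}\|\fl{X}_{\ca{A},[\s{N}]}^{\s{T}}\fl{D}^{1/2}w\|\ge\sigma_{\min}(\fl{X}_{\ca{A},[\s{N}]})\cdot\sqrt{\min_{j\in\ca{A}}|\ca{C}^{(j)}|}$, and interlacing once more (principal submatrix of $\fl{X}\fl{X}^{\s{T}}$, smallest eigenvalue) yields $\sigma_{\min}(\fl{X}_{\ca{A},[\s{N}]})\ge\lambda_{\s{C}}$. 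Dividing the two bounds and using $\max_{j\in\ca{A}}|\ca{C}^{(j)}|/\min_{j\in\ca{A}}|\ca{C}^{(j)}|\le\tau$ gives $\lambda'_1/\lambda'_{\s{C}'}\le\sqrt{\tau}\,\lambda_1/\lambda_{\s{C}}$, as claimed.

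There is no real obstacle here; it is a clean linear-algebra computation. The only points requiring care are (i) keeping track of the correct notion of ``smallest singular value'' — the $\s{C}'$-th one, which is nonzero precisely because $\fl{X}$ is full rank — and (ii) verifying that deleting rows of $\fl{X}$ cannot push $\sigma_{\min}$ below $\lambda_{\s{C}}$, which is exactly what Cauchy interlacing (equivalently, the fact that padding $w$ with zeros preserves its norm) provides.
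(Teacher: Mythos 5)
The paper does not actually prove this lemma --- it is imported verbatim from \cite{pal2023optimal} --- so there is no in-paper argument to compare against. Your proof is correct and complete for the reading you adopt (rows restricted to a nice subset, \emph{all} columns retained): the factorization $\fl{P}_{\ca{S},[\s{N}]}=\fl{G}\fl{D}^{1/2}\fl{X}_{\ca{A},[\s{N}]}$ with $\fl{G}$ column-orthonormal and $\fl{D}=\mathrm{diag}(|\ca{C}^{(j)}|)$ is exactly right, the transfer of singular values through $\fl{G}$ is valid since $\fl{G}^{\s{T}}\fl{G}=I$, the two-sided bounds $\sqrt{\min_j|\ca{C}^{(j)}|}\,\sigma_{\min}(\fl{X}_{\ca{A}})\le\lambda'_{\s{C}'}$ and $\lambda'_1\le\sqrt{\max_j|\ca{C}^{(j)}|}\,\sigma_{\max}(\fl{X}_{\ca{A}})$ are correct, and the interlacing step (equivalently, zero-padding the test vector) legitimately gives $\lambda_{\s{C}}\le\sigma_{\min}(\fl{X}_{\ca{A}})\le\sigma_{\max}(\fl{X}_{\ca{A}})\le\lambda_1$. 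This is the standard route for such statements and uses only the full-rank/condition-number part of Assumption \ref{assum:matrix} together with the definition of $\tau$, which is as it should be.

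The one point you should make explicit is the column issue you quietly set aside with ``(taking all columns)''. As literally stated, ``sub-matrix'' admits column restriction, and in the paper's actual application the lemma is invoked for $\fl{P}_{\ca{M}^{(\ell,i)},\ca{N}^{(\ell,i)}}$, which restricts columns as well; for an arbitrary column subset the conclusion with only a $\sqrt{\tau}$ penalty is false (deleting columns can collapse $\sigma_{\min}$ while leaving $\sigma_{\max}$ essentially intact, e.g.\ if two cluster rows of $\fl{X}$ nearly coincide on the retained columns). Controlling $\sigma_{\min}$ under column deletion is precisely what the Subset Strong Convexity clause of Assumption \ref{assum:matrix} is for, and your argument never touches it --- a signal that either the lemma is intended only for row restriction (in which case your proof is the whole story) or the intended statement carries additional hypotheses/constants for the column-restricted case. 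State which case you are proving, or extend the lower bound on $\lambda'_{\s{C}'}$ using $\sigma_{\min}(\fl{V}_{\ca{S}})\ge\sqrt{\alpha|\ca{S}|/\s{M}}$ to cover column subsets.
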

 
 \begin{lemma}\label{lem:incoherence}
Suppose Assumption \ref{assum:matrix} is true. Consider a sub-matrix $\fl{P}_{\s{sub}}\in \bb{R}^{\fl{B}'\times \fl{A}'}$ (with SVD decomposition $\fl{P}_{\s{sub}}=\widetilde{\fl{U}}\widetilde{\f{\Sigma}}\widetilde{\fl{V}}$) of $\fl{P}$ whose rows correspond to a nice subset of users. Then, $\lr{\widetilde{\fl{U}}}_{2,\infty} \le \sqrt{\frac{\s{C}\tau}{\s{N}'}}$ and $\lr{\widetilde{\fl{V}}}_{2,\infty} \le \sqrt{\frac{\mu \s{C}}{\alpha\s{M}'}}$. 
\end{lemma}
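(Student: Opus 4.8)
The plan is to establish Lemma~\ref{lem:incoherence} by relating the SVD factors of the sub-matrix $\fl{P}_{\s{sub}}$ to those of the cluster-level matrix $\fl{X}\in\bb{R}^{\s{C}\times\s{N}}$. The key structural observation is that since the rows of $\fl{P}_{\s{sub}}$ correspond to a \emph{nice} subset of users $\ca{S}=\bigcup_{j\in\ca{A}}\ca{C}^{(j)}$, every row of $\fl{P}_{\s{sub}}$ is a copy of one of the $|\ca{A}|$ distinct cluster reward vectors, i.e. one of the rows of $\fl{X}_{\ca{A}}$. Concretely, $\fl{P}_{\s{sub}} = \fl{W}\fl{X}_{\ca{A},\ca{N}'}$ where $\ca{N}'\subseteq[\s{N}]$ is the set of columns, and $\fl{W}\in\{0,1\}^{\s{M}'\times|\ca{A}|}$ is the cluster-membership matrix of the selected users (each row of $\fl{W}$ has exactly one $1$). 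I would first write this factorization explicitly, then note that $\fl{W}^{\s{T}}\fl{W}$ is the diagonal matrix of cluster sizes (restricted to clusters in $\ca{A}$), whose entries lie in $[|\ca{C}^{(\cdot)}|_{\min},|\ca{C}^{(\cdot)}|_{\max}]$.

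Next I would compute the right singular vectors $\widetilde{\fl{V}}$. Because $\fl{P}_{\s{sub}}^{\s{T}}\fl{P}_{\s{sub}} = \fl{X}_{\ca{A},\ca{N}'}^{\s{T}}\,(\fl{W}^{\s{T}}\fl{W})\,\fl{X}_{\ca{A},\ca{N}'}$, the column space of $\fl{P}_{\s{sub}}$ coincides with that of $\fl{X}_{\ca{A},\ca{N}'}$ (the diagonal $\fl{W}^{\s{T}}\fl{W}$ is invertible on the relevant support), so $\widetilde{\fl{V}}$ spans the same subspace as the top right singular vectors of $\fl{X}_{\ca{A},\ca{N}'}$, which in turn sits inside the row space of $\fl{X}$. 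The incoherence bound $\lr{\fl{V}}_{2,\infty}\le\sqrt{\mu\s{C}/\s{N}}$ from Assumption~\ref{assum:matrix} controls the full matrix $\fl{X}$; restricting columns to $\ca{N}'$ can only help the per-row norm, while the Subset Strong Convexity condition on $\fl{V}_{\ca{S}}$ (the analogue for the \emph{row}-restricted factor, used to lower bound its smallest singular value by $\sqrt{\alpha|\ca{S}|/\s{M}}$) is what lets me pass from a subspace-spanning statement to a clean bound on the orthonormal basis $\widetilde{\fl{V}}$ with the $\alpha$ loss, yielding $\lr{\widetilde{\fl{V}}}_{2,\infty}\le\sqrt{\mu\s{C}/(\alpha\s{M}')}$. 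For $\widetilde{\fl{U}}$, I would use that each column of $\widetilde{\fl{U}}$ is, up to a scaling by $\widetilde{\f{\Sigma}}^{-1}$, of the form $\fl{P}_{\s{sub}}\widetilde{\fl{v}}=\fl{W}\fl{X}_{\ca{A},\ca{N}'}\widetilde{\fl{v}}$; since $\fl{W}$ just replicates rows and the number of distinct rows is $|\ca{A}|\le\s{C}$, the $\ell_2$ mass of any unit-norm left singular vector is spread over at least $|\ca{C}^{(\cdot)}|_{\min}$ copies of each distinct value, which combined with the cluster-ratio bound $\tau$ gives $\lr{\widetilde{\fl{U}}}_{2,\infty}\le\sqrt{\s{C}\tau/\s{N}'}$ after accounting for the condition-number control from Lemma~\ref{lem:condition_num}.

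The main obstacle I anticipate is the passage from ``$\widetilde{\fl{V}}$ spans a subspace of the row space of a matrix with good incoherence'' to ``$\widetilde{\fl{V}}$ itself has small $\lr{\cdot}_{2,\infty}$'': a subspace can have many orthonormal bases, and an arbitrary rotation within the subspace does not preserve row-norm bounds in general. This is precisely where the condition number and the Subset Strong Convexity assumption enter — one has to argue that the particular orthonormal basis given by the SVD of $\fl{P}_{\s{sub}}$ is not too distorted relative to a reference basis coming from $\fl{X}$, and quantify the distortion by $\kappa$, $\tau$, and $1/\alpha$. I would handle this with a standard argument: write $\widetilde{\fl{V}}=\fl{V}_{\m{ref}}\fl{R}$ for an orthogonal $\fl{R}$ and a reference orthonormal basis $\fl{V}_{\m{ref}}$ of the row space of $\fl{X}$ (so $\lr{\fl{V}_{\m{ref}}}_{2,\infty}\le\sqrt{\mu\s{C}/\s{N}}\cdot\sqrt{\s{N}/(\alpha\s{M}')}$ after the subset-strong-convexity rescaling step), and then $\lr{\widetilde{\fl{V}}}_{2,\infty}=\lr{\fl{V}_{\m{ref}}\fl{R}}_{2,\infty}\le\lr{\fl{V}_{\m{ref}}}_{2,\infty}$ since right-multiplication by an orthogonal matrix preserves row $\ell_2$ norms. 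Since this lemma is quoted verbatim from \cite{pal2023optimal}, I would cite that reference for the detailed computation and only reproduce the structural factorization $\fl{P}_{\s{sub}}=\fl{W}\fl{X}_{\ca{A},\ca{N}'}$ and the three bounding steps above in the interest of self-containedness.
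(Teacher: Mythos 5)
The paper itself offers no proof of this lemma---it is quoted from \cite{pal2023optimal} with a bare citation---so your sketch is being judged on its own merits. Your overall route is the standard (and correct) one: the factorization $\fl{P}_{\s{sub}}=\fl{W}\fl{X}_{\ca{A},\ca{N}'}$ with a $0/1$ membership matrix $\fl{W}$, a replication/mass-spreading argument for $\widetilde{\fl{U}}$ (identical rows of $\fl{P}_{\s{sub}}$ give identical rows of $\widetilde{\fl{U}}$, each repeated at least $\s{M}'/(\tau\s{C})$ times, and the squared row norms sum to the rank), and for $\widetilde{\fl{V}}$ the containment of the row space of $\fl{P}_{\s{sub}}$ in the column space of the restricted factor $\fl{V}_{\ca{N}'}$, combined with the incoherence of $\fl{V}$ and Subset Strong Convexity. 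These are exactly the right ingredients, and the replication argument for $\widetilde{\fl{U}}$ needs neither the condition number nor Lemma~\ref{lem:condition_num}.

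One step, however, is internally inconsistent as written. You claim $\widetilde{\fl{V}}=\fl{V}_{\m{ref}}\fl{R}$ with $\fl{R}$ \emph{orthogonal} and conclude that right-multiplication preserves row $\ell_2$ norms---but if that were the whole story, no factor of $1/\sqrt{\alpha}$ would appear in the bound at all. The point is that the restricted factor $\fl{V}_{\ca{N}'}$ is \emph{not} orthonormal, so the change of basis to the orthonormal $\widetilde{\fl{V}}$ is a general matrix $\fl{R}$ satisfying $\fl{R}^{\s{T}}\fl{V}_{\ca{N}'}^{\s{T}}\fl{V}_{\ca{N}'}\fl{R}=\fl{I}$, whence $\norm{\fl{R}}_{op}\le 1/\sigma_{\min}(\fl{V}_{\ca{N}'})$; Subset Strong Convexity bounds this by $\sqrt{\s{N}/(\alpha\left|\ca{N}'\right|)}$, and then $\lr{\widetilde{\fl{V}}}_{2,\infty}\le\lr{\fl{V}}_{2,\infty}\norm{\fl{R}}_{op}$ gives the claim. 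You clearly sense this (your ``subset-strong-convexity rescaling step''), but the two halves of your argument---orthogonal $\fl{R}$ versus a rescaled reference basis---cannot both hold; state it the second way. Separately, be careful with the dimension bookkeeping: your replication argument spreads mass over \emph{users}, so it naturally bounds $\lr{\widetilde{\fl{U}}}_{2,\infty}$ by $\sqrt{\s{C}\tau/\s{M}'}$ (number of rows), while the $\widetilde{\fl{V}}$ bound should carry the number of columns $\s{N}'$ in its denominator; the $\s{M}'/\s{N}'$ placement in the lemma statement (inherited from the paper, whose $\fl{B}'\times\fl{A}'$ notation never ties these symbols down) appears swapped relative to what either argument actually produces, and your write-up currently copies that rather than deriving it.
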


Lemmas \ref{lem:condition_num} and \ref{lem:incoherence} allow us to apply low rank matrix completion  (Lemma \ref{lem:min_acc}) to relevant sub-matrices of the reward matrix $\fl{P}$.

\subsection{Main Analysis}\label{subsec:main_analysis}

Blocked LATTICE is run in phases consisting of \textit{exploit} component and \textit{explore} component indexed by $\ell=1,2,\dots$. Note that the \textit{exploit} component of a phase is followed by the \textit{explore} component. However, for the first phase, the \textit{exploit} component has a length of zero rounds. Importantly note that any phase (say $\ell$) for  distinct \textit{nice} subsets of users $\ca{M}^{(\ell,i)},\ca{M}^{(\ell,j)}$ will be run asynchronously as reward observations corresponding to one subset is not used to determine the policy for any other subset of users.
At the beginning of the \textit{explore} component of each phase $\ell$, we have the following set of desirable properties:
\begin{enumerate}[label=(\Alph*)]
    \item We will run the \textit{explore} component of phase $\ell$ asynchronously and separately for a list of disjoint nice subsets of users $\ca{M}^{(\ell)}\equiv\{\ca{M}^{(\ell,i)},\dots,\ca{M}^{(\ell,a_{\ell})}\}$ and respective sets of arms $\ca{N}^{(\ell)}\equiv\{\ca{N}^{(\ell,1)},\dots,\ca{N}^{(\ell,a_{\ell})}\}$  where $a_{\ell}\le \s{C}$ such that  $\cup_{i \in [a_{\ell}]} \ca{M}^{(\ell,i)} \subseteq [\s{M}]$, $\ca{M}^{(\ell,i)} \cap \ca{M}^{(\ell,j)} = \emptyset$ for any $i,j \in [a_{\ell}]$ (i.e. the groups of users are \textit{nice} subsets of $[\s{M}]$ with no overlap) and $ \ca{N}^{(\ell,i)} \subseteq  [\s{N}]$ (i.e. the active set of items $ \ca{N}^{(\ell,i)}$ for users in $\ca{M}^{(\ell,i)}$ are a subset of $[\s{N}]$ but the active sets can overlap). The sets $\{(\ca{M}^{(\ell,i)},\ca{N}^{(\ell,i)}\}_{i\in[a_{\ell}]}$ remain unchanged during the \textit{explore} component of phase $\ell$. As mentioned before, the round at which the \textit{explore} component of the phase $\ell$ starts is different for each subset of \textit{nice} users $\ca{M}^{(\ell,i)}$. 
    


    \item  
 
At any round $t\in [\s{T}]$ in a particular phase $\ell$, let us denote by $\ca{O}^{(\ell,t)}_{\ca{M}^{(\ell,i)}}$ to be the set of items chosen for recommendation (see Step 5 in Alg. \ref{algo:exploit}) in \textit{exploit} components so far (from round $t=1$) for users in $\ca{M}^{(\ell,i)}$. 
In other words, the items in $\ca{O}^{(\ell,t)}_{\ca{M}^{(\ell,i)}}$ have already been recommended to \textbf{all users} in $\ca{M}^{(\ell,i)}$ up to $\ell^{\s{th}}$ phase. 
In that case, we also maintain that
\begin{align*}
  \ca{O}^{(\ell,t)}_{\ca{M}^{(\ell,i)}} \subseteq \{\pi_u(t)\}_{t=1}^{\s{T}/\s{B}}.  
\end{align*}
the best $\s{T}/\s{B}$ items \textit{(golden items)} not chosen for recommendation so far in the exploit phases for users in $\ca{M}^{(\ell,i)}$ are contained in the set of surviving items i.e. if $\ca{Z}=[\s{N}]\setminus \ca{O}^{(\ell,t)}_{\ca{M}^{(\ell,i)}}$, then it must happen that
\begin{align}\label{eq:enough_unblocked2}
    \ca{N}^{(\ell,i)} \supseteq \bigcup_{u \in \ca{M}^{(\ell,i)}}\{\pi_u(t')\mid \ca{Z}\}_{t'=1}^{\s{T}/\s{B}-\left|\ca{O}^{(\ell,t)}_{\ca{M}^{(\ell,i)}}\right|} \text{ for all }i\in[a_{\ell}]
\end{align}    
Note that eq. \ref{eq:enough_unblocked2} implies that for every user $u\in \ca{M}^{(\ell,i)}$, there are  \textbf{sufficient golden items in $\ca{N}^{(\ell,i)}$ to be recommended for the remaining rounds}. 
To see this, note that there are $\s{T}/\s{B}$ golden items at the beginning and no golden items which are unblocked are eliminated. Hence the number of remaining rounds at any point must be smaller than the number of possible allowed recommendations of golden items belonging to the surviving set of items.

\item   
Furthermore, for all $i\in [a_{\ell}]$, 
the set $\ca{N}^{(\ell,i)}$ must also satisfy the following:
\begin{align}\label{eq:gap}
     \left|\fl{P}_{u\pi_u(1)\mid \ca{N}^{(\ell,i)}}-\min_{j\in \ca{N}^{(\ell,i)}}\fl{P}_{uj}\right|  \le \epsilon_{\ell} \text{ for all }u\in \ca{M}^{(\ell,i)}
\end{align}
where $\epsilon_{\ell}$ is a fixed exponentially decreasing sequence in $\ell$ (in particular, we choose $\epsilon_1=\lr{\fl{P}}_{\infty}$ and $\epsilon_{\ell}=C'2^{-\ell}\min\Big(\|\fl{P}\|_{\infty},\frac{\sigma\sqrt{\mu}}{\log \s{N}}\Big)$ for $\ell>1$ for some constant $C'>0$). 

\end{enumerate}

Next, at the beginning of the \textit{exploit} component of phase $\ell+1$, we will have the following set of desirable properties:
\begin{enumerate}[label=(\alph*)]
    \item At every round $t$ in the \textit{exploit} component of phase $\ell+1$, we maintain a list of disjoint \textit{nice subsets} of users $\ca{M}^{(\ell+1)}\equiv\{\ca{M}^{(\ell+1,1)},\dots,\ca{M}^{(\ell+1,a_{\ell+1})}\}$ (where $\cup_{i \in [a_{\ell+1}]} \ca{M}^{(\ell+1,i)} \subseteq  [\s{M}]$) and corresponding sets of items $\ca{N}^{(\ell+1,t)}\equiv\{\ca{N}^{(\ell+1,t,1)},\dots,\ca{N}^{(\ell+1,t,a_{\ell+1})}\}$  where $a_{\ell+1}\le \s{C}$ and $\cup_{i \in [a_{\ell+1}]} \ca{N}^{(\ell+1,t,i)} \subseteq  [\s{N}]$. Note that in the exploit component, we also use the round index in the superscript for item subsets as they can change during the \textit{exploit} component (unlike the \textit{explore} component).
    
\item We ensure that for any user $u\in \ca{M}^{(\ell+1,i)}$, the set of items chosen for recommendation in the \textit{exploit} component of phase $\ell+1$ belongs to the set of best $\s{T}$ items i.e. $\{\pi_u(t)\}_{t=1}^{\s{T}}$.

\end{enumerate}

Since LATTICE is random, we will say that our algorithm is $(\epsilon_{\ell},\ell)-$\textit{good} if at the beginning of the \textit{explore} component of the $\ell^{\s{th}}$ phase the algorithm can maintain a list of users and items satisfying properties A-C.  Let us also define the event $\ca{E}_2^{(\ell)}$ to be true if properties (A-C) are satisfied at the beginning of the \textit{explore} component of phase $\ell$ by the phased elimination algorithm. We can show that if our algorithm is $(\epsilon_{\ell},\ell)-$\textit{good} then the low rank matrix completion step (denoted by the event $\ca{E}_3^{(\ell)}$) in the \textit{explore} component of phase $\ell$ is successful (i.e. the event $\ca{E}_3^{(\ell)}$ is true) with high probability.  

Conditioned on the aforementioned two events $\ca{E}_2^{(\ell)},\ca{E}_3^{(\ell)}$, with probability $1$,  during all rounds $t$ of the exploit component in phase $\ell+1$, properties $a-b$ are satisfied and the event $\ca{E}_2^{(\ell+1)}$ is going to be true.  We are going to prove inductively that our algorithm is $(\epsilon_{\ell},\ell)$-good for all phases indexed by $\ell$ for our choice of $\{\epsilon_{\ell}\}_{\ell}$ with high probability.

\paragraph{Base Case:} For $\ell=1$ (the first phase), the number of rounds in the \textit{exploit} component is zero and we start with the \textit{explore} component. We initialize  $\ca{M}^{(1,1)} = [\s{N}]$, $\ca{N}^{(1,1)}=[\s{M}]$ and therefore, we have $$\left|\max_{j\in \ca{N}^{(\ell,1)}}\fl{P}_{uj}-\min_{j\in \ca{N}^{(\ell,1)}}\fl{P}_{uj}\right| \le \lr{\fl{P}}_{\infty} \text{ for all }u\in [\s{M}].$$ 
Clearly, $[\s{M}]$ is a \textit{nice subset} of users and finally for every user $u\in [\s{M}]$, the best $\s{T}/\s{B}$ items (\textit{golden items}) $\{\pi_u(t)\}_{t=1}^{\s{T}/\s{B}}$ belong to the entire set of items. 
Thus for $\ell=1$, conditions A-C are satisfied at the beginning of the \textit{explore} component and therefore the event $\ca{E}_2^{(1)}$ is true. Hence, our initialization makes the algorithm $(\lr{\fl{P}}_{\infty},1)$-good. 

\paragraph{Inductive Argument:} Suppose, at the beginning of the phase $\ell$, we condition on the events $\bigcap_{j=1}^{\ell}\ca{E}_2^{(j)}$ that Algorithm is $(\epsilon_{j},j)-$\textit{good} for all $j\le \ell$. 
This means that conditions (A-C) are satisfied at the beginning of the \textit{explore} component of all phases up to and including that of $\ell$ for each reward sub-matrix (indexed by $i\in [a_{\ell}]$) corresponding to the users in $\ca{M}^{(\ell,i)}$ and items in $\ca{N}^{(\ell,i)}$.
Next, our goal is to run low rank matrix completion in order to estimate each of the sub-matrices corresponding to $\{(\ca{M}^{(\ell,i)},\ca{N}^{(\ell,i)})\}_{i \in [a_{\ell}]}$. 

\paragraph{Exploration Strategy:}
Consider a particular nice subset of users $\ca{M}^{(\ell,i)}$ and corresponding active items $\ca{N}^{(\ell,i)}$ (s.t. $|\ca{N}^{(\ell,i)}|\ge \s{T}^{1/3}$) at the beginning of the \textit{explore} component of phase $\ell$ for $\ca{M}^{(\ell,i)}$. For each user  $u\in\ca{M}^{(\ell,i)}$, we are going to sample each item  $j\in\ca{N}^{(\ell,i)}$ with probability $p$ (to be determined based on the desired error guarantee). Suppose the set of indices sampled in the explore component of phase $\ell$ is denoted by $\Omega^{(\ell)}\subseteq \ca{M}^{(\ell,i)}\times \ca{N}^{(\ell,i)}$. Now, for each user $u\in \ca{M}^{(\ell,i)}$ we 
 recommend all the unblocked items in the set $\ca{A}_u^{(\ell)}\equiv \{j\in \ca{N}^{(\ell,i)}\mid (u,j)\mid \Omega^{(\ell)}\}$ and obtain the corresponding noisy reward (note that for blocked items in the aforementioned set, we have already obtained the corresponding noisy rewards). 
 
 For simplicity, we intend to complete the \textit{explore} component at the same round for all users in $\ca{M}^{(\ell,i)}$. If, for two users $u,v\in \ca{M}^{(\ell,i)}$, it happens that recommendation of all items in $\ca{A}_u^{(\ell)}$ is complete for user $u$ but recommendation of all items in $\ca{A}_v^{(\ell)}$ is incomplete for user $v$, then for the remaining rounds we recommend to user $u$ arbitrary unblocked items from $\ca{N}^{(\ell,i)}$. Note that this is always possible since the set $\ca{N}^{(\ell,i)}$ has sufficiently many unblocked items allowing recommendations in the remaining rounds at beginning of explore component of phase $\ell$.
 We start with the following lemma to characterize the round complexity of estimating the sub-matrix $\fl{P}_{\ca{M}^{(\ell,i)},\ca{N}^{(\ell,i)}}$ up to entry-wise error $\Delta_{\ell+1}$ with high probability using the noisy observations corresponding to the subset of indices $\Omega^{(\ell)}$:


\begin{lemma}\label{lem:interesting1}
Consider a particular subset of \textit{nice} users $\ca{M}^{(\ell,i)}$ and corresponding active items $\ca{N}^{(\ell,i)}$ (such that $\min\Big(\s{M}/(\tau\s{C}),\left|\ca{N}^{(\ell,i)}\right|\Big) \ge \s{T}^{1/3}$) at the beginning of the \textit{explore} component of phase $\ell$. Suppose $d_1=\max(|\ca{M}^{(\ell,i)}|,|\ca{N}^{(\ell,i)}|)$ and $d_2=\min(|\ca{M}^{(\ell,i)}|,|\ca{N}^{(\ell,i)}|)$. Let us fix $\Delta_{\ell+1} = \Omega(\sigma\sqrt{\mu^3\log d_1}/\sqrt{d_2})$ and condition on the event $\ca{E}_2^{(\ell)}$. Suppose Assumptions \ref{assum:matrix} and \ref{assum:cluster_ratio} are satisfied. In that case, in \textit{explore} component of phase $\ell$, by choosing $1\ge p=c\Big(\frac{\sigma^2  \widetilde{\mu}^3 \log d_1}{\Delta_{\ell+1}^2d_2}\Big)$ (for some constant $c>0$) and using
 $$m_{\ell}=O\Big(\frac{\sigma^2 \widetilde{\mu}^3 \log (\s{M}\bigvee \s{N})}{\Delta_{\ell+1}^2}\max\Big(1,\frac{\s{N}\tau}{\s{M}}\Big)\log \s{T})\Big)\Big)
$$ rounds under the blocked constraint, we can compute an estimate $\widetilde{\fl{P}}^{(\ell)}\in \bb{R}^{\s{M}\times \s{N}}$ such that with probability $1-O(\s{T}^{-3})$, we have 
 \begin{align}\label{eq:infty}
 \lr{\widetilde{\fl{P}}^{(\ell)}_{\ca{M}^{(\ell,i)},\ca{N}^{(\ell,i)}}-\fl{P}_{\ca{M}^{(\ell,i)},\ca{N}^{(\ell,i)}}}_{\infty} \le \Delta_{\ell+1}.  
 \end{align}
 where $\sigma^2$ is the noise variance, $\mu$ is the incoherence of reward matrix $\fl{P}$ and $\widetilde{\mu}$ is the incoherence factor of reward sub-matrix $\fl{P}_{\ca{M}^{(\ell,i)},\ca{N}^{(\ell,i)}}$.
 \end{lemma}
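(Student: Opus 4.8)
The plan is to reduce the statement to the offline low-rank matrix-completion guarantee of Lemma~\ref{lem:min_acc}, applied to the block $\fl{Q}:=\fl{P}_{\ca{M}^{(\ell,i)},\ca{N}^{(\ell,i)}}$, together with a separate accounting of how many rounds the \texttt{Explore} sub-routine consumes under the blocking constraint. Throughout I work on the event $\ca{E}_2^{(\ell)}$, which by property~(A) tells us $\ca{M}^{(\ell,i)}$ is a \emph{nice} subset and by property~(B) that enough not-yet-recommended golden items remain active for every user in $\ca{M}^{(\ell,i)}$.

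First I would check that $\fl{Q}$ satisfies the hypotheses of Lemma~\ref{lem:min_acc}. It has rank $\s{C}'\le\s{C}$; since $\ca{M}^{(\ell,i)}$ is nice, Lemma~\ref{lem:condition_num} bounds its condition number by $(\lambda_1/\lambda_{\s{C}})\sqrt{\tau}=O(1)$ and Lemma~\ref{lem:incoherence} bounds its incoherence parameter by $\widetilde{\mu}=\widetilde{O}(\mu)$ (the factors $\s{C},\tau$ are constants by Assumption~\ref{assum:cluster_ratio} and $1/\alpha$ is only logarithmic). The hypothesis moreover gives $d_2=\min(|\ca{M}^{(\ell,i)}|,|\ca{N}^{(\ell,i)}|)\ge\min(\s{M}/(\tau\s{C}),|\ca{N}^{(\ell,i)}|)\ge\s{T}^{1/3}$ (using that a nice subset has at least $\s{M}/(\tau\s{C})$ users), which supplies the dimension condition needed by Lemma~\ref{lem:min_acc}. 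I would then set $p=c\,\sigma^2\widetilde{\mu}^3\log d_1/(\Delta_{\ell+1}^2 d_2)$: a short computation using the assumed lower bound $\Delta_{\ell+1}=\Omega(\sigma\sqrt{\mu^3\log d_1}/\sqrt{d_2})$ and $\Delta_{\ell+1}\le\|\fl{P}\|_{\infty}$ shows this $p$ lies in $(0,1]$ and is above the $\widetilde{\Omega}(\widetilde{\mu}^2 d_2^{-1}\max(1,\sigma^2\widetilde{\mu}/\|\fl{Q}\|_{\infty}^2))$ threshold of Lemma~\ref{lem:min_acc}. Plugging this $p$ into \eqref{eq:estimation_guarantee} yields $\|\widetilde{\fl{P}}^{(\ell)}_{\ca{M}^{(\ell,i)},\ca{N}^{(\ell,i)}}-\fl{Q}\|_{\infty}=O(\sigma\sqrt{\widetilde{\mu}^3\log d_1}/\sqrt{pd_2})=O(\Delta_{\ell+1}/\sqrt{c})\le\Delta_{\ell+1}$ once $c$ is a large enough absolute constant, on the matrix-completion success event of probability $1-O(\delta+d_2^{-12})$; choosing $\delta=\s{T}^{-3}$ and using $d_2\ge\s{T}^{1/3}$ makes this failure probability $O(\s{T}^{-3})$.

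Next I would bound the number of rounds. The \texttt{Explore} sub-routine runs for $m=\max_{u\in\ca{M}^{(\ell,i)}}m_u$ rounds, where $m_u=|\{j\in\ca{N}^{(\ell,i)}:(u,j)\in\Omega^{(\ell)}\}|\sim\mathrm{Binomial}(|\ca{N}^{(\ell,i)}|,p)$. Since $p\,|\ca{N}^{(\ell,i)}|\ge p\,d_2=c\sigma^2\widetilde{\mu}^3\log d_1/\Delta_{\ell+1}^2=\Omega(\log d_1)$, and $\log d_1=\Omega(\log(\s{M}\s{T}))$ because $|\ca{N}^{(\ell,i)}|\ge\s{T}^{1/3}$ and $|\ca{M}^{(\ell,i)}|\ge\s{M}/(\tau\s{C})$, a Chernoff bound together with a union bound over the at most $\s{M}$ users gives $m\le2p\,|\ca{N}^{(\ell,i)}|$ with probability $1-O(\s{T}^{-3})$. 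Finally $|\ca{N}^{(\ell,i)}|/d_2=\max(1,|\ca{N}^{(\ell,i)}|/|\ca{M}^{(\ell,i)}|)\le\max(1,\s{N}\tau\s{C}/\s{M})$, so substituting $p$ gives $m=O(\sigma^2\widetilde{\mu}^3\log(\s{M}\vee\s{N})\,\Delta_{\ell+1}^{-2}\max(1,\s{N}\tau/\s{M}))$, matching the claimed $m_\ell$ up to the harmless extra $\log\s{T}$ factor; a union bound over the estimation- and concentration-failure events then completes the argument.

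The step I expect to be the main obstacle is not the estimation error but justifying that the \texttt{Explore} sub-routine actually runs correctly for the full $m$ rounds under blocking, and that its output satisfies the \emph{input hypotheses} of Lemma~\ref{lem:min_acc}. Concretely, one must argue that on $\ca{E}_2^{(\ell)}$ there are always enough unblocked items in $\ca{N}^{(\ell,i)}$ to serve as fillers (Lines~\ref{line:explore_blocked} and~\ref{line:recommend_random}), which follows from property~(B); that for every blocked sampled pair $(u,z)\in\Omega^{(\ell)}$ there is a previously-recorded but unconsumed observation of $z$ to reuse, which is what the bookkeeping matrices $\fl{K},\fl{L}$ guarantee once $\s{B}=\Theta(\log\s{T})$ exceeds the total number of phases; and that the observations finally passed to \texttt{Estimate}---a mixture of fresh recommendations and reused ones---are, conditioned on the filtration at the start of this \texttt{Explore} call, independent across the entries of $\Omega^{(\ell)}$ with conditional mean $\fl{P}_{uz}$ and sub-Gaussian noise, so that Lemma~\ref{lem:min_acc} is legitimately applicable. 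This last point holds because $\Omega^{(\ell)}$ and all fresh noise are drawn after that filtration, and a reused observation is by construction (the $\fl{L}$ counter) one whose realized value has never been inspected by any earlier \texttt{Estimate} call.
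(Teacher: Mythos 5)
Your proposal is correct and follows essentially the same route as the paper's proof: both reduce the claim to Lemma~\ref{lem:min_acc} applied to the sub-matrix $\fl{P}_{\ca{M}^{(\ell,i)},\ca{N}^{(\ell,i)}}$ (whose condition number and incoherence are controlled via Lemmas~\ref{lem:condition_num} and~\ref{lem:incoherence} because the user subset is nice), choose the same $p$, take $\delta=\s{poly}(\s{T}^{-1})$ with $d_2^{-12}=O(\s{T}^{-3})$ via $d_2\ge \s{T}^{1/3}$, and bound the round count by $O(p|\ca{N}^{(\ell,i)}|)$ using $|\ca{M}^{(\ell,i)}|\ge \s{M}/(\tau\s{C})$. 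Your additional remarks on the Chernoff concentration of $m_u$ and on why the reused/fresh observations remain valid inputs to \texttt{Estimate} under blocking only make explicit what the paper handles implicitly (in the \texttt{Explore} bookkeeping and in Lemma~\ref{lem:all_good}).
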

 
 \begin{proof}[Proof of Lemma \ref{lem:interesting1}]
We are going to use Lemma \ref{lem:min_acc} in order to compute an estimate $\widetilde{\fl{P}}^{(\ell)}_{\ca{M}^{(\ell,i)},\ca{N}^{(\ell,i)}}$ of the sub-matrix $\fl{P}_{\ca{M}^{(\ell,i)},\ca{N}^{(\ell,i)}}$ satisfying $\lr{\widetilde{\fl{P}}^{(\ell)}_{\ca{M}^{(\ell,i)},\ca{N}^{(\ell,i)}}-\fl{P}_{\ca{M}^{(\ell,i)},\ca{N}^{(\ell,i)}}}_{\infty} \le \Delta_{\ell+1}$. Since $\ca{M}^{(\ell,i)}$ is a \textit{nice} subset of users, the cardinality of $\left|\ca{M}^{(\ell,i)}\right|$ must be larger than $\s{M}/(\tau \s{C})$. Recall that $\tau$ is the ratio of the maximum cluster size and the minimum cluster size; $\s{M}/\s{C}$ being the average cluster size implies that the minimum cluster size is bounded from above by $\s{M}/(\tau \s{C})$. 
From Lemma \ref{lem:min_acc}, we know that by using $m_{\ell}=O\Big(p\left|\ca{N}^{(\ell,i)}\right|+\sqrt{\left|\ca{N}^{(\ell,i)}\right|p\log (\left|\ca{M}^{(\ell,i)}\right|\delta^{-1}})\Big)$ rounds (see Lemma \ref{lem:min_acc}) restricted to users in $\ca{M}^{(\ell,i)}$ such that with probability at least $1-(\delta+d_2^{-12})$ (see Remark \ref{rmk:prob_error}),
\begin{align*}
    \lr{\widetilde{\fl{P}}^{(\ell)}_{\ca{M}^{(\ell,i)},\ca{N}^{(\ell,i)}}-\fl{P}_{\ca{M}^{(\ell,i)},\ca{N}^{(\ell,i)}}}_{\infty} =  O\left(\frac{\sigma  \sqrt{\widetilde{\mu}^3\log d_1}}{\sqrt{pd_2}}\right).
\end{align*}
where $d_1=\max(|\ca{M}^{(\ell,i)}|,|\ca{N}^{(\ell,i)}|)$, $d_2=\min(|\ca{M}^{(\ell,i)}|,|\ca{N}^{(\ell,i)}|)$ and $\widetilde{\mu}$ is the incoherence factor of the matrix $\fl{P}_{\ca{M}^{(\ell,i)},\ca{N}^{(\ell,i)}}$. In order for the right hand side to be less than $\Delta_{\ell+1}$, we can set $p=c\Big(\frac{\sigma^2  \widetilde{\mu}^3 \log d_1}{\Delta_{\ell+1}^2d_2}\Big)$ for some appropriate constant $c>0$. Since the event $\ca{E}_2^{(\ell)}$ is true, we must have that $\left|\ca{M}^{(\ell,i)}\right|\ge \s{M}/(\tau\s{C}$); hence $d_2 \ge \min\Big(\s{M}/(\tau\s{C}),\left|\ca{N}^{(\ell,i)}\right|\Big)$. Therefore, we must have that 
\begin{align*}
    m_{\ell}=O\Big(\frac{\sigma^2 \widetilde{\mu}^3 \log (\s{M}\bigvee \s{N})}{\Delta_{\ell+1}^2}\max\Big(1,\frac{\s{N}\tau}{\s{M}}\Big)\log \s{T})\Big)\Big)
\end{align*}
where we substitute $\delta^{-1}=1/\s{poly}(\s{T})$ and furthermore, the condition of the Lemma statement implies that $d_2^{-12} = O(\s{T}^{-3})$. Hence, we complete the proof of the lemma.

\begin{rmk}[Remark 1 in \cite{chen2019noisy}]\label{rmk:prob_error}
The error probability can be reduced from $O(\delta+d_2^{-3})$ to $O(\delta+d_2^{-c})$ for any constant $c>0$ with only constant factor changes in the round complexity $m$ and the estimation guarantees $\| \fl{P} - \widetilde{\fl{P}}\|_{\infty}$. We will use $c=12$ in rest of the paper.
\end{rmk}

\end{proof}
 
Note that although $\widetilde{\mu}$, the incoherence factor of the sub-matrix $\fl{P}_{\ca{M}^{(\ell,i)},\ca{N}^{(\ell,i)}}$ is unknown, from Lemma \ref{lem:incoherence}, we know that $\widetilde{\mu}$ is bounded from above by $O(\mu)$ (recall that $C,\alpha,\tau=O(1)$).
 
\paragraph{Exploration Strategy continued:} In particular,  we choose $\Delta_{\ell+1}=\epsilon_{\ell}/176\s{C}$ at the beginning of the \textit{explore} component of phase $\ell$ for the set of users $\ca{M}^{(\ell,i)}$ with active items $\ca{N}^{(\ell,i)}$. One edge case scenario is when for a particular set of \textit{nice} users $\ca{M}^{(\ell,i)}$, at the beginning of the explore component of phase $\ell$, we have $\left|\ca{N}^{(\ell,i)}\right|\le \s{T}^{1/3}$. In this case, \textbf{this set of \textit{nice} users $\ca{M}^{(\ell,i)}$ do not progress to the next phase} and in the \textit{explore} component, we simply recommend arbitrary items in $\ca{N}^{(\ell,i)}$ to users in $\ca{M}^{(\ell,i)}$ for the remaining rounds. Another interesting edge case scenario is when $d_2=\min(\left|\ca{M}^{(\ell,i)}\right|,\left|\ca{N}^{(\ell,i)}\right|,)$ is so small that the requisite error guarantee $\Delta_{\ell+1}$ (eq. \ref{eq:infty})  cannot be achieved even by setting $p=1$ i.e. we recommend all the items in the active set. Recall that by our induction assumption,  $\ca{N}^{(\ell,i)}$ is sufficiently large so that it is possible to recommend unblocked items for the number of remaining rounds (say $\s{T}-t_{\ell}$ with $t_{\ell}$ being the round at which the \textit{explore} component of phase $\ell$ starts). In that case, \textbf{the set of users $\ca{M}^{(\ell,i)}$ do not progress to the subsequent phase}; we simply recommend arbitrary unblocked items in the set $\ca{N}^{(\ell,i)}$ to the users in $\ca{M}^{(\ell,i)}$ for the remaining rounds. In both the above edge case scenarios, the \textit{explore} component of phase $\ell$ for users in $\ca{M}^{(\ell,i)}$ would last for the remaining rounds from where it starts. We can now show the following lemma:

\begin{lemma}\label{lem:interesting20}
Consider a particular subset of \textit{nice} users $\ca{M}^{(\ell,i)}$ and corresponding surviving items $\ca{N}^{(\ell,i)}$ at the beginning of the \textit{explore} component of phase $\ell$. Suppose $d_1=\max(|\ca{M}^{(\ell,i)}|,|\ca{N}^{(\ell,i)}|)$ and $d_2=\min(|\ca{M}^{(\ell,i)}|,|\ca{N}^{(\ell,i)}|)$. Now $\Delta_{\ell+1} = \epsilon_{\ell}/176\s{C}$ is such that there does not exist any $p\in [0,1]$ for which RHS in eq. \ref{eq:estimation_guarantee} can be $\Delta_{\ell+1}$ for estimation of matrix $\fl{P}$ restricted to the rows in $\ca{M}^{(\ell,i)}$ and columns in $\ca{N}^{(\ell,i)}$. In that case, we must have for all users $u\in \ca{M}^{(\ell,i)}$,
\begin{align*}
    (\s{T}-t_{\ell})\max_{y\in \ca{N}^{(\ell,i)}}\Big(\fl{P}_{u\pi_u(1)\mid \ca{N}^{(\ell,i)}}-\fl{P}_{uy}\Big) = \widetilde{O}\Big(\sigma\s{C}\sqrt{\mu^3\log d_1}\max\Big(\sqrt{\s{T}},\s{T}\sqrt{\frac{\s{C}}{\s{M}\tau}}\Big)\Big) 
\end{align*}
where $t_{\ell}$ is the round at which the \textit{explore} component of phase $\ell$ starts. 
 \end{lemma}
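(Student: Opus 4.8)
The plan is to turn the edge-case hypothesis into an upper bound on the gap parameter $\epsilon_{\ell}$, and then combine it with two complementary bounds on the number of remaining rounds $\s{T}-t_{\ell}$: the trivial bound $\s{T}-t_{\ell}\le\s{T}$, and the bound forced by the blocking constraint, which says that when few active items survive, correspondingly few rounds can remain.

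First I would observe that the right-hand side of (\ref{eq:estimation_guarantee}) applied to $\fl{P}_{\ca{M}^{(\ell,i)},\ca{N}^{(\ell,i)}}$, namely $O(\sigma\sqrt{\widetilde{\mu}^3\log d_1}/\sqrt{p\,d_2})$, is monotone decreasing in $p$, so the hypothesis that no $p\in[0,1]$ achieves error $\Delta_{\ell+1}$ is equivalent to its value at $p=1$ already exceeding $\Delta_{\ell+1}$, i.e. $\Delta_{\ell+1}=O(\sigma\sqrt{\widetilde{\mu}^3\log d_1}/\sqrt{d_2})$, where $\widetilde{\mu}$ is the incoherence of $\fl{P}_{\ca{M}^{(\ell,i)},\ca{N}^{(\ell,i)}}$. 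By Lemma \ref{lem:incoherence} and Assumption \ref{assum:cluster_ratio} we have $\widetilde{\mu}=O(\mu)$, and since $\epsilon_{\ell}=176\s{C}\Delta_{\ell+1}$ this gives $\epsilon_{\ell}=O(\sigma\s{C}\sqrt{\mu^3\log d_1}/\sqrt{d_2})$. Next, since we are in the inductive regime where $\ca{E}_2^{(\ell)}$ holds, property (C) (eq. (\ref{eq:gap})) gives, for every $u\in\ca{M}^{(\ell,i)}$ and $y\in\ca{N}^{(\ell,i)}$, that $\fl{P}_{u\pi_u(1)\mid\ca{N}^{(\ell,i)}}-\fl{P}_{uy}\le\fl{P}_{u\pi_u(1)\mid\ca{N}^{(\ell,i)}}-\min_{j\in\ca{N}^{(\ell,i)}}\fl{P}_{uj}\le\epsilon_{\ell}$, so the maximum appearing in the statement is at most $\epsilon_{\ell}$.

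Then I would bound $\s{T}-t_{\ell}$. Let $\ca{O}$ denote the set of items already chosen for recommendation in exploit components up to phase $\ell$ for the group $\ca{M}^{(\ell,i)}$; each such item consumed exactly $\s{B}$ rounds, so $t_{\ell}>\s{B}|\ca{O}|$ and hence $\s{T}-t_{\ell}\le\s{B}(\s{T}/\s{B}-|\ca{O}|)$. On the other hand, property (B) (eq. (\ref{eq:enough_unblocked2})) says that for any fixed $u$ the $\s{T}/\s{B}-|\ca{O}|$ highest-reward items of $u$ among those not yet chosen in exploit are all contained in $\ca{N}^{(\ell,i)}$, so $\s{T}/\s{B}-|\ca{O}|\le|\ca{N}^{(\ell,i)}|$. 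Combining, $\s{T}-t_{\ell}\le\min(\s{T},\s{B}|\ca{N}^{(\ell,i)}|)$. Writing $n=|\ca{N}^{(\ell,i)}|$, $m=|\ca{M}^{(\ell,i)}|\ge\s{M}/(\tau\s{C})$ (niceness), $d_2=\min(m,n)$, and $C_0=O(\sigma\s{C}\sqrt{\mu^3\log d_1})$, the product to be bounded is at most $C_0\min(\s{T},\s{B}n)/\sqrt{\min(m,n)}$. If $\min(m,n)=m$, bound $\min(\s{T},\s{B}n)\le\s{T}$ and $1/\sqrt{m}\le\sqrt{\tau\s{C}/\s{M}}$ to get $C_0\s{T}\sqrt{\tau\s{C}/\s{M}}$. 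If $\min(m,n)=n$, then $\min(\s{T},\s{B}n)/\sqrt{n}$, viewed as a function of $n>0$, increases like $\s{B}\sqrt{n}$ for $n\le\s{T}/\s{B}$ and decreases like $\s{T}/\sqrt{n}$ afterwards, so it is maximized at $n=\s{T}/\s{B}$ with value $\sqrt{\s{B}\s{T}}$, which is $\widetilde{O}(\sqrt{\s{T}})$ since $\s{B}=\Theta(\log\s{T})$. Taking the worse of the two cases, and absorbing logarithmic factors (and $\tau=\Theta(1)$) into $\widetilde{O}(\cdot)$, yields the claimed bound.

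I expect the only genuinely non-routine step to be this last balancing: using $\s{T}-t_{\ell}\le\s{T}$ together with $d_2\ge n\ge\s{T}^{1/3}$ only gives $\s{T}/\sqrt{\s{T}^{1/3}}=\s{T}^{5/6}$, which is much weaker than $\sqrt{\s{T}}$. The $\sqrt{\s{T}}$ rate is recovered precisely by using the blocking constraint through eq. (\ref{eq:enough_unblocked2}): a small surviving active set forces few remaining rounds ($\le\s{B}$ per surviving golden item), which is exactly what makes $\min(\s{T},\s{B}n)/\sqrt{n}\le\sqrt{\s{B}\s{T}}$. The other ingredients — extracting $\epsilon_{\ell}=O(\sigma\s{C}\sqrt{\mu^3\log d_1}/\sqrt{d_2})$ from the failure of Lemma \ref{lem:min_acc} at $p=1$, and reading off the per-round gap from property (C) — are direct.
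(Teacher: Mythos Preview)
Your proof is correct and follows essentially the same approach as the paper: extract $\epsilon_{\ell}=O(\sigma\s{C}\sqrt{\mu^3\log d_1}/\sqrt{d_2})$ from the failure of the matrix-completion guarantee at $p=1$, bound the per-round gap by $\epsilon_{\ell}$ via property (C), and then split on whether $d_2=|\ca{M}^{(\ell,i)}|$ (use niceness to get $d_2\ge\s{M}/(\tau\s{C})$) or $d_2=|\ca{N}^{(\ell,i)}|$ (use the blocking constraint $\s{T}-t_{\ell}\le\s{B}|\ca{N}^{(\ell,i)}|$). The only cosmetic difference is that the paper plugs $d_2\ge(\s{T}-t_{\ell})/\s{B}$ directly and computes $(\s{T}-t_{\ell})/\sqrt{(\s{T}-t_{\ell})/\s{B}}\le\sqrt{\s{B}\s{T}}$, whereas you equivalently optimize $\min(\s{T},\s{B}n)/\sqrt{n}$ over $n$.
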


\begin{proof}
Note that by our induction hypothesis, we must have $\s{B}\left|\ca{N}^{(\ell,i)}\right|\ge \s{T}-t_{\ell}$ because the set of surviving items must contain sufficient unblocked items (for recommendation in the remaining rounds) for every user $u\in \ca{M}^{(\ell,i)}$. Hence, by setting $p=1$ in eq. \ref{eq:estimation_guarantee}, with a certain number of rounds, we can obtain an estimate $\fl{Q}$ of $\fl{P}$ satisfying
\begin{align*}
    \lr{\fl{Q}_{\ca{M}^{(\ell,i)},\ca{N}^{(\ell,i)}}-\fl{P}_{\ca{M}^{(\ell,i)},\ca{N}^{(\ell,i)}}}_{\infty} = O\Big(\frac{\sigma\sqrt{\mu^3\log d_1}}{\sqrt{d_2}}\Big). 
\end{align*}
Hence, this implies that our choice of $\Delta_{\ell+1}$ satisfies $\Delta_{\ell+1} =O\Big(\frac{\sigma\sqrt{\mu^3\log d_1}}{\sqrt{d_2}}\Big)$ implying that $\epsilon_{\ell}=O\Big(\frac{\sigma\s{C}\sqrt{\mu^3\log d_1}}{\sqrt{d_2}}\Big)$. Now, there are two possibilities: 1) either $d_2=\left|\ca{M}^{(\ell,i)}\right|$ implying that $\s{M}/(\tau\s{C})\le \left|\ca{M}^{(\ell,i)}\right| \le  \left|\ca{N}^{(\ell,i)}\right|$. In that case, we have that $\epsilon_{\ell}=O\Big(\frac{\sigma\s{C}^{1.5}\sqrt{\mu^3\log d_1}}{\sqrt{\s{M}\tau}}\Big)$. Now, because of our induction assumption,
we will have 
\begin{align*}
    \max_{y\in \ca{N}^{(\ell,i)}}\Big(\fl{P}_{u\pi_u(1)\mid \ca{N}^{(\ell,i)}}-\fl{P}_{uy}\Big) \le \epsilon_{\ell} =  O\Big(\frac{\sigma\s{C}^{1.5}\sqrt{\mu^3\log d_1}}{\sqrt{\s{M}\tau}}\Big).
\end{align*}
Therefore, for any user $u\in \ca{M}^{(\ell,i)}$, if we recommend arbitrary unblocked items in $\ca{N}^{(\ell,i)}$ for the remaining rounds then we can bound the following quantity 
\begin{align*}
    (\s{T}-t_{\ell})\max_{y\in \ca{N}^{(\ell,i)}}\Big(\fl{P}_{u\pi_u(1)\mid \ca{N}^{(\ell,i)}}-\fl{P}_{uy}\Big) =  O\Big(\frac{\sigma\s{T}\s{C}^{1.5}\sqrt{\mu^3\log d_1}}{\sqrt{\s{M}\tau}}\Big).
\end{align*}
2) The second possibility is the following: $d_2=\left|\ca{N}^{(\ell,i)}\right|\ge \s{B}^{-1}(\s{T}-t_{\ell})$. In that case, from our induction assumption, we have that ($\s{B}=\Theta(\log \s{T})$)
\begin{align*}
    \max_{y\in \ca{N}^{(\ell,i)}}\Big(\fl{P}_{u\pi_u(1)\mid \ca{N}^{(\ell,i)}}-\fl{P}_{uy}\Big) \le \epsilon_{\ell} =  \widetilde{O}\Big(\frac{\sigma\s{C}\sqrt{\mu^3\log d_1}}{\sqrt{\s{T}-t_{\ell}}}\Big).
\end{align*}
and therefore 
\begin{align*}
   &(\s{T}-t_{\ell})\max_{y\in \ca{N}^{(\ell,i)}}\Big(\fl{P}_{u\pi_u(1)\mid \ca{N}^{(\ell,i)}}-\fl{P}_{uy}\Big) \le \epsilon_{\ell} (\s{T}-t_{\ell}) =  \widetilde{O}\Big(\frac{\sigma\s{C}\sqrt{\mu^3\log d_1}}{\sqrt{\s{T}-t_{\ell}}}\cdot (\s{T}-t_{\ell})\Big) \\
   &=\widetilde{O}\Big(\sigma\s{C}\sqrt{\mu^3  \s{T}\log d_1}\Big).
\end{align*}
\end{proof}

Consider $\ca{M}'^{(\ell)}\subseteq \ca{M}^{(\ell)}$ to be the family of \textit{nice} subsets of users which do not fall into the edge case scenarios i.e. 1) there exists $0\le p \le 1$ for which the theoretical bound in RHS in eq. \ref{eq:estimation_guarantee} can be smaller than $\Delta_{\ell+1}$ 2) we have $\left|\ca{N}^{(\ell,i)}\right|\ge \s{T}^{1/3}$. More precisely $\ca{M}'^{(\ell)}$ corresponds to the set
\begin{align*}
    &\Big\{\ca{M}^{(\ell,i)}\in \ca{M}^{(\ell)} \text{ with active items }\ca{N}^{(\ell,i)}\mid \Delta_{\ell+1} = \Omega\Big(\frac{\sigma\sqrt{\mu^3\log \max(\left|\ca{M}^{(\ell,i)}\right|,\left|\ca{N}^{(\ell,i)}\right|)}}{\sqrt{\min(\left|\ca{M}^{(\ell,i)}\right|,\left|\ca{N}^{(\ell,i)}\right|)}}\Big) \\
    &\text{ and }\left|\ca{N}^{(\ell,i)}\right|\ge \s{T}^{1/3}\Big\}
\end{align*}

Suppose $\s{M}/(\tau \s{C})=\Omega(\s{T}^{1/3})$.
As mentioned before, the event $\ca{E}^{(\ell)}_3$ is true if the algorithm has successfully computed an estimate $\widetilde{\fl{P}}^{(\ell)}\in \bb{R}^{\s{M}\times \s{N}}$ such that for all $\ca{M}^{(\ell,i)}\in \ca{M}'^{(\ell)}$ 
 \begin{align}\label{eq:submatrix_estimation}
 \lr{\widetilde{\fl{P}}^{(\ell)}_{\ca{M}^{(\ell,i)},\ca{N}^{(\ell,i)}}-\fl{P}_{\ca{M}^{(\ell,i)},\ca{N}^{(\ell,i)}}}_{\infty} \le \Delta_{\ell+1} \text{ for all }\ca{M}^{(\ell,i)}\in \ca{M}'^{\ell}  
 \end{align} 
 implying that for each of the distinct nice subsets $\ca{M}^{(\ell,i)}\in \ca{M}'^{(\ell)}$, after the \textit{explore} component of phase $\ell$, the algorithm finds a good entry-wise estimate of the sub-matrix $\fl{P}_{\ca{M}^{(\ell,i)},\ca{N}^{(\ell,i)}}$. In the following part of the analysis, we will repeatedly condition on the events $\ca{E}_2^{(\ell)}$ (conditions A-C are satisfied at the beginning of the \textit{explore} component of phase $\ell$) and the event $\ca{E}_{3}^{(\ell)}$ (eq. \ref{eq:infty} is true for all nice subsets $\ca{M}^{(\ell,i)}\in \ca{M}'^{(\ell)}$ in the explore component of phase $\ell$). Note that the event $\ca{E}_2^{(\ell)}$ is true due to the induction hypothesis and conditioned on the event $\ca{E}_2^{(\ell)}$, the event $\ca{E}_{3}^{(\ell)}$ is true with probability at least $1-O(\s{T}^{-3}\s{C})$ (after taking a union bound over $\s{C}$ clusters). 

Fix any $\ca{M}^{(\ell,i)}\subseteq \ca{M}'^{(\ell)}$ and condition on the events $\ca{E}_2^{(\ell)},\ca{E}_3^{(\ell)}$. 
For each user $u\in \ca{M}^{(\ell,i)}$, once the algorithm has computed the estimate $\widetilde{\fl{P}}^{(\ell)}_{\ca{M}^{(\ell,i)},\ca{N}^{(\ell,i)}}$ in eq. \ref{eq:submatrix_estimation}, let us denote a set of good items for the user $u$ by 
\begin{align}\label{eq:good_arms2}
    &\ca{T}^{(\ell)}_u \equiv \{j \in \ca{N}^{(\ell,i)} \mid \widetilde{\fl{P}}_{uj} \ge \widetilde{\fl{P}}_{u\widetilde{\pi}_u(\s{T}\s{B}^{-1}-\left|\ca{O}^{(\ell,t)}_{\ca{M}^{(\ell,i)}}\right|)\large\mid \ca{N}^{(\ell,i)}}- 2\Delta_{\ell+1}\} \\
    &\ca{R}^{(\ell)}_u \equiv \{j \in \ca{N}^{(\ell,i)}\mid \widetilde{\fl{P}}_{uj} \ge \widetilde{\fl{P}}_{u\widetilde{\pi}_u(1)\mid \ca{N}^{(\ell,i)}}-2\Delta_{\ell+1}\}
\end{align} 
where $t$ is the round index at the end of the \textit{explore} component of phase $\ell$ and
$|\ca{O}^{(\ell,t)}_{\ca{M}^{(\ell,i)}}|$ is the number of rounds in \textit{exploit} components of phases that the user $u\in \ca{M}^{(\ell,i)}$ has encountered so far.
Note from our algorithm, that users in the same nice subset $\ca{M}^{(\ell,i)}$ have encountered exactly the same number of rounds in \textit{exploit} components of phases until the end of phase $\ell$. Recall that users in $\ca{M}^{(\ell,i)}$ have been recommended the same set of items until blocked (unless blocked already for some user in which case the item in question has already been recommended $\s{B}$ times) in the \textit{exploit} components of phases until the end of phase $\ell$ (this set of items chosen for recommendation in the \textit{exploit} components up to end of phase $\ell$ is denoted by $\ca{O}^{(\ell,t)}_{\ca{M}^{(\ell,i)}}$).
If we condition on the event $\ca{E}^{(\ell)}_3$, then we can show the following statement to be true (in the following lemma, we remove the superscript $t$ in $\ca{O}^{(\ell,t)}_{\ca{M}^{(\ell,i)}}$ for simplicity - the round index $t$ corresponds to the end of phase $\ell$ for users in $\ca{M}^{(\ell,i)}$):

\begin{lemma}\label{lem:interesting29}
Condition on the events $\ca{E}_2^{(\ell)},\ca{E}_3^{(\ell)}$ being true. Consider a \textit{nice} subset of users $\ca{M}^{(\ell,i)}\in \ca{M}'^{(\ell)}$ and their corresponding set of active items $\ca{N}^{(\ell,i)}$ for which guarantees in eq. \ref{eq:submatrix_estimation} holds.
Let $\ca{O}^{(\ell)}_{\ca{M}^{(\ell,i)}}$ be the set of items that have been chosen for recommendation to users in $\ca{M}^{(\ell,i)}$ in exploit components of the first $\ell$ phases. Denote $\ca{Z}= [\s{N}]\setminus \ca{O}^{(\ell)}_{\ca{M}^{(\ell,i)}}$.
In that case, for every user $u\in \ca{M}^{(\ell,i)}$, 
the items  $\pi_u(s)\mid \ca{Z}$ for all $s\in [\s{T}\s{B}^{-1}-\left|\ca{O}^{(\ell)}_{\ca{M}^{(\ell,i)}}\right|]$ must belong to the set $\ca{T}_u^{(\ell)}$.
\end{lemma}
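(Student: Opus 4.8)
The plan is to verify directly that every item $\pi_u(s)\mid\ca{Z}$ with $s\le k$, where $k:=\s{T}/\s{B}-\big|\ca{O}^{(\ell)}_{\ca{M}^{(\ell,i)}}\big|$, satisfies both clauses of the definition of $\ca{T}_u^{(\ell)}$ in eq.~\eqref{eq:good_arms2} (if $k\le 0$ the statement is vacuous, so assume $k\ge 1$). First I would record a bookkeeping fact: $\ca{N}^{(\ell,i)}$ is squeezed between $\{\pi_u(s)\mid\ca{Z}\}_{s=1}^{k}$ and $\ca{Z}$. The lower containment $\{\pi_u(s)\mid\ca{Z}\}_{s=1}^{k}\subseteq\ca{N}^{(\ell,i)}$ for each $u\in\ca{M}^{(\ell,i)}$ is exactly property (B) of $\ca{E}_2^{(\ell)}$, i.e.\ eq.~\eqref{eq:enough_unblocked2}, evaluated at the start of the \texttt{Explore} component of phase $\ell$. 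The upper containment $\ca{N}^{(\ell,i)}\subseteq\ca{Z}$ follows because every item ever placed in $\ca{O}^{(\ell)}_{\ca{M}^{(\ell,i)}}$ is, by construction of \texttt{Exploit}, deleted from the active set in the very same call (Line~\ref{line:golden_prune} of Alg.~\ref{algo:exploit}) and the active set only ever shrinks thereafter, so $\ca{N}^{(\ell,i)}\cap\ca{O}^{(\ell)}_{\ca{M}^{(\ell,i)}}=\emptyset$. In particular $\big|\ca{N}^{(\ell,i)}\big|\ge k$, so $\widetilde{\pi}_u(k)\mid\ca{N}^{(\ell,i)}$ in eq.~\eqref{eq:good_arms2} is well defined, and each candidate item $\pi_u(s)\mid\ca{Z}$ indeed lies in $\ca{N}^{(\ell,i)}$, settling the membership clause.

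For the reward inequality, fix $u\in\ca{M}^{(\ell,i)}$, $s\in[k]$ and set $j:=\pi_u(s)\mid\ca{Z}$. The argument chains two elementary order-statistics observations with the $\ell_\infty$ guarantee of $\ca{E}_3^{(\ell)}$, eq.~\eqref{eq:submatrix_estimation}. (i) Monotonicity under supersets: since $s\le k$, $\fl{P}_{uj}\ge\fl{P}_{u\pi_u(k)\mid\ca{Z}}$, and since $\ca{N}^{(\ell,i)}\subseteq\ca{Z}$ the $k$-th largest true reward over $\ca{Z}$ is at least the $k$-th largest over $\ca{N}^{(\ell,i)}$, so $\fl{P}_{uj}\ge\fl{P}_{u\pi_u(k)\mid\ca{N}^{(\ell,i)}}$. (ii) Robustness of sorting to $\Delta_{\ell+1}$-perturbations: by eq.~\eqref{eq:submatrix_estimation} the $k$ items $\widetilde{\pi}_u(1)\mid\ca{N}^{(\ell,i)},\dots,\widetilde{\pi}_u(k)\mid\ca{N}^{(\ell,i)}$ each have true reward at least $\widetilde{\fl{P}}_{u\widetilde{\pi}_u(k)\mid\ca{N}^{(\ell,i)}}-\Delta_{\ell+1}$, hence at least $k$ items of $\ca{N}^{(\ell,i)}$ have true reward that large, forcing $\fl{P}_{u\pi_u(k)\mid\ca{N}^{(\ell,i)}}\ge\widetilde{\fl{P}}_{u\widetilde{\pi}_u(k)\mid\ca{N}^{(\ell,i)}}-\Delta_{\ell+1}$. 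Applying eq.~\eqref{eq:submatrix_estimation} once more to $j$ itself and combining,
\begin{align*}
\widetilde{\fl{P}}_{uj}\ \ge\ \fl{P}_{uj}-\Delta_{\ell+1}\ \ge\ \fl{P}_{u\pi_u(k)\mid\ca{N}^{(\ell,i)}}-\Delta_{\ell+1}\ \ge\ \widetilde{\fl{P}}_{u\widetilde{\pi}_u(k)\mid\ca{N}^{(\ell,i)}}-2\Delta_{\ell+1},
\end{align*}
which is precisely the remaining clause of the definition of $\ca{T}_u^{(\ell)}$; since $u$ and $s$ were arbitrary, the lemma follows.

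There is no real analytic difficulty here — all randomness has been absorbed into the conditioning on $\ca{E}_2^{(\ell)}\cap\ca{E}_3^{(\ell)}$, and both order-statistics facts are one-line pigeonhole arguments. The only step that demands care is the bookkeeping claim $\ca{N}^{(\ell,i)}\subseteq\ca{Z}$ together with the matching of the index $k$ across the two invocations of the $\ca{E}_2^{(\ell)}$ properties, because the active item set $\ca{N}^{(\ell,i)}$ is pruned both by golden-item removal inside \texttt{Exploit} and by the connected-component/elimination step of Alg.~\ref{algo:phased_elim}; I would isolate this as a short standalone observation (an invariant maintained by the algorithm) before carrying out the reward computation.
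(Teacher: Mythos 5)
Your proof is correct, and it reaches the same key inequality as the paper ($\widetilde{\fl{P}}_{uj}\ge\widetilde{\fl{P}}_{u\widetilde{\pi}_u(k)\mid\ca{N}^{(\ell,i)}}-2\Delta_{\ell+1}$ with the same $2\Delta_{\ell+1}$ slack) by a noticeably different decomposition. The paper argues positionally: it fixes $a=\pi_u(t)\mid\ca{Z}$ and splits on whether $a$ moves left or right in the estimated ordering $\widetilde{\pi}_u$; in the nontrivial case it exhibits a truly-worse item $b$ that overtook $a$, and the three-term telescoping $\widetilde{\fl{P}}_{ub}-\widetilde{\fl{P}}_{ua}=(\widetilde{\fl{P}}_{ub}-\fl{P}_{ub})+(\fl{P}_{ub}-\fl{P}_{ua})+(\fl{P}_{ua}-\widetilde{\fl{P}}_{ua})\le 2\Delta_{\ell+1}$ finishes. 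You instead route everything through the $k$-th true order statistic over $\ca{N}^{(\ell,i)}$, using two pigeonhole facts (monotonicity of the $k$-th order statistic under supersets, and the bound $\fl{P}_{u\pi_u(k)\mid\ca{N}^{(\ell,i)}}\ge\widetilde{\fl{P}}_{u\widetilde{\pi}_u(k)\mid\ca{N}^{(\ell,i)}}-\Delta_{\ell+1}$), which avoids the case analysis entirely. What your version buys is precision on a point the paper glosses over: the definition of $\ca{T}_u^{(\ell)}$ is anchored at $\widetilde{\pi}_u(k)\mid\ca{N}^{(\ell,i)}$, while the paper's proof reasons about permutations restricted to $\ca{Z}$ and silently identifies the two; you isolate the required containments $\{\pi_u(s)\mid\ca{Z}\}_{s=1}^{k}\subseteq\ca{N}^{(\ell,i)}\subseteq\ca{Z}$ (the first from property (B) of $\ca{E}_2^{(\ell)}$, the second from the pruning invariant of \texttt{Exploit}) as an explicit invariant, which is exactly the bookkeeping needed to make the transfer between the two index sets rigorous. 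Both arguments are equally elementary; yours is the more self-contained write-up.
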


\begin{proof}

Let us fix a user $u\in \ca{M}^{(\ell,i)}$ with active set of arms $\ca{N}^{(\ell,i)}$.
Let us also fix an item $a\equiv \pi_u(t)\mid \ca{Z}$ for $t\in [\s{T}\s{B}^{-1}-\left|\ca{O}^{(\ell)}_{\ca{M}^{(\ell,i)}}\right|]$. Recall that $\widetilde{\pi}\mid \ca{Z}$ is the permutation of the items sorted in descending order according to their estimated reward in $\widetilde{\fl{P}}^{(\ell)}_{\ca{M}^{(\ell,i)},\ca{N}^{(\ell,i)}}$. Now there are two possibilities regarding the position of the arm $a$ in the permutation $\widetilde{\pi}\mid \ca{Z}$ - 1) $\pi_u(t)\mid \ca{Z}\equiv \widetilde{\pi}_u(t_1)\mid \ca{Z}$ for some $t_1 \le t$ which implies that in the permutation $\widetilde{\pi}\mid \ca{Z}$, the position of the arm $a$ is $t_1$. In that case, the arm $a$ survives in the set $\ca{T}_u^{(\ell)}$ by definition (see eq. \ref{eq:good_arms2}) 2) Now, suppose that $\pi_u(t)\mid \ca{Z}\equiv \widetilde{\pi}_u(t_1)\mid \ca{Z}$ for some $t_1 \le t$ which implies that in the permutation $\widetilde{\pi}\mid \ca{Z}$, the arm $a$ has been shifted to the right (from $\pi\mid \ca{Z}$). In that case, there must exist another item $b\equiv\pi_u(t_2)\mid \ca{Z}$ for $t_2>t$ such that $\pi_u(t_2)\mid \ca{Z}\equiv \widetilde{\pi}_u(t_3)\mid \ca{Z}$ for $t_3\le t$. 
Now, we will have
\begin{align*}
    \widetilde{\fl{P}}^{(\ell)}_{ub}-\widetilde{\fl{P}}^{(\ell)}_{ua} = \widetilde{\fl{P}}^{(\ell)}_{ub}-\fl{P}_{ub}+\fl{P}_{ub}-\fl{P}_{ua}+\fl{P}_{ua}-\widetilde{\fl{P}}^{(\ell)}_{ua} \le 2\Delta_{\ell+1}
\end{align*}
where we used the following facts a) conditioned on the event $\ca{E}^{(\ell)}_3$, we have $\widetilde{\fl{P}}^{(\ell)}_{us}-\fl{P}_{us}\le \Delta_{\ell+1}$ for all $u\in \ca{M}^{(\ell,i)},s \in \ca{N}^{(\ell,i)}$ b) $\fl{P}_{ub}-\fl{P}_{ua} \le 0$ by definition. Hence, this implies that $a\in \ca{T}_u^{(\ell)}$.

\end{proof}

\begin{coro}\label{coro:important}
Condition on the events $\ca{E}_2^{(\ell)},\ca{E}_3^{(\ell)}$ being true. Consider a \textit{nice} subset of users $\ca{M}^{(\ell,i)}\in \ca{M}'^{(\ell)}$ and their corresponding set of active items $\ca{N}^{(\ell,i)}$ for which guarantees in eq. \ref{eq:submatrix_estimation} holds.
Suppose $t_{\ell}$ is the final round of the \textit{explore} component of phase $\ell$ for users in $\ca{M}^{(\ell,i)}$.
In that case, for user $u$, the set $\ca{T}_u^{(\ell)}$ contains sufficient items that are unblocked and can be recommended for the remaining $\s{T}-t_{\ell}$ rounds. 
\end{coro}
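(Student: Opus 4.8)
The plan is to read the corollary off Lemma~\ref{lem:interesting29} together with a one-line budget count. Fix a user $u \in \ca{M}^{(\ell,i)}$ and write $\ca{O} := \ca{O}^{(\ell)}_{\ca{M}^{(\ell,i)}}$, $\ca{Z} := [\s{N}] \setminus \ca{O}$ and $K := \s{T}\s{B}^{-1} - |\ca{O}|$. Put $g_j := \pi_u(j)\mid \ca{Z}$ for $j \in [K]$. By Lemma~\ref{lem:interesting29}, $\{g_1,\dots,g_K\} \subseteq \ca{T}_u^{(\ell)}$, and by construction each $g_j$ lies in $\ca{Z}$, hence $g_j \notin \ca{O}$.

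First I would pin down the book-keeping inside \texttt{Exploit} (Alg.~\ref{algo:exploit}): every item $x \in \ca{O}$ has, by round $t_\ell$, been recommended to $u$ \emph{exactly} $\s{B}$ times. Indeed, when $x$ is placed into the set $\ca{S}$, the length-$\s{B}$ block devoted to it recommends $x$ until its count hits $\s{B}$ and then recommends arbitrary unblocked fillers (Line~\ref{line:golden_recommend}); thereafter $x$ is removed from the active set (Line~\ref{line:golden_prune}) and is never recommended again. Hence, among the $t_\ell$ recommendations received by $u$ so far, exactly $\s{B}|\ca{O}|$ go to items of $\ca{O}$ and $r := \sum_{j=1}^{K} r_j$ go to the items $g_1,\dots,g_K$, where $r_j \le \s{B}$ is the number of times $g_j$ has been recommended to $u$; since $\{g_1,\dots,g_K\} \cap \ca{O} = \emptyset$ these two families of recommendations are disjoint, so $\s{B}|\ca{O}| + r \le t_\ell$.

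It then follows that the total recommendation budget still available inside $\ca{T}_u^{(\ell)}$ is at least that of $g_1,\dots,g_K$, which, using $K + |\ca{O}| = \s{T}\s{B}^{-1}$, equals
\[
\sum_{j=1}^{K}\big(\s{B} - r_j\big) \;=\; \s{B}K - r \;\ge\; \s{B}K - \big(t_\ell - \s{B}|\ca{O}|\big) \;=\; \s{B}\big(K + |\ca{O}|\big) - t_\ell \;=\; \s{T} - t_\ell .
\]
So $\ca{T}_u^{(\ell)}$ still contains an unblocked item as long as fewer than $\s{T} - t_\ell$ rounds remain, which is the assertion. Applied to every $u \in \ca{M}^{(\ell,i)}$, this shows the children active sets $\ca{N}^{(\ell,i,j)} = \bigcup_{u \in \ca{M}^{(\ell,i,j)}}\ca{T}_u^{(\ell)}$ carry enough unblocked budget for the rounds that remain — exactly what property~(B) (eq.~\eqref{eq:enough_unblocked2} and the sufficiency it entails) demands at the start of phase $\ell+1$, closing the induction step. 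I expect the only delicate point to be the book-keeping above — verifying that \texttt{Exploit} leaves every chosen item at count exactly $\s{B}$ (blocking forbids more, while the filler rounds triggered by an already-blocked item keep it from dropping below $\s{B}$), and that such items are thereafter purged from every active set, so that the tally of $u$'s $t_\ell$ recommendations is exact; the problematic edge-case groups are harmlessly excluded because $\ca{M}^{(\ell,i)} \in \ca{M}'^{(\ell)}$.
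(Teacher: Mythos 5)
Your proof is correct and follows essentially the same route as the paper's: invoke Lemma~\ref{lem:interesting29} to place the remaining golden items $\{\pi_u(j)\mid\ca{Z}\}_{j=1}^{K}$ inside $\ca{T}_u^{(\ell)}$, then compare the leftover recommendation budget of those items against the $\s{T}-t_\ell$ remaining rounds. Your explicit accounting ($\s{B}|\ca{O}|+r\le t_\ell$, hence leftover budget $\s{B}K-r\ge \s{T}-t_\ell$) is in fact a cleaner rendering of the counting the paper intends.
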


\begin{proof}
From Lemma \ref{lem:interesting29}, we showed that $\ca{T}_u^{(\ell)}$ comprises the set of items $\pi_u(s)\mid \ca{Z}$ for all $s\in [\s{T}\s{B}^{-1}-\left|\ca{O}^{(\ell)}_{\ca{M}^{(\ell,i)}}\right|]$ where $\ca{Z}=[\s{N}]\setminus \ca{O}^{(\ell)}_{\ca{M}^{(\ell,i)}}$. 
Now, suppose the $k^{\s{th}}$ item in the set $\ca{H}_u \equiv \{\pi_u(t')\mid \ca{Z}\}_{t'=1}^{\s{T}-\left|\ca{O}^{(\ell)}_{\ca{M}^{(\ell,i)}}\right|}$ has been recommended to user $u$ $b_k<\s{B}$ times in previous phases. In that case, the number of allowed recommendations of items in $\ca{T}_u^{(\ell)}$ is at least $\s{T}\s{B}^{-1}-\s{B}\left|\ca{O}^{(\ell)}_{\ca{M}^{(\ell,i)}}\right|-\sum_{k\in \ca{H}_u}b_k$ which is more than the remaining rounds that is at most $\s{T}\s{B}^{-1}-\s{B}\left|\ca{O}^{(\ell)}_{\ca{M}^{(\ell,i)}}\right|-\sum_{k\in \ca{H}_u}b_k$ ($\s{B}\left|\ca{O}^{(\ell)}_{\ca{M}^{(\ell,i)}}\right|$ rounds have already been used up when the \textit{golden items} were identified and recommended).
\end{proof}

Consider a \textit{nice} subset of users $\ca{M}^{(\ell,i)}\in \ca{M}'^{(\ell)}$ and its corresponding active set of items $\ca{N}^{(\ell,i)}$ for which eq. \ref{eq:submatrix_estimation} holds true. At the end of the \textit{explore} component of phase $\ell$, based on the estimate $\widetilde{\fl{P}}^{(\ell)}_{\ca{M}^{(\ell,i)},\ca{N}^{(\ell,i)}}$, we construct a graph $\ca{G}^{(\ell,i)}$ whose nodes are given by the users in $\ca{M}^{(\ell,i)}$. Now, we draw an edge between two users $u,v \in \ca{M}^{(\ell,i)}$ if $\left|\widetilde{\fl{P}}^{(\ell)}_{ux}-\widetilde{\fl{P}}^{(\ell)}_{vx}\right|\le 2\Delta_{\ell+1}$ for all considered items $x\in \ca{N}^{(\ell,i)}$.


\begin{lemma}\label{lem:interesting_clique}
Condition on the events $\ca{E}_2^{(\ell)},\ca{E}_3^{(\ell)}$ being true. Consider a \textit{nice} subset of users $\ca{M}^{(\ell,i)}\in \ca{M}'^{(\ell)}$ and their corresponding set of active items $\ca{N}^{(\ell,i)}$ for which guarantees in eq. \ref{eq:submatrix_estimation} holds. Consider the graph $\ca{G}^{(\ell,i)}$ formed by the users in $\ca{M}^{(\ell,i)}$ such that an edge exists between two users $u,v \in \ca{M}^{(\ell,i)}$ if $\left|\widetilde{\fl{P}}^{(\ell)}_{ux}-\widetilde{\fl{P}}^{(\ell)}_{vx}\right|\le 2\Delta_{\ell+1}$ for all considered items $x\in \ca{N}^{(\ell,i)}$.
 Nodes in $\ca{G}^{(\ell,i)}$ corresponding to users in the same cluster form a clique.
Also, users in each connected component of the graph $\ca{G}^{(\ell,i)}$ form a nice subset of users.
\end{lemma}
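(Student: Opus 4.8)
The plan is to establish the two claims of Lemma~\ref{lem:interesting_clique} separately, both resting on the estimation guarantee in eq.~\eqref{eq:submatrix_estimation}: conditioned on $\ca{E}_2^{(\ell)}, \ca{E}_3^{(\ell)}$, we have $\lr{\widetilde{\fl{P}}^{(\ell)}_{\ca{M}^{(\ell,i)},\ca{N}^{(\ell,i)}}-\fl{P}_{\ca{M}^{(\ell,i)},\ca{N}^{(\ell,i)}}}_{\infty} \le \Delta_{\ell+1}$. For the first claim, fix two users $u,v$ in the same true cluster. Since $\ca{M}^{(\ell,i)}$ is \textit{nice} (a union of clusters) by the induction hypothesis / $\ca{E}_2^{(\ell)}$, both $u$ and $v$ lie in $\ca{M}^{(\ell,i)}$, and by definition of the clustering model $\fl{P}_{ux}=\fl{P}_{vx}$ for every item $x$. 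Hence for any $x\in\ca{N}^{(\ell,i)}$, a triangle inequality gives $|\widetilde{\fl{P}}^{(\ell)}_{ux}-\widetilde{\fl{P}}^{(\ell)}_{vx}| \le |\widetilde{\fl{P}}^{(\ell)}_{ux}-\fl{P}_{ux}| + |\fl{P}_{vx}-\widetilde{\fl{P}}^{(\ell)}_{vx}| \le 2\Delta_{\ell+1}$, so the edge $(u,v)$ is present in $\ca{G}^{(\ell,i)}$. Since this holds for every pair in the cluster, those nodes form a clique.

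For the second claim I would argue that each connected component of $\ca{G}^{(\ell,i)}$ is a union of whole clusters. By the first claim, every cluster that intersects $\ca{M}^{(\ell,i)}$ is entirely contained in a single connected component (the clique is connected), so no component splits a cluster. It remains to check that a component cannot contain only \emph{part} of a cluster — but that is immediate from the clique property, so the only thing to verify is that the set of components forms a partition of $\ca{M}^{(\ell,i)}$ into nice subsets, i.e. each component is a union of clusters rather than, say, merging fragments. Since components partition the vertex set and each cluster sits inside exactly one component, each component is exactly a union of the clusters it contains; hence it is \textit{nice} by Definition~\ref{defn:nice}. This direction is essentially bookkeeping once the clique claim is in hand.

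The main obstacle — and the step that requires the careful choice of the threshold $2\Delta_{\ell+1}$ and the gap parameters $\epsilon_\ell$ — is not stated as part of \emph{this} lemma but is implicit in why the clustering is \emph{useful}: one must also ensure that the graph does not erroneously connect users from \emph{different} clusters in a way that destroys the niceness structure or, more importantly, that merging across clusters via the graph is still acceptable because such users have nearly identical rewards on the current active set $\ca{N}^{(\ell,i)}$ (which is why niceness is preserved even if an edge bridges two clusters). Concretely, if an edge exists between $u$ in cluster $c$ and $v$ in cluster $c'$, then for every $x\in\ca{N}^{(\ell,i)}$ we get $|\fl{P}_{ux}-\fl{P}_{vx}| \le |\widetilde{\fl{P}}^{(\ell)}_{ux}-\widetilde{\fl{P}}^{(\ell)}_{vx}| + 2\Delta_{\ell+1} \le 4\Delta_{\ell+1}$, so the clusters $c$ and $c'$ are $4\Delta_{\ell+1}$-close on $\ca{N}^{(\ell,i)}$ — and by transitivity within a component the whole component has rewards that vary by at most $O(\s{C}\Delta_{\ell+1})$ on $\ca{N}^{(\ell,i)}$, which is exactly the kind of bound (matching the $88\s{C}\Delta_\ell$ / $\epsilon_\ell$ scaling in the \texttt{Exploit} precondition) that downstream lemmas need. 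I would spell out this transitivity/diameter estimate carefully, as the factor of $\s{C}$ and the constants are where the analysis is delicate; the rest is the straightforward triangle-inequality argument above.
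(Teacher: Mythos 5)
Your proof is correct and follows essentially the same route as the paper's: the clique claim via the triangle inequality $|\widetilde{\fl{P}}^{(\ell)}_{ux}-\widetilde{\fl{P}}^{(\ell)}_{vx}|\le 2\Delta_{\ell+1}$ using $\fl{P}_{ux}=\fl{P}_{vx}$ for same-cluster users, and then the observation that since each cluster lies wholly inside one connected component and $\ca{M}^{(\ell,i)}$ is already nice, every component is a union of clusters. The additional discussion of cross-cluster edges and the $O(\s{C}\Delta_{\ell+1})$ diameter bound is not needed for this lemma (the paper defers it to later lemmas), but it does not detract from the argument.
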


\begin{proof}
For any two users $u,v\in \ca{M}^{(\ell,i)}$ belonging to the same cluster, consider an arm $x\in \ca{N}^{(\ell,i)}$. We must have
\begin{align*}
    \widetilde{\fl{P}}^{(\ell)}_{ux}-\widetilde{\fl{P}}^{(\ell)}_{vx} = \widetilde{\fl{P}}^{(\ell)}_{ux}-\fl{P}_{ux}+\fl{P}_{ux}-\fl{P}_{vx}+\fl{P}_{vx}-\widetilde{\fl{P}}^{(\ell)}_{vx}
    \le 2\Delta_{\ell+1}.
\end{align*}

Now, consider two users $u,v\in \ca{M}^{(\ell,i)}$ that belongs to different clusters $\ca{P},\ca{Q}$
respectively. Note that since the event $\ca{E}^{(\ell)}$ is true, $\ca{M}^{(\ell,i)}$ is a union of clusters comprising $\ca{P},\ca{Q}$. Furthermore, we have already established that nodes in $\ca{G}^{(\ell,i)}$ (users in $\ca{M}^{(\ell,i)}$) restricted to the same cluster form a clique. There every connected component of the graph $\ca{G}^{(\ell,i)}$ can be represented as a union of a subset of clusters.

\end{proof}

\begin{lemma}\label{lem:interesting_gap}
Condition on the events $\ca{E}_2^{(\ell)},\ca{E}_3^{(\ell)}$ being true. Consider a \textit{nice} subset of users $\ca{M}^{(\ell,i)}\in \ca{M}'^{(\ell)}$ and their corresponding set of active items $\ca{N}^{(\ell,i)}$ for which guarantees in eq. \ref{eq:submatrix_estimation} holds.
In that case, for any subset $\ca{Y}\subseteq \ca{N}^{(\ell,i)}$ and any $s\in [\left|\ca{Y}\right|]$, for every user $u\in \ca{M}^{(\ell,i)}$, we must have 
\begin{align*}
    \widetilde{\fl{P}}_{u\widetilde{\pi}_u(1)\mid \ca{Y}}-\widetilde{\fl{P}}_{u\widetilde{\pi}_u(s)\mid \ca{Y}}-6\Delta_{\ell+1}\le \fl{P}_{u\pi_u(1)\mid \ca{Y}}-\fl{P}_{u\pi_u(s)\mid \ca{Y}} \le \widetilde{\fl{P}}_{u\widetilde{\pi}_u(1)\mid \ca{Y}}-\widetilde{\fl{P}}_{u\widetilde{\pi}_u(s)\mid \ca{Y}}+6\Delta_{\ell+1}. 
\end{align*}
\end{lemma}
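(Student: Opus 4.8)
The plan is to reduce the statement to two elementary ingredients: that on the relevant index set the estimate $\widetilde{\fl{P}}\equiv\widetilde{\fl{P}}^{(\ell)}$ is entrywise $\Delta_{\ell+1}$-accurate, and that $s$-th order statistics are stable under such perturbations. First I would fix the user $u\in\ca{M}^{(\ell,i)}$ and the subset $\ca{Y}\subseteq\ca{N}^{(\ell,i)}$, and invoke the conditioning: since $\ca{M}^{(\ell,i)}\in\ca{M}'^{(\ell)}$ (by $\ca{E}_2^{(\ell)}$) and we are on $\ca{E}_3^{(\ell)}$, eq.~\eqref{eq:submatrix_estimation} gives $|\widetilde{\fl{P}}_{uj}-\fl{P}_{uj}|\le\Delta_{\ell+1}$ for every $j\in\ca{N}^{(\ell,i)}$, hence for every $j\in\ca{Y}$ in particular. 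Everything below works inside this event.

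The key step — and the one requiring a moment's thought, though it is still just a counting argument — is the per-rank bound: for every $s\in[|\ca{Y}|]$,
\begin{align*}
\big|\widetilde{\fl{P}}_{u\widetilde{\pi}_u(s)\mid\ca{Y}}-\fl{P}_{u\pi_u(s)\mid\ca{Y}}\big|\le\Delta_{\ell+1}.
\end{align*}
For one direction, observe that the $s$ distinct items $\pi_u(1)\mid\ca{Y},\dots,\pi_u(s)\mid\ca{Y}$ of $\ca{Y}$ each have true reward at least $\fl{P}_{u\pi_u(s)\mid\ca{Y}}$, hence estimated reward at least $\fl{P}_{u\pi_u(s)\mid\ca{Y}}-\Delta_{\ell+1}$; so at least $s$ items of $\ca{Y}$ have estimated reward at least $\fl{P}_{u\pi_u(s)\mid\ca{Y}}-\Delta_{\ell+1}$, which forces $\widetilde{\fl{P}}_{u\widetilde{\pi}_u(s)\mid\ca{Y}}\ge\fl{P}_{u\pi_u(s)\mid\ca{Y}}-\Delta_{\ell+1}$. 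The reverse inequality follows by the symmetric argument with the roles of $\fl{P}$ and $\widetilde{\fl{P}}$ swapped (the entrywise bound is symmetric). This is the same order-robustness phenomenon already used in the analysis of Algorithm~\ref{algo:p1} (the $\fl{P}_{u\widetilde{\pi}_u(i)}-\fl{P}_{u\pi_u(i)}\ge-2\rho$ estimate) and in Lemma~\ref{lem:interesting29}, so the machinery is in place.

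Finally I would apply the displayed per-rank bound with rank $1$ and with rank $s$ and combine them by the triangle inequality:
\begin{align*}
\Big|\big(\widetilde{\fl{P}}_{u\widetilde{\pi}_u(1)\mid\ca{Y}}-\widetilde{\fl{P}}_{u\widetilde{\pi}_u(s)\mid\ca{Y}}\big)-\big(\fl{P}_{u\pi_u(1)\mid\ca{Y}}-\fl{P}_{u\pi_u(s)\mid\ca{Y}}\big)\Big|\le2\Delta_{\ell+1}\le6\Delta_{\ell+1},
\end{align*}
which rearranges into the two-sided bound in the statement (the constant $6$ rather than $2$ is just slack for downstream use). I do not expect a genuine obstacle: the only non-routine ingredient is the one-line order-statistic stability bound above, and the rest is triangle inequalities on the conditioned estimation event; the slight subtlety to be careful about is that $\widetilde{\pi}_u\mid\ca{Y}$ orders $\ca{Y}$ by the \emph{estimated} rewards while $\pi_u\mid\ca{Y}$ orders it by the \emph{true} rewards, which is exactly why one argues via "at least $s$ items exceed a threshold" rather than by naively pairing $\widetilde{\pi}_u(s)$ with $\pi_u(s)$.
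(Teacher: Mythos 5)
Your proof is correct, and it takes a genuinely different — and cleaner — route than the paper's. The paper proves the lemma by a case analysis on how individual items migrate between the true ordering $\pi_u\mid\ca{Y}$ and the estimated ordering $\widetilde{\pi}_u\mid\ca{Y}$ (whether $\pi_u(s)\mid\ca{Y}$ shifts left or right in $\widetilde{\pi}_u\mid\ca{Y}$, and in the hard case exhibiting a compensating item that shifts the other way), chaining triangle inequalities through these intermediate items; that bookkeeping is what accumulates the stated $6\Delta_{\ell+1}$ of slack. Your argument instead isolates the single order-statistic stability fact that an entrywise $\Delta_{\ell+1}$-accurate estimate has its $s$-th largest value over $\ca{Y}$ within $\Delta_{\ell+1}$ of the true $s$-th largest value, i.e.\ $\bigl|\widetilde{\fl{P}}_{u\widetilde{\pi}_u(s)\mid\ca{Y}}-\fl{P}_{u\pi_u(s)\mid\ca{Y}}\bigr|\le\Delta_{\ell+1}$, proved by the correct ``at least $s$ items exceed the threshold'' counting rather than by naively pairing ranked items — this is exactly the $\ell_\infty$-Lipschitzness of the sorting map. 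Applying it at ranks $1$ and $s$ gives the two-sided bound with constant $2\Delta_{\ell+1}$, which is strictly sharper than the $6\Delta_{\ell+1}$ in the statement and hence implies it. The only imprecision is cosmetic: membership of $\ca{M}^{(\ell,i)}$ in $\ca{M}'^{(\ell)}$ and the validity of eq.~\ref{eq:submatrix_estimation} are hypotheses of the lemma rather than consequences of $\ca{E}_2^{(\ell)}$, but nothing in your argument depends on that distinction. It is worth noting that the paper's heavier template is reused almost verbatim in Lemma~\ref{lem:interesting_gap2}, where two \emph{different users'} estimated permutations are compared; your order-statistic lemma (applied to the two estimated reward vectors, which differ entrywise by at most $2\Delta_{\ell+1}$ by the edge condition) would simplify that proof as well.
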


\begin{proof}
 Since, we condition on the event $\ca{E}_3^{(\ell)}$, we must have computed an estimate $\widetilde{\fl{P}}^{(\ell)}$, an estimate of $\fl{P}$ restricted to users in $\ca{M}^{(\ell,i)}$ and items in $\ca{N}^{(\ell,i)}$ such that
\begin{align*}
    \left|\left|\widetilde{\fl{P}}^{(\ell)}_{\ca{M}^{(\ell,i)},\ca{N}^{(\ell,i)}}-\fl{P}_{\ca{M}^{(\ell,i)},\ca{N}^{(\ell,i)}}\right|\right|_{\infty} \le \Delta_{\ell+1}.
\end{align*}
We can decompose the term being studied in the following manner:
\begin{align*}
    &\widetilde{\fl{P}}^{(\ell)}_{u{\pi}_u(1)\mid \ca{Y}}-\widetilde{\fl{P}}^{(\ell)}_{u\pi_u(s)\mid \ca{Y}} \\
    &=\widetilde{\fl{P}}^{(\ell)}_{u{\pi}_u(1)\mid \ca{Y}}-\widetilde{\fl{P}}^{(\ell)}_{u\widetilde{\pi}_u(1)\mid \ca{Y}}+\widetilde{\fl{P}}^{(\ell)}_{u\widetilde{\pi}_u(1)\mid \ca{Y}}-\widetilde{\fl{P}}^{(\ell)}_{u\widetilde{\pi}_u(s)\mid \ca{Y}}+\widetilde{\fl{P}}^{(\ell)}_{u\widetilde{\pi}_u(s)\mid \ca{Y}}-\widetilde{\fl{P}}^{(\ell)}_{u\pi_u(s)\mid \ca{Y}}.
\end{align*}
Let us bound the quantity $\widetilde{\fl{P}}^{(\ell)}_{u\widetilde{\pi}_u(s)\mid \ca{Y}}-\widetilde{\fl{P}}^{(\ell)}_{u\pi_u(s)\mid \ca{Y}}$.
In order to analyze this quantity, we will consider a few cases. In the first case, suppose that $\pi_u(s)\mid \ca{Y} = \widetilde{\pi}_u(t_1)\mid \ca{Y}$ for $t_1 \ge s$. In that case, we will have that $\widetilde{\fl{P}}^{(\ell)}_{u\widetilde{\pi}_u(s)\mid \ca{Y}}-\widetilde{\fl{P}}^{(\ell)}_{u\pi_u(s)\mid \ca{Y}} \ge 0$. In the second case, suppose $\pi_u(s)\mid \ca{Y} = \widetilde{\pi}_u(t_2)\mid \ca{Y}$ for $t_2<s$ and $\pi_u(t_3)\mid \ca{Y} = \widetilde{\pi}_u(s)\mid \ca{Y}$ for $t_3<s$. In that case, we have 
\begin{align*}
    &\widetilde{\fl{P}}^{(\ell)}_{u\widetilde{\pi}_u(s)\mid \ca{Y}}-\widetilde{\fl{P}}^{(\ell)}_{u\pi_u(s)\mid \ca{Y}} \\
    &=\widetilde{\fl{P}}^{(\ell)}_{u\widetilde{\pi}_u(s)\mid \ca{Y}}-\fl{P}_{u\widetilde{\pi}_u(s)\mid \ca{Y}}+\fl{P}_{u\widetilde{\pi}_u(s)\mid \ca{Y}}-\fl{P}_{u\pi_u(s)\mid \ca{Y}}+\fl{P}_{u\pi_u(s)\mid \ca{Y}}-\widetilde{\fl{P}}^{(\ell)}_{u\pi_u(s)\mid \ca{Y}} \\
    &=\widetilde{\fl{P}}^{(\ell)}_{u\widetilde{\pi}_u(s)\mid \ca{Y}}-\fl{P}_{u\widetilde{\pi}_u(s)\mid \ca{Y}}+\fl{P}_{u\pi_u(t_3)\mid \ca{Y}}-\fl{P}_{u\pi_u(s)\mid \ca{Y}}+\fl{P}_{u\pi_u(s)\mid \ca{Y}}-\widetilde{\fl{P}}^{(\ell)}_{u\pi_u(s)\mid \ca{Y}} \ge -2\Delta_{\ell+1}
\end{align*}
where we used the fact $\fl{P}_{u\pi_u(t_3)\mid \ca{Y}}-\fl{P}_{u\pi_u(s)\mid \ca{Y}} \ge 0$. In the final case, we assume that $\pi_u(s)\mid \ca{Y} = \widetilde{\pi}_u(t_2)\mid \ca{Y}$ for $s>t_2$ and $\pi_u(t_3)\mid \ca{Y} = \widetilde{\pi}_u(s) \mid \ca{Y}$ for $t_3>s$. This means that both the items $\pi_u(s)\mid \ca{Y},\pi_u(t_3)\mid \ca{Y}$ have been shifted to the left in the permutation $\widetilde{\pi}_u\mid \ca{Y}$. Hence, 
\begin{align*}
    &\widetilde{\fl{P}}^{(\ell)}_{u\widetilde{\pi}_u(s)\mid \ca{Y}}-\widetilde{\fl{P}}^{(\ell)}_{u\pi_u(s)\mid \ca{Y}} \\
    &=\widetilde{\fl{P}}^{(\ell)}_{u\widetilde{\pi}_u(s)\mid \ca{Y}}-\fl{P}_{u\widetilde{\pi}_u(s)\mid \ca{Y}}+\fl{P}_{u\widetilde{\pi}_u(s)\mid \ca{Y}}-\fl{P}_{u\pi_u(s)\mid \ca{Y}}\\
    &+\fl{P}_{u\pi_u(s)\mid \ca{Y}}-\widetilde{\fl{P}}^{(\ell)}_{u\pi_u(s)\mid \ca{Y}} 
    \ge -2\Delta_{\ell+1}+\fl{P}_{u\widetilde{\pi}_u(s)\mid \ca{Y}}-\fl{P}_{u\pi_u(s)\mid \ca{Y}}=-2\Delta_{\ell+1}+\fl{P}_{u\pi_u(t_3)\mid \ca{Y}}-\fl{P}_{u\pi_u(s)\mid \ca{Y}}.
\end{align*}
Hence there must exist an element $\pi_u(t_4)\mid \ca{Y}$ such that $t_4<s$ and $\widetilde{\pi}_u(t_5)\mid \ca{Y}=\pi_u(t_4)\mid \ca{Y}$ for $t_5>s$. In that case, we must have 
\begin{align*}
    &\fl{P}_{u\pi_u(t_3)\mid \ca{Y}}-\fl{P}_{u\pi_u(s)\mid \ca{Y}} \ge \fl{P}_{u\pi_u(t_3)\mid \ca{Y}}-\fl{P}_{u\pi_u(t_4)\mid \ca{Y}} \\
    &= \fl{P}_{u\pi_u(t_3)\mid \ca{Y}}-\widetilde{\fl{P}}^{(\ell)}_{u\pi_u(t_3)\mid \ca{Y}}+\widetilde{\fl{P}}^{(\ell)}_{u\pi_u(t_3)\mid \ca{Y}}-\widetilde{\fl{P}}^{(\ell)}_{u\pi_u(t_4)\mid \ca{Y}}+\widetilde{\fl{P}}^{(\ell)}_{u\pi_u(t_4)\mid \ca{Y}}- \fl{P}_{u\pi_u(t_4)\mid \ca{Y}} \\
    &\ge -2\Delta_{\ell+1}+\widetilde{\fl{P}}^{(\ell)}_{u\pi_u(t_3)\mid \ca{Y}}-\widetilde{\fl{P}}^{(\ell)}_{u\pi_u(t_4)\mid \ca{Y}} = -2\Delta_{\ell+1}+\widetilde{\fl{P}}^{(\ell)}_{u\widetilde{\pi}_u(s)\mid \ca{Y}}-\widetilde{\fl{P}}^{(\ell)}_{u\widetilde{\pi}_u(t_5)\mid \ca{Y}} \ge -2\Delta_{\ell+1}.
\end{align*}
Therefore, in this case, we get that $\widetilde{\fl{P}}^{(\ell)}_{u\widetilde{\pi}_u(s)}-\widetilde{\fl{P}}^{(\ell)}_{u\pi_u(s)} \ge -4\Delta_{\ell+1}$. Again, we will have that
\begin{align*}
    &\widetilde{\fl{P}}^{(\ell)}_{u{\pi}_u(1)\mid \ca{Y}}-\widetilde{\fl{P}}^{(\ell)}_{u\widetilde{\pi}_u(1)\mid \ca{Y}} =  \widetilde{\fl{P}}^{(\ell)}_{u{\pi}_u(1)\mid \ca{Y}}-\fl{P}^{(\ell)}_{u{\pi}_u(1)\mid \ca{Y}}\\
    &+\fl{P}^{(\ell)}_{u{\pi}_u(1)\mid \ca{Y}}-\fl{P}^{(\ell)}_{u\widetilde{\pi}_u(1)\mid \ca{Y}}+\fl{P}^{(\ell)}_{u\widetilde{\pi}_u(1)\mid \ca{Y}}-\widetilde{\fl{P}}^{(\ell)}_{u\widetilde{\pi}_u(1)\mid \ca{Y}} \ge -2\Delta_{\ell+1}.
\end{align*}
By combining the above arguments, we have that
\begin{align*}
    \widetilde{\fl{P}}^{(\ell)}_{u{\pi}_u(1)\mid \ca{Y}}-\widetilde{\fl{P}}^{(\ell)}_{u\pi_u(s)\mid \ca{Y}} 
    \ge \widetilde{\fl{P}}^{(\ell)}_{u\widetilde{\pi}_u(1)\mid \ca{Y}}-\widetilde{\fl{P}}^{(\ell)}_{u\widetilde{\pi}_u(s)\mid \ca{Y}}-6\Delta_{\ell+1}.
\end{align*}
By a similar set of arguments involving triangle inequalities, we will also have 
\begin{align*}
    \widetilde{\fl{P}}^{(\ell)}_{u{\pi}_u(1)\mid \ca{Y}}-\widetilde{\fl{P}}^{(\ell)}_{u\pi_u(s)\mid \ca{Y}} 
    \le \widetilde{\fl{P}}^{(\ell)}_{u\widetilde{\pi}_u(1)\mid \ca{Y}}-\widetilde{\fl{P}}^{(\ell)}_{u\widetilde{\pi}_u(s)\mid \ca{Y}}+6\Delta_{\ell+1}.
\end{align*}
This completes the proof of the lemma.
\end{proof}

We now show the following lemma characterizing the union of good items for a connected component of the graph $\ca{G}^{(\ell,i)}$.
Recall that $\s{T}-\left|\ca{O}^{(\ell)}_{\ca{M}^{(\ell,i)}}\right|$ counts the number of rounds excluding the ones used up in \textit{exploit} component so far up to the $\ell^{\s{th}}$ phase.

\begin{lemma}\label{lem:interesting_gap3}
Condition on the events $\ca{E}_2^{(\ell)},\ca{E}_3^{(\ell)}$ being true. Consider a \textit{nice} subset of users $\ca{M}^{(\ell,i)}\in \ca{M}'^{(\ell)}$ and their corresponding set of active items $\ca{N}^{(\ell,i)}$ for which guarantees in eq. \ref{eq:submatrix_estimation} holds.
Consider a subset of users  $\ca{G}\in\ca{M}^{(\ell,i)}$ forming a connected component. Fix any set $\ca{Y}= \bigcup_{u\in \ca{G}}\ca{T}_g^{(\ell)}\setminus \ca{J}$ for some $\ca{J}$ such that for every user $u\in \ca{G}$, we have $\widetilde{\pi}_u(s')\mid\ca{N}^{(\ell,i)}\equiv \widetilde{\pi}_u(s)\mid\ca{Y}$ for $s'=\s{T}\s{B}^{-1}-\left|\ca{O}^{(\ell)}_{\ca{M}^{(\ell,i)}}\right|$ and some common index $s$. In that case, we must have 
\begin{align*}
    \max_{v\in \ca{G}}\Big(\max_{x,y\in \ca{Y}}\widetilde{\fl{P}}_{vx}-\widetilde{\fl{P}}_{vy}\Big) \le \max_{u\in \ca{G}}\Big(\widetilde{\fl{P}}_{u\widetilde{\pi}_u(1)\mid \ca{Y}}-\widetilde{\fl{P}}_{u\widetilde{\pi}_u(s')\mid \ca{N}^{(\ell,i)}}\Big)+24\s{C}\Delta_{\ell+1} 
\end{align*}
\end{lemma}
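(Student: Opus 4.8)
The plan is to fix an arbitrary user $v\in\ca{G}$, establish the claimed inequality with the right-hand side maximum specialized to $u=v$, and then take the maximum over $v\in\ca{G}$; the gap between the $\approx 10\s{C}\Delta_{\ell+1}$ slack I expect to need and the $24\s{C}\Delta_{\ell+1}$ appearing in the statement leaves ample room for loose constant tracking. Throughout I condition on $\ca{E}_2^{(\ell)}$ and $\ca{E}_3^{(\ell)}$, so that $\lr{\widetilde{\fl{P}}^{(\ell)}_{\ca{M}^{(\ell,i)},\ca{N}^{(\ell,i)}}-\fl{P}_{\ca{M}^{(\ell,i)},\ca{N}^{(\ell,i)}}}_{\infty}\le\Delta_{\ell+1}$, and, by Lemma \ref{lem:interesting_clique}, $\ca{G}$ is a \textit{nice} subset, i.e.\ a union of at most $\s{C}$ true clusters whose vertices each form a clique in $\ca{G}^{(\ell,i)}$.

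The first step is a diameter estimate for $\ca{G}$. Since $\ca{G}$ is obtained from at most $\s{C}$ cliques (one per cluster) glued along edges into a connected graph, for any $u,u'\in\ca{G}$ there is a walk $u=w_0,w_1,\dots,w_k=u'$ with $k\le 2\s{C}$ in which each consecutive pair lies in a common true cluster or is joined by an edge of $\ca{G}^{(\ell,i)}$. In the first case $\big|\widetilde{\fl{P}}^{(\ell)}_{w_jx}-\widetilde{\fl{P}}^{(\ell)}_{w_{j+1}x}\big|\le 2\Delta_{\ell+1}$ for all $x\in\ca{N}^{(\ell,i)}$, obtained by adding and subtracting $\fl{P}_{w_jx}=\fl{P}_{w_{j+1}x}$ and invoking $\ca{E}_3^{(\ell)}$; in the second case the same bound is the defining property of an edge. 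Chaining along the walk gives $\big|\widetilde{\fl{P}}^{(\ell)}_{ux}-\widetilde{\fl{P}}^{(\ell)}_{u'x}\big|\le 4\s{C}\Delta_{\ell+1}$ for all $x\in\ca{N}^{(\ell,i)}$ and all $u,u'\in\ca{G}$. Because sorting a vector in descending order is $1$-Lipschitz in $\ell_\infty$, this further yields, for every position $k$, $\big|\widetilde{\fl{P}}^{(\ell)}_{u\widetilde{\pi}_u(k)\mid\ca{N}^{(\ell,i)}}-\widetilde{\fl{P}}^{(\ell)}_{u'\widetilde{\pi}_{u'}(k)\mid\ca{N}^{(\ell,i)}}\big|\le 4\s{C}\Delta_{\ell+1}$.

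The second step lower bounds $\widetilde{\fl{P}}^{(\ell)}_{vy}$ for an arbitrary $y\in\ca{Y}$. Since $\ca{Y}\subseteq\bigcup_{g\in\ca{G}}\ca{T}_g^{(\ell)}$, we have $y\in\ca{T}_g^{(\ell)}$ for some $g\in\ca{G}$, which by the definition of $\ca{T}_g^{(\ell)}$ (with $s':=\s{T}\s{B}^{-1}-\left|\ca{O}^{(\ell)}_{\ca{M}^{(\ell,i)}}\right|$) means $\widetilde{\fl{P}}^{(\ell)}_{gy}\ge\widetilde{\fl{P}}^{(\ell)}_{g\widetilde{\pi}_g(s')\mid\ca{N}^{(\ell,i)}}-2\Delta_{\ell+1}$. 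Applying the two estimates from the first step ($\widetilde{\fl{P}}^{(\ell)}_{vy}\ge\widetilde{\fl{P}}^{(\ell)}_{gy}-4\s{C}\Delta_{\ell+1}$ and $\widetilde{\fl{P}}^{(\ell)}_{g\widetilde{\pi}_g(s')\mid\ca{N}^{(\ell,i)}}\ge\widetilde{\fl{P}}^{(\ell)}_{v\widetilde{\pi}_v(s')\mid\ca{N}^{(\ell,i)}}-4\s{C}\Delta_{\ell+1}$) gives
\begin{align*}
\widetilde{\fl{P}}^{(\ell)}_{vy}\ \ge\ \widetilde{\fl{P}}^{(\ell)}_{v\widetilde{\pi}_v(s')\mid\ca{N}^{(\ell,i)}}-\big(8\s{C}+2\big)\Delta_{\ell+1}\ \ge\ \widetilde{\fl{P}}^{(\ell)}_{v\widetilde{\pi}_v(s')\mid\ca{N}^{(\ell,i)}}-10\s{C}\Delta_{\ell+1}.
\end{align*}
Taking $y=\widetilde{\pi}_v(|\ca{Y}|)\mid\ca{Y}$, the lowest-estimated item of $\ca{Y}$ for $v$, and using $\max_{x,y\in\ca{Y}}\big(\widetilde{\fl{P}}^{(\ell)}_{vx}-\widetilde{\fl{P}}^{(\ell)}_{vy}\big)=\widetilde{\fl{P}}^{(\ell)}_{v\widetilde{\pi}_v(1)\mid\ca{Y}}-\widetilde{\fl{P}}^{(\ell)}_{v\widetilde{\pi}_v(|\ca{Y}|)\mid\ca{Y}}$, I obtain $\max_{x,y\in\ca{Y}}\big(\widetilde{\fl{P}}^{(\ell)}_{vx}-\widetilde{\fl{P}}^{(\ell)}_{vy}\big)\le\widetilde{\fl{P}}^{(\ell)}_{v\widetilde{\pi}_v(1)\mid\ca{Y}}-\widetilde{\fl{P}}^{(\ell)}_{v\widetilde{\pi}_v(s')\mid\ca{N}^{(\ell,i)}}+10\s{C}\Delta_{\ell+1}$ (one may additionally invoke the hypothesis $\widetilde{\pi}_v(s)\mid\ca{Y}\equiv\widetilde{\pi}_v(s')\mid\ca{N}^{(\ell,i)}$ to rewrite the subtracted term over $\ca{Y}$, but it is not needed for this direction). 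Bounding the right side by its maximum over $u\in\ca{G}$ and then maximizing the left side over $v\in\ca{G}$ yields the claim, since $10\s{C}\Delta_{\ell+1}\le 24\s{C}\Delta_{\ell+1}$.

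The only real obstacle I anticipate is the bookkeeping in the diameter step: one must ensure the diameter of $\ca{G}$ is genuinely $O(\s{C})$ — this is exactly where Lemma \ref{lem:interesting_clique} (niceness plus the clique structure inside each cluster) is indispensable, since a general connected graph on the same vertices could have diameter of order $|\ca{G}|$ and destroy the bound — and one must apply the $1$-Lipschitzness of sorting over the index set $\ca{N}^{(\ell,i)}$ rather than $\ca{Y}$, so that it lines up with the definition of $\ca{T}_g^{(\ell)}$. Everything else reduces to triangle inequalities and is routine.
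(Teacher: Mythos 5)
Your proof is correct, and it takes a genuinely cleaner route than the paper's. The paper proves this lemma by splitting into cases according to whether the minimizing item $y\in\ca{Y}$ lies in $\ca{T}_v^{(\ell)}$ itself or only in $\ca{T}_u^{(\ell)}$ for some other user $u$ on a path to $v$, and then chains a long sequence of triangle inequalities that track specific permutation positions (in particular it uses the hypothesis on $\ca{J}$ to identify $\widetilde{\pi}_u(s)\mid\ca{Y}$ with $\widetilde{\pi}_u(s')\mid\ca{N}^{(\ell,i)}$, arriving at a slack of $12\s{L}\Delta_{\ell+1}$ with $\s{L}\le 2\s{C}$). Your argument replaces all of this with two uniform facts: the $O(\s{C})$ diameter bound for the connected component (which the paper also uses, via the clique structure of Lemma \ref{lem:interesting_clique}, to get shortest paths of length at most $2\s{C}$), and the observation that the $k$-th order statistic is $1$-Lipschitz in $\ell_\infty$, which lets you compare $\widetilde{\fl{P}}_{g\widetilde{\pi}_g(s')\mid\ca{N}^{(\ell,i)}}$ with $\widetilde{\fl{P}}_{v\widetilde{\pi}_v(s')\mid\ca{N}^{(\ell,i)}}$ in one step even though the rank-$s'$ items differ across users. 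This handles every $y\in\ca{Y}$ identically, dispenses with the hypothesis on $\ca{J}$ entirely, and yields the sharper constant $(8\s{C}+2)\Delta_{\ell+1}\le 10\s{C}\Delta_{\ell+1}$ in place of $24\s{C}\Delta_{\ell+1}$. The only spot worth tightening is the diameter step, which you state somewhat informally ("glued along edges"): the clean justification, as in Corollary \ref{coro:interesting5} of the paper, is that a shortest path visits each of the at most $\s{C}$ cliques at most twice (otherwise it could be shortcut through the clique), so its length is at most $2\s{C}-1$, and your per-edge bound of $2\Delta_{\ell+1}$ then chains to $4\s{C}\Delta_{\ell+1}$ exactly as you claim.
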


\begin{proof}
Let us fix a user $v\in \ca{G}$. We have $\max_{x,y\in \ca{Y}}\widetilde{\fl{P}}_{vx}-\widetilde{\fl{P}}_{vy} = \widetilde{\fl{P}}_{v\widetilde{\pi}_v(1)\mid \ca{Y}}-\min_{y\in \ca{Y}}\widetilde{\fl{P}}_{vy}$. Now, there are two possibilities for any $y\in \ca{Y}$: first, suppose that $y\in \ca{T}_v^{(\ell)}$. In that case, we have 
\begin{align*}
        \widetilde{\fl{P}}_{v\widetilde{\pi}_v(1)\mid \ca{Y}}-\widetilde{\fl{P}}_{vy} = \widetilde{\fl{P}}_{v\widetilde{\pi}_v(1)\mid \ca{Y}}-\widetilde{\fl{P}}_{v\widetilde{\pi}_v(s)\mid \ca{Y}}+\widetilde{\fl{P}}_{v\widetilde{\pi}_v(s)\mid \ca{Y}}-\widetilde{\fl{P}}_{v\widetilde{\pi}_v(s')\mid \ca{T}_v^{(\ell)}}+\widetilde{\fl{P}}_{v\widetilde{\pi}_v(s')\mid \ca{T}_v^{(\ell)}}-\widetilde{\fl{P}}_{vy}
\end{align*}
Recall that the set $\ca{T}_v^{(\ell)}$ was constructed as 
\begin{align}
    \ca{T}^{(\ell)}_v \equiv \{j \in \ca{N}^{(\ell,i)} \mid \widetilde{\fl{P}}_{vj} \ge \widetilde{\fl{P}}_{v\widetilde{\pi}_v(s')\mid \ca{N}^{(\ell,i)}}- 2\Delta_{\ell+1}\} \text{ where }s'=\s{T}\s{B}^{-1}-\left|\ca{O}^{(\ell)}_{\ca{M}^{(\ell,i)}}\right|
\end{align} 
Since $y\in \ca{T}_v^{(\ell)}$, we can bound $\widetilde{\fl{P}}_{v\widetilde{\pi}_v(s')\mid \ca{T}_v^{(\ell)}}-\widetilde{\fl{P}}_{vy} \le 2\Delta_{\ell+1}$. Also, from the construction of $\ca{T}_v^{(\ell)}$, $\{\widetilde{\pi}_v(r)\mid \ca{N}^{(\ell,i)}\}_{r=1}^{s}$ is present in the set $\ca{T}_v^{(\ell)}$. Also, since $\ca{T}_u^{(\ell)}$ is just a subset of $\ca{N}^{(\ell,i)}$, the positions of the items $\{\widetilde{\pi}_v(r)\mid \ca{N}^{(\ell,i)}\}_{r=1}^{s}$ do not change in the permutation corresponding to the items in $\ca{T}_v^{(\ell)}$ sorted by estimated reward for $v$ in decreasing order 
i.e. $\widetilde{\pi}_v(r)\mid \ca{N}^{(\ell,i)}=\widetilde{\pi}_v(r)\mid \ca{T}_v^{(\ell)}$ for any $r\in [s]$. Hence $\widetilde{\fl{P}}_{v\widetilde{\pi}_v(s)\mid\ca{Y}}-\widetilde{\fl{P}}_{v\widetilde{\pi}_v(s')\mid \ca{T}_v^{(\ell)}}=0$. Hence, by combining the above, we have 
\begin{align*}
    \widetilde{\fl{P}}_{v\widetilde{\pi}_v(1)\mid \ca{Y}}-\widetilde{\fl{P}}_{vy} \le \widetilde{\fl{P}}_{v\widetilde{\pi}_v(1)\mid \ca{Y}}-\widetilde{\fl{P}}_{v\widetilde{\pi}_v(s)\mid \ca{Y}}+2\Delta_{\ell+1}.
\end{align*}
Next, consider the case when $y\not\in \ca{T}_v^{(\ell)}$. Hence, there must exist an user $u$ such that $y\in \ca{T}_u^{(\ell)}\setminus \ca{T}_v^{(\ell)}$ and $u$ is connected to $v$ via a path of length $\s{L}\le 2\s{C}-1$ (since there are $\s{C}$ clusters and users in the same cluster form a clique as proved in Lemma \ref{lem:interesting_clique}).
In that case, we have the following decomposition:
\begin{align*}
   &\widetilde{\fl{P}}_{v\widetilde{\pi}_v(1)\mid \ca{Y}}-\widetilde{\fl{P}}_{vy} = \widetilde{\fl{P}}_{v\widetilde{\pi}_v(1)\mid \ca{Y}}-\widetilde{\fl{P}}_{v\widetilde{\pi}_u(1)\mid \ca{Y}}+\widetilde{\fl{P}}_{v\widetilde{\pi}_u(1)\mid \ca{Y}}-\widetilde{\fl{P}}_{v\widetilde{\pi}_u(s)\mid \ca{Y}}+\widetilde{\fl{P}}_{v\widetilde{\pi}_u(s)\mid \ca{Y}}-\widetilde{\fl{P}}_{u\widetilde{\pi}_u(s)\mid \ca{Y}} \\
   &+\widetilde{\fl{P}}_{u\widetilde{\pi}_u(s)\mid \ca{Y}} 
   -\widetilde{\fl{P}}_{u\widetilde{\pi}_u(s')\mid \ca{T}_u^{(\ell)}}+\widetilde{\fl{P}}_{u\widetilde{\pi}_u(s')\mid \ca{T}_u^{(\ell)}}-\widetilde{\fl{P}}_{uy}+\widetilde{\fl{P}}_{uy}-\widetilde{\fl{P}}_{vy}
\end{align*}
Now, let us consider the terms pairwise. Due to our construction of the graph $\ca{G}^{(\ell,i)}$, for two users $u,v$ connected via a path of length $\s{L}$, we must have $\left|\widetilde{\fl{P}}^{(\ell)}_{ux}-\widetilde{\fl{P}}^{(\ell)}_{vx}\right|\le 2\s{L}\Delta_{\ell+1}$.
Note that 
\begin{align*}
    &\widetilde{\fl{P}}^{(\ell)}_{v\widetilde{\pi}_v(1)\mid \ca{Y}}-\widetilde{\fl{P}}^{(\ell)}_{v\widetilde{\pi}_u(1)\mid \ca{Y}} =  \widetilde{\fl{P}}^{(\ell)}_{v\widetilde{\pi}_v(1)\mid \ca{Y}}-\fl{P}^{(\ell)}_{u
    \widetilde{\pi}_v(1)\mid \ca{Y}}+\widetilde{\fl{P}}^{(\ell)}_{u\widetilde{\pi}_v(1)\mid \ca{Y}}\\
    &-\widetilde{\fl{P}}^{(\ell)}_{u\widetilde{\pi}_u(1)\mid \ca{Y}}+\fl{P}^{(\ell)}_{u\widetilde{\pi}_u(1)\mid \ca{Y}}-\widetilde{\fl{P}}^{(\ell)}_{v\widetilde{\pi}_u(1)\mid \ca{Y}} \le 4\s{L}\Delta_{\ell+1}
\end{align*}
where we used that $\widetilde{\fl{P}}^{(\ell)}_{u\widetilde{\pi}_v(1)\mid \ca{Y}}-\widetilde{\fl{P}}^{(\ell)}_{u\widetilde{\pi}_u(1)\mid \ca{Y}}\le 0$. Next, we have $\widetilde{\fl{P}}_{v\widetilde{\pi}_u(1)\mid \ca{Y}}-\widetilde{\fl{P}}_{v\widetilde{\pi}_u(s)\mid \ca{Y}}\le \widetilde{\fl{P}}_{u\widetilde{\pi}_u(1)\mid \ca{Y}}-\widetilde{\fl{P}}_{u\widetilde{\pi}_u(s)\mid \ca{Y}}+4\s{L}\Delta_{\ell+1}$. Furthermore, again we have $\widetilde{\pi}_u(r)\mid \ca{N}^{(\ell,i)}=\widetilde{\pi}_u(r)\mid \ca{T}_u^{(\ell)}$ for any $r\in [s]$ and therefore $\widetilde{\fl{P}}_{u\widetilde{\pi}_u(s)\mid\ca{Y}}-\widetilde{\fl{P}}_{u\widetilde{\pi}_u(s')\mid \ca{T}_v^{(\ell)}}=0$. Finally, $\widetilde{\fl{P}}_{u\widetilde{\pi}_u(s')\mid \ca{T}_u^{(\ell)}}-\widetilde{\fl{P}}_{uy}+\widetilde{\fl{P}}_{uy}-\widetilde{\fl{P}}_{vy}\le  4\s{L}\Delta_{\ell+1}$ (since $\widetilde{\fl{P}}_{u\widetilde{\pi}_u(s')\mid \ca{T}_u^{(\ell)}}-\widetilde{\fl{P}}_{uy} \le 2\Delta_{\ell+1}$). Hence by combining, we have that 
\begin{align*}
    \widetilde{\fl{P}}_{v\widetilde{\pi}_v(1)\mid \ca{Y}}-\widetilde{\fl{P}}_{vy} \le \widetilde{\fl{P}}_{u\widetilde{\pi}_u(1)\mid \ca{Y}}-\widetilde{\fl{P}}_{u\widetilde{\pi}_u(s)\mid \ca{Y}}+12\s{L}\Delta_{\ell+1}.
\end{align*}
Hence, we complete the proof of the lemma (by substituting $\s{L}\le2\s{C}$).
\end{proof}

\begin{lemma}\label{lem:interesting_gap2}
Condition on the events $\ca{E}_2^{(\ell)},\ca{E}_3^{(\ell)}$ being true. Consider a \textit{nice} subset of users $\ca{M}^{(\ell,i)}\in \ca{M}'^{(\ell)}$ and their corresponding set of active items $\ca{N}^{(\ell,i)}$ for which guarantees in eq. \ref{eq:submatrix_estimation} holds.
Consider two users $u,v\in \ca{M}^{(\ell,i)}$ having an edge i.e. $\left|\widetilde{\fl{P}}^{(\ell)}_{ux}-\widetilde{\fl{P}}^{(\ell)}_{vx}\right|\le 2\Delta_{\ell+1}$ for all $x\in \ca{N}^{(\ell,i)}$. In that case, for any subset $\ca{Y}\subseteq \ca{N}^{(\ell,i)}$ and any $s\in [\left|\ca{Y}\right|]$, for every user $u\in \ca{M}^{(\ell,i)}$, we must have 
\begin{align*}
    &\widetilde{\fl{P}}_{v\widetilde{\pi}_v(1)\mid \ca{Y}}-\widetilde{\fl{P}}_{v\widetilde{\pi}_v(s)\mid \ca{Y}}-12\Delta_{\ell+1}\le \widetilde{\fl{P}}_{u\widetilde{\pi}_u(1)\mid \ca{Y}}-\widetilde{\fl{P}}_{u\widetilde{\pi}_u(s)\mid \ca{Y}} \le \widetilde{\fl{P}}_{v\widetilde{\pi}_v(1)\mid \ca{Y}}-\widetilde{\fl{P}}_{v\widetilde{\pi}_v(s)\mid \ca{Y}}+12\Delta_{\ell+1} 
\end{align*}
\end{lemma}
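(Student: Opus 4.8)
The plan is to transfer the gap bound between the true permutations (Lemma~\ref{lem:interesting_gap}) across an edge of the similarity graph. The key observation is that an edge between $u$ and $v$ means $\widetilde{\fl{P}}^{(\ell)}$ agrees at both vertices entry-wise up to $2\Delta_{\ell+1}$ on all of $\ca{N}^{(\ell,i)}$, hence on any subset $\ca{Y}$. So the quantities $\widetilde{\fl{P}}_{u\widetilde{\pi}_u(1)\mid\ca{Y}}-\widetilde{\fl{P}}_{u\widetilde{\pi}_u(s)\mid\ca{Y}}$ and $\widetilde{\fl{P}}_{v\widetilde{\pi}_v(1)\mid\ca{Y}}-\widetilde{\fl{P}}_{v\widetilde{\pi}_v(s)\mid\ca{Y}}$ are both ``estimated top-minus-$s$-th gaps'' computed from two estimates that are $2\Delta_{\ell+1}$-close.

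First I would bound the estimated gap for $v$ below by the true gap: applying Lemma~\ref{lem:interesting_gap} with the roles set so that $\widetilde{\fl{P}}^{(\ell)}$ at $v$ is within $\Delta_{\ell+1}$ of $\fl{P}$ at $v$ — but here the subtlety is that the $\ell_\infty$-guarantee of eq.~\eqref{eq:submatrix_estimation} holds for the whole nice subset $\ca{M}^{(\ell,i)}$, so it applies to both rows $u$ and $v$ simultaneously. Thus Lemma~\ref{lem:interesting_gap} gives, for each of $u$ and $v$ separately,
\begin{align*}
\fl{P}_{u\pi_u(1)\mid\ca{Y}}-\fl{P}_{u\pi_u(s)\mid\ca{Y}}-6\Delta_{\ell+1}\le \widetilde{\fl{P}}_{u\widetilde{\pi}_u(1)\mid\ca{Y}}-\widetilde{\fl{P}}_{u\widetilde{\pi}_u(s)\mid\ca{Y}}\le \fl{P}_{u\pi_u(1)\mid\ca{Y}}-\fl{P}_{u\pi_u(s)\mid\ca{Y}}+6\Delta_{\ell+1},
\end{align*}
and similarly for $v$. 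Since $\fl{P}_u=\fl{P}_v$ would be the clean case, but in general $u,v$ need not be in the same cluster, I cannot directly equate the true gaps. Instead I would route entirely through the estimates: write $\widetilde{\fl{P}}_{u\widetilde{\pi}_u(1)\mid\ca{Y}}-\widetilde{\fl{P}}_{u\widetilde{\pi}_u(s)\mid\ca{Y}}$ and compare it to $\widetilde{\fl{P}}_{v\widetilde{\pi}_u(1)\mid\ca{Y}}-\widetilde{\fl{P}}_{v\widetilde{\pi}_u(s)\mid\ca{Y}}$ using the edge condition ($2\Delta_{\ell+1}$ per term, $4\Delta_{\ell+1}$ total), and then compare the latter to $\widetilde{\fl{P}}_{v\widetilde{\pi}_v(1)\mid\ca{Y}}-\widetilde{\fl{P}}_{v\widetilde{\pi}_v(s)\mid\ca{Y}}$ using optimality of $\widetilde{\pi}_v$: since $\widetilde{\pi}_v$ sorts $v$'s estimated rewards, $\widetilde{\fl{P}}_{v\widetilde{\pi}_v(1)\mid\ca{Y}}\ge \widetilde{\fl{P}}_{v\widetilde{\pi}_u(1)\mid\ca{Y}}$ and $\widetilde{\fl{P}}_{v\widetilde{\pi}_v(s)\mid\ca{Y}}\ge \widetilde{\fl{P}}_{v\widetilde{\pi}_u(s)\mid\ca{Y}}$ — but these inequalities go the "wrong way" for the second term, so I need the same rearrangement trick as in Lemma~\ref{lem:interesting_gap}'s proof (tracking where items shift under the two permutations, producing a further $2\Delta_{\ell+1}$ or $4\Delta_{\ell+1}$ slack). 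Collecting: $4\Delta_{\ell+1}$ from the edge plus up to $\sim 8\Delta_{\ell+1}$ from the permutation-rearrangement bookkeeping gives the claimed $12\Delta_{\ell+1}$; the reverse direction is symmetric by swapping $u\leftrightarrow v$.

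The main obstacle is the permutation bookkeeping: $\widetilde{\pi}_u$ and $\widetilde{\pi}_v$ sort different vectors, so $\widetilde{\fl{P}}_{v\widetilde{\pi}_u(s)\mid\ca{Y}}$ need not be the $s$-th largest of $v$'s estimates, and one must argue — as in Lemma~\ref{lem:interesting_gap} — that any "leftward/rightward shift" of an item between the two orderings can be charged to an entry-wise estimation error. Concretely I would set $\ca{Y}'$ playing the role of $\ca{Y}$ there, treat $\widetilde{\fl{P}}^{(\ell)}$ at row $v$ with ordering $\widetilde{\pi}_u$ as a "noisy" ordering of $v$'s true-in-estimate values, and invoke exactly the case analysis ($\widetilde{\pi}_u(s)$ equals $\widetilde{\pi}_v(t_1)$ with $t_1\ge s$, or shifts left forcing a compensating item to shift right, etc.) to get $\widetilde{\fl{P}}_{v\widetilde{\pi}_v(s)\mid\ca{Y}}-\widetilde{\fl{P}}_{v\widetilde{\pi}_u(s)\mid\ca{Y}}\le 4\Delta_{\ell+1}$ using the edge bound $|\widetilde{\fl{P}}^{(\ell)}_{ux}-\widetilde{\fl{P}}^{(\ell)}_{vx}|\le 2\Delta_{\ell+1}$ in place of the matrix-completion bound. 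Everything else is triangle inequalities. I expect the final constant to come out at most $12\Delta_{\ell+1}$ with a little care; if not, one simply absorbs it into the constant, since later uses only need $O(\s{C}\Delta_{\ell+1})$-type slack anyway.
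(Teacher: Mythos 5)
Your proposal follows essentially the same route as the paper's proof: it chains the edge condition (costing $4\Delta_{\ell+1}$ to swap the row from $u$ to $v$ while keeping the ordering $\widetilde{\pi}_u$) with the same shift/rearrangement case analysis as in Lemma~\ref{lem:interesting_gap}, run with the edge bound in place of the matrix-completion bound, to change the ordering from $\widetilde{\pi}_u$ to $\widetilde{\pi}_v$ at a further cost of $O(\Delta_{\ell+1})$. Your hedge about the exact constant is also apt — the paper's own accounting of the two steps is loose by a similar additive $O(\Delta_{\ell+1})$, which is harmless since downstream uses only need $O(\s{C}\Delta_{\ell+1})$ slack.
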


\begin{proof}
We can decompose the term being studied in the following manner:
\begin{align*}
    &\widetilde{\fl{P}}^{(\ell)}_{v\widetilde{\pi}_u(1)\mid \ca{Y}}-\widetilde{\fl{P}}^{(\ell)}_{v\widetilde\pi_u(s)\mid \ca{Y}} \\
    &=\widetilde{\fl{P}}^{(\ell)}_{v\widetilde{\pi}_u(1)\mid \ca{Y}}-\widetilde{\fl{P}}^{(\ell)}_{v\widetilde{\pi}_v(1)\mid \ca{Y}}+\widetilde{\fl{P}}^{(\ell)}_{v\widetilde{\pi}_v(1)\mid \ca{Y}}-\widetilde{\fl{P}}^{(\ell)}_{v\widetilde{\pi}_v(s)\mid \ca{Y}}+\widetilde{\fl{P}}^{(\ell)}_{v\widetilde{\pi}_v(s)\mid \ca{Y}}-\widetilde{\fl{P}}^{(\ell)}_{v\widetilde{\pi}_u(s)\mid \ca{Y}}.
\end{align*}
Let us bound the quantity $\widetilde{\fl{P}}^{(\ell)}_{v\widetilde{\pi}_v(s)\mid \ca{Y}}-\widetilde{\fl{P}}^{(\ell)}_{v\widetilde{\pi}_u(s)\mid \ca{Y}}$.
In order to analyze this quantity, we will consider a few cases. In the first case, suppose that $\widetilde{\pi}_u(s)\mid \ca{Y} = \widetilde{\pi}_v(t_1)\mid \ca{Y}$ for $t_1 \ge s$. In that case, we will have that $\widetilde{\fl{P}}^{(\ell)}_{v\widetilde{\pi}_v(s)\mid \ca{Y}}-\widetilde{\fl{P}}^{(\ell)}_{v\widetilde{\pi}_u(s)\mid \ca{Y}} \ge 0$. In the second case, suppose $\widetilde{\pi}_u(s)\mid \ca{Y} = \widetilde{\pi}_v(t_2)\mid \ca{Y}$ for $t_2<s$ and $\widetilde{\pi}_u(t_3)\mid \ca{Y} = \widetilde{\pi}_v(s)\mid \ca{Y}$ for $t_3<s$. In that case, we have 
\begin{align*}
    &\widetilde{\fl{P}}^{(\ell)}_{v\widetilde{\pi}_v(s)\mid \ca{Y}}-\widetilde{\fl{P}}^{(\ell)}_{v\widetilde{\pi}_u(s)\mid \ca{Y}} \\
    &=\widetilde{\fl{P}}^{(\ell)}_{v\widetilde{\pi}_v(s)\mid \ca{Y}}-\widetilde{\fl{P}}^{(\ell)}_{u\widetilde{\pi}_v(s)\mid \ca{Y}}+\widetilde{\fl{P}}^{(\ell)}_{u\widetilde{\pi}_v(s)\mid \ca{Y}}-\widetilde{\fl{P}}^{(\ell)}_{u\widetilde{\pi}_u(s)\mid \ca{Y}}+\widetilde{\fl{P}}^{(\ell)}_{u\widetilde{\pi}_u(s)\mid \ca{Y}}-\widetilde{\fl{P}}^{(\ell)}_{v\widetilde{\pi}_u(s)\mid \ca{Y}}  \ge -4\Delta_{\ell+1}
\end{align*}
where we used the fact $\widetilde{\fl{P}}^{(\ell)}_{u\widetilde{\pi}_v(s)\mid \ca{Y}}-\widetilde{\fl{P}}^{(\ell)}_{u\widetilde{\pi}_u(s)\mid \ca{Y}} = \widetilde{\fl{P}}^{(\ell)}_{u\widetilde{\pi}_u(t_3)\mid \ca{Y}}-\widetilde{\fl{P}}^{(\ell)}_{u\widetilde{\pi}_u(s)\mid \ca{Y}} \ge 0$. In the final case, we assume that $\widetilde{\pi}_u(s)\mid \ca{Y} = \widetilde{\pi}_v(t_2)\mid \ca{Y}$ for $s>t_2$ and $\widetilde{\pi}_u(t_3)\mid \ca{Y} = \widetilde{\pi}_v(s) \mid \ca{Y}$ for $t_3>s$. This means that both the items $\widetilde{\pi}_u(s)\mid \ca{Y},\widetilde{\pi}_u(t_3)\mid \ca{Y}$ have been shifted to the left in the permutation $\widetilde{\pi}_v\mid \ca{Y}$. Hence, 
\begin{align*}
    &\widetilde{\fl{P}}^{(\ell)}_{v\widetilde{\pi}_v(s)\mid \ca{Y}}-\widetilde{\fl{P}}^{(\ell)}_{v\widetilde{\pi}_u(s)\mid \ca{Y}} \\
    &=\widetilde{\fl{P}}^{(\ell)}_{v\widetilde{\pi}_v(s)\mid \ca{Y}}-\widetilde{\fl{P}}_{u\widetilde{\pi}_v(s)\mid \ca{Y}}+\widetilde{\fl{P}}_{u\widetilde{\pi}_v(s)\mid \ca{Y}}-\widetilde{\fl{P}}_{u\widetilde{\pi}_u(s)\mid \ca{Y}}\\
    &+\widetilde{\fl{P}}_{u\widetilde{\pi}_u(s)\mid \ca{Y}}-\widetilde{\fl{P}}^{(\ell)}_{v\widetilde{\pi}_u(s)\mid \ca{Y}} 
    \ge -4\Delta_{\ell+1}+\widetilde{\fl{P}}_{u\widetilde{\pi}_v(s)\mid \ca{Y}}-\widetilde{\fl{P}}_{u\widetilde{\pi}_u(s)\mid \ca{Y}} \\
    &=-4\Delta_{\ell+1}+\widetilde{\fl{P}}_{u\widetilde{\pi}_u(t_3)\mid \ca{Y}}-\widetilde{\fl{P}}_{u\widetilde{\pi}_u(s)\mid \ca{Y}}.
\end{align*}
Hence there must exist an element $\widetilde{\pi}_u(t_4)\mid \ca{Y}$ such that $t_4<s$ and $\widetilde{\pi}_v(t_5)\mid \ca{Y}=\widetilde{\pi}_u(t_4)\mid \ca{Y}$ for $t_5>s$. In that case, we must have 
\begin{align*}
    &\widetilde{\fl{P}}_{u\widetilde{\pi}_u(t_3)\mid \ca{Y}}-\widetilde{\fl{P}}_{u\widetilde{\pi}_u(s)\mid \ca{Y}} \ge \widetilde{\fl{P}}_{u\widetilde{\pi}_u(t_3)\mid \ca{Y}}-\widetilde{\fl{P}}_{u\widetilde{\pi}_u(t_4)\mid \ca{Y}} \\
    &= \widetilde{\fl{P}}_{u\widetilde{\pi}_u(t_3)\mid \ca{Y}}-\widetilde{\fl{P}}^{(\ell)}_{v\widetilde{\pi}_u(t_3)\mid \ca{Y}}+\widetilde{\fl{P}}^{(\ell)}_{v\widetilde{\pi}_u(t_3)\mid \ca{Y}}-\widetilde{\fl{P}}^{(\ell)}_{v\widetilde{\pi}_v(t_4)\mid \ca{Y}}+\widetilde{\fl{P}}^{(\ell)}_{v\widetilde{\pi}_v(t_4)\mid \ca{Y}}- \widetilde{\fl{P}}_{u\widetilde{\pi}_u(t_4)\mid \ca{Y}} \\
    &\ge -4\Delta_{\ell+1}+\widetilde{\fl{P}}^{(\ell)}_{v\widetilde{\pi}_u(t_3)\mid \ca{Y}}-\widetilde{\fl{P}}^{(\ell)}_{v\widetilde{\pi}_v(t_4)\mid \ca{Y}} = -4\Delta_{\ell+1}+\widetilde{\fl{P}}^{(\ell)}_{v\widetilde{\pi}_v(s)\mid \ca{Y}}-\widetilde{\fl{P}}^{(\ell)}_{v\widetilde{\pi}_v(t_5)\mid \ca{Y}} \ge -4\Delta_{\ell+1}.
\end{align*}
Therefore, in this case, we get that $\widetilde{\fl{P}}^{(\ell)}_{v \widetilde{\pi}_v(s)}-\widetilde{\fl{P}}^{(\ell)}_{v\pi_u(s)} \ge -8\Delta_{\ell+1}$. Hence we get
\begin{align*}
    &\widetilde{\fl{P}}^{(\ell)}_{v\widetilde{\pi}_v(1)\mid \ca{Y}}-\widetilde{\fl{P}}^{(\ell)}_{v\widetilde{\pi}_u(s)\mid \ca{Y}}=\widetilde{\fl{P}}^{(\ell)}_{v\widetilde{\pi}_v(1)\mid \ca{Y}}-\widetilde{\fl{P}}^{(\ell)}_{v\widetilde{\pi}_v(s)\mid \ca{Y}} \\
    &+\widetilde{\fl{P}}^{(\ell)}_{v\widetilde{\pi}_v(s)\mid \ca{Y}}-\widetilde{\fl{P}}^{(\ell)}_{v\widetilde{\pi}_u(s)\mid \ca{Y}} \ge \widetilde{\fl{P}}^{(\ell)}_{v\widetilde{\pi}_v(1)\mid \ca{Y}}-\widetilde{\fl{P}}^{(\ell)}_{v\widetilde{\pi}_v(s)\mid \ca{Y}}-8\Delta_{\ell+1}. 
\end{align*}
Also, we will have that
\begin{align*}
    &\widetilde{\fl{P}}^{(\ell)}_{v\widetilde{\pi}_u(1)\mid \ca{Y}}-\widetilde{\fl{P}}^{(\ell)}_{v\widetilde{\pi}_v(1)\mid \ca{Y}} =  \widetilde{\fl{P}}^{(\ell)}_{v\widetilde{\pi}_u(1)\mid \ca{Y}}-\fl{P}^{(\ell)}_{u
    \widetilde{\pi}_u(1)\mid \ca{Y}}\\
    &+\widetilde{\fl{P}}^{(\ell)}_{u\widetilde{\pi}_u(1)\mid \ca{Y}}-\widetilde{\fl{P}}^{(\ell)}_{u\widetilde{\pi}_v(1)\mid \ca{Y}}+\fl{P}^{(\ell)}_{u\widetilde{\pi}_v(1)\mid \ca{Y}}-\widetilde{\fl{P}}^{(\ell)}_{v\widetilde{\pi}_v(1)\mid \ca{Y}} \ge -4\Delta_{\ell+1}.
\end{align*}
By combining the above arguments, we have that
\begin{align*}
    \widetilde{\fl{P}}^{(\ell)}_{u\widetilde{\pi}_u(1)\mid \ca{Y}}-\widetilde{\fl{P}}^{(\ell)}_{u\widetilde{\pi}_u(s)\mid \ca{Y}} 
    \ge \widetilde{\fl{P}}^{(\ell)}_{v\widetilde{\pi}_u(1)\mid \ca{Y}}-\widetilde{\fl{P}}^{(\ell)}_{v\widetilde{\pi}_u(s)\mid \ca{Y}}-4\Delta_{\ell+1} \ge \widetilde{\fl{P}}^{(\ell)}_{v\widetilde{\pi}_v(1)\mid \ca{Y}}-\widetilde{\fl{P}}^{(\ell)}_{v\widetilde{\pi}_v(s)\mid \ca{Y}}-12\Delta_{\ell+1}.
\end{align*}

By a similar set of arguments involving triangle inequalities, we will also have 
\begin{align*}
    \widetilde{\fl{P}}^{(\ell)}_{u\widetilde{\pi}_u(1)\mid \ca{Y}}-\widetilde{\fl{P}}^{(\ell)}_{u\widetilde{\pi}_u(s)\mid \ca{Y}} 
    \le \widetilde{\fl{P}}^{(\ell)}_{v\widetilde{\pi}_v(1)\mid \ca{Y}}-\widetilde{\fl{P}}^{(\ell)}_{v\widetilde{\pi}_v(s)\mid \ca{Y}}+12\Delta_{\ell+1}.
\end{align*}
This completes the proof of the lemma.
\end{proof}

Now, for any fixed subset $\ca{Y}\subseteq \ca{N}^{(\ell,i)}$, let us define the set $\ca{R}_u^{(\ell)}\mid \ca{Y}$ for user $u$ below:
\begin{align*}
    \ca{R}_u^{(\ell)}\mid \ca{Y} = \{j \in \ca{Y}\mid \widetilde{\fl{P}}_{uj} \ge \widetilde{\fl{P}}_{u\widetilde{\pi}_u(1)\mid \ca{Y}}-2\Delta_{\ell+1}\} 
\end{align*}
Hence, $\ca{R}_u^{(\ell)}\mid \ca{Y}$ 
corresponds to the set of items for user $u$ that is close to the item with the highest estimated reward for user $u$ restricted to the set $\ca{Y}$ at the end of the \textit{explore} component of phase $\ell$. 
\begin{lemma}\label{lem:interesting2}
Condition on the events $\ca{E}_2^{(\ell)},\ca{E}_3^{(\ell)}$ being true. Consider a \textit{nice} subset of users $\ca{M}^{(\ell,i)}\in \ca{M}'^{(\ell)}$ and their corresponding set of active items $\ca{N}^{(\ell,i)}$ for which guarantees in eq. \ref{eq:submatrix_estimation} holds. Fix any subset $\ca{Y}\subseteq \ca{N}^{(\ell,i)}$.
In that case, for every user $u\in \ca{M}^{(\ell,i)}$, the item with the highest reward $\pi_u(1)\mid \ca{Y}$ in the set $\ca{Y}$ must belong to the set $\ca{R}_u^{(\ell)}\mid \ca{Y}$. Moreover, $\max_{s,s'\in \ca{R}_u^{(\ell)}\mid \ca{Y}}\left|\fl{P}_{us}-\fl{P}_{us'}\right|\le 4\Delta_{\ell+1}$.
\end{lemma}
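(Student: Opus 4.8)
The plan is to derive both claims directly from the uniform estimation bound that holds on $\ca{E}_3^{(\ell)}$, namely $\lr{\widetilde{\fl{P}}^{(\ell)}_{\ca{M}^{(\ell,i)},\ca{N}^{(\ell,i)}}-\fl{P}_{\ca{M}^{(\ell,i)},\ca{N}^{(\ell,i)}}}_{\infty}\le \Delta_{\ell+1}$; since $\ca{Y}\subseteq \ca{N}^{(\ell,i)}$, this bound applies entrywise to every item in $\ca{Y}$ as well. I would fix a single user $u\in \ca{M}^{(\ell,i)}$ throughout, noting that the lemma is entirely per-user, so none of the clustering or graph machinery is needed.

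For the first claim, set $a=\pi_u(1)\mid \ca{Y}$ (the true-reward maximizer over $\ca{Y}$) and $b=\widetilde{\pi}_u(1)\mid \ca{Y}$ (the estimated-reward maximizer over $\ca{Y}$); the goal is to verify the membership condition $\widetilde{\fl{P}}_{ua}\ge \widetilde{\fl{P}}_{ub}-2\Delta_{\ell+1}$. I would write $\widetilde{\fl{P}}_{ua}-\widetilde{\fl{P}}_{ub}=(\widetilde{\fl{P}}_{ua}-\fl{P}_{ua})+(\fl{P}_{ua}-\fl{P}_{ub})+(\fl{P}_{ub}-\widetilde{\fl{P}}_{ub})$, bound the first and third terms below by $-\Delta_{\ell+1}$ using $\ca{E}_3^{(\ell)}$, and observe that the middle term is nonnegative because $a$ maximizes $\fl{P}_{u\cdot}$ over $\ca{Y}$. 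The sum is therefore at least $-2\Delta_{\ell+1}$, which gives $a\in \ca{R}_u^{(\ell)}\mid \ca{Y}$.

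For the second claim, take any $s,s'\in \ca{R}_u^{(\ell)}\mid \ca{Y}$. By the definition of $\ca{R}_u^{(\ell)}\mid \ca{Y}$ together with the fact that $\widetilde{\pi}_u(1)\mid\ca{Y}$ maximizes the estimated reward over $\ca{Y}$, both $\widetilde{\fl{P}}_{us}$ and $\widetilde{\fl{P}}_{us'}$ lie in the interval $[\widetilde{\fl{P}}_{u\widetilde{\pi}_u(1)\mid \ca{Y}}-2\Delta_{\ell+1},\ \widetilde{\fl{P}}_{u\widetilde{\pi}_u(1)\mid \ca{Y}}]$ of length $2\Delta_{\ell+1}$, so $|\widetilde{\fl{P}}_{us}-\widetilde{\fl{P}}_{us'}|\le 2\Delta_{\ell+1}$. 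A triangle inequality, replacing each estimate by its true value at a cost of $\Delta_{\ell+1}$, then yields $|\fl{P}_{us}-\fl{P}_{us'}|\le 4\Delta_{\ell+1}$. Both steps are one-line triangle-inequality computations, so there is no real obstacle; the only point that needs care is keeping track of the restriction to $\ca{Y}$, i.e. ensuring the argument invokes only the ordering of items within $\ca{Y}$ and the entrywise accuracy of $\widetilde{\fl{P}}$ on $\ca{Y}$, never any property of items lying outside $\ca{Y}$.
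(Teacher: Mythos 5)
Your proposal is correct and follows essentially the same route as the paper: the same three-term telescoping decomposition (using the entrywise $\Delta_{\ell+1}$ accuracy from $\ca{E}_3^{(\ell)}$ plus the optimality of $\pi_u(1)\mid\ca{Y}$ under $\fl{P}$ and of $\widetilde{\pi}_u(1)\mid\ca{Y}$ under $\widetilde{\fl{P}}$) for membership, and the same $2\Delta_{\ell+1}$ estimated-gap plus two $\Delta_{\ell+1}$ conversion errors for the $4\Delta_{\ell+1}$ bound. Your observation that the lemma is purely per-user and needs none of the clustering machinery is also consistent with how the paper argues it.
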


\begin{proof}
Let us fix a user $u\in \ca{M}^{(\ell,i)}$ with active set of arms $\ca{N}^{(\ell,i)}$.
Now, we will have
\begin{align*}
    &\widetilde{\fl{P}}_{u\widetilde{\pi}_u(1)\mid \ca{Y}}-\widetilde{\fl{P}}^{(\ell)}_{u\pi_u(1)\mid \ca{Y}} = \widetilde{\fl{P}}^{(\ell)}_{u\widetilde{\pi}_u(1)\mid \ca{Y}}\\
    &-\fl{P}_{u\widetilde{\pi}_u(1)\mid \ca{Y}}+\fl{P}_{u\widetilde{\pi}_u(1)\mid \ca{Y}}-\fl{P}_{u\pi_u(1)\mid \ca{Y}}+\fl{P}_{u\pi_u(1)\mid \ca{Y}}-\widetilde{\fl{P}}^{(\ell)}_{u\pi_u(1)\mid \ca{Y}} \le 2\Delta_{\ell+1}
\end{align*}
which implies that $\pi_u(1)\mid \ca{Y}\in \ca{R}_u^{(\ell)}\mid \ca{Y}$.
Here we used the fact that $\widetilde{\fl{P}}^{(\ell)}_{u\widetilde{\pi}_u(1)\mid \ca{Y}}-\fl{P}_{u\widetilde{\pi}_u(1)\mid \ca{Y}} \le \Delta_{\ell+1}$, $\fl{P}_{u\mid \ca{Y}}-\widetilde{\fl{P}}^{(\ell)}_{u\pi_u(1)\mid \ca{Y}} \le \Delta_{\ell+1}$ and $\fl{P}_{u\widetilde{\pi}_u(1)\mid \ca{Y}}-\fl{P}_{u\pi_u(1)\mid \ca{Y}}\le 0$.
Next, notice that for any $s,s'\in \ca{R}_u^{(\ell)}\mid \ca{Y}$
\begin{align*}
    \fl{P}_{us}-\fl{P}_{us'} = \fl{P}_{us}-\widetilde{\fl{P}}^{(\ell)}_{us}+\widetilde{\fl{P}}^{(\ell)}_{us}-\widetilde{\fl{P}}^{(\ell)}_{ut_1} 
    +\widetilde{\fl{P}}^{(\ell)}_{ut_1}-\widetilde{\fl{P}}^{(\ell)}_{us'}+\widetilde{\fl{P}}^{(\ell)}_{us'}-\fl{P}_{us'} \le 4\Delta_{\ell+1}.
\end{align*}

\end{proof}

\begin{lemma}\label{lem:interesting4}
Condition on the events $\ca{E}_2^{(\ell)},\ca{E}_3^{(\ell)}$ being true. Consider a \textit{nice} subset of users $\ca{M}^{(\ell,i)}\in \ca{M}'^{(\ell)}$ and their corresponding set of active items $\ca{N}^{(\ell,i)}$ for which guarantees in eq. \ref{eq:submatrix_estimation} holds. Fix any subset $\ca{Y}\subseteq \ca{N}^{(\ell,i)}$.
Consider two users $u,v\in \ca{M}^{(\ell,i)}$ having an edge in the graph $\ca{G}^{(\ell,i)}$. Conditioned on the events $\ca{E}_2^{(\ell)},\ca{E}_3^{(\ell)}$, we must have 
\begin{align*}
&\max_{x\in \ca{R}^{(\ell)}_u\mid \ca{Y},y\in \ca{R}^{(\ell)}_v\mid \ca{Y}}\left|\fl{P}_{ux}-\fl{P}_{uy}\right|\le 16\Delta_{\ell+1} 
\text{ and }
\max_{x\in \ca{R}^{(\ell)}_u\mid \ca{Y},y\in \ca{R}^{(\ell)}_v\mid \ca{Y}}\left|\fl{P}_{vx}-\fl{P}_{vy}\right|\le 16\Delta_{\ell+1}
\end{align*}
\end{lemma}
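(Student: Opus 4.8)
The plan is to reduce the statement to three elementary facts, all valid on the event $\ca{E}_2^{(\ell)}\cap\ca{E}_3^{(\ell)}$, and then chain triangle inequalities. The three facts are: (i) the entry-wise estimation guarantee of eq.~\ref{eq:submatrix_estimation}, i.e.\ $|\widetilde{\fl{P}}^{(\ell)}_{az}-\fl{P}_{az}|\le\Delta_{\ell+1}$ for every $a\in\ca{M}^{(\ell,i)}$, $z\in\ca{N}^{(\ell,i)}$; (ii) the edge condition between $u$ and $v$, $|\widetilde{\fl{P}}^{(\ell)}_{uz}-\widetilde{\fl{P}}^{(\ell)}_{vz}|\le2\Delta_{\ell+1}$ for every $z\in\ca{N}^{(\ell,i)}$; and (iii) Lemma~\ref{lem:interesting2} applied with the fixed set $\ca{Y}$, which places the true maximizer $\pi_u(1)\mid\ca{Y}$ inside $\ca{R}^{(\ell)}_u\mid\ca{Y}$ and bounds $\max_{s,s'\in\ca{R}^{(\ell)}_u\mid\ca{Y}}|\fl{P}_{us}-\fl{P}_{us'}|\le4\Delta_{\ell+1}$ (and symmetrically for $v$). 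Since $\ca{Y}\subseteq\ca{N}^{(\ell,i)}$ and $u,v\in\ca{M}^{(\ell,i)}$, all three facts apply to every item and user encountered below.

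First I would fix $x\in\ca{R}^{(\ell)}_u\mid\ca{Y}$, $y\in\ca{R}^{(\ell)}_v\mid\ca{Y}$, and write $w:=\pi_u(1)\mid\ca{Y}$ for the true reward-maximizer of user $u$ over $\ca{Y}$. Because $w,x,y\in\ca{Y}$, the bounds $\fl{P}_{ux}\le\fl{P}_{uw}$ and $\fl{P}_{uy}\le\fl{P}_{uw}$ are immediate. By fact~(iii), $w\in\ca{R}^{(\ell)}_u\mid\ca{Y}$, whence $\fl{P}_{ux}\ge\fl{P}_{uw}-4\Delta_{\ell+1}$ (equivalently, directly: $x\in\ca{R}^{(\ell)}_u\mid\ca{Y}$ gives $\widetilde{\fl{P}}_{ux}\ge\widetilde{\fl{P}}_{u\widetilde{\pi}_u(1)\mid\ca{Y}}-2\Delta_{\ell+1}\ge\widetilde{\fl{P}}_{uw}-2\Delta_{\ell+1}$, and two more applications of (i) finish it). So $\fl{P}_{ux}$ is confined to $[\fl{P}_{uw}-4\Delta_{\ell+1},\,\fl{P}_{uw}]$.

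Next I would sandwich $\fl{P}_{uy}$ the same way; only the lower bound requires work. Since $y\in\ca{R}^{(\ell)}_v\mid\ca{Y}$, $\widetilde{\fl{P}}_{vy}\ge\widetilde{\fl{P}}_{v\widetilde{\pi}_v(1)\mid\ca{Y}}-2\Delta_{\ell+1}\ge\widetilde{\fl{P}}_{vw}-2\Delta_{\ell+1}$, the last step because $\widetilde{\pi}_v(1)\mid\ca{Y}$ maximizes $\widetilde{\fl{P}}_{v\cdot}$ over $\ca{Y}\ni w$; applying (ii) twice yields $\widetilde{\fl{P}}_{vw}\ge\widetilde{\fl{P}}_{uw}-2\Delta_{\ell+1}$ and $\widetilde{\fl{P}}_{uy}\ge\widetilde{\fl{P}}_{vy}-2\Delta_{\ell+1}$, so $\widetilde{\fl{P}}_{uy}\ge\widetilde{\fl{P}}_{uw}-6\Delta_{\ell+1}$; a final use of (i) gives $\fl{P}_{uy}\ge\fl{P}_{uw}-8\Delta_{\ell+1}$. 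Combining the two sandwiches, $|\fl{P}_{ux}-\fl{P}_{uy}|\le8\Delta_{\ell+1}\le16\Delta_{\ell+1}$, which is the first claim. For the second claim I would simply rerun the argument with the roles of $(u,x)$ and $(v,y)$ interchanged—this is legitimate because the edge condition (ii) is symmetric in $u$ and $v$—obtaining $|\fl{P}_{vx}-\fl{P}_{vy}|\le8\Delta_{\ell+1}\le16\Delta_{\ell+1}$.

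There is no genuine obstacle here: the entire argument is bookkeeping with the triangle inequality. The only thing to watch is keeping the true ordering $\pi$ and the estimated ordering $\widetilde{\pi}$ straight (and their restrictions to $\ca{Y}$ versus $\ca{N}^{(\ell,i)}$), remembering that $\widetilde{\pi}_a(1)\mid\ca{Y}$ is an estimated maximizer while $\pi_a(1)\mid\ca{Y}$ is the true one. The slack $16\Delta_{\ell+1}$ in the statement is comfortably larger than the $8\Delta_{\ell+1}$ the chaining produces, so even a slightly lossier routing (e.g.\ through Lemmas~\ref{lem:interesting_gap}--\ref{lem:interesting_gap2}) would still suffice.
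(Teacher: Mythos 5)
Your proof is correct and follows essentially the same approach as the paper's: both arguments chain the entry-wise estimation guarantee, the edge condition, and Lemma~\ref{lem:interesting2} via triangle inequalities. The only difference is routing — you anchor everything at $w=\pi_u(1)\mid\ca{Y}$ and compare $\widetilde{\fl{P}}_{vy}$ to $\widetilde{\fl{P}}_{vw}$ directly, whereas the paper detours through $b=\fl{P}_{v\pi_v(1)\mid\ca{Y}}$ and the inequality $b\ge a-8\Delta_{\ell+1}$ — which is why you land at $8\Delta_{\ell+1}$ rather than the paper's $16\Delta_{\ell+1}$; both are within the stated bound.
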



\begin{proof}
From the construction of $\ca{G}^{(\ell,i)}$, we know that
users $u,v\in \ca{M}^{(\ell,i)}$ have an edge if $\left|\widetilde{\fl{P}}^{(\ell)}_{ux}-\widetilde{\fl{P}}^{(\ell)}_{vx}\right|\le 2\Delta_{\ell+1}$ (and therefore $\left|\fl{P}_{ux}-\fl{P}_{vx}\right|\le \left|\widetilde{\fl{P}}^{(\ell)}_{ux}-\fl{P}_{ux}\right|+\left|\fl{P}_{vx}-\widetilde{\fl{P}}^{(\ell)}_{vx}\right|+\left|\widetilde{\fl{P}}^{(\ell)}_{ux}-\widetilde{\fl{P}}^{(\ell)}_{vx}\right|\le 4\Delta_{\ell+1}$) for all $x \in \ca{N}^{(\ell,i)}$. For simplicity of notation, let us denote $\ca{R}_u$ to be the set $\ca{R}_u^{(\ell)}\mid \ca{Y}$.
Suppose $\fl{P}_{u\pi_u(1)\mid\ca{Y}}=a$ and $\fl{P}_{v\pi_v(1)\mid\ca{Y}}=b$. Consider any pair of items $x\in \ca{R}_u,y\in \ca{R}_v$ respectively. Therefore, for $x\in \ca{R}_u$ this must mean that $\fl{P}_{ux} \ge a-4\Delta_{\ell+1}$ (note that $\pi_u(1)\mid\ca{Y}\in \ca{R}_u$). For $y\in \ca{R}_v$, we must similarly have $\fl{P}_{vy} \ge b-4\Delta_{\ell+1}$. Since $u,v$ are connected by an edge, we must have  $\fl{P}_{vx} \ge a-8\Delta_{\ell+1}$ and $\fl{P}_{uy} \ge b-8\Delta_{\ell+1}$.
Now, we have $\fl{P}_{v\pi_v(1)\mid\ca{Y}}\ge \fl{P}_{vx}$ implying that $b \ge a-8\Delta_{\ell+1}$; hence
\begin{align*}
    \fl{P}_{ux}-\fl{P}_{uy} \le  \fl{P}_{u\pi_u(1)\mid \ca{R}_u} - \fl{P}_{uy} \le a-(b-8\Delta_{\ell+1}) \le 16\Delta_{\ell+1}. 
\end{align*}
A similar analysis for $v$ shows that $\fl{P}_{vy}-\fl{P}_{vx}\le 16\Delta_{\ell+1}$. This completes the proof of the lemma.
\end{proof}

\begin{lemma}\label{lem:interesting5}
Condition on the events $\ca{E}_2^{(\ell)},\ca{E}_3^{(\ell)}$ being true. Consider a \textit{nice} subset of users $\ca{M}^{(\ell,i)}\in \ca{M}'^{(\ell)}$ and their corresponding set of active items $\ca{N}^{(\ell,i)}$ for which guarantees in eq. \ref{eq:submatrix_estimation} holds. Fix any subset $\ca{Y}\subseteq \ca{N}^{(\ell,i)}$.
Consider two users $u,v\in \ca{M}^{(\ell,i)}$ having a path of length $\s{L}$ in the graph $\ca{G}^{(\ell,i)}$. Conditioned on the events $\ca{E}_2^{(\ell)},\ca{E}_3^{(\ell)}$, we must have 
\begin{align*}
&\max_{x\in \ca{R}^{(\ell)}_u\mid \ca{Y},y\in \ca{R}^{(\ell)}_v\mid \ca{Y}}\left|\fl{P}_{ux}-\fl{P}_{uy}\right|\le 8(\s{L}+1)\Delta_{\ell+1} \\
&\text{ and }
\max_{x\in \ca{R}^{(\ell)}_u\mid \ca{Y},y\in \ca{R}^{(\ell)}_v\mid \ca{Y}}\left|\fl{P}_{vx}-\fl{P}_{vy}\right|\le 8(\s{L}+1)\Delta_{\ell+1}
\end{align*}

\end{lemma}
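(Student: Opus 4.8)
The plan is to lift the single-edge argument of Lemma~\ref{lem:interesting4} to a path of length $\s{L}$ by a triangle inequality, paying only a factor linear in $\s{L}$. First I would fix the path $u=u_0,u_1,\dots,u_{\s{L}}=v$ in $\ca{G}^{(\ell,i)}$ and use the defining edge condition $\left|\widetilde{\fl{P}}^{(\ell)}_{u_j x}-\widetilde{\fl{P}}^{(\ell)}_{u_{j+1} x}\right|\le 2\Delta_{\ell+1}$ for every $x\in \ca{N}^{(\ell,i)}$ and every consecutive pair, which by summing along the path gives $\left|\widetilde{\fl{P}}^{(\ell)}_{ux}-\widetilde{\fl{P}}^{(\ell)}_{vx}\right|\le 2\s{L}\Delta_{\ell+1}$. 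Since we condition on $\ca{E}_3^{(\ell)}$, which gives $\lr{\widetilde{\fl{P}}^{(\ell)}_{\ca{M}^{(\ell,i)},\ca{N}^{(\ell,i)}}-\fl{P}_{\ca{M}^{(\ell,i)},\ca{N}^{(\ell,i)}}}_{\infty}\le \Delta_{\ell+1}$, two more applications of the triangle inequality yield $\left|\fl{P}_{ux}-\fl{P}_{vx}\right|\le (2\s{L}+2)\Delta_{\ell+1}$ for all $x\in \ca{Y}\subseteq \ca{N}^{(\ell,i)}$. This is precisely the ``edge'' bound $4\Delta_{\ell+1}$ from the proof of Lemma~\ref{lem:interesting4}, but with $4$ replaced by $2\s{L}+2$ (and indeed $\s{L}=1$ recovers $4$).

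Next I would rerun the Lemma~\ref{lem:interesting4} computation verbatim with this substitution. By Lemma~\ref{lem:interesting2} applied to $\ca{Y}$, we have $\pi_u(1)\mid\ca{Y}\in \ca{R}_u^{(\ell)}\mid\ca{Y}$ and $\max_{s,s'\in \ca{R}_u^{(\ell)}\mid\ca{Y}}\left|\fl{P}_{us}-\fl{P}_{us'}\right|\le 4\Delta_{\ell+1}$, and likewise for $v$; set $a=\fl{P}_{u\pi_u(1)\mid\ca{Y}}$ and $b=\fl{P}_{v\pi_v(1)\mid\ca{Y}}$. Then for $x\in \ca{R}_u^{(\ell)}\mid\ca{Y}$ one has $\fl{P}_{ux}\ge a-4\Delta_{\ell+1}$, hence $\fl{P}_{vx}\ge a-(2\s{L}+6)\Delta_{\ell+1}$, and since $b=\fl{P}_{v\pi_v(1)\mid\ca{Y}}\ge \fl{P}_{vx}$ (because $x\in\ca{Y}$) this gives $a-b\le (2\s{L}+6)\Delta_{\ell+1}$. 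Symmetrically, for $y\in \ca{R}_v^{(\ell)}\mid\ca{Y}$ one gets $\fl{P}_{uy}\ge b-(2\s{L}+6)\Delta_{\ell+1}$; and since $\ca{R}_u^{(\ell)}\mid\ca{Y},\ca{R}_v^{(\ell)}\mid\ca{Y}\subseteq \ca{Y}$ we also have $\fl{P}_{ux},\fl{P}_{uy}\le a$. Combining, $\fl{P}_{ux}-\fl{P}_{uy}\le a-\big(b-(2\s{L}+6)\Delta_{\ell+1}\big)=(a-b)+(2\s{L}+6)\Delta_{\ell+1}\le (4\s{L}+12)\Delta_{\ell+1}$, while $\fl{P}_{uy}-\fl{P}_{ux}\le a-(a-4\Delta_{\ell+1})=4\Delta_{\ell+1}$. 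Since $4\s{L}+12\le 8(\s{L}+1)$ for every $\s{L}\ge 1$, this gives $\left|\fl{P}_{ux}-\fl{P}_{uy}\right|\le 8(\s{L}+1)\Delta_{\ell+1}$; the bound on $\left|\fl{P}_{vx}-\fl{P}_{vy}\right|$ then follows by interchanging the roles of $(u,x)$ and $(v,y)$, the same path witnessing the connection.

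There is no real obstacle here; the only care needed is bookkeeping: the edge condition is stated on the \emph{estimated} matrix $\widetilde{\fl{P}}^{(\ell)}$ but the conclusion concerns the \emph{true} matrix $\fl{P}$, so the two uses of the $\ell_\infty$ estimation guarantee must be inserted at exactly the right places, and one must keep track that $\ca{R}_u^{(\ell)}\mid\ca{Y}$ and $\ca{R}_v^{(\ell)}\mid\ca{Y}$ are both contained in $\ca{Y}$ so that $a=\fl{P}_{u\pi_u(1)\mid\ca{Y}}$ genuinely dominates both $\fl{P}_{ux}$ and $\fl{P}_{uy}$. An equivalent route is a short induction on $\s{L}$ that peels off the last edge of the path and invokes Lemma~\ref{lem:interesting4} together with the inductive estimate for the length-$(\s{L}-1)$ subpath, but the direct triangle-inequality argument above is cleaner since it avoids re-threading constants through the recursion.
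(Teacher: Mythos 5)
Your proposal is correct and follows essentially the same route as the paper: propagate the edge condition along the path to get a uniform bound on $\left|\fl{P}_{ux}-\fl{P}_{vx}\right|$, then rerun the argument of Lemma~\ref{lem:interesting4} with the enlarged constant. The only (harmless) difference is bookkeeping: you accumulate the $2\Delta_{\ell+1}$ per edge on the estimated matrix and convert to $\fl{P}$ once at each endpoint, giving $(2\s{L}+2)\Delta_{\ell+1}$, whereas the paper converts at every hop and gets $4(\s{L}+1)\Delta_{\ell+1}$; both land within the stated $8(\s{L}+1)\Delta_{\ell+1}$ bound.
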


\begin{proof}
Recall from the construction of $\ca{G}^{(\ell,i)}$ that conditioned on $\ca{E}_3^{(\ell)}$
users $u,v\in \ca{M}^{(\ell,i)}$ have an edge if $\left|\widetilde{\fl{P}}^{(\ell)}_{ux}-\widetilde{\fl{P}}^{(\ell)}_{vx}\right|\le 2\Delta_{\ell+1}$ (and therefore $\left|\fl{P}_{ux}-\fl{P}_{vx}\right|\le  4\Delta_{\ell+1}$) for all $x \in \ca{N}^{(\ell,i)}$. Again, for simplicity of notation, let us denote $\ca{R}_u$ to be the set $\ca{R}_u^{(\ell)}\mid \ca{Y}$.
Suppose $\fl{P}_{u\pi_u(1)\mid\ca{Y}}=a$ and $\fl{P}_{v\pi_v(1)\mid\ca{Y}}=b$. Consider any pair of items $x\in \ca{R}_u,y\in \ca{R}_v$ respectively. Therefore, for $x\in \ca{R}_u$ this must mean that $\fl{P}_{ux} \ge a-4\Delta_{\ell+1}$. For $y\in \ca{R}_v$, we must similarly have $\fl{P}_{vy} \ge b-4\Delta_{\ell+1}$. Since $u,v$ are connected by an path of length $\s{L}$ (say $a_1,a_2,\dots,a_{(\s{L}-1)}$), we must have $\fl{P}_{a_1x} \ge a-8\Delta_{\ell+1}$, $\fl{P}_{a_2x} \ge a-12\Delta_{\ell+1}$ and finally $\fl{P}_{vx} \ge a-4(\s{L}+1)\Delta_{\ell+1}$. By a similar analysis $\fl{P}_{uy}\ge b-4(\s{L}+1)\Delta_{\ell+1}$. 
Now, we have $\fl{P}_{v\pi_v(1)\mid\ca{Y}}\ge \fl{P}_{vx}$ implying that $b \ge a-4(\s{L}+1)\Delta_{\ell+1}$; hence
\begin{align*}
    \fl{P}_{ux}-\fl{P}_{uy} \le  \fl{P}_{u\pi_u(1)\mid \ca{R}_u} - \fl{P}_{uy} \le a-(b-4(\s{L}+1)\Delta_{\ell+1}) \le 8(\s{L}+1)\Delta_{\ell+1}. 
\end{align*}
Again, a similar analysis for $v$ shows that $\fl{P}_{vy}-\fl{P}_{vx}\le 8(\s{L}+1)\Delta_{\ell+1}$. This completes the proof of the lemma.
\end{proof}

\begin{coro}\label{coro:interesting5}
Condition on the events $\ca{E}_2^{(\ell)},\ca{E}_3^{(\ell)}$ being true. Consider a \textit{nice} subset of users $\ca{M}^{(\ell,i)}\in \ca{M}'^{(\ell)}$ and their corresponding set of active items $\ca{N}^{(\ell,i)}$ for which guarantees in eq. \ref{eq:submatrix_estimation} holds. Fix any subset $\ca{Y}\subseteq \ca{N}^{(\ell,i)}$.
Consider two users $u,v\in \ca{M}^{(\ell,i)}$ having a path in the graph $\ca{G}^{(\ell,i)}$. Conditioned on the events $\ca{E}_2^{(\ell)},\ca{E}_3^{(\ell)}$, we must have 
\begin{align*}
&\max_{x\in \ca{R}^{(\ell)}_u\mid \ca{Y},y\in \ca{R}^{(\ell)}_v\mid \ca{Y}}\left|\fl{P}_{ux}-\fl{P}_{uy}\right|\le 16\s{C}\Delta_{\ell+1} 
\text{ and }
\max_{x\in \ca{R}^{(\ell)}_u\mid \ca{Y},y\in \ca{R}^{(\ell)}_v\mid \ca{Y}}\left|\fl{P}_{vx}-\fl{P}_{vy}\right|\le 16\s{C}\Delta_{\ell+1}
\end{align*}

\end{coro}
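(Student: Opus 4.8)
The plan is to reduce Corollary~\ref{coro:interesting5} directly to Lemma~\ref{lem:interesting5} by bounding the path length between any two connected users by a function of the number of clusters $\s{C}$. First I would recall from Lemma~\ref{lem:interesting_clique} that, conditioned on $\ca{E}_2^{(\ell)}$ and $\ca{E}_3^{(\ell)}$, all users in the same true cluster form a clique in $\ca{G}^{(\ell,i)}$, and that $\ca{M}^{(\ell,i)}$ (being a nice subset) is a union of at most $\s{C}$ true clusters. Hence any connected component of $\ca{G}^{(\ell,i)}$ is also a union of at most $\s{C}$ cliques, one per cluster it touches.

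Next I would argue that if $u$ and $v$ lie in the same connected component, there is a path between them of length at most $2\s{C}-1$. Indeed, take any path; collapse each maximal run of vertices within a single cluster-clique to a single vertex (replacing it by the direct edge that exists because the cluster is a clique). After this shortcutting, consecutive vertices on the path lie in distinct clusters, so the path visits each of the at most $\s{C}$ clusters at most once; such a path has at most $\s{C}$ vertices and therefore length at most $\s{C}-1$. Re-expanding to account for the endpoints $u,v$ possibly being interior to their cluster-cliques costs at most two more edges, giving $\s{L}\le \s{C}+1 \le 2\s{C}$; in any case $\s{L}\le 2\s{C}-1$ suffices. Plugging $\s{L}\le 2\s{C}$ into the conclusion of Lemma~\ref{lem:interesting5}, which gives a bound of $8(\s{L}+1)\Delta_{\ell+1}$, yields $8(2\s{C}+1)\Delta_{\ell+1} \le 16\s{C}\Delta_{\ell+1}$ (absorbing the lower-order $8\Delta_{\ell+1}$ into the constant, as $\s{C}\ge 1$), which is exactly the claimed bound for both $\max_{x,y}\left|\fl{P}_{ux}-\fl{P}_{uy}\right|$ and $\max_{x,y}\left|\fl{P}_{vx}-\fl{P}_{vy}\right|$ over $x\in \ca{R}^{(\ell)}_u\mid\ca{Y}$, $y\in \ca{R}^{(\ell)}_v\mid\ca{Y}$.

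The only mild subtlety — and the step I would be most careful about — is the path-length bound: one must correctly account for the fact that the endpoints $u$ and $v$ may each sit inside a cluster-clique that is also traversed by the path, so naively one might double-count those clusters. The clean way is to note that after shortcutting, the sequence of \emph{distinct clusters} encountered is a simple sequence of length at most $\s{C}$, hence at most $\s{C}-1$ clique-to-clique hops, and within the first and last clusters we need at most one extra edge each to reach $u$ and $v$; this gives the stated $O(\s{C})$ bound with a clean constant. Everything else is a direct substitution into Lemma~\ref{lem:interesting5} and a constant-rounding step using $\s{C}\ge 1$, so no further calculation is needed.
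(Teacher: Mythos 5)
Your proposal is correct and follows essentially the same route as the paper: bound the shortest path between two connected users by $2\s{C}-1$ using the fact that each of the at most $\s{C}$ cluster-cliques can contribute at most two consecutive vertices to a shortest path, then plug $\s{L}\le 2\s{C}-1$ into the $8(\s{L}+1)\Delta_{\ell+1}$ bound of Lemma~\ref{lem:interesting5} to get exactly $16\s{C}\Delta_{\ell+1}$. The only nit is that your fallback $\s{L}\le 2\s{C}$ would give $16\s{C}\Delta_{\ell+1}+8\Delta_{\ell+1}$, which does not literally "absorb" into $16\s{C}\Delta_{\ell+1}$; stick with the $2\s{C}-1$ bound, which you correctly identify as sufficient and which is what the paper uses.
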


\begin{proof}
 The proof follows from the fact that any two users $u,v\in \ca{M}^{(\ell,i)}$ connected via a path must have a shortest path of length at most $\s{2C}-1$ conditioned on the events $\ca{E}_2^{\ell},\ca{E}_3^{(\ell)}$. This is because, from Lemma \ref{lem:interesting5}, we know that users in the same cluster form a clique and since there are at most $\s{C}$ clusters, the shortest path must be of length at most $2\s{C}-1$. 
\end{proof}

Hence, for a particular set of users $\ca{M}^{(\ell,i)}\in \ca{M}'^{(\ell)}$,
consider the $j^{\s{th}}$ connected component of $\ca{G}^{(\ell,i)}$ comprising of users $\ca{M}^{(\ell,i,j)}$. If we consider the union of items $\ca{S}\equiv\cup_{u\in \ca{M}^{(\ell,i,j)}}\ca{R}_u^{(\ell)}\mid \ca{Y}$ for any subset $\ca{Y}\mid \ca{N}^{(\ell,i)}$, then from Corollary \ref{coro:interesting5}, we must have that for any user $u\in \ca{M}^{(\ell,i,j)}$, 
\begin{align}\label{eq:imp_obs}
&\max_{x,y\in \ca{S}}\left|\fl{P}_{ux}-\fl{P}_{uy}\right|\le 16\s{C}\Delta_{\ell+1}. 
\end{align}
This follows from the fact that every element $s\in \ca{S}$ must exist in $\ca{R}^{(\ell)}_v\mid \ca{Y}$ for some $v\in \ca{M}^{(\ell,i,j)}$; moreover, $v$ is connected to $u$ and therefore the shortest path joining them must be of length at most $2\s{C}-1$. Finally we use Corollary \ref{coro:interesting5} to conclude equation \ref{eq:imp_obs}.

\begin{lemma}\label{lem:connected_component}
Condition on the events $\ca{E}_2^{(\ell)},\ca{E}_3^{(\ell)}$ being true. Consider a \textit{nice} subset of users $\ca{M}^{(\ell,i)}\in \ca{M}'^{(\ell)}$ and their corresponding set of active items $\ca{N}^{(\ell,i)}$ for which guarantees in eq. \ref{eq:submatrix_estimation} holds. Consider the $j^{\s{th}}$ connected component of the graph $\ca{G}^{(\ell,i)}$ comprising of users $\ca{M}^{(\ell,i,j)}$. Let $\ca{N}^{(\ell,i,j)}\equiv \cup_{u\in \ca{M}^{(\ell,i,j)}}\ca{T}_u^{(\ell)}$ denote the union of good items for users in $\ca{M}^{(\ell,i,j)}$. In that case, 
\begin{enumerate}
    \item For every user $u\in \ca{M}^{(\ell,i,j)}$, 
the items  $\pi_u(s)\mid [\s{N}]\setminus \ca{O}^{(\ell)}_{\ca{M}^{(\ell,i)}}$ for $s\in [\s{T}\s{B}^{-1}-\left|\ca{O}^{(\ell)}_{\ca{M}^{(\ell,i)}}\right|]$ must belong to the set $\ca{N}^{(\ell,i,j)}$ i.e. the golden items that have not been chosen for recommendation to users in $\ca{M}^{(\ell,i,j)}$ in \textit{exploit} components of previous phases must belong to the surviving set of items $\ca{N}^{(\ell,i,j)}$.
    \item For any subset $\ca{Y}\subseteq \ca{N}^{(\ell,i,j)}$ and any $s\le |\ca{Y}|,$ we must have the following for any $\s{A}>0$:
\begin{align*}
    &\text{If }\widetilde{\fl{P}}_{u\widetilde{\pi}_u(1)\mid \ca{Y}}-\widetilde{\fl{P}}_{u\widetilde{\pi}_u(s)\mid \ca{Y}} \ge \s{A} \text{ for some }u\in \ca{M}^{(\ell,i,j)} \\
    &\text{then }\fl{P}_{v\pi_v(1)\mid \ca{Y}}-\fl{P}_{v\pi_v(s)\mid \ca{Y}} \\
    &\ge \widetilde{\fl{P}}_{v\widetilde{\pi}_v(1)\mid \ca{Y}}-\widetilde{\fl{P}}_{v\widetilde{\pi}_v(s)\mid \ca{Y}}-4\Delta_{\ell+1} \ge \s{A}-(2\s{C}-1)4\Delta_{\ell+1} \text{ for all }v\in \ca{M}^{(\ell,i,j)}.
\end{align*}
\end{enumerate}
\end{lemma}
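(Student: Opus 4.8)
The plan is to dispatch the two parts separately: the first is immediate from Lemma~\ref{lem:interesting29}, and the second is a triangle‑inequality bookkeeping exercise on top of two elementary observations.

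For the first part I would apply Lemma~\ref{lem:interesting29} with $\ca{Z}=[\s{N}]\setminus\ca{O}^{(\ell)}_{\ca{M}^{(\ell,i)}}$. Since $\ca{M}^{(\ell,i)}\in\ca{M}'^{(\ell)}$, eq.~\ref{eq:submatrix_estimation} holds, and we are conditioning on $\ca{E}_2^{(\ell)},\ca{E}_3^{(\ell)}$, that lemma tells us that for every $u\in\ca{M}^{(\ell,i)}$ the items $\pi_u(s)\mid\ca{Z}$ with $s\in[\s{T}\s{B}^{-1}-|\ca{O}^{(\ell)}_{\ca{M}^{(\ell,i)}}|]$ lie in $\ca{T}_u^{(\ell)}$. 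As $\ca{M}^{(\ell,i,j)}\subseteq\ca{M}^{(\ell,i)}$, for $u\in\ca{M}^{(\ell,i,j)}$ we get $\ca{T}_u^{(\ell)}\subseteq\bigcup_{u'\in\ca{M}^{(\ell,i,j)}}\ca{T}_{u'}^{(\ell)}=\ca{N}^{(\ell,i,j)}$, which is exactly the first claim.

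For the second part, fix $\ca{Y}\subseteq\ca{N}^{(\ell,i,j)}$, an index $s\le|\ca{Y}|$, and a witness $u\in\ca{M}^{(\ell,i,j)}$ with $\widetilde{\fl{P}}_{u\widetilde{\pi}_u(1)\mid\ca{Y}}-\widetilde{\fl{P}}_{u\widetilde{\pi}_u(s)\mid\ca{Y}}\ge\s{A}$. The two observations are: (O1) if two real functions on the finite set $\ca{Y}$ are uniformly within $\epsilon$ of each other, then their $k$‑th largest values differ by at most $\epsilon$ (take the top‑$k$ set of one function; all of its values of the other function exceed the first's $k$‑th order statistic minus $\epsilon$); and (O2) by Lemma~\ref{lem:interesting_clique} the component $\ca{M}^{(\ell,i,j)}$ is a union of at most $\s{C}$ clusters, each a clique in $\ca{G}^{(\ell,i)}$, so (as in the proof of Corollary~\ref{coro:interesting5}) any two of its users are joined by a path of at most $2\s{C}-1$ edges, along whose edges $\widetilde{\fl{P}}^{(\ell)}$ differs by at most $2\Delta_{\ell+1}$ coordinate‑wise and within whose cliques the true reward vectors coincide. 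Since $\ca{Y}\subseteq\ca{N}^{(\ell,i,j)}\subseteq\ca{N}^{(\ell,i)}$, eq.~\ref{eq:submatrix_estimation} applies on $\ca{Y}$, so (O1) with $\epsilon=\Delta_{\ell+1}$ gives that for any user $w$ and any $r$ the estimated and true $r$‑th order statistics over $\ca{Y}$ differ by at most $\Delta_{\ell+1}$; applied to $v$ with $r\in\{1,s\}$ this already yields $\fl{P}_{v\pi_v(1)\mid\ca{Y}}-\fl{P}_{v\pi_v(s)\mid\ca{Y}}\ge\widetilde{\fl{P}}_{v\widetilde{\pi}_v(1)\mid\ca{Y}}-\widetilde{\fl{P}}_{v\widetilde{\pi}_v(s)\mid\ca{Y}}-2\Delta_{\ell+1}$, which is stronger than the stated $-4\Delta_{\ell+1}$.

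It remains to transfer the hypothesis from $u$ to an arbitrary $v\in\ca{M}^{(\ell,i,j)}$, and the sharpest route is through the \emph{true} matrix: (O1) converts the hypothesis to $\fl{P}_{u\pi_u(1)\mid\ca{Y}}-\fl{P}_{u\pi_u(s)\mid\ca{Y}}\ge\s{A}-2\Delta_{\ell+1}$; this true gap is unchanged along any clique (identical reward vectors) and, across each inter‑cluster edge $w\to w'$ of the connecting path — where the edge condition and two uses of eq.~\ref{eq:submatrix_estimation} give $|\fl{P}_{wx}-\fl{P}_{w'x}|\le4\Delta_{\ell+1}$ on $\ca{Y}$ — drops by at most $8\Delta_{\ell+1}$ by another application of (O1) to the top and the $s$‑th order statistics; since the path contains at most $\s{C}-1$ inter‑cluster edges, we obtain $\fl{P}_{v\pi_v(1)\mid\ca{Y}}-\fl{P}_{v\pi_v(s)\mid\ca{Y}}\ge\s{A}-2\Delta_{\ell+1}-8(\s{C}-1)\Delta_{\ell+1}\ge\s{A}-(2\s{C}-1)4\Delta_{\ell+1}$, which is the second inequality. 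I would not expect any real obstacle here, since the whole argument is a chain of triangle inequalities; the one place requiring care is the $\Delta_{\ell+1}$‑constant accounting in this last step — in particular, one must exploit that $\fl{P}$ is literally constant on each cluster so that only the $\le\s{C}-1$ inter‑cluster hops of the path are charged $\Theta(\Delta_{\ell+1})$ (the cruder estimate that merely uses the $\le(2\s{C}-1)$‑edge path length, e.g.\ routing the gap along $\widetilde{\fl{P}}$ the whole way at cost $4\Delta_{\ell+1}$ per edge, leaves $\Theta(\Delta_{\ell+1})$ of slack), and one must remember that eq.~\ref{eq:submatrix_estimation} only guarantees accuracy on items in $\ca{N}^{(\ell,i)}$, which is precisely why the containment $\ca{Y}\subseteq\ca{N}^{(\ell,i,j)}\subseteq\ca{N}^{(\ell,i)}$ must be invoked throughout.
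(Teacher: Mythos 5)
Your proposal is correct, and Part 1 is handled exactly as in the paper: cite Lemma~\ref{lem:interesting29} for the containment $\{\pi_u(s)\mid\ca{Z}\}_{s}\subseteq\ca{T}_u^{(\ell)}$ and then take the union over $\ca{M}^{(\ell,i,j)}$.

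For Part 2 you take a genuinely different route. The paper works entirely in the estimated matrix: it runs an induction on the length of the shortest $u$--$v$ path in $\ca{G}^{(\ell,i)}$, transferring the \emph{estimated} gap $\widetilde{\fl{P}}_{\cdot\widetilde{\pi}_\cdot(1)\mid\ca{Y}}-\widetilde{\fl{P}}_{\cdot\widetilde{\pi}_\cdot(s)\mid\ca{Y}}$ across one edge at a time at a cost of $4\Delta_{\ell+1}$ per edge (hence $(2\s{C}-1)4\Delta_{\ell+1}$ total), and only at the end passes from estimated to true order statistics. You instead pass to the \emph{true} matrix immediately at the witness $u$ via your observation (O1) --- a clean perturbation bound on order statistics that costs only $2\Delta_{\ell+1}$, sharper than the $6\Delta_{\ell+1}$ the paper extracts from the casework in Lemma~\ref{lem:interesting_gap} --- and then propagate the true gap along the path, exploiting that $\fl{P}$ is literally constant on each cluster so that only the at most $\s{C}-1$ inter-cluster hops are charged ($8\Delta_{\ell+1}$ each). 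Your accounting $\s{A}-2\Delta_{\ell+1}-8(\s{C}-1)\Delta_{\ell+1}=\s{A}-(8\s{C}-6)\Delta_{\ell+1}\ge\s{A}-(2\s{C}-1)4\Delta_{\ell+1}$ checks out, and your first inequality $\fl{P}_{v\pi_v(1)\mid\ca{Y}}-\fl{P}_{v\pi_v(s)\mid\ca{Y}}\ge\widetilde{\fl{P}}_{v\widetilde{\pi}_v(1)\mid\ca{Y}}-\widetilde{\fl{P}}_{v\widetilde{\pi}_v(s)\mid\ca{Y}}-2\Delta_{\ell+1}$ is stronger than the stated $-4\Delta_{\ell+1}$; everything downstream (the $56\s{C}\Delta_{\ell+1}$ thresholds in the exploit analysis) survives with your constants. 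Two small points to make explicit in a full write-up: (i) the claim that a connecting path can be chosen with at most $\s{C}-1$ inter-cluster edges needs the one-line argument that a shortest path never revisits a cluster (otherwise the clique edge would shortcut it), which is the same fact underlying the paper's $2\s{C}-1$ path-length bound; (ii) you establish the two endpoints of the lemma's displayed inequality chain rather than the exact intermediate quantity $\widetilde{\fl{P}}_{v\widetilde{\pi}_v(1)\mid\ca{Y}}-\widetilde{\fl{P}}_{v\widetilde{\pi}_v(s)\mid\ca{Y}}\ge\s{A}-(2\s{C}-2)4\Delta_{\ell+1}$, but the conclusion actually used elsewhere is recovered, so this is presentational only. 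On balance your argument is both tighter and easier to verify than the paper's edge-by-edge induction.
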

 
\begin{proof}
The proof of the first part follows directly from Lemma \ref{lem:interesting29} where we showed that for a particular user $u\in \ca{M}^{(\ell,i)}$, the items  $\fl{P}_{u\pi_u(s)\mid [\s{N}]\setminus \ca{O}^{(\ell)}_{\ca{M}^{(\ell,i)}}}$ for $s\in [\s{T}\s{B}^{-1}-\left|\ca{O}^{(\ell)}_{\ca{M}^{(\ell,i)}}\right|]$ must belong to the set $\ca{T}_u^{(\ell)}$ (and the fact that $\ca{N}^{(\ell,i,j)}\supseteq \ca{T}_u^{(\ell)}$). 

We move on to the proof of the second part of the lemma. Recall that any two users $u,v\in \ca{M}^{(\ell,i)}$ have an edge in the graph $\ca{G}^{(\ell,i)}$ if $\left|\widetilde{\fl{P}}_{ux}^{(\ell)}-\widetilde{\fl{P}}_{vx}^{(\ell)}\right|\le 2\Delta_{\ell+1}$ for all $x\in \ca{N}^{(\ell,i)}$. In that case, we have  
 \begin{align*}
     &\widetilde{\fl{P}}_{v\widetilde{\pi}_v(1)\mid \ca{Y}}-\widetilde{\fl{P}}_{v\widetilde{\pi}_v(s)\mid \ca{Y}} \ge \widetilde{\fl{P}}_{v\widetilde{\pi}_u(1)\mid \ca{Y}}-\widetilde{\fl{P}}_{v\widetilde{\pi}_u(s)\mid \ca{Y}} \\
     &\ge \widetilde{\fl{P}}_{u\widetilde{\pi}_u(1)\mid \ca{Y}}-\widetilde{\fl{P}}_{u\widetilde{\pi}_u(s)\mid \ca{Y}}-4\Delta_{\ell+1} \ge \s{A}-4\Delta_{\ell+1}.
 \end{align*}
 We are going to prove the Lemma statement by induction on length of the shortest path joining the users $u,v$. The base case (when the path length is $1$ i.e. $u,v$ are joined by an edge) is proved above. Suppose the statement is true when length of the shortest path is $\s{L}-1$. In that case, we have the following set of inequalities (suppose $w$ is the neighbor of $v$ and the length of the shortest path joining $u,w$ is $\s{L}-1$)
\begin{align*}
     &\widetilde{\fl{P}}_{v\widetilde{\pi}_v(1)\mid \ca{Y}}-\widetilde{\fl{P}}_{v\widetilde{\pi}_v(s)\mid \ca{Y}} \ge \widetilde{\fl{P}}_{v\widetilde{\pi}_w(1)\mid \ca{Y}}-\widetilde{\fl{P}}_{v\widetilde{\pi}_w(s)\mid \ca{Y}} \\
     &\ge \widetilde{\fl{P}}_{w\widetilde{\pi}_w(1)\mid \ca{Y}}-\widetilde{\fl{P}}_{w\widetilde{\pi}_w(s)\mid \ca{Y}}-4\s{L}\Delta_{\ell+1} \ge \s{A}-4\s{L}\Delta_{\ell+1}.
 \end{align*}
The lemma statement follows from the fact that the length of the shortest path between users $u,v$ in the same connected component is at most $2\s{C}-1$.
This completes the proof of the lemma. 
\end{proof}

\paragraph{Partition of $\ca{M}^{(\ell,i)}$ in phase $\ell+1$ into nice subsets of users and their corresponding active subsets of items:}

Consider a nice set of users $\ca{M}^{(\ell,i)}$ at the end of the \textit{explore} component of phase $\ell$ i.e. the start of the subsequent phase $\ell+1$ for the users in $\ca{M}^{(\ell,i)}$. At this point, conditioned on events $\ca{E}_2^{(\ell)},\ca{E}_3^{(\ell)}$, our goal is to further partition the users in $\ca{M}^{(\ell,i)}$ into more nuanced \textit{nice} subsets. Suppose the number of rounds is at least $\s{T}^{1/3}/\s{B}=\widetilde{O}(\s{T}^{1/3})$ implying that the active set of items  $\ca{N}^{(\ell,i)}$ is of size at least $\widetilde{\Omega}(\s{T}^{1/3})$ (induction assumption property B). Suppose, we index the connected components of the graph $\ca{G}^{(\ell,i)}$ formed by the users in $\ca{M}^{(\ell,i)}\in \ca{M}'^{(\ell)}$. 
Now, for each index $j$, the set of users corresponding to the $j^{\s{th}}$ connected component of the graph $\{\ca{G}^{(\ell,i)}\}$ forms the $j^{\s{th}}$ nice subset of users (Lemma \ref{lem:interesting_clique}) stemming from users in $\ca{M}^{(\ell,i)}\in \ca{M}'^{(\ell)}$ - let us denote this set of users by $\ca{M}^{(\ell+1,z)}$ for some index $s>0$.
For this set of users $\ca{M}^{(\ell+1,z)}$, at the start of the \textit{exploit} component of phase $\ell+1$ (round $t$), we define the active set of items $\ca{N}^{(\ell+1,t,z)}$ to be $\bigcup_{u\in \ca{M}^{(\ell+1,z)}}\ca{T}_u^{(\ell)}$. 
Hence $\{(\ca{M}^{(\ell+1,z)},\ca{N}^{(\ell+1,t,z)}\}_z$ forms the family of nice sets of users (that progress to the $(\ell+1)^{\s{th}}$ phase) and their corresponding active set of items at the beginning (\textit{exploit} component) of phase $\ell+1$ for users stemming from $\ca{M}'^{(\ell)}$. Next, we discuss our recommendation strategy for $\ca{M}^{(\ell+1,z)}$ (a \textit{nice} subset of users) in the \textit{exploit} component of phase $\ell+1$.
 
\paragraph{Strategy in exploit component of phase $\ell+1$:} Note that in the exploit component of phase $\ell+1$ for users in $\ca{M}^{(\ell,i,j)} \equiv \ca{M}^{(\ell+1,z)}$ (new notation indicating that $\ca{M}^{(\ell,i,j)}\equiv \ca{M}^{(\ell+1,z)}$ is a \textit{nice} subset of users at phase $\ell+1$)  for some indices $i,j$ and $z$, we follow a recursive approach to identify and recommend items in $\{\pi_u(t)\}_{t=1}^{\s{T}\s{B}^{-1}}$ for all users $u\in \ca{M}^{(\ell+1,z)}$. 
At the beginning of the \textit{exploit} component of phase $\ell+1$ (say round $t$), we will also initialize $\ca{O}^{(\ell+1,t)}_{\ca{M}^{(\ell+1,z)}}$ to be the set of items $\ca{O}^{(\ell)}_{\ca{M}^{(\ell,i)}}$ (i.e. the \textit{golden} items that have been chosen for recommendation in the \textit{exploit} components of previous phases ($1-\ell$) to users in $\ca{M}^{(\ell,i)}$ and recommended sufficiently enough number of times to be blocked).  At start of the \textit{exploit} component of phase $\ell+1$ (round $t$), recall that the active set of items is given by $\ca{N}^{(\ell+1,t,z)}\equiv \bigcup_{u\in \ca{M}^{(\ell+1,z)}}\ca{T}_u^{(\ell)}$. We reiterate here that $t$ corresponds to the index of the starting round in the \textit{exploit} component of phase $\ell+1$ for users in the \textit{nice subset} $\ca{M}^{(\ell+1,z)}$. 

Let us denote $s=\s{T}\s{B}^{-1}-\left|\ca{O}^{(\ell+1,t)}_{\ca{M}^{(\ell+1,z)}}\right|$ and $\ca{Y}=\ca{N}^{(\ell+1,t,z)}$.
First of all, note that from Lemma \ref{lem:connected_component}, for all users $u\in \ca{M}^{(\ell+1,z)}$ the items  $\pi_u(r)\mid [\s{N}]\setminus \ca{O}^{(\ell+1,t)}_{\ca{M}^{(\ell+1,z)}}$ for $r\in [s]$ must belong to the set  $\ca{Y}$ i.e. the best $s$ items among those that are not in the set $\ca{O}^{(\ell+1,t)}_{\ca{M}^{(\ell+1,z)}}$ must survive in $\ca{Y}$ (Lemma \ref{lem:connected_component}).
Now, we look at two possibilities: 
\begin{enumerate}
    \item \textit{(Possibility A):} For all users
$u\in \ca{M}^{(\ell+1,z)}$, we have that $\widetilde{\fl{P}}_{u\widetilde{\pi}_u(1)\mid \ca{Y}}-\widetilde{\fl{P}}_{u\widetilde{\pi}_u(s)\mid \ca{Y}} \le 64\s{C}\Delta_{\ell+1}$ 
In that case, we stop the \textit{exploit} component of phase $\ell+1$ and move on to the \textit{explore} component of phase $\ell+1$ for users in $\ca{M}^{(\ell+1,z)}$ and active items $\ca{Y}$. Conditioned on events $\ca{E}_2^{(\ell)},\ca{E}_3^{(\ell)}$, in Lemma \ref{lem:exploit_stage}, we show that the above condition implies for every user $u\in \ca{M}^{(\ell+1,z)}$, we must have 
\begin{align*}
    \max_{x,y\in \ca{Y}} \le 88\s{C}\Delta_{\ell+1}.
\end{align*}
Furthermore, for $\ca{Z}=[\s{N}]\setminus \ca{O}^{(\ell+1)}_{\ca{M}^{(\ell+1,z)}}$, it must happen that $\ca{Y}\supseteq \{\pi_u(s)\mid \ca{Z}\}_{s=1}^{\s{T}/\s{B}-\left|\ca{O}^{(\ell+1)}_{\ca{M}^{(\ell+1,z)}}\right|}$ for every user $u\in \ca{M}^{(\ell+1,z)}$. In other words, for each user $u\in \ca{M}^{(\ell+1,z)}$, the set $\ca{Y}$ must contain all the top $\s{T}\s{B}^{-1}$ golden items ( $\{\pi_u(r)\}_{r=1}^{\s{T}\s{B}^{-1}}$) that were not recommended in the \textit{exploit} components so far.

\item \textit{(Possibility B):} For some user $u\in \ca{M}^{(\ell+1,z)}$, we have that $\widetilde{\fl{P}}_{u\widetilde{\pi}_u(1)\mid \ca{Y}}-\widetilde{\fl{P}}_{u\widetilde{\pi}_u(s)\mid \ca{Y}} \ge 64\s{C}\Delta_{\ell+1}$. In that case, from Lemma \ref{lem:connected_component}, we know that for every user $v\in \ca{M}^{(\ell+1,z)}$, we must have $\fl{P}_{v\pi_v(1)\mid \ca{Y}}-\fl{P}_{v\pi_v(s)\mid \ca{Y}} \ge 56\s{C}\Delta_{\ell+1}$ . In that case, if we consider the set of items $\ca{S}\equiv\cup_{u\in \ca{M}^{(\ell+1,z)}}\ca{R}_u^{(\ell)}\mid \ca{Y}$, then from Lemma \ref{lem:interesting5} (or see eq. \ref{eq:imp_obs}), we must have that for every user $u\in \ca{M}^{(\ell+1,z)}$, 
\begin{align}
&\max_{x,y\in \ca{S}}\left|\fl{P}_{ux}-\fl{P}_{uy}\right|\le 16\s{C}\Delta_{\ell+1}
\end{align}
For every user $u\in \ca{M}^{(\ell+1,z)}$, recall that in Lemma \ref{lem:interesting2}, we showed that $\pi_u(1)\mid \ca{Y}\in \ca{R}_u^{(\ell)}\mid \ca{Y}$ and in Lemma \ref{lem:connected_component}, we showed that $\pi_u(1)\mid \ca{Y}=\pi_u(1)\mid [\s{N}\setminus \left|\ca{O}^{(\ell+1,t)}_{\ca{M}^{(\ell+1,z)}}\right|]$.
Hence, $\pi_u(1)\mid \ca{Y}=\pi_u(1)\mid [\s{N}\setminus \left|\ca{O}^{(\ell+1,t)}_{\ca{M}^{(\ell+1,z)}}\right|]$  belongs to the set $\ca{S}$ for every user $u\in \ca{M}^{(\ell+1,z)}$. Hence we must have that the items $\ca{S}$ is a subset of $\{\pi_u(r)\mid [\s{N}]\setminus \ca{O}^{(\ell+1,t)}_{\ca{M}^{(\ell+1,z)}}\}_{r=1}^{s}$. Suppose we index the items in $\ca{S}$.
For each of the subsequent $\s{B}\left|\ca{S}\right|$ rounds (indexed by $b\in [\s{B}\ca{S}]$), for every user $u\in \ca{M}^{(\ell+1,z)}$, we go to the $\lceil (b/\s{B}) \rceil^{\s{th}}$ item in $\ca{S}$ and recommend it to user $u$ if unblocked. 
On the other hand, if the $\lceil (b/\s{B}) \rceil^{\s{th}}$ item in $\ca{S}$ is blocked (or becomes blocked) for the user $u\in \ca{M}^{(\ell+1,z)}$, then we simply recommend any unblocked item in $\ca{N}^{(\ell+1,t,z)}$.
This is always possible because we will prove via induction (see Lemma \ref{lem:exploit_stage} and in particular eq. \ref{eq:superset}) that at every round in the \textit{exploit} component of phase $\ell+1$, the number of unblocked items for any user $u\in \ca{M}^{(\ell+1,z)}$ in the set $\ca{N}^{(\ell+1,t,z)}$ (where $t$ is the previous decision round for $\ca{M}^{(\ell+1,z)}$ on whether possibility A or B is true) is always larger than the number of remaining rounds. 
We make the following updates:
\begin{align}\label{eq:updates}
    &\ca{O}^{(\ell+1,t+|\ca{S}|)}_{\ca{M}^{(\ell+1,z)}} \leftarrow \ca{O}^{(\ell+1,t)}_{\ca{M}^{(\ell+1,z)}} \cup \ca{S} \\
    &\ca{N}^{(\ell+1,t+|\ca{S}|,z)} \leftarrow \ca{N}^{(\ell+1,t,z)}\setminus \ca{O}^{(\ell+1,t+|\ca{S}|)}_{\ca{M}^{(\ell+1,z)}} \\
    &t\leftarrow t+|\ca{S}| \text{ and } \ca{Y}\leftarrow\ca{N}^{(\ell+1,t,s)}
\end{align}
i.e we update the set $\ca{O}^{(\ell+1,t)}_{\ca{M}^{(\ell+1,z)}}$ by taking union with the set of $\left|\ca{S}\right|$ identified items in $\{\pi_u(t)\}_{t=1}^{\s{T}}$ for all users $u\in \ca{M}^{(\ell+1,z)}$. After these $\left|\ca{S}\right|$ rounds, for the set of users $\ca{M}^{(\ell+1,z)}$, the set of active items $\ca{N}^{(\ell+1,t,z)}$ is pruned by removing the items in $\ca{S}$ and the time index is increased from $t$ to $t+\left|\ca{S}\right|$.

At this point, we repeat the same process again for users in $\ca{M}^{(\ell+1,z)}$ with the pruned set of active items $\ca{N}^{(\ell+1,t,s)}$ i.e. we check for \textit{possibility $A$ or possibility $B$}. If we encounter possibility $B$, then we again find the set of items $\ca{S}\equiv\cup_{u\in \ca{M}^{(\ell+1,z)}}\ca{R}_u^{(\ell)}\mid \ca{Y}$ and recommended it to all users in $\ca{M}$ in $\left|\ca{S}\right|\s{B}$ steps as outlined above. We do this step recursively until we encounter Step A for the users in $\ca{M}^{(\ell+1,z)}$ and at that point we exit the \textit{exploit} component of phase $\ell+1$ and enter the \textit{explore} component of phase $\ell+1$.
\end{enumerate}

   As before, at the beginning of the \textit{explore} component of phase $\ell+1$ for the \textit{nice} subset of users $\ca{M}^{(\ell+1,z)}$, let us denote the set of active items by $\ca{N}^{(\ell+1,z)}$ and the set of items considered for recommendation in the \textit{exploit} phases including the $(\ell+1)^{\s{th}}$ one by $\ca{O}^{(\ell+1)}_{\ca{M}^{(\ell+1,z)}}$ (i.e. we remove the $t$ in the superscript for simplicity). Therefore, at the end of the \textit{explore} component of phase $\ell+1$ for the \textit{nice} subset of users $\ca{M}^{(\ell+1,z)}$, the set of active items $\ca{N}^{(\ell+1,z)}$ satisfy the following:

\begin{lemma}\label{lem:exploit_stage}
Consider a \textit{nice} subset of users $\ca{M}^{(\ell+1,z)}$ and their corresponding set of active items $\ca{N}^{(\ell+1,z)}$ at the end of the exploit stage of phase $\ell+1$ i.e. for all users $u\in \ca{M}^{(\ell+1,z)}$, we have $\widetilde{\fl{P}}_{u\widetilde{\pi}_u(1)\mid \ca{N}^{(\ell+1,z)}}-\widetilde{\fl{P}}_{u\widetilde{\pi}_u(s)\mid \ca{N}^{(\ell+1,z)}}\le 64\s{C}\Delta_{\ell+1}$ for $s=\s{T}\s{B}^{-1}-\left|\ca{O}^{(\ell+1)}_{\ca{M}^{(\ell+1,z)}}\right|$. Suppose $\ca{M}^{(\ell+1,z)}$ is comprised of the users in a connected component of the graph $\ca{G}^{(\ell,i)}$ which in turn is formed by the users in $\ca{M}^{(\ell,i)}$ for which guarantees in eq. \ref{eq:submatrix_estimation} holds true i.e. we condition on the events $\ca{E}_2^{(\ell)},\ca{E}_3^{(\ell)}$ being true. In that case, we must have that
\begin{enumerate}
    \item for all users $u\in \ca{M}^{(\ell+1,z)}$,
\begin{align*}
    \max_{x,y\in \ca{N}^{(\ell+1,z)}}\left|\fl{P}_{ux}-\fl{P}_{uy}\right|\le 88\s{C}\Delta_{\ell+1} 
\end{align*}
i.e. the best and worst items in the set $\ca{N}^{(\ell+1,z)}$ for any user $u\in \ca{M}^{(\ell+1,z)}$ has close rewards.
\item Denote $\ca{Z}=[\s{N}]\setminus \ca{O}^{(\ell+1)}_{\ca{M}^{(\ell+1,z)}}$ to be set of items not chosen for recommendation to users in $\ca{M}^{(\ell+1,z)}$ in the \textit{exploit} component of phases until (and including) phase $\ell+1$. Then it must happen that
\begin{align}\label{eq:enough_unblocked}
    &\ca{O}^{(\ell+1)}_{\ca{M}^{(\ell+1,z)}} \subseteq \{\pi_u(t)\}_{t=1}^{\s{T}\s{B}^{-1}}\\
    &\ca{N}^{(\ell+1,z)} \supseteq \bigcup_{u \in \ca{M}^{(\ell+1,z)}}\{\pi_u(t')\mid \ca{Z}\}_{t'=1}^{\s{T}\s{B}^{-1}-\left|\ca{O}^{(\ell+1)}_{\ca{M}^{(\ell+1,z)}}\right|}. 
\end{align}    
\end{enumerate}
\end{lemma}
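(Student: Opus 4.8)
I would condition throughout on $\ca{E}_2^{(\ell)}$ and $\ca{E}_3^{(\ell)}$, so that the phase-$\ell$ estimate satisfies $\lr{\widetilde{\fl{P}}^{(\ell)}_{\ca{M}^{(\ell,i)},\ca{N}^{(\ell,i)}}-\fl{P}_{\ca{M}^{(\ell,i)},\ca{N}^{(\ell,i)}}}_{\infty}\le \Delta_{\ell+1}$, and use that $\ca{M}^{(\ell+1,z)}\subseteq \ca{M}^{(\ell,i)}$ and $\ca{N}^{(\ell+1,z)}\subseteq \ca{N}^{(\ell,i)}$ (active items are only ever pruned), so that every structural lemma of Section~\ref{subsec:main_analysis} proved for $(\ca{M}^{(\ell,i)},\ca{N}^{(\ell,i)})$ applies to $\ca{Y}=\ca{N}^{(\ell+1,z)}$ and to the users of $\ca{M}^{(\ell+1,z)}$. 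Claim~1 of the lemma is exactly Property~C for phase $\ell+1$ (recall $88\s{C}\Delta_{\ell+1}=\epsilon_{\ell+1}$) and Claim~2 is Property~B; together with Property~A, which the connected-component construction delivers via Lemma~\ref{lem:interesting_clique}, these two claims close the induction, so the whole task is to prove them.

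\textbf{Claim 1.} At this point the exploit while-loop of phase $\ell+1$ (Alg.~\ref{algo:exploit}, line~1) has terminated, so its test fails for every $u\in \ca{M}^{(\ell+1,z)}$; equivalently, for each such $u$ the best and the worst \emph{estimated} rewards over $\ca{N}^{(\ell+1,z)}$ differ by at most $64\s{C}\Delta_{\ell+1}$. I would feed this into Lemma~\ref{lem:interesting_gap} with $\ca{Y}=\ca{N}^{(\ell+1,z)}$, which converts an estimated spread into a true spread at an additive cost of $6\Delta_{\ell+1}$, giving for every $u\in \ca{M}^{(\ell+1,z)}$
\[
\max_{x,y\in \ca{N}^{(\ell+1,z)}}\left|\fl{P}_{ux}-\fl{P}_{uy}\right|\;\le\;64\s{C}\Delta_{\ell+1}+6\Delta_{\ell+1}\;\le\;88\s{C}\Delta_{\ell+1},
\]
which is Claim~1. (If one instead reads the termination test with estimated rank $s=\s{T}\s{B}^{-1}-|\ca{O}^{(\ell+1)}_{\ca{M}^{(\ell+1,z)}}|$, the surviving-items invariant of Claim~2 keeps $|\ca{N}^{(\ell+1,z)}|\ge s$, and one bounds the estimated tail below rank $s$ by the same $O(\s{C}\Delta_{\ell+1})$ argument before invoking Lemma~\ref{lem:interesting_gap}.)

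\textbf{Claim 2.} I would induct on the iterations of the exploit while-loop, maintaining two invariants: the running set $\ca{O}$ of chosen items satisfies $\ca{O}\subseteq\{\pi_u(t)\}_{t=1}^{\s{T}\s{B}^{-1}}$ for every $u\in\ca{M}^{(\ell+1,z)}$, and the running active set contains $\{\pi_u(r)\mid [\s{N}]\setminus\ca{O}\}_{r=1}^{\,\s{T}\s{B}^{-1}-|\ca{O}|}$ for every such $u$. The base case (entry of exploit, $\ca{O}=\ca{O}^{(\ell)}_{\ca{M}^{(\ell,i)}}$, active set $\bigcup_u\ca{T}_u^{(\ell)}$) is the induction hypothesis for phase $\ell$ together with Lemma~\ref{lem:interesting29}/Lemma~\ref{lem:connected_component}(1). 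For the inductive step of a \emph{Possibility-B} iteration with data $(\ca{O},\ca{Y},s=\s{T}\s{B}^{-1}-|\ca{O}|)$ and removed block $\ca{S}=\bigcup_{u}\ca{R}_u^{(\ell)}\mid\ca{Y}$: Lemma~\ref{lem:interesting2} gives $\pi_u(1)\mid\ca{Y}\in\ca{R}_u^{(\ell)}\mid\ca{Y}\subseteq\ca{S}$, Lemma~\ref{lem:connected_component}(1) gives $\pi_u(1)\mid\ca{Y}=\pi_u(1)\mid([\s{N}]\setminus\ca{O})$, and eq.~(\ref{eq:imp_obs}) gives $\max_{x,y\in\ca{S}}\left|\fl{P}_{ux}-\fl{P}_{uy}\right|\le 16\s{C}\Delta_{\ell+1}$ for every $u$; meanwhile Possibility~B with Lemma~\ref{lem:connected_component}(2) forces $\fl{P}_{u\pi_u(1)\mid\ca{Y}}-\fl{P}_{u\pi_u(s)\mid\ca{Y}}\ge 56\s{C}\Delta_{\ell+1}$ for every $u$. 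Since $16\s{C}<56\s{C}$, each item of $\ca{S}$ has $u$-reward strictly above $\fl{P}_{u\pi_u(s)\mid\ca{Y}}$, hence lies among the top $s$ items of $\ca{Y}$; as $\ca{Y}$ already contains the top $s$ items of $\ca{Z}:=[\s{N}]\setminus\ca{O}$, these coincide with $\{\pi_u(r)\mid\ca{Z}\}_{r=1}^{s}$, so $\ca{S}\subseteq\{\pi_u(r)\mid\ca{Z}\}_{r=1}^{s}$. Because $\ca{O}$ is golden, $\{\pi_u(r)\mid\ca{Z}\}_{r=1}^{s}$ is a set of golden items disjoint from $\ca{O}$, so $\ca{O}\cup\ca{S}$ is golden with $|\ca{O}\cup\ca{S}|\le|\ca{O}|+s=\s{T}\s{B}^{-1}$, preserving the first invariant after the update $\ca{O}\leftarrow\ca{O}\cup\ca{S}$. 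For the second invariant, removing $\ca{S}$ (a top-subset of the top $s$ of $\ca{Z}$) from $\ca{Z}$ and from the active set and decrementing $s\to s-|\ca{S}|$ leaves $\{\pi_u(r)\mid\ca{Z}\}_{r=1}^{s}\setminus\ca{S}$ in the new active set, and an elementary fact about restricting $\pi_u$ to $\ca{Z}\setminus\ca{S}$ identifies this with the top $s-|\ca{S}|$ items of $\ca{Z}\setminus\ca{S}$. Running this to termination yields both statements of Claim~2 with $\ca{O}=\ca{O}^{(\ell+1)}_{\ca{M}^{(\ell+1,z)}}$.

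\textbf{Main obstacle.} The delicate part is the bookkeeping in Claim~2: one must verify that \emph{every} block $\ca{S}$ removed in a Possibility-B step is a block of golden items and that lowering the remaining-golden budget $s$ by exactly $|\ca{S}|$ preserves ``the top $s$ items of $[\s{N}]\setminus\ca{O}$ survive''. The conceptual heart is the gap separation $16\s{C}\Delta_{\ell+1}<56\s{C}\Delta_{\ell+1}$ between the true-reward spread of $\ca{S}$ (eq.~(\ref{eq:imp_obs})) and the gap down to rank $s$ forced by Possibility~B; the rest is the routine-but-fiddly combinatorics of permutations restricted to shrinking index sets, carried synchronously over all users of the connected component $\ca{M}^{(\ell+1,z)}$.
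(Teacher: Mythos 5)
Your proposal is correct and follows essentially the same route as the paper: the same induction over the Possibility-A/Possibility-B decision rounds of the exploit loop, the same invariants ($\ca{O}$ stays golden, the top $s$ items of $[\s{N}]\setminus\ca{O}$ survive), and the same key separation between the $16\s{C}\Delta_{\ell+1}$ true-reward spread of each removed block $\ca{S}$ (eq.~\ref{eq:imp_obs}) and the $\geq 50\s{C}\Delta_{\ell+1}$ gap down to rank $s$ forced by Possibility B. The only point needing care is Claim~1, where the hypothesis controls the estimated gap only down to rank $s$ rather than to the worst surviving item; your parenthetical is exactly the paper's fix, which goes through Lemma~\ref{lem:interesting_gap3} (rank-$s$ estimated gap plus $24\s{C}\Delta_{\ell+1}$) to reach $64\s{C}\Delta_{\ell+1}+24\s{C}\Delta_{\ell+1}=88\s{C}\Delta_{\ell+1}$.
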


\begin{proof}
Suppose at the end of the \textit{explore} component of phase $\ell$, $\ca{M}^{(\ell+1,z)}\subseteq \ca{M}^{(\ell,i)}$. We will prove a more general statement. Consider the rounds $t_1,t_2,\dots$ at which we check for \textit{possibility A or possibility B} (this includes the starting and ending rounds of the \textit{exploit} component of phase $\ell+1$). At any such round $t_r$, for all users $u\in \ca{M}^{(\ell+1,z)}$, we must have that $\widetilde{\pi}_u(\s{T}\s{B}^{-1}-\left|\ca{O}^{(\ell,t_r)}_{\ca{M}^{(\ell,i)}}\right|)\mid \ca{N}^{(\ell,i)} \in \ca{N}^{(\ell+1,t_r,z)}$ and
\begin{align}
    &\max_{x,y\in \ca{N}^{(\ell+1,t_r,z)}}\left|\fl{P}_{ux}-\fl{P}_{uy}\right|\le
    \max_{v\in \ca{M}^{(\ell+1,z)}}
    \Big(\widetilde{\fl{P}}_{u\widetilde{\pi}_u(1)\mid \ca{N}^{(\ell+1,t_r,z)}}-\widetilde{\fl{P}}_{u\widetilde{\pi}_u(s)\mid \ca{N}^{(\ell+1,t_r,z)}}\Big)+24\s{C}\Delta_{\ell+1}\\
    &\ca{N}^{(\ell+1,t_r,z)} \supseteq \bigcup_{u \in \ca{M}^{(\ell+1,z)}}\{\pi_u(t')\mid \ca{Z}\}_{t'=1}^{s} \text{ where } \ca{Z}\equiv[\s{N}]\setminus \ca{O}^{(\ell+1,t_r)}_{\ca{M}^{(\ell+1,z)}} \label{eq:superset}\\
    &\ca{O}^{(\ell+1,t_r)}_{\ca{M}^{(\ell+1,z)}} \subseteq \{\pi_u(t)\}_{t=1}^{\s{T}\s{B}^{-1}} \label{eq:subset}
\end{align}
for $s=\s{T}-\left|\ca{O}^{(\ell+1,t_r)}_{\ca{M}^{(\ell+1,z)}}\right|$
We will prove the statement above via induction on the recursions performed in the \textit{exploit } component of the phase $\ell+1$. For the base case, we consider the round $t_1$ which corresponds to the beginning of the \textit{exploit} component of phase $\ell+1$. At this round, recall that $\ca{N}^{(\ell+1,t_1,z)}=\cup_{u\in \ca{M}^{(\ell+1,z)}} \ca{T}_u^{(\ell)}$. Of course,  with $s=\s{T}-\left|\ca{O}^{(\ell)}_{\ca{M}^{(\ell,i)}}\right|$, for every user $u\in \ca{M}^{(\ell+1,z)}$, we must have $\{\widetilde{\pi}_u(r)\mid \ca{N}^{(\ell,i)}\}_{r=1}^{s}\in \ca{T}_u^{(\ell)}$ (Lemma \ref{lem:interesting2}) and since $\ca{N}^{(\ell+1,z)}$ is a union of the sets $\ca{T}_u^{(\ell)}$, $\widetilde{\pi}_u(r)\mid \ca{N}^{(\ell,i)}=\widetilde{\pi}_u(r)\mid \ca{N}^{(\ell+1,z)}$ for all $r\in [s]$. Hence by invoking Lemma \ref{lem:interesting_gap3} (in the statement of Lemma \ref{lem:interesting_gap3}, we have $\ca{J}=\phi$ i.e. $\ca{Y}=\ca{N}^{(\ell+1,z)}$ and $\widetilde{\pi}_u(s)\mid \ca{N}^{(\ell,i)}=\widetilde{\pi}_u(s)\mid \ca{N}^{(\ell+1,z)}$ for $s=\s{T}-\left|\ca{O}^{(\ell,t_r)}_{\ca{M}^{(\ell,i)}}\right|$), we obtain the statement of the Lemma for round $t_1$.

Suppose the induction statement is true for round $t_a$ and the second possibility i.e. possibility $B$ became true. In that case, we do not exit the recursion and our goal is to show that the lemma statement is true at the next decision round $t_{a+1}$.  The induction hypothesis implies that for every user $u\in \ca{M}^{(\ell+1,z)}$, we have that $\widetilde{\pi}_u(\s{T}\s{B}^{-1}-\left|\ca{O}^{(\ell)}_{\ca{M}^{(\ell,i)}}\right|)$ survives in the set of items $\ca{N}^{(\ell+1,t_a,z)}$ and furthermore, we have  $\widetilde{\pi}_u(\s{T}\s{B}^{-1}-\left|\ca{O}^{(\ell)}_{\ca{M}^{(\ell,i)}}\right|)\mid \ca{N}^{(\ell,i)}=\widetilde{\pi}_u(s)\mid \ca{N}^{(\ell+1,t_a,z)}$ for $s=\s{T}\s{B}^{-1}-\left|\ca{O}^{(\ell+1,t_a)}_{\ca{M}^{(\ell+1,z)}}\right|$ (note that $s$ is common for all users in $\ca{M}^{(\ell+1,z)}$). 
The induction hypothesis also implies that 
\begin{align}\label{eq:induction_hypothesis}
    &\ca{N}^{(\ell+1,t_r,z)} \supseteq \bigcup_{u \in \ca{M}^{(\ell+1,z)}}\{\pi_u(t')\mid \ca{Z}\}_{t'=1}^{s} \text{ where } \ca{Z}\equiv[\s{N}]\setminus \ca{O}^{(\ell+1,t_a)}_{\ca{M}^{(\ell+1,z)}}\text{ and }s=\s{T}-\left|\ca{O}^{(\ell+1,t_a)}_{\ca{M}^{(\ell+1,z)}}\right| \\
    &\ca{O}^{(\ell+1,t_r)}_{\ca{M}^{(\ell+1,z)}} \subseteq \{\pi_u(t)\}_{t=1}^{\s{T}\s{B}^{-1}}
\end{align}

Again, at the decision round $t_a$, since the possibility $B$ was true, for one of the users $u\in \ca{M}^{(\ell+1,z)}$, we must have for $s=\s{T}-\left|\ca{O}^{(\ell+1,t_{a+1})}_{\ca{M}^{(\ell+1,z)}}\right|$,
\begin{align*}
    \widetilde{\fl{P}}_{u\widetilde{\pi}_u(1)\mid \ca{N}^{(\ell+1,t_a,z)}}-\widetilde{\fl{P}}_{u\widetilde{\pi}_u(s)\mid \ca{N}^{(\ell+1,t_a,z)}} \ge 64\s{C}\Delta_{\ell+1}
\end{align*}
implying that for every user $v\in \ca{M}^{(\ell+1,z)}$, we have 
\begin{align*}
    \widetilde{\fl{P}}_{v\widetilde{\pi}_v(1)\mid \ca{N}^{(\ell+1,t_a,z)}}-\widetilde{\fl{P}}_{v\widetilde{\pi}_v(s)\mid \ca{N}^{(\ell+1,t_a,z)}} \ge 56\s{C}\Delta_{\ell+1}.
\end{align*}
The above equation further implies that (Lemma \ref{lem:interesting_gap}), for every user $v\in \ca{M}^{(\ell+1,z)}$, we will have 
\begin{align*}
    \fl{P}_{v\pi_v(1)\mid \ca{N}^{(\ell+1,t_a,z)}}-\fl{P}_{v\pi_v(s)\mid \ca{N}^{(\ell+1,t_a,z)}} \ge 50\s{C}\Delta_{\ell+1}.
\end{align*}

Hence, as mentioned before,  if we consider the set of items $\ca{S}\equiv\cup_{u\in \ca{M}^{(\ell+1,z)}}\ca{R}_u^{(\ell)}\mid \ca{N}^{(\ell+1,t_a,z)}$, then from Lemma \ref{lem:interesting5} (or see eq. \ref{eq:imp_obs}), we must have that for every user $u\in \ca{M}^{(\ell+1,z)}$, $\pi_u(1)\mid \ca{N}^{(\ell+1,t_a,z)} \in \ca{S}$ and
\begin{align}
&\max_{y\in \ca{S}}\left|\fl{P}_{u\pi_u(1)\mid \ca{N}^{(\ell+1,t_a,z)}}-\fl{P}_{uy}\right|\le 16\s{C}\Delta_{\ell+1}.
\end{align}
Hence, if we remove the set $\ca{S}$ to update $\ca{N}^{(\ell+1,t_a,z)}$ i.e. $\ca{N}^{(\ell+1,t_{a+1},z)}\leftarrow \ca{N}^{(\ell+1,t_a,z)}\setminus \ca{S}$, for every user $u\in \ca{M}^{(\ell+1,z)}$, we must have 
\begin{align*}
    &\widetilde{\pi}_u(\s{T}\s{B}^{-1}-\left|\ca{O}^{(\ell)}_{\ca{M}^{(\ell,i)}}\right|)\mid \ca{N}^{(\ell,i)}=\widetilde{\pi}_u(\s{T}\s{B}^{-1}-\left|\ca{O}^{(\ell+1,t_a)}_{\ca{M}^{(\ell+1,z)}}\right|)\mid \ca{N}^{(\ell+1,t_a,z)} \\
    &=\widetilde{\pi}_u(\s{T}\s{B}^{-1}-\left|\ca{O}^{(\ell+1,t_{a+1})}_{\ca{M}^{(\ell+1,z)}}\right|)\mid \ca{N}^{(\ell+1,t_{a+1},z)}. 
\end{align*}
This is because for each user $u$, only elements which have larger rewards than $\widetilde{\pi}_u(\s{T}\s{B}^{-1}-\left|\ca{O}^{(\ell+1,t_a)}_{\ca{M}^{(\ell+1,z)}}\right|)\mid \ca{N}^{(\ell+1,t_a,z)}$ are removed which makes the aforementioned item survive in $\ca{N}^{(\ell+1,t_{a+1},z)}$ and also moves its position up  (recall that $\ca{O}^{(\ell+1,t_{a+1})}_{\ca{M}^{(\ell+1,z)}} \leftarrow \ca{O}^{(\ell+1,t_{a})}_{\ca{M}^{(\ell+1,z)}}\cup \ca{S}$) in the list of surviving items sorted in decreasing order by expected reward. Hence, we can apply Lemma \ref{lem:interesting_gap3} to conclude the first part of the induction proof. In order to show the final statement, with $s=\s{T}\s{B}^{-1}-\left|\ca{O}^{(\ell+1,t_{r})}_{\ca{M}^{(\ell+1,z)}}\right|$, we can simply substitute that 
\begin{align*}
    \max_{v\in \ca{M}^{(\ell+1,z)}}
    \Big(\widetilde{\fl{P}}_{u\widetilde{\pi}_u(1)\mid \ca{N}^{(\ell+1,t_r,z)}}-\widetilde{\fl{P}}_{u\widetilde{\pi}_u(s)\mid \ca{N}^{(\ell+1,t_r,z)}}\Big) \le 64\s{C}\Delta_{\ell+1}
\end{align*}
when possibility $A$ became true at a decision round $t_r$ and we exit the \textit{exploit} component to enter the \textit{explore} component of phase $\ell+1$ for users in $\ca{M}^{(\ell+1,z)}$.

Moreover, we will also have that \begin{align}\label{eq:golden}
    &\ca{S} \subseteq \{\pi_u(r)\mid \ca{N}^{(\ell+1,t_{a},z)}\}_{r=1}^{s} \text{ where }s=\s{T}\s{B}^{-1}-\left|\ca{O}^{(\ell+1,t_{a})}_{\ca{M}^{(\ell+1,z)}}\right| \\
    &\implies \ca{S} \subseteq \{\pi_u(r)\mid \ca{Z}\}_{r=1}^{s} \text{ where }\ca{Z}\equiv [\s{N}]\setminus \ca{O}^{(\ell+1,t_a)}_{\ca{M}^{(\ell+1,z)}} \text{ and } s=\s{T}-\left|\ca{O}^{(\ell+1,t_{a})}_{\ca{M}^{(\ell+1,z)}}\right| \\
    &\implies \ca{S} \subseteq \{\pi_u(t)\}_{t=1}^{\s{T}\s{B}^{-1}} \\
    &\implies \ca{N}^{(\ell+1,t_{a+1},z)} \supseteq \bigcup_{u \in \ca{M}^{(\ell+1,z)}}\{\pi_u(t')\mid \ca{Z}\}_{t'=1}^{\s{T}\s{B}^{-1}-s} \\ &\text{ where } \ca{Z}\equiv[\s{N}]\setminus \ca{O}^{(\ell+1,t_{a+1})}_{\ca{M}^{(\ell+1,z)}} \text{ and }s=\left|\ca{O}^{(\ell+1,t_{a+1})}_{\ca{M}^{(\ell+1,z)}}\right|
\end{align}
This first implication is due to our induction hypothesis (see eq.\ref{eq:induction_hypothesis}) which implies that the best $s$ items in the smaller set $\ca{N}^{(\ell+1,t_a,z)}$ is same as the best $s=\s{T}\s{B}^{-1}-\left|\ca{O}^{(\ell+1,t_a)}_{\ca{M}^{(\ell+1,z)}}\right|$ items in the larger set $[\s{N}]\setminus \ca{O}^{(\ell+1,t_a)}_{\ca{M}^{(\ell+1,z)}}$. Hence, it is evident that the set $\ca{S}$ must also be a subset of the best $\s{T}\s{B}^{-1}$ items (\textit{golden items}) for user $u$ namely $\{\pi_u(t)\}_{t\in [\s{T}\s{B}^{-1}]}$. 
Since the above facts are true for all users $u\in \ca{M}^{(\ell+1,z)}$, we can also conclude that the new pruned set of items $\ca{N}^{(\ell+1,t_{a+1},z)}$ at the next decision round $t_{a+1}$ is a superset of the best $\s{T}\s{B}^{-1}-\left|\ca{O}^{(\ell+1,t_{a+1})}_{\ca{M}^{(\ell+1,z)}}\right|$ items for every user $u$.
This completes the second part of the induction proof. 
\end{proof}

\begin{lemma}\label{lem:induction}
Conditioned on the events $\ca{E}_2^{(\ell)},\ca{E}_3^{(\ell)}$, with choice of $\Delta_{\ell+1}=\epsilon_{\ell+1}/88\s{C}$, conditions A-C will be satisfied at the beginning of the \textit{explore} component of phase $\ell$ for the different nice subsets $\{\ca{M}^{(\ell+1,z)}\}_z$ and their corresponding set of active arms $\{\ca{N}^{(\ell+1,z)}\}_z$
implying that the event $\ca{E}_2^{(\ell+1)}$ will be true.
\end{lemma}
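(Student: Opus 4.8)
The plan is to verify conditions (A), (B) and (C) one at a time for the family $\{(\ca{M}^{(\ell+1,z)},\ca{N}^{(\ell+1,z)})\}_z$ that the algorithm hands to phase $\ell+1$, conditioning throughout on $\ca{E}_2^{(\ell)}\cap\ca{E}_3^{(\ell)}$. Since within each phase the \textit{exploit} component precedes the \textit{explore} component, the state at the end of the \textit{exploit} component of phase $\ell+1$ is exactly the state at the beginning of its \textit{explore} component, so it suffices to establish (A)--(C) at that point. Essentially all of the substantive work has already been carried out in Lemmas \ref{lem:interesting_clique}, \ref{lem:connected_component} and \ref{lem:exploit_stage}; what remains is to assemble these, together with the remark that a group $\ca{M}^{(\ell,i)}\in\ca{M}^{(\ell)}\setminus\ca{M}'^{(\ell)}$ (an edge case: $|\ca{N}^{(\ell,i)}|<\s{T}^{1/3}$, or target accuracy $\Delta_{\ell+1}$ unachievable even at sampling probability $1$) simply recommends arbitrary unblocked active items to the end and never produces a phase-$(\ell+1)$ group, so (A)--(C) only need to be checked for the surviving groups.

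For (A): each $\ca{M}^{(\ell+1,z)}$ is a connected component of the graph $\ca{G}^{(\ell,i)}$ for some $\ca{M}^{(\ell,i)}\in\ca{M}'^{(\ell)}$, and Lemma \ref{lem:interesting_clique} shows that every such component is a nice subset of users. The $\ca{M}^{(\ell,i)}$ are pairwise disjoint (part of $\ca{E}_2^{(\ell)}$) and the components of a fixed $\ca{G}^{(\ell,i)}$ are disjoint, hence the $\ca{M}^{(\ell+1,z)}$ are pairwise disjoint; each is a nonempty union of true clusters, so there are at most $\s{C}$ of them and $a_{\ell+1}\le\s{C}$. The active sets obey $\ca{N}^{(\ell+1,z)}\subseteq[\s{N}]$ by construction (and may overlap across $z$, which (A) allows). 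Finally \texttt{Explore} is invoked separately per group and, via the global counters $\fl{K},\fl{L}$, reuses only observations sampled within that group, so the asynchronous-separability part of (A) holds; and the pairs $(\ca{M}^{(\ell+1,z)},\ca{N}^{(\ell+1,z)})$ are frozen once the \textit{exploit} component of phase $\ell+1$ terminates.

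For (B): this is exactly part 2 of Lemma \ref{lem:exploit_stage} (eq.~\eqref{eq:enough_unblocked}): at the end of the \textit{exploit} component of phase $\ell+1$ we have $\ca{O}^{(\ell+1)}_{\ca{M}^{(\ell+1,z)}}\subseteq\{\pi_u(t)\}_{t=1}^{\s{T}/\s{B}}$ for every $u\in\ca{M}^{(\ell+1,z)}$ (only golden items have ever been selected for recommendation in \textit{exploit} components for this group) and $\ca{N}^{(\ell+1,z)}$ contains, for every such $u$, all of $u$'s golden items not yet selected for recommendation --- this is precisely eq.~\eqref{eq:enough_unblocked2} restated at phase $\ell+1$. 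The consequence that $\ca{N}^{(\ell+1,z)}$ still holds enough unblocked items for all remaining rounds follows as in Corollary \ref{coro:important}: no unblocked golden item is ever deleted from the active set, so the number of recommendations of surviving golden items still allowed dominates the number of rounds left.

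For (C): fix $u\in\ca{M}^{(\ell+1,z)}$. By definition $|\fl{P}_{u\pi_u(1)\mid\ca{N}^{(\ell+1,z)}}-\min_{j\in\ca{N}^{(\ell+1,z)}}\fl{P}_{uj}| = \max_{x,y\in\ca{N}^{(\ell+1,z)}}(\fl{P}_{ux}-\fl{P}_{uy})$, which by part 1 of Lemma \ref{lem:exploit_stage} is at most $88\s{C}\Delta_{\ell+1}$; with $\Delta_{\ell+1}=\epsilon_{\ell+1}/(88\s{C})$ this is $\epsilon_{\ell+1}$, which is eq.~\eqref{eq:gap} at phase $\ell+1$. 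Having verified (A)--(C) at the start of the \textit{explore} component of phase $\ell+1$, $\ca{E}_2^{(\ell+1)}$ holds. I do not expect a real obstacle in this lemma: the genuine content lives in Lemma \ref{lem:exploit_stage} and its induction over the recursive calls in the \textit{exploit} component; the one point that needs care is keeping the excluded edge-case groups out of the induction, as noted above.
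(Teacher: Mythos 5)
Your proposal is correct and follows essentially the same route as the paper's own proof: condition (A) via Lemma \ref{lem:interesting_clique} and the disjointness of the connected components of $\ca{G}^{(\ell,i)}$ across the groups in $\ca{M}'^{(\ell)}$, and conditions (B) and (C) by reading them off from parts 2 and 1 of Lemma \ref{lem:exploit_stage} with $\Delta_{\ell+1}=\epsilon_{\ell+1}/88\s{C}$. Your extra remarks on excluding the edge-case groups and on the per-group use of observations are consistent with (and slightly more explicit than) the paper's argument.
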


\begin{proof}
\begin{enumerate}
    \item \textit{Proof of condition A: } Due to our induction hypothesis, condition A is true for the \textit{explore} component of phase $\ell$. Hence, the \textit{explore} component of phase $\ell$ is implemented separately and asynchronously for each disjoint \textit{nice} subset of users $\ca{M}^{(\ell,i)}\in \ca{M}'^{(\ell)}\subseteq\ca{M}^{(\ell)}$ (recall that $ \ca{M}'^{(\ell)}$ corresponds to the nice subsets of users which do not fall into the edge case scenarios). Let us fix one such nice subset of users $\ca{M}^{(\ell,i)}\in \ca{M}'^{(\ell)}$. After the successful low rank matrix completion step (event $\ca{E}_3^{(\ell)}$ is true), we find the connected components of a graph $\ca{G}^{(\ell,i)}$ which in turn correspond to nice subsets of users as well (see Lemma \ref{lem:interesting_clique}). Since the above facts are true for all nice subsets of users in $\ca{M}'^{(\ell)}$, the \textit{nice} subsets of users that progress to the $(\ell+1)^{\s{th}}$ phase are disjoint. Since these set of users are not modified during the \textit{exploit} component of phase $\ell+1$, condition A is true at the beginning of the \textit{explore} component of phase $\ell+1$.
    \item \textit{Proof of conditions B and C:} Again, let us fix a subset of nice users $\ca{M}^{(\ell+1,z)}$ that has progressed to phase $\ell+1$ and was in turn a part of the \textit{nice} subset of users $\ca{M}^{(\ell,i)}$ in phase $\ell$. In other words, the set of users $\ca{M}^{(\ell+1,z)}$ corresponds to a connected component of the graph $\ca{G}^{(\ell,i)}$. From Lemma \ref{lem:exploit_stage}, we can conclude that conditions B and C are true at the beginning of the \textit{explore} component of phase $\ell+1$ for users in $\ca{M}^{(\ell+1,z)}$ with choice of $\Delta_{\ell+1}=\epsilon_{\ell+1}/88\s{C}$ where $\epsilon_{\ell+1}$ was pre-determined. Therefore, the conditions B and C hold for all nice subsets of users $\{\ca{M}^{(\ell+1,z)}\}_z$ that have progressed to phase $\ell+1$.
\end{enumerate}
Hence, conditioned on the events $\ca{E}_2^{(\ell)},\ca{E}_3^{(\ell)}$, with choice of $\Delta_{\ell+1}=\epsilon_{\ell+1}/88\s{C}$, the algorithm will be $(\epsilon_{\ell+1},\ell+1)-$good and the event $\ca{E}_2^{(\ell+1)}$ will be true.
\end{proof}

\subsection{Analyzing the regret guarantee}

\begin{lemma}\label{lem:all_good}
 Consider a fixed decreasing sequence $\{\epsilon_{\ell}\}_{\ell\ge 1}$ where $\epsilon_1=\lr{\fl{P}}_{\infty}$ and $\epsilon_{\ell}=C'2^{-\ell}\min\Big(\|\fl{P}\|_{\infty},\frac{\sigma\sqrt{\mu}}{\log \s{N}}\Big)$ for $\ell>1$ for some constant $C'>0$.
 Let us denote the event $\ca{E}=\bigcap_{\ell} \ca{E}_2^{(\ell)}\bigcap_{\ell}\ca{E}_3^{(\ell)}$ to imply that our algorithm is $(\epsilon_{\ell},\ell)-$good at all phases indexed by $\ell$ and the \textit{explore} components of all phases are successful with the length of the \textit{explore} component of phase $\ell$ being $$m_{\ell}=O\Big(\frac{\sigma^2 \widetilde{\mu}^3 \log (\s{M}\bigvee \s{N})}{\Delta_{\ell+1}^2}\max\Big(1,\frac{\s{N}\tau}{\s{M}}\Big)\log \s{T})\Big)\Big).
$$
The above statement implies that for any \textit{nice} subset of users $\ca{M}^{(\ell,i)}$ that has progressed to the $\ell^{\s{th}}$ phase with active items $\ca{N}^{(\ell,i)}$ at the beginning of the \textit{explore} components, with $m_{\ell}$ rounds, we can compute an estimate $\widetilde{\fl{P}}_{\ca{M}^{(\ell,i)},\ca{N}^{(\ell,i)}}$ of $\fl{P}_{\ca{M}^{(\ell,i)},\ca{N}^{(\ell,i)}}$ satisfying  \begin{align*}
 \lr{\widetilde{\fl{P}}^{(\ell)}_{\ca{M}^{(\ell,i)},\ca{N}^{(\ell,i)}}-\fl{P}_{\ca{M}^{(\ell,i)},\ca{N}^{(\ell,i)}}}_{\infty} \le \Delta_{\ell+1}.  
 \end{align*} 
 In that case, the event $\ca{E}$ is true with probability at least $1-\s{C}\s{T}^{-2}$.
\end{lemma}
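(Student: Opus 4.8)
The plan is a short induction over the $O(\log\s{T})$ phases, plugging in the two ingredients already available, followed by a union bound. Write $\ca{F}_\ell \triangleq \bigcap_{j=1}^{\ell}\ca{E}_2^{(j)}\cap\bigcap_{j=1}^{\ell}\ca{E}_3^{(j)}$, so that $\ca{E}=\ca{F}_{L}$ with $L$ the index of the last phase. First I would record the base case: $\ca{E}_2^{(1)}$ holds deterministically, since the initialization $\ca{M}^{(1,1)}=[\s{M}]$, $\ca{N}^{(1,1)}=[\s{N}]$ makes conditions (A)--(C) hold trivially (with $\epsilon_1=\lr{\fl{P}}_\infty$), exactly as argued in the ``Base Case'' paragraph. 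For the inductive step I would establish $\Pr[\ca{F}_{\ell}\mid\ca{F}_{\ell-1}]\ge 1-O(\s{C}\s{T}^{-3})$: conditioned on $\ca{F}_{\ell-1}$, Lemma \ref{lem:induction} (applied with index $\ell-1$, using $\Delta_\ell=\epsilon_\ell/(88\s{C})$) already forces $\ca{E}_2^{(\ell)}$ with probability one; and conditioned on $\ca{E}_2^{(\ell)}$, Lemma \ref{lem:interesting1} yields the $\ell_\infty$ estimation guarantee (eq.~\ref{eq:submatrix_estimation}) on each nice subset in $\ca{M}'^{(\ell)}$ with probability $1-O(\s{T}^{-3})$, so a union bound over the at most $\s{C}$ such subsets gives $\Pr[\ca{E}_3^{(\ell)}\mid\ca{E}_2^{(\ell)}]\ge 1-O(\s{C}\s{T}^{-3})$.

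Making the estimation step rigorous is where I would spend the care. The point is that the partition $\ca{M}^{(\ell)}$ and the active-item lists $\ca{N}^{(\ell)}$ at the start of phase $\ell$'s explore component are measurable with respect to the history produced before that component, whereas \texttt{Explore} draws a fresh Bernoulli mask $\Omega^{(\ell)}$ and observes fresh noise that is never reused from an earlier \texttt{Estimate} call (this is precisely what the bookkeeping matrices $\fl{K},\fl{L}$ enforce). Hence conditioning on $\ca{F}_{\ell-1}$ --- an event in that history --- leaves the newly sampled entries and their noise independent, so Lemma \ref{lem:min_acc}, and therefore Lemma \ref{lem:interesting1}, applies verbatim. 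I would then verify its hypotheses for subsets in $\ca{M}'^{(\ell)}$: the incoherence and condition number of a reward sub-matrix on a nice user set are $O(\mu)$ and $O(1)$ by Lemmas \ref{lem:incoherence} and \ref{lem:condition_num}; the requirements $|\ca{N}^{(\ell,i)}|\ge\s{T}^{1/3}$ and $\Delta_{\ell+1}=\Omega(\sigma\sqrt{\mu^3\log d_1}/\sqrt{d_2})$ are built into the definition of $\ca{M}'^{(\ell)}$; and $\s{M}/(\tau\s{C})=\Omega(\s{T}^{1/3})$ is assumed. Subsets \emph{not} in $\ca{M}'^{(\ell)}$ (too few active items, or the target accuracy unattainable even at $p=1$) never invoke \texttt{Estimate}, so they contribute no failure probability --- they simply recommend arbitrary unblocked items for the remaining rounds, a situation already handled by Lemma \ref{lem:interesting20}.

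Finally I would count phases and chain. Since $\Delta_{\ell+1}=\epsilon_{\ell+1}/(88\s{C})$ halves every phase and $m_\ell=\widetilde{O}(\Delta_{\ell+1}^{-2}\max(1,\s{N}\tau/\s{M}))$ therefore grows geometrically while the total horizon is $\s{T}$, the number of phases is $L=O(\log\s{T})$ (equivalently, \texttt{Estimate} is called at most $O(\s{C}\log\s{T})$ times over the whole run). Chaining the one-step bound, $\Pr[\ca{E}]=\Pr[\ca{F}_L]=\Pr[\ca{F}_1]\prod_{\ell=2}^{L}\Pr[\ca{F}_\ell\mid\ca{F}_{\ell-1}]\ge(1-O(\s{C}\s{T}^{-3}))^{L}\ge 1-O(\s{C}\,\s{T}^{-3}\log\s{T})$, which is at least $1-\s{C}\s{T}^{-2}$ once $\s{T}$ exceeds an absolute constant; alternatively, invoking Remark \ref{rmk:prob_error} to replace each per-subset error by $O(\s{T}^{-c})$ for a large constant $c$ (at only a constant-factor change in $m_\ell$) makes $\Pr[\ca{E}]\ge 1-\s{C}\s{T}^{-2}$ for every $\s{T}$.

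\textbf{Main obstacle.} The geometric phase count and the union bound are routine. The delicate part is the conditioning/measurability bookkeeping --- making precise that ``$(\epsilon_\ell,\ell)$-good'' is a statement about the past, that phase $\ell$'s exploration uses fresh independent randomness so Lemma \ref{lem:interesting1} applies after conditioning, and that Lemma \ref{lem:induction}'s deterministic implication lets the induction close with no probabilistic leakage between phases --- together with checking that the edge-case subsets incur no failure. Once that scaffolding is in place, the bound falls out of Lemmas \ref{lem:interesting1} and \ref{lem:induction}.
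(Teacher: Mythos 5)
Your proposal is correct and follows essentially the same route as the paper: the paper likewise decomposes $\Pr(\ca{E}^c)$ phase-by-phase, uses Lemma \ref{lem:induction} to get $\Pr(\ca{E}_2^{(\ell)c}\mid \ca{E}_2^{(\ell-1)},\ca{E}_3^{(\ell-1)})=0$ deterministically, and applies Lemma \ref{lem:interesting1} plus a union bound over the at most $\s{C}$ nice subsets and the phases to control $\Pr(\ca{E}_3^{(\ell)c}\mid\ca{E}_2^{(\ell)})$, noting exactly the same independence-of-fresh-observations point you emphasize. Your multiplicative chaining of conditional probabilities is equivalent to the paper's additive union bound (the paper even bounds the phase count more crudely by $\s{T}$ rather than $O(\log\s{T})$), so there is nothing to correct.
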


\begin{proof}
Notice that 
\begin{align*}
    &\Pr(\ca{E}^c)=1-\Pr(\bigcup_{\ell} \ca{E}_2^{(\ell)c}\bigcup_{\ell}\ca{E}_3^{(\ell)c}) \\
    &\ge 1 -(\Pr(\ca{E}_2^{(1)c})+\Pr(\ca{E}_3^{(1)c}\mid \ca{E}_2^{(1)}))\\
    &- \sum_{\ell>1}\Big(\Pr(\ca{E}_2^{(\ell)c}\mid  \bigcap_{\ell'<\ell}(\ca{E}_2^{(\ell')}\cap\ca{E}_3^{(\ell)}))+\Pr(\ca{E}_3^{(\ell)c}\mid \ca{E}_2^{(\ell)c},\bigcap_{\ell'<\ell}(\ca{E}_2^{(\ell')}\cap\ca{E}_3^{(\ell)}))\Big) \\
    &\ge 1 -(\Pr(\ca{E}_2^{(1)c})+\Pr(\ca{E}_3^{(1)c}\mid \ca{E}_2^{(1)}))\\
    &- \sum_{\ell>1}\Big(\Pr(\ca{E}_2^{(\ell)c}\mid  \ca{E}_2^{(\ell-1)},\ca{E}_3^{(\ell-1)})+\Pr(\ca{E}_3^{(\ell)c}\mid \ca{E}_2^{(\ell)})\Big) \ge 1-\s{C}\s{T}^{-2}
\end{align*}
where we used the following facts 1) $\Pr(\ca{E}_2^{(1)c})=0$ and $\Pr(\ca{E}_2^{(\ell)c}\mid  \ca{E}_2^{(\ell-1)},\ca{E}_3^{(\ell-1)})=0$ for all $\ell$ (Lemma \ref{lem:induction}) 2) $\Pr(\ca{E}_3^{(\ell)c}\mid \ca{E}_2^{(\ell)}) \le \s{C}\s{T}^{-2}$ implied from Lemma \ref{lem:interesting1} with  additional union bounds over the number of phases (at most the number of rounds $\s{T}$) and the number of disjoint nice subsets of users that have progressed in each phase (at most the number of clusters $\s{C}$). An important fact to keep in mind is that the above analysis is possible since the observations used to compute estimates are never repeated in Alg. \ref{algo:explore} and we are able to avoid complex dependencies. 
\end{proof}

Now, we are ready to prove our main regret bound. Suppose we condition on the event $\ca{E}$ as defined in Lemma \ref{lem:all_good}. Conditioned on the event $\ca{E}$, let us denote by $\rho_u$ to be some sequence of items recommended to the user $u\in [\s{M}]$ by our algorithm. The probability of this sequence of items being recommended is $\Pr(\cap_{u\in [\s{M}]}\rho_u \mid \ca{E})$. 

\subsubsection{Swapping argument}\label{subsec:swapping}

Let us fix a particular user $u\in [\s{M}]$ and a sequence of recommended items $\rho_u$ such that $\rho_u(t)\in [\s{N}]$ is the item recommended to user $u$ at round $t$. For sake of analysis, we will construct a permutation $\theta_u:[\s{T}]\rightarrow \{\rho_u(t)\}_{t\in [\s{T}]}$ of the items $\{\rho_u(t)\}_{t\in [\s{T}]}$ with sequential modifications ($\theta_u$ is initialized with $\rho_u$). For any phase indexed by $\ell$, consider the \textit{exploit} and \textit{explore} components for a \textit{nice} subset of users $\ca{M}^{(\ell,i)}$. The sequence of items recommended to user $u$ during the \textit{explore} component remain unchanged i.e. for any phase $\ell$, $\theta_u(t)=\rho_u(t)$ for all rounds $t\in [t]$ such that $t$ corresponds to the \textit{explore} component of phase $\ell$ for the user $u$. 

Now, for the exploit component in phase $\ell$, recall that $\ca{O}^{(\ell)}_{\ca{M}^{(\ell,i)}}\setminus \ca{O}^{(\ell-1)}_{\ca{M}^{(\ell,i)}}$ is the set of items chosen for recommendation particularly in the \textit{exploit} component of phase $\ell$ (we remove the super-script $t$ since we refer to the end of the \textit{exploit} components in phase $\ell$ and phase $\ell-1$ respectively). Suppose at round $t$ in the \textit{explore} component of phase $\ell$, the $b^{\s{th}}$ item in the set $ \ca{O}^{(\ell)}_{\ca{M}^{(\ell,i)}}\setminus \ca{O}^{(\ell-1)}_{\ca{M}^{(\ell,i)}}$ was chosen to be recommended to all users in $\ca{M}^{(\ell,i)}$ but was found to be blocked for user $u\in \ca{M}^{(\ell,i)}$ before it could be recommended $\s{B}$ times. Let us also denote $\ca{N}^{(\ell,t_a,i)}$ to be set of active items for users in $\ca{M}^{(\ell,i)}$ at round $t$ (i.e. $t_a$ was the previous decision round where it was decided whether possibility A or possibility B was true). Since the $b^{\s{th}}$ item in the set $ \ca{O}^{(\ell)}_{\ca{M}^{(\ell,i)}}\setminus \ca{O}^{(\ell-1)}_{\ca{M}^{(\ell,i)}}$ was blocked for user $u$, instead we recommend any unblocked item for user $u$ from the current set of active items $\ca{N}^{(\ell,t_a,i)}$. This is always possible; we showed in Lemma \ref{lem:exploit_stage} (see eq. \ref{eq:superset}) that the active set of items always contain sufficient unblocked items for possible recommendations for remaining rounds for any user in the corresponding \textit{nice} subset of users during the exploit component. 
Now there are two possibilities:
\begin{enumerate}
    \item (\textit{$b^{\s{th}}$ item in the set $ \ca{O}^{(\ell)}_{\ca{M}^{(\ell,i)}}\setminus \ca{O}^{(\ell-1)}_{\ca{M}^{(\ell,i)}}$ was recommended in round $t'$ in the \textit{explore} component of previous phase $\ell'$)} We consider the active set of items $\ca{N}^{(\ell',h)}$ (such that $u\in \ca{M}^{(\ell',h)}$ in phase $\ell'$) where $\ell'<\ell$ is the phase index (and $t'$ is the round index) when the $b^{\s{th}}$ item in $\ca{O}^{(\ell)}_{\ca{M}^{(\ell,i)}}\setminus \ca{O}^{(\ell-1)}_{\ca{M}^{(\ell,i)}}$ (say $a$) was recommended to the user $u$ during the \textit{explore} component.
    Let us denote the item that we have recommended as replacement to user $u$ at round $t$ by $a'$.
    Note that since $a'\in \ca{N}^{(\ell,t_a,i)}$, it must happen that $a'\in \ca{N}^{(\ell',h)}$ since $\ca{N}^{(\ell,t_a,i)}
    \subseteq \ca{N}^{(\ell',h)}$.
 In that case, we have 
\begin{align*}
    \rho_u(t) = a' \text{ and }\rho_u(t')=a
\end{align*}
We swap the above items so that in the modified sequence, we have
\begin{align*}
    \theta_u(t) = a \text{ and }\theta_u(t')= a'
\end{align*}
The goal of the swapping operation at round $t$ in the \textit{exploit} component is to modify the sequence of items such that 1) the item $a$ chosen for recommendation at round $t$ is assigned to round $t$ in the \textit{exploit} component 2) the item $a$ actually recommended in round $t'<t$ and phase $\ell'<\ell$ is replaced by another item $a'$ (actually recommended at round $t>t'$ in phase $\ell>\ell'$) such that both $a,a'$ belongs to the same set of active items $\ca{N}^{(\ell',h)}$ ($u\in \ca{M}^{(\ell',h)}$ for some $h$).
We will also say that the item $a$ \textit{chosen for recommendation is replaced by a swapping operation of length $1$}. This is because the chosen element for recommendation $a$ was recommended in the \textit{explore} component of a previous phase. A more precise definition of the \textit{length of a swapping operation} is provided below.

\item (\textit{$b^{\s{th}}$ item in the set $ \ca{O}^{(\ell)}_{\ca{M}^{(\ell,i)}}\setminus \ca{O}^{(\ell-1)}_{\ca{M}^{(\ell,i)}}$ was recommended in the \textit{exploit} component of previous phase):} This is the more difficult case. As before, let us denote the $b^{\s{th}}$ item in the set $ \ca{O}^{(\ell)}_{\ca{M}^{(\ell,i)}}\setminus \ca{O}^{(\ell-1)}_{\ca{M}^{(\ell,i)}}$ by $a$.
Of course, the item $a$ could not have been chosen for recommendation in the \textit{exploit} component of a previous phase (i.e. $a$ cannot belong in the set $ \ca{O}^{(\ell')}_{\ca{M}^{(\ell',j)}}\setminus \ca{O}^{(\ell'-1)}_{\ca{M}^{(\ell',j)}}$ for any $\ell'<\ell$ with $u\in \ca{M}^{(\ell',j)}$ in phase $\ell'$.) In that case, the item  $a$ was recommended in the \textit{exploit} component of phase $\ell$ \textit{as part of a swapping operation}. With this intuition in mind, let us define \textit{length of a swapping operation} precisely: 

\begin{defn}[Length of swapping operation]\label{defn:swap}
For a user $u$, suppose $a_1\in  \ca{O}^{(\ell)}_{\ca{M}^{(\ell,i)}}\setminus \ca{O}^{(\ell-1)}_{\ca{M}^{(\ell,i)}}$ is an item chosen for recommendation in the \textit{exploit} component of phase $\ell$ but found to be blocked. In that case, we will say that $a_1$ is replaced by a swapping operation of length $\s{L}+1$ if there exists $\s{L}+1$ items $a_1,a_2,\dots,a_{\s{L}},a_{\s{L+1}}$, $\s{L+1}$ phases $p_1>p_2>\dots>p_{\s{L}}>p_{\s{L}+1}$ and respective rounds $t_1,t_2,\dots,t_{\s{L}},t_{\s{L}+1}$ such that 1) $a_1$ is chosen for recommendation in \textit{exploit component} of phase $p_1$ at round $t_1$ 2) for each $2\le i <\s{L}$, $a_{i-1}$ has been recommended in the \textit{exploit} component of phase $p_{i}$ at round $t_{i}$ when the intended item chosen for recommendation was $a_{i}$  3) $a_{\s{L}}$ has been recommended in the \textit{explore} component of phase $p_{\s{L}+1}$ at round $t_{\s{L}+1}$ 3) $a_{\s{L}+1}$ is the item recommended to user $u$ in place of $a_1$ in phase $p_1$ at round $t_1$.
\end{defn}
As before, $a_{\s{L}+1}$ belongs to current set of active items $\ca{N}^{(\ell,t_a,i)}$ at round $t_1$ for users in $\ca{M}^{(\ell,i)}$ (recall that $t_a$ is the decision round just prior to $t_1$).
In the above definition, note that $\s{L}$ must be finite since there are a finite number of components and the first phase only has an \textit{explore} component (recall that \textit{exploit} component in the first phase has zero rounds). 

For a user $u\in \ca{M}^{(\ell,i)}$, suppose $a_1\in  \ca{O}^{(\ell)}_{\ca{M}^{(\ell,i)}}\setminus \ca{O}^{(\ell-1)}_{\ca{M}^{(\ell,i)}}$ is an item chosen for recommendation in the \textit{exploit} component of phase $\ell$ but found to be blocked. Moreover suppose $a_1$ must be \textit{replaced by a swapping operation of length $\s{L}$} (see Definition \ref{defn:swap} for notations). In that case, we make the following modifications to the permutation $\theta_u$:
 \begin{align*}
     \theta_u(t_i)=a_{i} \text{ for all }1\le i \le \s{L}+1
 \end{align*}
 
\end{enumerate}

\begin{lemma}\label{lem:modified_sequence}
Condition on the event $\ca{E}$.
Consider the modified sequence of distinct items $\{\theta_u(t)\}_{t\in [\s{T}]}$ for a certain fixed user $u$. For any phase $\ell$ to which the user $u$ has progressed as part of the \textit{nice} subset $\ca{M}^{(\ell,i)}$, consider the \textit{exploit} component starting from round index $t_{\s{exploit,start},\ell}$ to $t_{\s{exploit,end},\ell}$. In that case, the set of elements $\{\theta_u(t)\mid t\in [t_{\s{exploit,start},\ell},t_{\s{exploit,end},\ell}]\}$ is equivalent to the set of elements $\ca{O}^{(\ell)}_{\ca{M}^{(\ell,i)}}\setminus \ca{O}^{(\ell-1)}_{\ca{M}^{(\ell,i)}}$ repeated $\s{B}$ times i.e. the golden items \textbf{chosen for recommendation} in the \textit{exploit} component of phase $\ell$ to users in $\ca{M}^{(\ell,i)}$. Similarly, consider the \textit{explore} component starting from round index $t_{\s{explore,start},\ell}$ to $t_{\s{explore,end},\ell}$. In that case, for any $t\in [t_{\s{explore,start},\ell},t_{\s{explore,end},\ell}]$, it must happen that $\theta_u(t)\in \ca{N}^{(\ell,i)}$ i.e. the set of active items for $\ca{M}^{(\ell,i)}$ during the \textit{explore} component of phase $\ell$.
\end{lemma}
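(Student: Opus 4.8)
The plan is to prove Lemma \ref{lem:modified_sequence} by induction on the phase index $\ell$, tracking what the modified permutation $\theta_u$ looks like after all swapping operations associated with phases up to $\ell$ have been performed. The key structural observation is the one already set up in Section \ref{subsec:swapping}: a swapping operation of length $\s{L}+1$ is a chain $a_1 \to a_2 \to \dots \to a_{\s{L}+1}$ of rounds $t_1 > t_2 > \dots > t_{\s{L}+1}$ with strictly decreasing phase indices $p_1 > p_2 > \dots > p_{\s{L}+1}$, where $a_1$ is the golden item \emph{chosen} for recommendation in the exploit component of phase $p_1$, each $a_i$ (for $2 \le i \le \s{L}$) is the item actually recommended at a round $t_i$ in the exploit component of phase $p_i$ where the intended choice was $a_i$, $a_{\s{L}}$ was actually recommended in the \emph{explore} component of phase $p_{\s{L}+1}$, and $a_{\s{L}+1}$ is the filler item drawn from the active set $\ca{N}^{(\ell,t_a,i)}$ at round $t_1$. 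First I would argue that these chains are well-defined and terminate: since the first phase has only an explore component, following the chain of "was recommended as a replacement in an earlier exploit component" strictly decreases the phase index each step, so the chain hits an explore component after finitely many steps; also the chains starting from distinct blocked-item events are pairwise disjoint in their constituent rounds (each round belongs to at most one chain, since a round in an exploit component carries a single "intended item" and a single "actually recommended item"). This disjointness is what makes the composition of all swaps a genuine permutation $\theta_u$ of $\{\rho_u(t)\}_t$.

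Next I would carry out the main bookkeeping. For the explore component of phase $\ell$: by construction, the only modifications to rounds in $[t_{\s{explore,start},\ell}, t_{\s{explore,end},\ell}]$ come from swaps where such a round plays the role of $t_{\s{L}+1}$ in some chain — i.e., the item $a_{\s{L}}$ originally recommended there gets replaced by the filler $a_{\s{L}+1}$. But $a_{\s{L}+1}$ lies in the active set $\ca{N}^{(\ell',t_a,i')}$ used in the exploit component of the later phase $p_1 = \ell'$ at round $t_1$, and by the monotonicity of active sets (each phase's active set is contained in the previous one — this is exactly the nesting used throughout Lemma \ref{lem:exploit_stage}, since active items are only pruned), we have $\ca{N}^{(\ell',t_a,i')} \subseteq \ca{N}^{(\ell,i)}$, where $\ca{N}^{(\ell,i)}$ is the active set for the nice subset containing $u$ during the explore component of phase $\ell$. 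Hence $\theta_u(t) \in \ca{N}^{(\ell,i)}$ for every such $t$, which is the second claim. Rounds in the explore component not touched by any chain keep $\theta_u(t) = \rho_u(t) \in \ca{N}^{(\ell,i)}$ trivially, since those recommendations were drawn from $\ca{N}^{(\ell,i)}$ (either as sampled items in $\Omega^{(\ell)}$ or as arbitrary unblocked items from $\ca{N}^{(\ell,i)}$ per the exploration strategy).

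For the exploit component of phase $\ell$: I claim $\{\theta_u(t) : t \in [t_{\s{exploit,start},\ell}, t_{\s{exploit,end},\ell}]\}$ equals $\ca{O}^{(\ell)}_{\ca{M}^{(\ell,i)}} \setminus \ca{O}^{(\ell-1)}_{\ca{M}^{(\ell,i)}}$, each element appearing exactly $\s{B}$ times. Fix one of these $|\ca{S}|\s{B}$ rounds $t$ (collected over the recursive possibility-B steps) at which the intended item $x$ — the $\lceil (t-t_0)/\s{B}\rceil$-th item of $\ca{S}$, hence an element of $\ca{O}^{(\ell)}_{\ca{M}^{(\ell,i)}} \setminus \ca{O}^{(\ell-1)}_{\ca{M}^{(\ell,i)}}$ — is chosen. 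If $x$ was unblocked for $u$ at round $t$, then $\rho_u(t) = x$ and no chain touches $t$, so $\theta_u(t) = x$. If $x$ was blocked, then $t$ is the head $t_1$ of a swapping chain whose first element is $a_1 = x$, and the swap sets $\theta_u(t) = a_1 = x$. Either way $\theta_u(t) = x \in \ca{O}^{(\ell)}_{\ca{M}^{(\ell,i)}} \setminus \ca{O}^{(\ell-1)}_{\ca{M}^{(\ell,i)}}$. Conversely I would check that no round outside the exploit component of phase $\ell$ receives an element of $\ca{O}^{(\ell)}_{\ca{M}^{(\ell,i)}} \setminus \ca{O}^{(\ell-1)}_{\ca{M}^{(\ell,i)}}$ via $\theta_u$ at phase $\ell$'s level — this is where the disjointness of chains and the strict monotonicity of phase indices within a chain are used. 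Multiplicity $\s{B}$ follows because, by Lemma \ref{lem:exploit_stage} (eq. \ref{eq:superset}), the active set always has enough unblocked items, so every item of $\ca{S}$ gets its full complement of $\s{B}$ recommendation-slots counted (blocked slots being redirected to fillers, which is exactly what the swap undoes).

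The main obstacle I expect is the careful verification that the composition of swaps is consistent — i.e., that distinct blocked-item chains never collide on a round and that a round in an exploit component is the "head" $t_1$ of at most one chain while simultaneously possibly being an internal node $t_i$ ($i \ge 2$) of another. Resolving this requires a precise accounting: at a given round $t$ in an exploit component, there is exactly one intended item and exactly one actually-recommended item, so $t$ can be the head of a chain (via its intended-but-blocked item) \emph{and} an internal/terminal node of a different chain (via its actually-recommended filler), but these two roles are handled by two different swap operations that assign $\theta_u(t)$ consistently — one must check the definitions in Section \ref{subsec:swapping} are arranged so that the head-role assignment $\theta_u(t_1) = a_1$ takes precedence and the internal-role assignments $\theta_u(t_i) = a_i$ for $i \ge 2$ of the \emph{other} chain refer to that chain's own rounds, not to $t_1$. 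Once this consistency is nailed down, the rest is the routine bookkeeping sketched above, driven entirely by the nesting $\ca{N}^{(\ell+1,\cdot)} \subseteq \ca{N}^{(\ell,\cdot)}$ and the containment $\ca{S} \subseteq \{\pi_u(r)\}_{r=1}^{\s{T}/\s{B}}$ from Lemma \ref{lem:exploit_stage}.
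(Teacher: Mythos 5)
Your overall strategy---follow each blocked golden item backward through the chain of filler recommendations until the chain terminates in an explore round, then argue (i) every intended golden item lands at its intended exploit round and (ii) every explore round ends up holding an item from the same active set, via the nesting of active sets---is exactly the route the paper takes, and your handling of the explore component reproduces the content of Claim \ref{claim:tricky2} essentially verbatim.

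The problem is your structural claim that ``chains starting from distinct blocked-item events are pairwise disjoint in their constituent rounds.'' This is false, and false in precisely the configuration that makes the lemma nontrivial: if $a_1$ is blocked at $t_1$ partly because it was earlier used as a filler at $t_2$ for an intended-but-blocked item $a_2$, then the chain headed by $a_1$ at $t_1$ contains the entire chain headed by $a_2$ at $t_2$ as a proper sub-chain, sharing the rounds $t_2,\dots,t_{\s{L}+1}$---including the explore-terminal round, whose $\theta_u$-value is overwritten each time a longer chain through it is processed. You then correctly identify the resulting consistency question as ``the main obstacle,'' but leave it as something ``one must check.'' That check is exactly what the paper's Claim \ref{claim:tricky} supplies, by induction on the length of the swapping operation: when the blocked event at $t_1$ is processed, the earlier (shorter, nested) swap has already placed $a_i$ at $t_i$ for $2\le i\le \s{L}$ and has left $a_1$ sitting at the terminal explore round $t_{\s{L}+1}$, so the new swap is merely the transposition of the contents of positions $t_1$ and $t_{\s{L}+1}$. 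Without this inductive bookkeeping (or an equivalent argument) the assertion that the composed swaps yield a well-defined permutation with the stated properties is not established; with it, the rest of your argument goes through.
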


\begin{proof}
We start with the following claim.
\begin{claim}\label{claim:tricky}
For a user $u\in \ca{M}^{(\ell,i)}$, suppose $a_1\in  \ca{O}^{(\ell)}_{\ca{M}^{(\ell,i)}}\setminus \ca{O}^{(\ell-1)}_{\ca{M}^{(\ell,i)}}$ is an item chosen for recommendation in the \textit{exploit} component of phase $\ell$ at round $t_1$ but found to be blocked for user $u$. Moreover suppose $a_1$ has been \textit{replaced by a swapping operation of length $\s{L}+1$} (see Definition \ref{defn:swap} for notations). In that case, we already have $\theta_u(t_i)=a_i$ for all $2\le i\le \s{L}$. We only need to modify the sequence by making the following two changes: 1) $\theta_u(t_1)=a_1$ 2) $\theta_u(t_{\s{L}+1})=a_{\s{L}+1}$. Note that in the true sequence, $a_{\s{L}+1}$ has been recommended in the round $t_1$ replacing the intended item $a_1$. 
\end{claim}

\begin{proof}
We will prove the claim via induction. Notation-wise, suppose for any $\ell'\le \ell$, the user $u$ belongs to the set $\ca{M}^{(\ell',i)}$ and the set of active items at round $t_1$ for users in $\ca{M}^{(\ell,i)}$ is denote by $\ca{N}$ (for simplicity, we remove the superscripts).
The base case for $\s{L}=0$ is true by construction - here the item $a_1$ is recommended in the \textit{explore} component of phase $p_2$ at round $t_2$. We find an unblocked item $a_2$ in the set $\ca{N}\subseteq \ca{N}^{(\ell,i)}$ (note that items are never added to the active set across phases and only pruned), recommend it at round $t_1$ according to our algorithm. For analysis, we modify 
\begin{align*}
    \theta_u(t_1)=a_1 \text{ and }\theta_u(t_2)=a_2
\end{align*}
Now, suppose our claim is true for some $\s{L}=l.$ Now, for $\s{L}=l+1$, note that $a_1$ was recommended in the \textit{exploit} component of phase $p_2$ at round $t_2$; this implies that $a_1$ must have been used to replace another item $a_2$ via a swapping operation of length $l$. By our induction hypothesis, we must have $\theta_u(t_i)=a_i$ for all $2\le i \le \s{L}$ (as a matter of fact, we will have $\theta_u(t_{\s{L}+1})=a_1$). Therefore, only the pair of modifications $\theta_u(t_1)=a_1$ and $\theta_u(t_{\s{L}+1})=a_{\s{L}+1}$ suffice to bring the desired changes in $\theta_u$.
\end{proof}

We can also conclude from Claim \ref{claim:tricky} that 1) at any round $t$ in the \textit{exploit} component of some phase for user $u$, if the chosen item to be recommended is found to be blocked, then that chosen item is brought to the $t^{\s{th}}$ position in the sequence $\theta_u$ 2) Once the chosen item is brought to its correct position in $\theta_u$, it will not be modified/moved in any future round. 3) All chosen items for recommendation to user $u$ corresponding to \textit{exploit} components are moved to their correct position (i.e. the intended round for their recommendation) in the sequence $\theta_u$. 
Next we make the following claim:

\begin{claim}\label{claim:tricky2}
Consider the setting in Claim \ref{claim:tricky}. It must happen that all the items $a_1,a_2,\dots,a_{\s{L}+1}$ must belong to the set $\ca{N}^{(p_{\s{L}+1},i)}$ - the set of active items in the phase $p_{\s{L}+1}$ for users in the \textit{nice} subset $\ca{M}^{(p_{\s{L}+1},i)}$ to which $u$ belongs and $a_{\s{L}}$ was recommended in its \textit{explore} component.
\end{claim}



\begin{proof}
Again, note that items are never added to the active set across phases and only pruned. If an item $a_1$ is replaced by $a_{\s{L}+1}$ via a swapping operation of length $\s{L}+1$, it implies that $a_1$ was used to replace item $a_{2}$ (via a swapping operation of length $\s{L}$), $a_2$ was used to replace $a_3$ (via a swapping operation of length $\s{L}-1$) and so on. The final golden item for user $u$ in this sequence $a_{\s{L}}$ was recommended in phase $p_{\s{L}+1}$ in the explore component and belonged to the set of active items $\ca{N}^{(p_{\s{L}+1,i})}$. Hence, this implies that all the subsequent surviving items $a_1,a_2,a_3,\dots,a_{\s{L}-1},a_{\s{L}+1}$ must have belonged to the set of active items $\ca{N}^{(p_{\s{L}+1,i})}$ as well.
\end{proof}

From Claim \ref{claim:tricky2}, we can conclude that for any round $t$ in the \textit{explore} component of some phase $
\ell$ for user $u$, if $\rho_u(t)$ is replaced in the sequence $\theta_u$ at round $t$, then $\rho_u(t),\theta_u(t)$ belong to the same set of active items $\ca{N}^{(\ell,i)}$. With both these arguments, we complete the proof of the lemma.
\end{proof}

\subsubsection{Final Regret analysis}\label{subsubsec:final_regret}

Let us condition on the event $\ca{E}$. Consider any user $u\in [\s{M}]$ - 
recall that $\{\rho_u(t)\}_{t\in[\s{T}]}$ is the random variable denoting the sequence of $\s{T}$ items recommended to the user $u$, $\rho_u(t)$ is a realization of $\{\rho_u(t)\}_{t\in[\s{T}]}$ conditioned on the event $\ca{E}$. Furthermore, $\theta_u$ is the modified sequence - a permutation of the $\s{T}$ items $\{\rho_u(t)\}$ recommended for user $u$.  For simplicity of notation, we will assume that $u\in \ca{M}^{(\ell,i)}$ for every phase $\ell$. Hence the active set of items during the \textit{explore} component of phase $\ell$ for the \textit{nice} subset of users $\ca{M}^{(\ell,i)}$ is denoted by $\ca{N}^{(\ell,i)}$. Also, in line with our previous usage of notations, let us denote $\ca{O}^{(\ell)}_{\ca{M}^{(\ell,i)}}\setminus \ca{O}^{(\ell-1)}_{\ca{M}^{(\ell,i)}}$ to be the set of items chosen for recommendation particularly in the \textit{exploit} component of phase $\ell$. Moreover, $\ca{O}$ denotes the entire set of items chosen for recommendation in all the \textit{exploit} components to user $u$. In other words, if $\ell_u$ is the final phase to which user $u$ has progressed then $\ca{O}_u=\ca{O}^{(\ell_u)}_{\ca{M}^{(\ell_u,i)}}$. We denote the regret for the user $u$ by 
\begin{align*}
    \mathsf{Reg}_u(\mathsf{T})\mid \ca{E}\triangleq 
 = \bb{E} \Big[\Big(\sum_{t\in [\s{T}]}\fl{P}_{u\pi_{u}(t)}-\fl{P}_{u\rho_{u}(t)}\Big)\mid \ca{E}\Big]
\end{align*}
where the expectation is over the randomness in the algorithm and the noise in the observations.
Note that with this notation, we have $\mathsf{Reg}(\mathsf{T})\mid \ca{E} = \s{M}^{-1}\sum_{u\in [\s{M}]} \mathsf{Reg}_u(\mathsf{T})\mid \ca{E}$. We now have the following set of inequalities for regret of user $u$: 
\begin{align*}
    \mathsf{Reg}_u(\mathsf{T})\mid \ca{E}&\triangleq 
  \bb{E} \Big[\Big(\sum_{t\in [\s{T}]}\fl{P}_{u\pi_{u}(t)}-\fl{P}_{u\rho_{u}(t)}\Big)\mid \ca{E}\Big] = \bb{E} \Big[\Big(\sum_{t\in [\s{T}]}\fl{P}_{u\pi_{u}(t)}-\fl{P}_{u\rho_{u}(t)}\Big)\mid 
 \ca{E}\Big] \\
 &=\sum_{\rho_u(t)}\Pr(\rho_u(t)=\rho_u(t)\mid \ca{E})\Big(\sum_{t\in [\s{T}]}\fl{P}_{u\pi_{u}(t)}-\fl{P}_{u\theta_{u}(t)}\Big) \\
 &=\sum_{\rho_u(t)}\Pr(\rho_u(t)=\rho_u(t)\mid \ca{E})\Big(\sum_{t\in [\s{T}]}\fl{P}_{u\pi_{u}(t)}-\fl{P}_{u\widetilde{\theta}_{u}(t)}\Big).
\end{align*}
Let us denote the set of rounds in the \textit{exploit} component of $\ell^{\s{th}}$ phase by $\s{exploit}(\ell,\ca{M}^{(\ell,i)})$ and the \textit{explore} component of $\ell^{\s{th}}$ phase by $\s{explore}(\ell,\ca{M}^{(\ell,i)})$. Therefore we can further decompose the regret for user $u$ as follows:
\begin{align*}
    \mathsf{Reg}_u(\mathsf{T})\mid \ca{E}&=\sum_{\rho_u(t)}\Pr(\rho_u(t)=\rho_u(t)\mid \ca{E})\Big(\sum_{\ell\in [\ell_u]}\sum_{t\in\s{exploit}(\ell,\ca{M}^{(\ell,i)})}\Big(\fl{P}_{u\pi_{u}(t)}-\fl{P}_{u\widetilde{\theta}_{u}(t)}\Big)\\
    &+\sum_{\ell\in [\ell_u]}\sum_{t\in\s{explore}(\ell,\ca{M}^{(\ell,i)})}\Big(\fl{P}_{u\pi_{u}(t)}-\fl{P}_{u\widetilde{\theta}_{u}(t)}\Big)\Big).
\end{align*}

Next, our arguments are conditioned on the event $\ca{E}$ and any sequence $\rho_u(t)$ having a non-zero probability of appearing conditioned on event $\ca{E}$.
Recall in Lemma \ref{lem:exploit_stage}, we proved by induction that $\ca{O}_u\subseteq \{\pi_u(t)\}_{t=1}^{\s{T}}$. Moreover, in Lemma \ref{lem:modified_sequence}, we proved that items chosen for recommendation in the \textit{exploit} components for user $u$ are in their correct positions (i.e. the round when they were intended to be recommended but might have been found to be blocked) in the sequence $\theta_u$.
Consider a permutation of the best $\s{T}$ items for user $u$  $\sigma_u:[\s{T}]\rightarrow \{\pi_u(t)\}_{t=1}^{\s{T}}$ such that 
\begin{align*}
    &\sigma_u(t) = \theta_u(t) \text{ for all } t\in \cup_{\ell}\s{exploit}(\ell,\ca{M}^{(\ell,i)}) \\ 
    &\{\sigma_u(t)\}_{ t\in\cup_{\ell}\s{explore}(\ell,\ca{M}^{(\ell,i)})} \equiv \{\pi_u(t')\}_{t'\in [\s{T}]}\setminus \ca{O}_u  
\end{align*}
where in a round in any \textit{exploit} component, the permutation $\sigma$ maps the item chosen for recommendation (which we know to be among the best $\s{T}$ items) for user $u$ to that round. For any round belonging to the \textit{explore} component, the permutation $\sigma$ arbitrarily maps the remaining items among the best $\s{T}$ items (namely the set $\{\pi_u(t')\}_{t'\in [\s{T}]}\setminus \ca{O}_u$). Notice that $\sum_{t\in [\s{T}]}\fl{P}_{u\pi_u(t)}=\sum_{t\in [\s{T}]}\fl{P}_{u\sigma_u(t)}$. Therefore, we can further decompose the regret as  
\begin{align*}
    \mathsf{Reg}_u(\mathsf{T})\mid \ca{E}&=\sum_{\rho_u(t)}\Pr(\rho_u(t)=\rho_u(t)\mid \ca{E})\Big(\sum_{\ell\in [\ell_u]}\sum_{t\in\s{exploit}(\ell,\ca{M}^{(\ell,i)})}\Big(\fl{P}_{u\sigma_{u}(t)}-\fl{P}_{u\widetilde{\theta}_{u}(t)}\Big)\\
    &+\sum_{\ell\in [\ell_u]}\sum_{t\in\s{explore}(\ell,\ca{M}^{(\ell,i)})}\Big(\fl{P}_{u\sigma_{u}(t)}-\fl{P}_{u\widetilde{\theta}_{u}(t)}\Big)\Big) \\
    &=\sum_{\rho_u(t)}\Pr(\rho_u(t)=\rho_u(t)\mid \ca{E})\Big(\sum_{\ell\in [\ell_u]}\sum_{t\in\s{explore}(\ell,\ca{M}^{(\ell,i)})}\Big(\fl{P}_{u\sigma_{u}(t)}-\fl{P}_{u\widetilde{\theta}_{u}(t)}\Big)\Big)
\end{align*}

For the explore component in the phase indexed by $\ell$, we have proved in Lemma \ref{lem:connected_component} that the active set of items $\ca{N}^{(\ell,i)}$ is a superset of $\{\pi_u(t')\}_{t'\in [\s{T}]}\setminus \ca{O}_u$. Therefore, we can bound (using Lemma \ref{lem:interesting1} and condition C stated at beginning of sec. \ref{subsec:main_analysis}) for any $\ell\neq \ell_u$ ($\ell_u$ denotes index of the final phase that user $u$ was part of)
\begin{align*}
    \sum_{t\in\s{explore}(\ell,\ca{M}^{(\ell,i)})}\Big(\fl{P}_{u\sigma_{u}(t)}-\fl{P}_{u\widetilde{\theta}_{u}(t)}\Big) \le m_{\ell}\cdot\epsilon_{\ell}=O\Big(\frac{\sigma^2 \widetilde{\mu}^3 \log (\s{M}\bigvee \s{N})}{\Delta_{\ell+1}^2}\max\Big(1,\frac{\s{N}\tau}{\s{M}}\Big)\log \s{T})\Big)\Big)\cdot \epsilon_{\ell}. 
\end{align*}
where $\widetilde{\mu}$ is the incoherence factor of the sub-matrix $\fl{P}_{\ca{M}^{(\ell,i)},\ca{N}^{(\ell,i)}}$. 
Next, using the facts that $\Delta_{\ell+1}=\epsilon_{\ell+1}/88\s{C}$, $2\epsilon_{\ell}=\epsilon_{\ell+1}$, $\s{C},\tau=O(1)$, we get that (after hiding log factors for simplicity)
\begin{align*}
    \sum_{t\in\s{explore}(\ell,\ca{M}^{(\ell,i)})}\Big(\fl{P}_{u\sigma_{u}(t)}-\fl{P}_{u\widetilde{\theta}_{u}(t)}\Big) = \widetilde{O}\Big(\frac{\sigma^2\widetilde{\mu}^3}{\epsilon_{\ell}}\max\Big(1,\frac{\s{N}}{\s{M}}\Big)\Big)=\widetilde{O}\Big(\frac{\sigma^2\mu^3}{\epsilon_{\ell}}\max\Big(1,\frac{\s{N}}{\s{M}}\Big)\Big).
\end{align*}
In the above statement, from Lemmas \ref{lem:condition_num} and \ref{lem:incoherence}, we also used the fact that $\widetilde{\mu}=O(\mu)$ and the condition number of the sub-matrix  $\fl{P}_{\ca{M}^{(\ell,i)},\ca{N}^{(\ell,i)}}$ is $O(1)$. Finally, for the \textit{explore} component of the final phase $\ell_u$, from Lemma \ref{lem:interesting2}, we will also have (in the final phase, one of the edge case scenarios might appear)
\begin{align*}
  \sum_{t\in\s{explore}(\ell,\ca{M}^{(\ell,i)})}\Big(\fl{P}_{u\sigma_{u}(t)}-\fl{P}_{u\widetilde{\theta}_{u}(t)}\Big) =  \widetilde{O}\Big(\sigma\mu^{3/2}\max\Big(\sqrt{\s{T}},\sqrt{\frac{\s{T}^2}{\s{M}}}\Big)\Big)+\widetilde{O}\Big(\frac{\sigma^2\mu^3}{\epsilon_{\ell_u}}\max\Big(1,\frac{\s{N}}{\s{M}}\Big)\Big) 
\end{align*}
Hence, we can put together everything to conclude that
\begin{align*}
    &\sum_{\ell\in [\ell_u]}\sum_{t\in\s{explore}(\ell,\ca{M}^{(\ell,i)})}\Big(\fl{P}_{u\sigma_{u}(t)}-\fl{P}_{u\widetilde{\theta}_{u}(t)}\Big) \\
    &=\sum_{\ell} \widetilde{O}\Big(\frac{\sigma^2\mu^3}{\epsilon_{\ell}}\max\Big(1,\frac{\s{N}}{\s{M}}\Big)\Big)+\widetilde{O}\Big(\sigma\mu^{3/2}\max\Big(\sqrt{\s{T}},\sqrt{\frac{\s{T}^2}{\s{M}}}\Big)\Big) \\
    &\le \sum_{\ell:\epsilon_{\ell}\le \Phi} m_{\ell}\Phi+\sum_{\ell:\epsilon_{\ell}\ge \Phi} \widetilde{O}\Big(\frac{\sigma^2\mu^3}{\Phi}\max\Big(1,\frac{\s{N}}{\s{M}}\Big)\Big)+\widetilde{O}\Big(\sigma\mu^{3/2}\max\Big(\sqrt{\s{T}},\sqrt{\frac{\s{T}^2}{\s{M}}}\Big)\Big) \\
    &\le \s{T}\Phi+\s{J}\cdot \widetilde{O}\Big(\frac{\sigma^2\mu^3}{\Phi}\max\Big(1,\frac{\s{N}}{\s{M}}\Big)\Big)+\widetilde{O}\Big(\sigma\mu^{3/2}\max\Big(\sqrt{\s{T}},\sqrt{\frac{\s{T}^2}{\s{M}}}\Big)\Big)
\end{align*}
where $\s{J}$ is the number of phases with $\epsilon_{\ell}\ge \Phi$. By choosing $\Phi=\sqrt{\frac{\sigma^2\mu^3}{\s{T}}\max\Big(1,\frac{\s{N}}{\s{M}}\Big)}$, we can bound 
\begin{align*}
    \sum_{\ell\in [\ell_u]}\sum_{t\in\s{explore}(\ell,\ca{M}^{(\ell,i)})}\Big(\fl{P}_{u\sigma_{u}(t)}-\fl{P}_{u\widetilde{\theta}_{u}(t)}\Big) = \widetilde{O}\Big(\s{J}\sigma\mu^{3/2}\sqrt{\s{T}\max\Big(1,\frac{\s{N}}{\s{M}}\Big)}\Big)
\end{align*}
where we used the fact that $\s{N}\gg \s{T}$ and therefore the last term in the previous equation is a lower order term compared to the first two ones. Next, recall that we choose $\epsilon_{\ell}=C'2^{-\ell}\min\Big(\|\fl{P}\|_{\infty},\frac{\sigma\sqrt{\mu}}{\log \s{N}}\Big)$ (so that the condition on $\sigma>0$ in Lemma \ref{lem:min_acc} is automatically satisfied for all $\ell$) for some constant $C'>0$, the maximum number of phases $\ell$ for which $\epsilon_{\ell}>\Phi$ can be bounded from above by $\s{J}=O\Big(\log \Big(\frac{1}{\Phi}\min\Big(\|\fl{P}\|_{\infty},\frac{\sigma\sqrt{\mu}}{\log \s{N}}\Big)\Big)\Big)$. Therefore, we can hide $\s{J}$ inside $\widetilde{O}$ and obtain 
\begin{align*}
    \sum_{\ell\in [\ell_u]}\sum_{t\in\s{explore}(\ell,\ca{M}^{(\ell,i)})}\Big(\fl{P}_{u\sigma_{u}(t)}-\fl{P}_{u\widetilde{\theta}_{u}(t)}\Big) = \widetilde{O}\Big(\sigma\mu^{3/2}\sqrt{\s{T}\max\Big(1,\frac{\s{N}}{\s{M}}\Big)}\Big).
\end{align*}
Therefore, we must have that 
\begin{align*}
    \mathsf{Reg}_u(\mathsf{T})\mid \ca{E}&=\widetilde{O}\Big(\sigma\mu^{3/2}\sqrt{\s{T}\max\Big(1,\frac{\s{N}}{\s{M}}\Big)}\Big)
    \implies \mathsf{Reg}(\mathsf{T})\mid \ca{E}=\widetilde{O}\Big(\sigma\mu^{3/2}\sqrt{\s{T}\max\Big(1,\frac{\s{N}}{\s{M}}\Big)}\Big).
\end{align*}
Finally, we use the fact that 
\begin{align*}
    \mathsf{Reg}(\mathsf{T}) \le \mathsf{Reg}(\mathsf{T})\mid \ca{E}+\Pr(\ca{E}^c)(\mathsf{Reg}(\mathsf{T})\mid \ca{E}^c).
\end{align*}
From Lemma \ref{lem:all_good}, we know that $\Pr(\ca{E}^c) \le \s{C}\s{T}^{-2}$, $\mathsf{Reg}(\mathsf{T})\mid \ca{E}^c\le \s{T}\lr{\fl{P}}_{\infty}$  therefore, 
\begin{align*}
    \mathsf{Reg}(\mathsf{T}) \le \mathsf{Reg}(\mathsf{T})\mid \ca{E}+\Pr(\ca{E}^c)(\mathsf{Reg}(\mathsf{T})\mid \ca{E}^c) = \widetilde{O}\Big(\sigma\mu^{3/2}\sqrt{\s{T}\max\Big(1,\frac{\s{N}}{\s{M}}\Big)}+\s{T}^{-1}\lr{\fl{P}}_{\infty}\Big) 
\end{align*}
which completes the proof of our main result.

\section{Proof of Lower Bound (Theorem \ref{thm:lb})}\label{app:lb_proof}

Consider our problem setting with $\s{M}$ users, $\s{N}$ items and $\s{T}$ rounds, blocking constraint $\s{B}$, noise variance proxy $\sigma^2=1$ and all expected rewards in $[0,1]$.
Here we consider $\s{C}=1$ i.e. all users belong to the same cluster.
Let us denote a particular policy chosen by the recommendation system as $\pi$ that belongs to the class of polices $\Pi$. Moreover, let us also denote by $\ca{E}$ the set of possible environments corresponding to the expected reward matrices that has all rows to be same (satisfies cluster structure for $\s{C}=1$). In that case, the minimax regret is given by 
\begin{align*}
    \inf_{\pi \in \Pi}\sup_{\nu\in \ca{E}} \s{Reg}(\s{T};\ca{E})
\end{align*}

\subsection{Lower Bound via reduction}

We can ease the problem by assuming that at each round $t=1,2,\dots,\s{T}$, the users come in a sequential fashion - the $j^{\s{th}}$ user is recommended an item based on all previous history of observations including the feedback obtained from recommending items from users $1,2,\dots,j-1$ at round $t$. Furthermore, we also assume that the $\s{N}$ items can be partitioned into $\s{NB}/\s{T}$ known groups where each group has identical items - hence we have a simple multi-armed bandit problem (MAB) with $\s{NB}/\s{T}$ arms and $\s{MT}$ rounds with a single user.
A lower bound on this simplified MAB problem will imply a lower bound on our setting i.e by appropriately normalizing, we have 
\begin{align*}
    \inf_{\pi \in \Pi}\sup_{\nu\in \ca{E}} \s{Reg}(\s{T};\ca{E}) \ge \frac{1}{\s{M}}\inf_{\pi \in \Pi}\sup_{\nu\in \ca{E}} \s{Reg}_{\s{MAB}}(\s{MT};\ca{E}) =\Omega\Big( \frac{1}{\s{M}} \sqrt{\frac{\s{NB}}{\s{T}}\cdot\s{MT}}\Big) = \Omega\Big(\sqrt{\frac{\s{NB}}{\s{M}}}\Big)
\end{align*}
where we simply used the standard regret lower bound in multi-armed bandits with $\s{K}$ arms and $\s{T}$ rounds which is $\Omega(\sqrt{\s{KT}})$ \cite{lattimore2020bandit}.

\subsection{Lower Bound via application of Fano's inequality}

As before, we can ease the original problem  by assuming that at each round $t=1,2,\dots,\s{T}$, the users come in a sequential fashion - the $j^{\s{th}}$ user is recommended an item based on all previous history of observations including the feedback obtained from recommending items from users $1,2,\dots,j-1$ at round $t$. Clearly, a lower bound on the simplified problem will imply a lower bound on our setting i.e by appropriately normalizing, we have 
\begin{align*}
    \inf_{\pi \in \Pi}\sup_{\nu\in \ca{E}} \s{Reg}(\s{T};\ca{E}) \ge \frac{1}{\s{M}}\inf_{\pi \in \Pi}\sup_{\nu\in \ca{E}} \s{Reg}_{\s{MAB}}(\s{MT};\ca{E}).
\end{align*}
Furthermore, ignoring the normalizing factor by $\s{M}$, for $\s{C}=1$, the simplified problem is equivalent to a standard Multi-armed bandit (MAB) problem with a single agent with $\s{MT}$ rounds and an additional hard constraint that each item can be pulled at most $\s{M}\s{B}$ times. We will construct ${\s{N}\choose \s{T}\s{B}^{-1}}$ environments in the following way: let $\ca{T}\equiv \{\ca{S}\subseteq [\s{N}]\mid |\ca{S}|=\s{T}\s{B}^{-1}\}$ be the set of all subsets of $[\s{N}]$ of size $\s{T}\s{B}^{-1}$. Now for each subset $\ca{S}\in \ca{T}$, we construct an environment by assuming that the agent on pulling any arm in the set $\ca{S}$ observes a random reward distributed according to $\ca{N}(\Delta,1)$ and on pulling any arm outside the set $\ca{S}$ observes a random reward distributed according to $\ca{N}(0,1)$. This corresponds to the the reward matrix $\fl{P}$ (in our original problem) having an entry $\Delta$ in the $\s{T}\s{B}^{-1}$ columns indexed in $\ca{S}$ and $0$ in the remaining columns. Let $\bb{E}_{\ca{S}},\bb{P}_{\ca{S}},\ca{E}_{\ca{S}}$ denote the expectation, probability measure and the environment if $\ca{S}\in \ca{T}$ is the set of chosen columns for constructing the environment. 

Next we assume that the set $\ca{S}$ is chosen uniformly at random from $\ca{T}$. Hence we must have 
\begin{align*}
    \inf_{\pi \in \Pi}\sup_{\nu\in \ca{E}} \s{Reg}_{\s{MAB}}(\s{MT};\ca{E}) \ge  \inf_{\pi \in \Pi}\bb{E}_{\ca{S}\sim \ca{T}} \s{Reg}_{\s{MAB}}(\s{T};\ca{E}_{\ca{S}})
\end{align*}
Fix any policy $\pi\in \Pi$.
Condition on the set $\ca{S}$ being selected from $\ca{T}$. Let $\s{R}(\ca{S})$ be the number of times arms indexed in the set $\ca{S}$ are pulled in the $\s{MT}$ rounds. In that case we must have 
\begin{align*}
     \s{Reg}_{\s{MAB}}(\s{T};\ca{E}_{\ca{S}}) \ge \bb{P}_{\ca{S}}(\s{R}(\ca{S})\le \frac{3\s{MT}}{4})\frac{\s{MT\Delta}}{4}.
\end{align*} 
Now, consider the estimation problem of which set $\ca{S}$ was selected from $\ca{T}$. Let $\widehat{X}$ be an estimator that takes as input the observations in the $\s{MT}$ rounds and returns a set $\widehat{\ca{S}}$ in the following way: it finds $\widehat{\ca{S}}$ as the set of $\s{T}$ arms that have been pulled the most number of times jointly and returns $\widehat{\ca{S}}$ if $\s{R}(\widehat{\ca{S}}) \ge 3\s{MT}/4$ and the null set $\emptyset$ otherwise. We consider the estimator $\widehat{X}$ to make an error if it returns a set $\widehat{\ca{S}}$ such that $\widehat{\ca{S}}\cap \ca{S} \le \s{T}/4\s{B}$. If the estimator $\widehat{X}$ makes an error, note that $\s{R}(\widehat{\ca{S}}) \ge 3\s{MT}/4$ implies that $\s{R}(\widehat{\ca{S}}\setminus \ca{S}) \ge \s{MT}/2$ (since each arm can be pulled at most $\s{M}$ times) - hence, it implies that $\s{R}(\ca{S}) \le \s{MT}/2$. Therefore, if we denote $\s{Error}$ as the error event, then we must have 
\begin{align*}
     &\s{Reg}_{\s{MAB}}(\s{T};\ca{E}_{\ca{S}}) \ge \bb{P}_{\ca{S}}(\s{R}(\ca{S})\le \frac{3\s{MT}}{4})\frac{\s{MT\Delta}}{4} \ge \bb{P}_{\ca{S}}(\s{Error})\frac{\s{MT\Delta}}{4} \\
     &\implies \inf_{\pi \in \Pi}\sup_{\nu\in \ca{E}} \s{Reg}_{\s{MAB}}(\s{MT};\ca{E}) \ge \bb{P}(\s{Error})\frac{\s{MT\Delta}}{4} = \bb{E}_{\ca{S}\sim \ca{T}}\bb{P}_{\ca{S}}(\s{Error})\frac{\s{MT\Delta}}{4}
\end{align*}
Therefore, our goal is to bound the quantity $ \bb{E}_{\ca{S}\sim \ca{T}}\bb{P}_{\ca{S}}(\s{Error})$. At this point we have reduced our problem to a multiple hypothesis testing problem. Therefore, in order to lower bound the probability of the event $\s{Error}$, we use Fano's inequality for approximate recovery:
\begin{lemmau}[Fano's inequality with approximate recovery \cite{scarlett2019introductory}]
For any random variables $V,\widehat{V}$ on alphabets $\ca{V},\widehat{\ca{V}}$, consider an error when $d(V,\widehat{V})\ge t$ for some $t>0$ and distance function $d:\ca{V}\times \widehat{\ca{V}}\rightarrow \bb{R}$. In that case, if we denote the error event by $\s{Error}$, we must have
\begin{align*}
    \bb{P}(\s{Error}) \ge 1-\frac{I(V;\widehat{V})+\log 2}{\log \frac{\left|\ca{V}\right|}{\s{G}_{\max}}}
\end{align*}
where $\s{G}_{\max}=\max_{\widehat{v}\in \widehat{\ca{V}}}\sum_{v\in \ca{V}}{1}[d(v,\hat{v})\le t]$ and $I(V;\widehat{V})$ is the mutual information between the random variables $V$ and $\widehat{V}$.
\end{lemmau}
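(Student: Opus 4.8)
The plan is to prove this by a conditional-entropy argument that reduces to the classical Fano inequality once the approximate-recovery ball count $\s{G}_{\max}$ is absorbed into the support-size bound. Throughout I will take $V$ to be uniformly distributed on $\ca{V}$ (the regime in which the lemma is invoked, since the hidden set $\ca{S}$ is drawn uniformly from $\ca{T}$), so that $H(V)=\log\left|\ca{V}\right|$. Define the binary error indicator $E=\mathbbm{1}[d(V,\widehat{V})\ge t]$, so that $\bb{P}(\s{Error})=\bb{P}(E=1)$, and write $p_e=\bb{P}(E=1)$.

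First I would expand the joint conditional entropy $H(V,E\mid \widehat{V})$ in two ways via the chain rule. Because $E$ is a deterministic function of the pair $(V,\widehat{V})$, we have $H(E\mid V,\widehat{V})=0$, which yields the identity
\[
H(V\mid \widehat{V}) = H(E\mid \widehat{V}) + H(V\mid E,\widehat{V}).
\]
The first term is at most $H(E)\le \log 2$ since $E$ is binary. For the second term I would condition on the value of $E$ and bound the two branches separately.

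The key step is the ball-counting bound on the no-error branch. Conditioned on $\widehat{V}=\widehat{v}$ and on $E=0$ (successful approximate recovery, i.e. $d(V,\widehat{v})<t$), the variable $V$ is supported on $\{v\in\ca{V}:d(v,\widehat{v})<t\}$; since $\{v:d(v,\widehat{v})<t\}\subseteq\{v:d(v,\widehat{v})\le t\}$, this support has cardinality at most $\s{G}_{\max}=\max_{\widehat{v}}\sum_{v\in\ca{V}}\mathbbm{1}[d(v,\widehat{v})\le t]$, whence $H(V\mid E=0,\widehat{V})\le \log \s{G}_{\max}$. On the error branch $E=1$ I would use the crude bound $H(V\mid E=1,\widehat{V})\le \log\left|\ca{V}\right|$. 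Combining the three estimates gives
\[
H(V\mid \widehat{V}) \le \log 2 + (1-p_e)\log \s{G}_{\max} + p_e\log\left|\ca{V}\right|.
\]

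Finally I would substitute $H(V\mid \widehat{V})=H(V)-I(V;\widehat{V})=\log\left|\ca{V}\right|-I(V;\widehat{V})$ and rearrange to isolate the factor $(1-p_e)\log\frac{\left|\ca{V}\right|}{\s{G}_{\max}}\le I(V;\widehat{V})+\log 2$; dividing through by $\log\frac{\left|\ca{V}\right|}{\s{G}_{\max}}$ (positive as soon as $\left|\ca{V}\right|>\s{G}_{\max}$) produces exactly $p_e\ge 1-\frac{I(V;\widehat{V})+\log 2}{\log(\left|\ca{V}\right|/\s{G}_{\max})}$. The main point requiring care — and the only real obstacle to a clean write-up — is the reconciliation of the strict inequality defining the error event ($d\ge t$, so success means $d<t$) with the non-strict inequality defining $\s{G}_{\max}$; the set containment above resolves it, but it must be stated explicitly, and I would likewise flag the implicit uniformity of $V$ that converts $H(V)$ into $\log\left|\ca{V}\right|$.
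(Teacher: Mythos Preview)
Your proof is correct and is the standard Fano-with-approximate-recovery argument: chain rule on $H(V,E\mid\widehat V)$, ball-counting bound on the success branch, and the identity $H(V\mid\widehat V)=H(V)-I(V;\widehat V)$. The paper does not actually prove this lemma---it is imported verbatim from \cite{scarlett2019introductory} and then applied---so there is no in-paper proof to compare against; your write-up matches the standard derivation in that reference. Your two caveats (uniformity of $V$, and the strict-vs-nonstrict inequality in the definition of $\s{G}_{\max}$) are exactly the right things to flag and are handled correctly.
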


In the special case when the random variable $V$ is uniform, then we can upper bound the mutual information by $I(V;\widehat{V}) \le \max_{v,\widehat{v}\in \ca{V}}\s{KL}(P_{\widehat{V}\mid V=v}||P_{\widehat{V}\mid V=v'}) \le \max_{v,\widehat{v}\in \ca{V}}\s{KL}(P_{v}||P_{v'})$ where the second inequality follows from Data-processing inequality ($P_v=P(\cdot \mid v)$ corresponds to the probability of the observations given $V=v$).

Next, we apply it to our setting to prove an estimation error lower bound for our designed estimator $\widehat{X}$. In our setting, $\s{G}_{\max}$ corresponds to the maximum possible number of sets in $\ca{T}$ that have intersection of size more than $\s{T}/4\s{B}$ with some fixed set $\ca{S}\in \ca{T}$. Clearly we have 
\begin{align*}
    \s{G}_{\max} \le \sum_{t=\s{T}/4\s{B}+1}^{\s{T}} {\s{N}-\s{T}\choose \s{T}-t}{\s{T}\choose t} \le \s{T} {\s{N}-\s{T}\choose \s{T}-\s{T}/4\s{B}}{\s{T}\choose \s{T}/4} \le \s{T} \Big(\frac{2\s{N}e}{\s{T}}\Big)^{\s{T}-\s{T}/4\s{B}}\Big(4e\Big)^{\s{T}/4\s{B}}. 
\end{align*}
Furthermore, in our setting, we also have that
\begin{align*}
    &\max_{\ca{S},\ca{S}'\in \ca{T}}\s{KL}(P_{\ca{S}}||P_{\ca{S}'}) \\
    &\le \sum_{i\in \ca{S}\cup \ca{S}'}\bb{E}_{\ca{S}} \s{R}(\{i\}) \max(\s{KL}(\ca{N}(0,1)||\ca{N}(\Delta,1)),\s{KL}(\ca{N}(\Delta,1)||\ca{N}(0,1))) \le 2\s{MBT}\Delta^2
\end{align*}
 - this follows from the fact that the distributions $P_{\ca{S}},P_{\ca{S}'}$ are most separated in KL-Divergence if $\ca{S}\cap\ca{S}'=\emptyset$ and by using the fact that each arm can be pulled at most $\s{MB}$ times. Therefore, we must have (provided $\s{N}=c\s{T}$ for some large enough constant $c>0$, and $\s{T}$ is large enough) for some constant $c'>0$
\begin{align*}
    \bb{P}(\s{Error}) &\ge 1-\frac{I(\ca{S};\widehat{S})+\log 2}{\log \frac{\left|\ca{T}\right|}{\s{G}_{\max}}} \ge 1- \frac{2\s{MBT}\Delta^2+\log 2}{\log  \Big(\frac{(\s{N}/\s{T})^{\s{T}}}{\s{T} \Big(\frac{2\s{N}e}{\s{T}}\Big)^{\s{T}-\s{T}/4\s{B}}\Big(4e\Big)^{\s{T}/4\s{B}}}\Big)} \\
    &\ge  1- \frac{2\s{MB^2T}\Delta^2+\s{B}\log 2}{c'\s{T\log (\s{N}}/\s{T})} \ge 0.9 - \frac{2\s{MB}\Delta^2}{c'\log (\s{N}/\s{T})}.
\end{align*}
Therefore, substituting $\Delta = \frac{c'\sqrt{\log (\s{N}/\s{T})}}{\s{B}\sqrt{\s{M}}}$, we have that for some constant $c''\ge 0$
\begin{align*}
     \inf_{\pi \in \Pi}\sup_{\nu\in \ca{E}} \s{Reg}_{\s{MAB}}(\s{MT};\ca{E}) \ge c''\s{T}\s{B}^{-1}\sqrt{\s{M}\log (\s{N}/\s{T})}  
\end{align*}
and therefore
\begin{align*}
    \inf_{\pi \in \Pi}\sup_{\nu\in \ca{E}} \s{Reg}(\s{T};\ca{E}) \ge \frac{1}{\s{M}}\inf_{\pi \in \Pi}\sup_{\nu\in \ca{E}} \s{Reg}_{\s{MAB}}(\s{MT};\ca{E}) = \Omega\Big(\frac{\s{T}\sqrt{\log (\s{N}/\s{T})}}{\s{B}\sqrt{\s{M}}}\Big).
\end{align*}

\section{Blocked Bandits having User and Item Clusters with blocking constraint $\s{B}=1$}\label{app:item_blocking}

\begin{algorithm*}[t]
\caption{BBUIC (Blocked Latent Bandits with User and Item Clusters)
\label{algo:phased_elim_item}}
\begin{algorithmic}[1]
\REQUIRE Phase index $\ell$, List of disjoint nice subsets of users $\ca{M}^{(\ell)}$, list of corresponding subsets of active items $\ca{N}^{(\ell)}$, clusters $\s{C}$, rounds $\s{T}$, noise $\sigma^2>0$, round index $t_0$, exploit rounds $t_{\s{exploit}}$, estimate $\widetilde{\fl{P}}$ of $\fl{P}$, incoherence $\mu$, entry-wise error guarantee $\epsilon_{\ell}$ of $\widetilde{\fl{P}}$ restricted to all users in $\ca{M}^{(\ell)}$ and all items in $\ca{N}^{(\ell)}$, count matrix $\fl{K}\in \bb{N}^{\s{M}\times \s{N}}$.

\FOR{$i^{\s{th}}$ nice subset of users $\ca{M}^{(\ell,i)}\in \ca{M}^{(\ell)}$ with active items $\ca{N}^{(\ell,i)}$ ($i^{\s{th}}$ set in list $\ca{N}^{(\ell)}$)}

\STATE Set $t=t_0$. Set $\epsilon_{\ell+1}=\epsilon_{\ell}/2$, $\Delta_{\ell}=\epsilon_{\ell}/88\s{C}$ and $\Delta_{\ell+1}=\epsilon_{\ell+1}/88\s{C}$.

\STATE  Run \textit{exploit} component for users in $\ca{M}^{(\ell,i)}$ with active items $\ca{N}^{(\ell,i)}$. Obtain updated active set of items, round index and exploit rounds  $\ca{N}^{(\ell,i)},t,t_{\s{exploit}} \leftarrow \s{Exploit\_Item\_Cluster}(\ca{M}^{(\ell,i)},\ca{N}^{(\ell,i)},t,t_{\s{exploit}},\widetilde{\fl{P}}_{\ca{M}^{(\ell,i)},\ca{N}^{(\ell,i)}},\Delta_{\ell})$.

\STATE Set $d_1=\max(|\ca{M}^{(\ell,i)}|,|\ca{N}^{(\ell,i)}|)$, $d_2=\min(|\ca{M}^{(\ell,i)}|,|\ca{N}^{(\ell,i)}|)$ and $p=c\Big(\frac{\sigma^2 \mu^3 \log d_1}{\Delta_{\ell+1}^2d_2}\Big)$ for some appropriate fixed constant $c>0$.

\IF{$| \ca{N}^{(\ell,i)} | \ge \s{T}^{1/3}$ and $p<1$}

\STATE  Run \textit{explore} component for users in $\ca{M}^{(\ell,i)}$ with active items $\ca{N}^{(\ell,i)}$. Obtain updated estimate and round index $\widetilde{\fl{P}},t \leftarrow \s{Explore\_Item\_Cluster}(\ca{M}^{(\ell,i)},\ca{N}^{(\ell,i)},t,p)$ such that $\lr{\widetilde{\fl{P}}_{\ca{M}^{(\ell,i)},\ca{N}^{(\ell,i)}}-\fl{P}_{\ca{M}^{(\ell,i)},\ca{N}^{(\ell,i)}}}_{\infty} \le \Delta_{\ell+1}$ w.h.p.

\STATE For every user $u\in \ca{M}^{(\ell,i)}$, compute $\ca{T}^{(\ell)}_u \equiv \{j \in \ca{N}^{(\ell,i)} \mid \widetilde{\fl{P}}_{u\pi_u(\s{T}-t_{\s{exploit}})}-\widetilde{\fl{P}}_{uj} \le 2\Delta_{\ell+1}\}$.

\STATE Construct graph $\ca{G}^{(\ell,i)}$ whose nodes are users in $\ca{M}^{(\ell,i)}$ and an edge exists between two users $u,v \in \ca{M}^{(\ell,i)}$ if $\left|\widetilde{\fl{P}}^{(\ell)}_{ux}-\widetilde{\fl{P}}^{(\ell)}_{vx}\right|\le 2\Delta_{\ell+1}$ for all arms $x\in \ca{N}^{(\ell,i)}$.

\STATE Intitialize lists $\ca{M}^{(\ell+1)}_i = []$ and $\ca{N}^{(\ell+1)}_i = []$.

\STATE For each connected component $\ca{M}^{(\ell,i,j)}$ ($\cup_j \ca{M}^{(\ell,i,j)}\equiv \ca{M}^{(\ell,i)}$), compute $\ca{N}^{(\ell,i,j)}\equiv \cup_{u \in \ca{M}^{(\ell,i,j)}}\ca{T}_u^{(\ell)}$. 


\STATE For each connected component $\ca{M}^{(\ell,i,j)}$, construct graph $\ca{G}_{\s{item}}^{(\ell,i,j)}$ whose nodes are items in $\ca{N}^{(\ell,i)}$ and an edge exists between two items $u,v \in \ca{N}^{(\ell,i)}$ if $\left|\widetilde{\fl{P}}^{(\ell)}_{xu}-\widetilde{\fl{P}}^{(\ell)}_{xv}\right|\le 16\s{C}\Delta_{\ell+1}$ for all users $x\in \ca{M}^{(\ell,i,j)}$. Update $\ca{N}^{(\ell,i,j)}$ to be the set of items
$\ca{N}^{(\ell,i,j)} \equiv \{x\in \ca{N}^{(\ell,i)} \mid  x \text{ is connected with some node in} \ca{N}^{(\ell,i,j)}\} $.

\STATE  Invoke \texttt{B-LATTICE}($\ell+1,\ca{M}^{(\ell+1)}_i,\ca{N}^{(\ell+1)}_i,\s{C},\s{T}.\sigma^2,t,t_{\s{exploit}},\widetilde{\fl{P}}, \epsilon_{\ell+1},\fl{K},\ca{G}_{\s{item}}^{(\ell,i,j)}$).
\ELSE 
 \STATE For each user $u\in \ca{M}^{(\ell,i)}$, recommend $\s{T}-t$ unblocked items in $\ca{N}^{(\ell,i)}$ until end of rounds.
\ENDIF

\ENDFOR

\end{algorithmic}
\end{algorithm*}

\begin{algorithm}[t]
\caption{\texttt{Exploit\_Item\_Cluster}(Exploit Component of a phase)
\label{algo:exploit_item}}
\begin{algorithmic}[1]
\REQUIRE Phase index $\ell$, nice subset of users $\ca{M}$, active items $\ca{N}$, round index $t_0$, estimate $\widetilde{\fl{P}}$ of $\fl{P}$ and error guarantee $\Delta_{\ell}$ such that $\lr{\widetilde{\fl{P}}_{\ca{M},\ca{N}}-\fl{P}_{\ca{M},\ca{N}}}_{\infty}\le 88\s{C}\Delta_{\ell}$ with high probability, similarity graph $\ca{G}_{\s{item}}$ over the set of items $\ca{N}$.


\WHILE{there exists $u\in \ca{M}^{(\ell,i)}$ such that $ \widetilde{\fl{P}}_{u\widetilde{\pi}_{u}(1)\mid \ca{N}}-\widetilde{\fl{P}}_{u\widetilde{\pi}_{u}(\s{T}-t_{\s{exploit}})\mid \ca{N}} \ge 64\s{C} \Delta_{\ell}\}$}
\STATE Compute $\ca{R}_u = \{j \in \ca{N}\mid \widetilde{\fl{P}}_{uj} \ge \widetilde{\fl{P}}_{u\widetilde{\pi}_u(1)\mid \ca{N}}-2\Delta_{\ell+1}\}$ for every user $u\in \ca{M}$. Compute $\ca{S}=\cup_{u\in \ca{M}}\ca{R}_u$.
\STATE Update $\ca{S} \leftarrow \{v\in \ca{G}_{\s{item}} \mid v\text{ is connected with some node in }\ca{S}\}$.
\FOR{rounds $t=t_0+1,t_0+2,\dots,t_0+|\ca{S}|\s{B}$}
\FOR{each user $u\in \ca{M}$}
\STATE Denote by $x$ the $(t-t_0)^{\s{th}}$ item in $\ca{S}$. If $\fl{K}_{ux}==0$ ($x$ is unblocked), then recommend $x$ to user $u$ and update $\fl{K}_{ux} \leftarrow 1$. If $\fl{K}_{ux} == 1$ ($x$ is blocked), recommend any unblocked item $y$ in $\ca{N}$ (i.e $\fl{K}_{uy} = 0$) for the user $u$ and update $\fl{K}_{uy}\leftarrow 1$
\ENDFOR
\ENDFOR
\STATE Update $\ca{N}\leftarrow \ca{N}\setminus \ca{S}$. Update $t_0\leftarrow t_0+\left|\ca{S}\right|$.
\ENDWHILE

\STATE Return $\ca{N},t_0$.

\end{algorithmic}
\end{algorithm}

\begin{algorithm}[t]
\caption{\texttt{Explore\_Item\_Cluster}(Explore Component of a phase)
\label{algo:explore_item}}
\begin{algorithmic}[1]
\REQUIRE Phase index $\ell$, nice subset of users $\ca{M}$, active items $\ca{N}$, round index $t_0$, sampling probability $p$.

\STATE For each tuple of indices $(i,j)\in \ca{M}\times \ca{N}$, independently set $\delta_{ij}=1$ with probability $p$ and $\delta_{ij}=0$ with probability $1-p$.

\STATE  Denote $\Omega=\{(i,j)\in \ca{M}\times \ca{N} \mid \delta_{ij}=1\}$ and
 $m=\max_{i \in \ca{M}}\mid |j \in \ca{N}\mid (i,j) \in \Omega|$ to be the maximum number of index tuples in a particular row. Initialize observations corresponding to indices in $\Omega$ to be $\ca{A}=\phi$. 

\FOR{rounds $t=t_0+1,t_0+2,\dots,t_0+m$}
 \FOR{each user $u\in \ca{M}$}
 \STATE Find an item $z$ in $\{j \in \ca{N}\mid (u,j)\in \Omega, \delta_{uj}=1\}$. If $\fl{K}_{uz}==0$ ($z$ is unblocked), set $\rho_u(t)=z$ and recommend $z$ to user $u$. Observe $\fl{R}^{(t)}_{u\rho_u(t)}$ and update $\ca{A}=\ca{A}\cup\{\fl{R}^{(t)}_{u\rho_u(t)}\}$, $\fl{K}_{uz}\leftarrow 1$. 
 \STATE If $\fl{K}_{uz} == 1$ ($z$ is blocked), recommend any unblocked item $\rho_u(t)$ in $\ca{N}$ s.t. $(u,\rho_u(t))\not\in\Omega$. Update $\fl{K}_{u\rho_u(t)}\leftarrow 1$. Set $\ca{A}=\ca{A}\cup\{\fl{R}^{(t')}_{u\rho_u(t')}\}$ where $t'<t$ is the round when $\rho_u(t')=z$ was recommended to user $u$.
 
\ENDFOR

\ENDFOR

\STATE Compute the estimate  $\widetilde{\fl{P}}=\texttt{Estimate}(\ca{M},\ca{N},\sigma^2,\s{C},\Omega,\ca{A}$) and return $\widetilde{\fl{P}},t_0+m$.

\end{algorithmic}
\end{algorithm}

Recall that in this setting, the $\s{N}$ items can be grouped into $\s{C}'$ disjoint clusters $\s{D}^{(1)},\s{D}^{(2)},\dots,\s{D}^{(\s{C}')}$ that are unknown. 
The expected reward for user $u$ (belonging to cluster $a\in [\s{C}]$) on being recommended item $j$ (belonging to cluster $b\in [\s{C}']$) is $\fl{P}_{ij} = \fl{Q}_{ab}$ where $\fl{Q}\in \bb{R}^{\s{C}\times\s{C}'}$ is the small core reward matrix (unknown). In this setting, we provide our theoretical guarantees with $\s{B}=1$ i.e. any item can be recommended to a user only once under the blocking constraint. 

Recall that the main reason our theoretical analysis required $\s{B}=\Theta(\log \s{T})$ for Thm. \ref{thm:main_LBM} setting is that if the observations that were used to compute estimates of some reward  sub-matrix in a certain phase with certain guarantees that hold with some probability, then conditioning on such estimates with the said guarantees make the aforementioned observations dependent in analysis of future phases. In the $\s{BBIC}$ setting, we show that in each phase, the possibility of the nice subsets of users and their corresponding active items is actually bounded and small. On the other hand, the possibilities were exponentially large in the number of items in the $\s{GBB}$ setting. Therefore, we can use a \textit{for all} argument here implying that for a set of already used observations, we can show that for all possible nice subsets of users and their active items, they can be used again to compute acceptable estimates. Such an analysis allows us to provide theoretical guarantees even with $\s{B}=1$ for the $\s{BBIC}$ setting.

Therefore in Algorithm \ref{algo:phased_elim_item} for the $\s{BBIC}$ setting, we only have a single counter matrix $\fl{K}\in \{0,1\}^{\s{M}\times \s{N}}$ which is binary. The matrix $\fl{K}$ is initialized to be a zero matrix. Whenever an item $j$ is recommended to user $i$, we set $\fl{K}_{ij}=1$. If, in a future phase, we need to recommend item $j$ to user $i$ again, due to the blocking constraint, we simply reuse the observation (see Step 6 in Alg. \ref{algo:explore_item}). 

\subsection{Algorithm and Discussion}

We start with a definition for \textit{nice subsets of items} analogous to the \textit{nice subset of users} (see Definition \ref{defn:nice}). 

\begin{defn}\label{defn:nice_items}
A subset of items $\ca{S}\subseteq [\s{N}]$ will be called ``nice" if $\ca{S} \equiv \bigcup_{j \in \ca{A}} \ca{D}^{(j)}$ for some $\ca{A}\subseteq [\s{C}']$. In other words, $\ca{S}$ can be represented as the union of some subset of clusters of items.  
\end{defn}

As in the analysis of $\s{GBB}$ setting, we will have the desirable properties A-C that should be satisfied with high probability at the beginning of the \textit{explore} component of phase $\ell$. Recall that we defined the event $\ca{E}_2^{(\ell)}$ to be true if properties (A-C) are satisfied at the beginning of the \textit{explore} component of phase $\ell$ by the phased elimination algorithm. Here, we stipulate a further property D that the corresponding surviving set of items for each nice subset of users $\ca{M}^{(\ell,i)}\in \ca{M}^{(\ell)}$  is also a \textit{nice subset of items}.

Furthermore, we defined the event $\ca{E}_3^{(\ell)}$ when eq. \ref{eq:infty} is true for all nice subsets $\ca{M}^{(\ell,i)}\in \ca{M}'^{(\ell)}$ in the explore component of phase $\ell$.  Algorithm \ref{algo:phased_elim_item} is a similar recursive algorithm as Algorithm \ref{algo:phased_elim} and the only modification is the addition of Step 11. As in Alg. \ref{algo:phased_elim}, we  instantiate Alg. \ref{algo:phased_elim_item} with phase index $1$, list of \textit{nice subsets} of users having a single element comprising all users i.e. $\ca{M}^{(1)}=[[\s{M}]]$ and corresponding list of active items $\ca{N}^{(1)}=[[\s{N}]]$, clusters $\s{C}$, rounds $\s{T}$, blocking constraint $\s{B}$, noise $\sigma^2$, round index $1$, exploit rounds $0$, estimate $\widetilde{\fl{P}}$ to be $\f{0}^{\s{M}\times \s{N}}$, incoherence $\mu$ and $\epsilon_1=O\Big(\|\fl{P}\|_{\infty},\frac{\sigma\sqrt{\mu}}{\log \s{M}}\Big)$. Here, we will have a single count matrix $\fl{K}$ which is binary and is initialized to be an all zero matrix. We are going to show recursively that conditioned on $\ca{E}^{(\ell)}_2,\ca{E}^{(\ell)}_3$, properties A-D will be satisfied at the beginning of phase $\ell+1$.

\noindent \textbf{Base Case:} For $\ell=1$ (the first phase), the number of rounds in the \textit{exploit} component is zero and we start with the \textit{explore} component. We initialize  $\ca{M}^{(1,1)} = [\s{N}]$, $\ca{N}^{(1,1)}=[\s{M}]$ and therefore, we have $$\left|\max_{j\in \ca{N}^{(\ell,1)}}\fl{P}_{uj}-\min_{j\in \ca{N}^{(\ell,1)}}\fl{P}_{uj}\right| \le \lr{\fl{P}}_{\infty} \text{ for all }u\in [\s{M}].$$ 
Clearly, $[\s{M}]$ is a \textit{nice subset} of users, $[\s{N}]$ is a \textit{nice subset} of items and finally for every user $u\in [\s{M}]$, the best $\s{T}/\s{B}$ items (\textit{golden items}) $\{\pi_u(t)\}_{t=1}^{\s{T}/\s{B}}$ belong to the entire set of items. 
Thus for $\ell=1$, conditions A-D are satisfied at the beginning of the \textit{explore} component and therefore the event $\ca{E}_2^{(1)}$ is true. Furthermore, from Lemma \ref{lem:interesting1}, eq. \ref{eq:infty} is true for the first phase with probability $1-o(\s{T}^{-12})$ implying that the event $\ca{E}_3^{(1)}$ is true with high probability.

\paragraph{Inductive Argument:} Suppose, at the beginning of the phase $\ell$, we condition on the events $\bigcap_{j=1}^{\ell}\ca{E}_2^{(j)}\bigcap_{j=1}^{\ell}\ca{E}_2^{(j)}$ that Algorithm is $(\epsilon_{j},j)-$\textit{good} for all $j\le \ell$. 
This means that conditions (A-D) are satisfied at the beginning of the \textit{explore} component of all phases up to and including that of $\ell$ for each reward sub-matrix (indexed by $i\in [a_{\ell}]$) corresponding to the users in $\ca{M}^{(\ell,i)}$ and items in $\ca{N}^{(\ell,i)}$.
Furthermore, we also condition on the event eq. \ref{eq:infty} is true for all nice subsets of users $\ca{M}^{(\ell,i)}\in \ca{M}^{(\ell)}$ and their corresponding set of items $\ca{N}^{(\ell,i)}$ - 
implying that the event $\ca{E}_3^{(\ell)}$ is true. Recall that for a user $v$, the set $\ca{T}_v^{(\ell)}$ was constructed as 
\begin{align}
    \ca{T}^{(\ell)}_v \equiv \{j \in \ca{N}^{(\ell,i)} \mid \widetilde{\fl{P}}_{vj} \ge \widetilde{\fl{P}}_{v\widetilde{\pi}_v(s')\mid \ca{N}^{(\ell,i)}}- 2\Delta_{\ell+1}\} \text{ where }s'=\s{T}\s{B}^{-1}-\left|\ca{O}^{(\ell)}_{\ca{M}^{(\ell,i)}}\right|
\end{align} 
which implies that every item in $\ca{T}_v^{(\ell)}$ is close to one of the \textit{golden items} in $\ca{N}^{(\ell,i)}$. At the end of Step 10 in Alg. \ref{algo:phased_elim_item}, we can still show that Lemma \ref{lem:interesting_gap3} holds for each set of users $\ca{M}^{(\ell,i,j)}$ as they form a connected component. Notice that in Step 11 in Lemma \ref{lem:interesting_gap3}, for each set of users $\ca{M}^{(\ell,i,j)}$, we construct a graph $\s{G}_{\s{item}}^{(\ell,i,j)}$ whose nodes correspond to the items in $\ca{N}^{(\ell,i)}$ and an edge exists between two items $u,v \in \ca{N}^{(\ell,i)}$ if $\left|\widetilde{\fl{P}}^{(\ell)}_{xu}-\widetilde{\fl{P}}^{(\ell)}_{xv}\right|\le 16\s{C}\Delta_{\ell+1}$ for all users $x\in \ca{M}^{(\ell,i,j)}$. Analogous to the proof of Lemma \ref{lem:interesting_clique}, we can conclude here as well that items in the same \textit{item cluster} form a clique. Therefore, any connected component in the graph $\s{G}_{\s{item}}^{(\ell,i,j)}$ must correspond to a \textit{nice} subset of items since condition D is true and $\ca{N}^{(\ell,i)}$ is already a nice subset of items. Hence the set of modified items $\ca{N}^{(\ell,i,j)}$ constructed at the end of Step 11 in Alg. \ref{algo:phased_elim_item} is a \textit{nice} subset of items. Furthermore, we can also prove the corresponding version of Lemma  \ref{lem:interesting_gap3}:

\begin{lemma}\label{lem:interesting_gap3_item}
Condition on the events $\ca{E}_2^{(\ell)},\ca{E}_3^{(\ell)}$ being true. Consider a \textit{nice} subset of users $\ca{M}^{(\ell,i,j)}$ and their corresponding set of active items $\ca{N}^{(\ell,i,j)}$ for which guarantees in eq. \ref{eq:submatrix_estimation} holds.
 Fix any set $\ca{Y}= \ca{N}^{(\ell,i,j)}\setminus \ca{J}$ for some $\ca{J}$ such that for every user $u\in \ca{G}$, we have $\widetilde{\pi}_u(s')\mid\ca{N}^{(\ell,i,j)}\equiv \widetilde{\pi}_u(s)\mid\ca{Y}$ for $s'=\s{T}\s{B}^{-1}-\left|\ca{O}^{(\ell)}_{\ca{M}^{(\ell,i)}}\right|$ and some common index $s$. In that case, we must have 
\begin{align*}
    \max_{v\in \ca{G}}\Big(\max_{x,y\in \ca{Y}}\widetilde{\fl{P}}_{vx}-\widetilde{\fl{P}}_{vy}\Big) \le \max_{u\in \ca{G}}\Big(\widetilde{\fl{P}}_{u\widetilde{\pi}_u(1)\mid \ca{Y}}-\widetilde{\fl{P}}_{u\widetilde{\pi}_u(s')\mid \ca{N}^{(\ell,i)}}\Big)+24(\s{C}+\s{C}')\Delta_{\ell+1} 
\end{align*}
\end{lemma}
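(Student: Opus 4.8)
\textbf{Proof proposal for Lemma \ref{lem:interesting_gap3_item}.}

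The plan is to mimic the proof of Lemma \ref{lem:interesting_gap3} almost verbatim, while tracking one extra source of slack that arises from the item-clustering step (Step 11 of Alg. \ref{algo:phased_elim_item}). Recall that in the proof of Lemma \ref{lem:interesting_gap3}, the set $\ca{Y}$ under consideration was $\bigcup_{u\in \ca{G}}\ca{T}_u^{(\ell)}\setminus \ca{J}$; in the present setting $\ca{N}^{(\ell,i,j)}$ is obtained by first forming $\bigcup_{u\in \ca{M}^{(\ell,i,j)}}\ca{T}_u^{(\ell)}$ and then \emph{enlarging} it to include every item connected in $\ca{G}_{\s{item}}^{(\ell,i,j)}$ to some item already present. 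So the key new ingredient is: for any user $v\in \ca{M}^{(\ell,i,j)}$ and any newly-added item $y\in \ca{N}^{(\ell,i,j)}$ that was not in $\bigcup_{u}\ca{T}_u^{(\ell)}$, there is a path of length at most $2\s{C}'-1$ in $\ca{G}_{\s{item}}^{(\ell,i,j)}$ from $y$ to some item $y_0\in \ca{T}_{u_0}^{(\ell)}$ (using the fact, analogous to Lemma \ref{lem:interesting_clique}, that items in the same item-cluster form a clique and there are at most $\s{C}'$ item-clusters). Along each edge of that path, $\widetilde{\fl{P}}_{vx}$ changes by at most $16\s{C}\Delta_{\ell+1}$ by construction of the graph, so $|\widetilde{\fl{P}}_{vy}-\widetilde{\fl{P}}_{vy_0}|\le (2\s{C}'-1)\cdot 16\s{C}\Delta_{\ell+1}$. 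This is the one-line bridge between the item-cluster world and the argument already carried out in Lemma \ref{lem:interesting_gap3}.

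First I would fix $v\in \ca{M}^{(\ell,i,j)}$ and write $\max_{x,y\in\ca{Y}}\widetilde{\fl{P}}_{vx}-\widetilde{\fl{P}}_{vy}=\widetilde{\fl{P}}_{v\widetilde{\pi}_v(1)\mid \ca{Y}}-\min_{y\in\ca{Y}}\widetilde{\fl{P}}_{vy}$, exactly as in the proof of Lemma \ref{lem:interesting_gap3}. For the minimizing $y$, I would split into cases: (i) $y\in \ca{T}_v^{(\ell)}$; (ii) $y\in\ca{T}_{u}^{(\ell)}$ for some other user $u\in\ca{M}^{(\ell,i,j)}$ connected to $v$ by a user-path of length $\le 2\s{C}-1$ in $\ca{G}^{(\ell,i)}$; (iii) $y$ is a newly-added item, connected via an item-path of length $\le 2\s{C}'-1$ to some $y_0\in\ca{T}_{u_0}^{(\ell)}$. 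Cases (i) and (ii) are handled identically to Lemma \ref{lem:interesting_gap3}, producing a bound of $\widetilde{\fl{P}}_{v\widetilde{\pi}_v(1)\mid\ca{Y}}-\widetilde{\fl{P}}_{v\widetilde{\pi}_v(s)\mid\ca{Y}}+12\cdot 2\s{C}\,\Delta_{\ell+1}$ (using Lemma \ref{lem:interesting_gap2} to transfer gaps across edges of the user graph). In case (iii) I would insert the item-path bridge $|\widetilde{\fl{P}}_{vy}-\widetilde{\fl{P}}_{vy_0}|\le (2\s{C}'-1)16\s{C}\Delta_{\ell+1}$ and then reduce to case (i) or (ii) applied to $y_0$. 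Finally I would note, as in Lemma \ref{lem:interesting_gap3}, that because $\ca{T}_{u_0}^{(\ell)}$ is a subset of $\ca{N}^{(\ell,i)}$ sorted identically for the top-$s$ items, $\widetilde{\fl{P}}_{u_0\widetilde{\pi}_{u_0}(s)\mid\ca{Y}}=\widetilde{\fl{P}}_{u_0\widetilde{\pi}_{u_0}(s')\mid\ca{T}_{u_0}^{(\ell)}}=\widetilde{\fl{P}}_{u_0\widetilde{\pi}_{u_0}(s')\mid\ca{N}^{(\ell,i)}}$, which lets me replace the $\widetilde{\pi}_{u_0}(s)\mid\ca{Y}$ term by $\widetilde{\pi}_{u_0}(s')\mid\ca{N}^{(\ell,i)}$ in the final bound and hence bound everything by $\max_{u\in\ca{G}}\big(\widetilde{\fl{P}}_{u\widetilde{\pi}_u(1)\mid\ca{Y}}-\widetilde{\fl{P}}_{u\widetilde{\pi}_u(s')\mid\ca{N}^{(\ell,i)}}\big)$ plus accumulated slack.

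Adding up the slack: $24\s{C}\Delta_{\ell+1}$ of the form already present in Lemma \ref{lem:interesting_gap3}, plus an extra $O(\s{C}'\s{C})\Delta_{\ell+1}$ from the item-path. To match the clean statement $24(\s{C}+\s{C}')\Delta_{\ell+1}$ one has to be slightly careful with constants — in particular one would need the edge threshold $16\s{C}\Delta_{\ell+1}$ in $\ca{G}_{\s{item}}$ together with the $2\s{C}'-1$ path-length bound to collapse into a $\s{C}'$-linear (not $\s{C}\s{C}'$-linear) term, which is really a matter of choosing the constants in Step 11 appropriately (the same way the $12\s{L}$ in Lemma \ref{lem:interesting_gap3} collapses to $24\s{C}$). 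The main obstacle I anticipate is precisely this constant-bookkeeping: making sure the hop count along the item graph and the per-hop error combine so that the final slack is genuinely $O((\s{C}+\s{C}')\Delta_{\ell+1})$ as claimed, rather than $O(\s{C}\s{C}'\Delta_{\ell+1})$. Everything else is a routine adaptation of the triangle-inequality case analysis already carried out for the user graph, together with the observation (parallel to Lemma \ref{lem:interesting_clique}) that same-item-cluster items form a clique, so the shortest item-path has length at most $2\s{C}'-1$.
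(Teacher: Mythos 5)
Your proposal takes essentially the same route as the paper: apply Lemma \ref{lem:interesting_gap3} to the pre-enlargement set $\bigcup_{u}\ca{T}_u^{(\ell)}$, then absorb the newly added items via a path argument in $\ca{G}_{\s{item}}^{(\ell,i,j)}$ using the fact (parallel to Lemma \ref{lem:interesting_clique}) that same-cluster items form a clique, so paths have length at most $2\s{C}'-1$. The constant-bookkeeping worry you flag is real — a naive accumulation of the per-edge threshold $16\s{C}\Delta_{\ell+1}$ over a path of length $O(\s{C}')$ gives $O(\s{C}\s{C}'\Delta_{\ell+1})$ rather than the stated $24(\s{C}+\s{C}')\Delta_{\ell+1}$ — but the paper's own one-paragraph proof glosses over exactly the same point (asserting the added gap is ``at most $\s{C}'\Delta_{\ell+1}$'' without derivation), and since $\s{C},\s{C}'=O(1)$ by Assumption \ref{assum:cluster_ratio} the discrepancy is immaterial to the downstream regret bound.
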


\begin{proof}
Note that with the analysis of Lemma \ref{lem:interesting_gap3}, the conclusion was true for the constructed $\ca{N}^{(\ell,i,j)}$ at the end of Step 10 in Alg. \ref{algo:phased_elim_item}. However, in the modified set of items $\ca{N}^{(\ell,i,j)}$, we are only adding items in $\ca{N}^{(\ell,i)}$ that either belong to the same cluster (in which case there is no added gap) or if we are adding items belonging to a different cluster, then the added gap on the RHS can be at most $\s{C}'\Delta_{\ell+1}$. The above statement follows from a similar argument as in Lemma \ref{lem:interesting5} where we showed that gap in expected reward between two users in $\ca{M}^{(\ell,i,j)}$ (or rather two users connected by a path in $\ca{G}^{(\ell,i)}$) for the same item is at most $O(\s{C}\Delta_{\ell+1})$.  
\end{proof}

Hence, we have shown that each subset of users $\ca{M}^{(\ell,i,j)}$ is a \textit{nice subset of users} and furthermore, the corresponding subset of items constructed at the end of Step 11 is \textit{nice subset of items}. Next, we move on to the \textit{exploit} component of phase $\ell+1$ where we use a similar trick (See Step 3 in Alg. \ref{algo:exploit_item}) to ensure that at the end, we are left with a \textit{nice subset of items}. As before, we can show that the constructed set of items $\ca{S}$ is itself a \textit{nice subset of items}.  
We can again prove the following modified version of Corollary \ref{coro:interesting5}

\begin{coro}\label{coro:interesting5_item}
Condition on the events $\ca{E}_2^{(\ell)},\ca{E}_3^{(\ell)}$ being true. Consider a \textit{nice} subset of users $\ca{M}^{(\ell,i)}\in \ca{M}'^{(\ell)}$ and their corresponding set of active items $\ca{N}^{(\ell,i)}$ for which guarantees in eq. \ref{eq:submatrix_estimation} holds. Fix any subset $\ca{Y}\subseteq \ca{N}^{(\ell,i)}$.
Consider two users $u,v\in \ca{M}^{(\ell,i)}$ having a path in the graph $\ca{G}^{(\ell,i)}$. Conditioned on the events $\ca{E}_2^{(\ell)},\ca{E}_3^{(\ell)}$, we must have 
\begin{align*}
&\max_{x,y\in \ca{S}}\left|\fl{P}_{ux}-\fl{P}_{uy}\right|\le 16(\s{C}+\s{C}')\Delta_{\ell+1} 
\text{ and }
\max_{x,y\in \ca{S}}\left|\fl{P}_{vx}-\fl{P}_{vy}\right|\le 16(\s{C}+\s{C}')\Delta_{\ell+1}
\end{align*}
where $\ca{S}\equiv \{z\in \ca{N}^{(\ell,i,j)}\mid z \text{ is connected with }\ca{R}_u^{(\ell)}\cup\ca{R}_v^{(\ell)}\}$.
\end{coro}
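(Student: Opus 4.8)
The plan is to reduce Corollary \ref{coro:interesting5_item} to Corollary \ref{coro:interesting5} together with the extra ``item-clustering'' argument encoded in Step 11 of Alg.~\ref{algo:phased_elim_item} and Step 3 of Alg.~\ref{algo:exploit_item}. First I would recall the key structural facts conditioned on $\ca{E}_2^{(\ell)},\ca{E}_3^{(\ell)}$: by Corollary~\ref{coro:interesting5}, for any two users $u,v$ connected via a path in $\ca{G}^{(\ell,i)}$ (hence at distance at most $2\s{C}-1$, since users in the same cluster form a clique by Lemma~\ref{lem:interesting5}/\ref{lem:interesting_clique}), we already have $\max_{x\in \ca{R}_u^{(\ell)}\mid\ca{Y},\,y\in \ca{R}_v^{(\ell)}\mid\ca{Y}}|\fl{P}_{ux}-\fl{P}_{uy}| \le 16\s{C}\Delta_{\ell+1}$ and the symmetric bound for $v$. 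In particular $\pi_u(1)\mid\ca{Y}\in\ca{R}_u^{(\ell)}\mid\ca{Y}$ by Lemma~\ref{lem:interesting2}, so every item in $\ca{R}_u^{(\ell)}\cup\ca{R}_v^{(\ell)}$ is within $16\s{C}\Delta_{\ell+1}$ of $\fl{P}_{u\pi_u(1)\mid\ca{Y}}$ in reward for user $u$ (and similarly within $16\s{C}\Delta_{\ell+1}$ for $v$, using that $u,v$ are path-connected so $|\fl{P}_{ux}-\fl{P}_{vx}|\le 4(2\s{C}-1)\Delta_{\ell+1}$, absorbed into the constant).

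Next I would handle the extra inflation from item clustering. Let $\ca{S}\equiv\{z\in\ca{N}^{(\ell,i,j)}\mid z\text{ is connected with }\ca{R}_u^{(\ell)}\cup\ca{R}_v^{(\ell)}\}$ in the item-similarity graph $\ca{G}_{\s{item}}^{(\ell,i,j)}$, where an edge joins items $w,w'$ iff $|\widetilde{\fl{P}}_{xw}^{(\ell)}-\widetilde{\fl{P}}_{xw'}^{(\ell)}|\le 16\s{C}\Delta_{\ell+1}$ for all $x\in\ca{M}^{(\ell,i,j)}$. Mirroring the proof of Lemma~\ref{lem:interesting_clique}, items in the same \emph{item cluster} $\ca{D}^{(b)}$ have identical true rewards, hence (conditioned on $\ca{E}_3^{(\ell)}$) estimated rewards within $2\Delta_{\ell+1}$, so they form a clique in $\ca{G}_{\s{item}}^{(\ell,i,j)}$. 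Since there are at most $\s{C}'$ item clusters, any connected path inside $\ca{G}_{\s{item}}^{(\ell,i,j)}$ has length at most $2\s{C}'-1$. Therefore for any $z\in\ca{S}$ there is a witness $w\in\ca{R}_u^{(\ell)}\cup\ca{R}_v^{(\ell)}$ and an item-path of length $\s{L}'\le 2\s{C}'-1$ joining $w$ to $z$; telescoping the edge inequalities and converting $\widetilde{\fl{P}}$ to $\fl{P}$ via $\|\widetilde{\fl{P}}_{\ca{M}^{(\ell,i)},\ca{N}^{(\ell,i)}}-\fl{P}_{\ca{M}^{(\ell,i)},\ca{N}^{(\ell,i)}}\|_\infty\le\Delta_{\ell+1}$ (eq.~\ref{eq:submatrix_estimation}) gives, for each user $x\in\ca{M}^{(\ell,i,j)}$ (in particular $x\in\{u,v\}$), $|\fl{P}_{xw}-\fl{P}_{xz}|\le (16\s{C}+O(\s{C}'))\Delta_{\ell+1}$. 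Combining with the within-$\ca{R}$ bound from the previous paragraph via the triangle inequality, $|\fl{P}_{ux}-\fl{P}_{uy}|\le 16\s{C}\Delta_{\ell+1}+O(\s{C}')\Delta_{\ell+1}$ for any $x,y\in\ca{S}$; tracking constants carefully (each item edge costs $16\s{C}\Delta_{\ell+1}$ plus conversion slack, and there are at most $2\s{C}'-1$ of them on each side) yields the claimed $16(\s{C}+\s{C}')\Delta_{\ell+1}$ after the constants are chosen, and symmetrically for $v$.

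The main obstacle I expect is bookkeeping the constant in front of $(\s{C}+\s{C}')$: the user-path contributes $O(\s{C})\Delta_{\ell+1}$, the item-path contributes $O(\s{C}')\Delta_{\ell+1}$, and each conversion between $\widetilde{\fl{P}}$ and $\fl{P}$ adds $\Delta_{\ell+1}$ terms, so one must verify that the edge-weight thresholds chosen in Alg.~\ref{algo:phased_elim_item} (namely $16\s{C}\Delta_{\ell+1}$ for item edges) are large enough that true-same-cluster items are guaranteed to be connected yet small enough that the telescoped bound stays within $16(\s{C}+\s{C}')\Delta_{\ell+1}$. A secondary subtlety is that the estimate $\widetilde{\fl{P}}^{(\ell)}$ is reused across phases (since $\s{B}=1$), so one must invoke the ``for all'' argument described in the text — there are only polynomially (indeed boundedly) many possible nice subsets of users and nice subsets of items, so a union bound over all of them still leaves $\ca{E}_3^{(\ell)}$ holding with high probability, and hence eq.~\ref{eq:submatrix_estimation} may be applied even to sub-matrices whose row/column sets were determined using the same observations. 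Everything else is a direct repetition of the triangle-inequality arguments already carried out in Lemmas~\ref{lem:interesting4}, \ref{lem:interesting5} and Corollary~\ref{coro:interesting5}.
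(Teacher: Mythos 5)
Your proposal is correct and follows essentially the same route as the paper: the paper's own proof is a two-line assertion that enlarging $\ca{R}_u^{(\ell)}\cup\ca{R}_v^{(\ell)}$ to its connected closure in $\ca{G}_{\s{item}}^{(\ell,i,j)}$ adds at most $16\s{C}'\Delta_{\ell+1}$ on top of Corollary~\ref{coro:interesting5}, which is precisely the item-cluster-clique, short-path, telescoping argument you spell out. The constant-tracking worry you flag is legitimate (a naive telescoping of item edges of weight $16\s{C}\Delta_{\ell+1}$ over a path of length $O(\s{C}')$ gives $O(\s{C}\s{C}')\Delta_{\ell+1}$ rather than $16(\s{C}+\s{C}')\Delta_{\ell+1}$), but the paper's proof does not resolve this either, so your treatment is at least as careful as the original.
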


\begin{proof}
The proof again follows from the fact that adding items belonging to a different cluster but connected via a path to the original items in $\ca{R}_u^{(\ell)}\cup \ca{R}_V^{(\ell)}$ can only add a term of at most $16\s{C}'\Delta_{\ell+1}$ in the RHS.
\end{proof}

Therefore, at the beginning of the \textit{explore} component of phase $\ell+1$ for a particular \textit{nice} subset of users $\ca{M}^{(\ell+1,z)}$, the corresponding set of items $\ca{N}^{(\ell+1,z)}$ must be a \textit{nice} subset of items as well. Most importantly, what this implies is that for the low rank matrix completion step in the \textit{explore} component of phase $\ell+1$, we can re-use observations from previous phases and provide theoretical guarantees as well. This is because, in eq. \ref{eq:infty} in Lemma \ref{lem:interesting1}, we can take a union bound over all possible \textit{nice subsets of users} and \textit{all possible nice subsets of items}. Since the number of clusters $\s{C},\s{C}'=O(1)$, the total possibilities is $O(1)$ as well (although exponential in the number of clusters). Hence, the complex dependencies of the previously made observations used to compute prior estimates resulted by the conditioning on surviving items and users is no longer a problem - we have simply made a \textit{for all} argument. The rest of the analysis follows as in the $\s{GBB}$ setting and we can arrive at a similar result as in Theorem \ref{thm:main_LBM} but with $\s{B}=1$ when the set of items can be clustered into $\s{C}'=O(1)$ disjoint clusters.

\end{document}